\documentclass[
reprint,
superscriptaddress,
nofootinbib,
aps, prx
]{revtex4-1}

\usepackage{custom}
\usepackage{mathtools}

\interfootnotelinepenalty=10000

\usepackage{enumitem}
\newtheorem{Lemma}{Lemma}
\newtheorem{corol}{Corollary}

\usepackage{multirow}

\def\be{\begin{equation}}
\def\ee{\end{equation}}
\def\ba{\begin{eqnarray}}
\def\ea{\end{eqnarray}}

\newcommand\q{\quad}
%

%
\def\Nl{{\mathchoice
{\setbox0=\hbox{$\displaystyle\rm N$}\hbox{\hbox to0pt
{\kern0.4\wd0\vrule height0.9\ht0\hss}\box0}}
{\setbox0=\hbox{$\textstyle\rm N$}\hbox{\hbox to0pt
{\kern0.4\wd0\vrule height0.9\ht0\hss}\box0}}
{\setbox0=\hbox{$\scriptstyle\rm N$}\hbox{\hbox to0pt
{\kern0.4\wd0\vrule height0.9\ht0\hss}\box0}}
{\setbox0=\hbox{$\scriptscriptstyle\rm N$}\hbox{\hbox to0pt
{\kern0.4\wd0\vrule height0.9\ht0\hss}\box0}}}}
%
\def\Zl{{\mathchoice
{\setbox0=\hbox{$\displaystyle\rm Z$}\hbox{\hbox to0pt
{\kern0.4\wd0\vrule height0.9\ht0\hss}\box0}}
{\setbox0=\hbox{$\textstyle\rm Z$}\hbox{\hbox to0pt
{\kern0.4\wd0\vrule height0.9\ht0\hss}\box0}}
{\setbox0=\hbox{$\scriptstyle\rm Z$}\hbox{\hbox to0pt
{\kern0.4\wd0\vrule height0.9\ht0\hss}\box0}}
{\setbox0=\hbox{$\scriptscriptstyle\rm Z$}\hbox{\hbox to0pt
{\kern0.4\wd0\vrule height0.9\ht0\hss}\box0}}}}
%
\def\Ql{{\mathchoice
{\setbox0=\hbox{$\displaystyle\rm Q$}\hbox{\hbox to0pt
{\kern0.4\wd0\vrule height0.9\ht0\hss}\box0}}
{\setbox0=\hbox{$\textstyle\rm Q$}\hbox{\hbox to0pt
{\kern0.4\wd0\vrule height0.9\ht0\hss}\box0}}
{\setbox0=\hbox{$\scriptstyle\rm Q$}\hbox{\hbox to0pt
{\kern0.4\wd0\vrule height0.9\ht0\hss}\box0}}
{\setbox0=\hbox{$\scriptscriptstyle\rm Q$}\hbox{\hbox to0pt
{\kern0.4\wd0\vrule height0.9\ht0\hss}\box0}}}}
%
\def\Rl{{\mathchoice
{\setbox0=\hbox{$\displaystyle\rm R$}\hbox{\hbox to0pt
{\kern0.4\wd0\vrule height0.9\ht0\hss}\box0}}
{\setbox0=\hbox{$\textstyle\rm R$}\hbox{\hbox to0pt
{\kern0.4\wd0\vrule height0.9\ht0\hss}\box0}}
{\setbox0=\hbox{$\scriptstyle\rm R$}\hbox{\hbox to0pt
{\kern0.4\wd0\vrule height0.9\ht0\hss}\box0}}
{\setbox0=\hbox{$\scriptscriptstyle\rm R$}\hbox{\hbox to0pt
{\kern0.4\wd0\vrule height0.9\ht0\hss}\box0}}}}
%
\def\Cl{{\mathchoice
{\setbox0=\hbox{$\displaystyle\rm C$}\hbox{\hbox to0pt
{\kern0.4\wd0\vrule height0.9\ht0\hss}\box0}}
{\setbox0=\hbox{$\textstyle\rm C$}\hbox{\hbox to0pt
{\kern0.4\wd0\vrule height0.9\ht0\hss}\box0}}
{\setbox0=\hbox{$\scriptstyle\rm C$}\hbox{\hbox to0pt
{\kern0.4\wd0\vrule height0.9\ht0\hss}\box0}}
{\setbox0=\hbox{$\scriptscriptstyle\rm C$}\hbox{\hbox to0pt
{\kern0.4\wd0\vrule height0.9\ht0\hss}\box0}}}}
%
\def\Hl{{\mathchoice
{\setbox0=\hbox{$\displaystyle\rm H$}\hbox{\hbox to0pt
{\kern0.4\wd0\vrule height0.9\ht0\hss}\box0}}
{\setbox0=\hbox{$\textstyle\rm H$}\hbox{\hbox to0pt
{\kern0.4\wd0\vrule height0.9\ht0\hss}\box0}}
{\setbox0=\hbox{$\scriptstyle\rm H$}\hbox{\hbox to0pt
{\kern0.4\wd0\vrule height0.9\ht0\hss}\box0}}
{\setbox0=\hbox{$\scriptscriptstyle\rm H$}\hbox{\hbox to0pt
{\kern0.4\wd0\vrule height0.9\ht0\hss}\box0}}}}
%
\def\Ol{{\mathchoice
{\setbox0=\hbox{$\displaystyle\rm O$}\hbox{\hbox to0pt
{\kern0.4\wd0\vrule height0.9\ht0\hss}\box0}}
{\setbox0=\hbox{$\textstyle\rm O$}\hbox{\hbox to0pt
{\kern0.4\wd0\vrule height0.9\ht0\hss}\box0}}
{\setbox0=\hbox{$\scriptstyle\rm O$}\hbox{\hbox to0pt
{\kern0.4\wd0\vrule height0.9\ht0\hss}\box0}}
{\setbox0=\hbox{$\scriptscriptstyle\rm O$}\hbox{\hbox to0pt
{\kern0.4\wd0\vrule height0.9\ht0\hss}\box0}}}}
%

\newcommand{\cc}{\mathcal C}
\newcommand{\cd}{\mathcal D}

\newcommand{\cg}{\mathcal G}
\newcommand{\ch}{\mathcal H}

\newcommand{\cl}{\mathcal L}

\newcommand{\cp}{\mathcal P}

\newcommand{\calr}{\mathcal R}
\newcommand{\cs}{\mathcal S}
\newcommand{\ct}{\mathcal T}


\newcommand{\intsum}{\mathclap{\displaystyle\int}\mathclap{\textstyle\sum}}





\newcommand{\varep}{\varepsilon}



\def\nn{\nonumber}

\newcommand{\eqa}{\begin{eqnarray}}
\newcommand{\neqa}{\end{eqnarray}}



\def\f{\frac}

\usepackage{bbm}






\def\q{{\quad}}

\begin{document}

\title{
Equivalence of approaches to relational quantum dynamics in relativistic settings
}

\author{Philipp A. H\"{o}hn}
\email[]{philipp.hoehn@oist.jp}
\affiliation{Okinawa Institute of Science and Technology Graduate University, Onna, Okinawa 904 0495, Japan}\affiliation{Department of Physics and Astronomy, University College London, London, United Kingdom}

\author{Alexander R. H. Smith}
\email[]{alexander.r.smith@dartmouth.edu}
\affiliation{Department of Physics and Astronomy, Dartmouth College, Hanover, New Hampshire 03755, USA}

\author{Maximilian P. E. Lock}
\email[]{maximilian.lock@univie.ac.at}
\affiliation{Institute for Quantum Optics and Quantum Information (IQOQI), Austrian Academy of Sciences, A-1090 Vienna, Austria}

\date{\today}

\begin{abstract}
We have previously shown \cite{Hoehn:2019owq} that three approaches to relational quantum dynamics\,---\,relational Dirac observables, the Page-Wootters formalism and quantum deparametrizations\,---\,are equivalent. Here we show that this `trinity' of relational quantum dynamics holds in relativistic settings per frequency superselection sector. We ascribe the time according to the clock subsystem to a POVM which is covariant with respect to its (quadratic) Hamiltonian. This differs from the usual choice of a self-adjoint clock observable conjugate to the clock momentum. It also  resolves Kucha\v{r}'s criticism that the Page-Wootters formalism yields incorrect localization probabilities for the relativistic particle when conditioning on a Minkowski time operator. We show that conditioning instead on the covariant clock POVM results in a Newton-Wigner type localization probability commonly used in relativistic quantum mechanics. By establishing the equivalence mentioned above, we also assign a consistent conditional-probability interpretation to relational observables and deparametrizations. Finally, we expand a recent method of changing temporal reference frames, and show how to transform states and observables frequency-sector-wise. We use this method to discuss an indirect clock self-reference effect and explore the state and temporal frame-dependence of the task of comparing and synchronizing different quantum clocks.
\end{abstract}

\maketitle

\tableofcontents
\section{Introduction}\label{sec_intro}

In general relativity, time plays a different role than in classical and quantum mechanics, or quantum field theory on a Minkowski background. General covariance dispenses with a preferred choice of time and introduces instead a dynamical notion of time which depends on solutions to the Einstein field equations. {In the} canonical approach to quantum gravity this leads to the infamous \emph{problem of time} \cite{kucharTimeInterpretationsQuantum2011a,Isham1993,andersonProblemTime2017}. Its most well-known facet is that, due to the constraints of the theory, quantum states of spacetime (and any matter contained in it) do not at first sight appear to undergo any time evolution, in seeming contradiction with everyday experience.

{The resolution comes from one of the key insights of general relativity: any physical notion of time is relational, the degrees of freedom of the {U}niverse evolve relative to one another~\cite{rovelliQuantumGravity2004,smolinCaseBackgroundIndependence2006,smolinTemporalRelationalism2018}.} This insight has led to three main relational approaches to the problem of time, each of which seeks to extract a notion of time from within the quantum degrees of freedom, relative to which the others evolve: 
\begin{itemize}
\item[(i)] a Dirac quantization scheme, wherein relational observables are constructed that encode correlations between evolving and clock degrees of freedom \cite{dewittQuantumTheoryGravity1967,rovelliQuantumGravity2004,rovelliThereIncompatibilityWays1991,ashtekarLecturesNonPerturbativeCanonical1991,thiemannModernCanonicalQuantum2008,Rovelli:1990jm, Rovelli:1989jn,Rovelli:1990ph,Rovelli:1990pi,rovelliQuantumGravity2004, kucharTimeInterpretationsQuantum2011a,Isham1993,Marolf:1994nz,Marolf:1994wh,Gambini:2000ht,Tambornino:2011vg,Giddings:2005id,dittrichPartialCompleteObservables2007,Dittrich:2005kc, Dittrich:2006ee,  Dittrich:2007jx,Gambini:2008ke, hoehnHowSwitchRelational2018,Hoehn:2018whn,Hoehn:2019owq,Bojowald:2010xp, Bojowald:2010qw, Hohn:2011us,Dittrich:2016hvj,Dittrich:2015vfa,Chataignier:2019kof,chataignier2020relational}, 
\item[(ii)] the Page-Wootters formalism, which defines a relational dynamics in terms of conditional probabilities for clock and evolving degrees of freedom \cite{pageEvolutionEvolutionDynamics1983, woottersTimeReplacedQuantum1984,Gambini:2006ph, Gambini:2006yj,Gambini:2008ke, giovannettiQuantumTime2015,Smith:2017pwx,Smith:2019imm,Dolby:2004ak, castro-ruizTimeReferenceFrames2019,Boette:2018uix,Diaz:2019bad,Diaz:2019xie,Singh:2020kdu,leonPauliObjection2017, marlettoEvolutionEvolutionAmbiguities2017,Nikolova:2017huj, baumann2019generalized,Hoehn:2019owq,favalli2020hermitian,foti2020time}, and 
\item[(iii)] classical or quantum deparametrizations, which result in a reduced quantum theory that only treats the evolving degrees of freedom {as quantum}~\cite{ashtekarLecturesNonPerturbativeCanonical1991,kucharTimeInterpretationsQuantum2011a,Isham1993,hoehnHowSwitchRelational2018,Hoehn:2018whn,Hoehn:2019owq,Thiemann:2004wk,Bojowald:2019mas}. 
\end{itemize}

These three approaches have been pursued largely independently with the relation between them previously unknown. They have also not been without criticism, especially the Page-Wootters formalism. For example, Kucha\v{r}~\cite{kucharTimeInterpretationsQuantum2011a} raised three fundamental criticisms against this approach, namely that it:
\begin{itemize}
\item[(a)] leads to wrong localization probabilities in relativistic settings, 
\item[(b)]  is in conflict with the constraints of the theory, and 
\item[(c)] yields wrong propagators. 
\end{itemize}
Concern has also been voiced that there is an inherent ambiguity in terms of which clock degrees of freedom one should choose, also known as the \emph{multiple choice problem}~\cite{kucharTimeInterpretationsQuantum2011a,Isham1993,andersonProblemTime2017,Albrecht:2007mm,Albrecht:2012wsd}. Indeed, in generic general relativistic systems there is no preferred choice of relational time variable and different choices may lead to \emph{a priori} different quantum theories.

In our recent work \cite{Hoehn:2019owq} we addressed the relation between these three approaches (i)--(iii) to relational quantum dynamics, demonstrating that they are, in fact, equivalent when the clock Hamiltonian features a continuous and non-degenerate spectrum. Specifically, we constructed the explicit transformations mapping each formulation of relational quantum dynamics into the others. These maps revealed the Page-Wootters formalism (ii) and quantum deparametrizations (iii) as \emph{quantum symmetry reductions} of the manifestly gauge-invariant formulation~(i). In other words, the Page-Wootters formalism (ii) and quantum deparametrizations (iii) can be regarded as quantum analogs of gauge-fixed formulations of gauge-invariant quantities (i). Conversely, the formulation in terms of relational Dirac observables (i) constitutes the quantum analog of a gauge-invariant extension of {the} gauge-fixed formulations (ii) and (iii). More physically, these transformations establish (i) as a clock-choice-neutral (in a sense explained below), (ii) as a relational Schr\"odinger, and (iii) as a relational Heisenberg picture of the dynamics. Constituting three faces of the same quantum dynamics, we called the equivalence of (i)--(iii) the \emph{trinity of relational quantum dynamics}. 

This equivalence not only provides relational Dirac observables with a consistent conditional probability interpretation, but also resolves Kucha\v{r}'s criticism (b) that the Page-Wootters formalism would be in conflict with the quantum constraints. Furthermore, the trinity resolves Kucha\v{r}'s criticism (c) that the Page-Wootters formalism would yield wrong propagators, by showing that the correct propagators always follow from manifestly gauge-invariant conditional probabilities on the physical Hilbert space \cite{Hoehn:2019owq}. This resolution of criticism (c) differs from previous resolution proposals which relied on ideal clocks \cite{Gambini:2008ke,corbinSemiClassicalLimitMinimum2009, giovannettiQuantumTime2015} and auxiliary ancilla systems \cite{giovannettiQuantumTime2015} and can be viewed as an extension of \cite{Dolby:2004ak}.

The transformations between (i)--(iii) of the trinity also allowed us to address the multiple choice problem in \cite{Hoehn:2019owq} by extending a previous method for changing temporal reference frames, i.e.\ clocks, in the quantum theory  \cite{hoehnHowSwitchRelational2018,Hoehn:2018whn,castro-ruizTimeReferenceFrames2019} (see also \cite{Bojowald:2010xp,Bojowald:2010qw,Hohn:2011us,Bojowald:2016fac}). {The resolution to the problem lies in part in realizing that a solution to the Wheeler-DeWitt equation encodes the relations between all subsystems, including the relations between subsystems employed as clocks to track the dynamics of other subsystems; there are multiple choices of clocks, each of which can be used to define dynamics}. Our proposal is thus to turn the multiple choice problem into a \emph{feature} by having a multitude of quantum time choices at our disposal, which we are able to connect through quantum temporal frame transformations. This is in line with developing a genuine quantum implementation of general covariance \cite{giacominiQuantumMechanicsCovariance2019,Vanrietvelde:2018pgb,Vanrietvelde:2018dit,hoehnHowSwitchRelational2018,Hoehn:2018whn,Hoehn:2019owq,giacominiRelativisticQuantumReference2019,hamette2020quantum,chataignier2020relational}. This proposal is part of current efforts to develop a general framework of quantum reference frame transformations (and study their physical consequences \cite{loveridgeSymmetryReferenceFrames2018a,yang2020switching,Gielen:2020abd,savi2020quantum,le2020blurred, tuziemski2020decoherence,Hardy:2018kbp,Hardy:2019cef,Guerin:2018fja,Zych:2018nao,Barbado:2020snx}), and should be contrasted with other attempts at resolving the multiple choice problem by identifying a preferred choice of clock~\cite{marlettoEvolutionEvolutionAmbiguities2017} (see~\cite{Hoehn:2019owq} for further discussion of this proposal).

We did not address Kucha\v{r}'s criticism (a) that the Page-Wootters formalism yields the wrong localization probabilities for relativistic models in  \cite{Hoehn:2019owq} as they feature clock Hamiltonians which are quadratic in momenta and thus generally have a degenerate spectrum, splitting into positive and negative frequency sectors. This degeneracy is not covered by our previous construction.
{While} quadratic clock Hamiltonians are standard in the literature on relational observables (approach (i)) and deparametrizations (approach (iii)), see e.g.\ \cite{rovelliQuantumGravity2004,ashtekarLecturesNonPerturbativeCanonical1991,thiemannModernCanonicalQuantum2008,Tambornino:2011vg,Hoehn:2018whn},  relativistic particle models  have {only} recently been studied in the Page-Wootters formalism (approach (ii)) \cite{Diaz:2019bad,Smith:2019imm,Diaz:2019xie,Singh:2020kdu}. {However, Kucha\v{r}'s criticism~{(a)} that the Page-Wootters approach yields incorrect localization probabilities in relativistic settings has yet to be addressed.}
{Since the Page-Wootters formalism encounters challenges in relativistic settings, given the equivalence of relational approaches implied by the trinity, one might worry about relational observables and deparametrizations too.}

In this article, we show that these challenges can be overcome, and a consistent interpretation of the relational dynamics can be provided. To this end, we extend the trinity to quadratic clock Hamiltonians, thus encompassing many relativistic settings; we show that all the results of \cite{Hoehn:2019owq} hold per frequency sector associated to the clock due to a superselection rule induced by the Hamiltonian constraint. Frequency-sector-wise, the relational dynamics encoded in (i) relational observables, (ii) the Page-Wootters formalism, and (iii) quantum deparametrizations are thus also fully equivalent. 

The key to our construction, as in \cite{Hoehn:2019owq}, is the use of a Positive-Operator Valued Measure (POVM) which here transforms covariantly~with respect to the quadratic clock Hamiltonian \cite{holevoProbabilisticStatisticalAspects1982,buschOperationalQuantumPhysics,busch1994time,braunsteinGeneralizedUncertaintyRelations1996} as a time observable. This contrasts with the usual approach of employing an operator conjugate to the clock momentum (i.e.\ the Minkowski time operator in the case of a relativistic particle). This covariant clock POVM is instrumental in our resolution of Kucha\v{r}'s criticism (a) that the Page-Wootters formalism yields wrong localization probabilities for relativistic systems. We show that when conditioning on this covariant clock POVM rather than Minkowski time, one obtains a Newton-Wigner type localization probability \cite{haag2012local,Newton:1949cq}. While a Newton-Wigner type localization is approximate and not fully Lorentz covariant, due to the relativistic localization no-go theorems of Perez-Wilde \cite{FernandoPerez:1976ib} and Malament \cite{Malament1996} (see also \cite{Yngvason:2014oia,Papageorgiou:2019ezr}), it is generally accepted as the best possible localization in relativistic quantum mechanics. {(In quantum field theory localization is a different matter} \cite{haag2012local,Yngvason:2014oia}.) This demonstrates the advantage of using covariant clock POVMs in relational quantum dynamics \cite{Brunetti:2009eq,Smith:2017pwx,Smith:2019imm, Hoehn:2019owq,Loveridge:2019phw}. The trinity also extends the probabilistic interpretation of relational observables: a Dirac observable describing the relation between a position operator and the covariant clock POVM corresponds to a Newton-Wigner type localization in relativistic settings. 

Finally, we again use the equivalence  between (i)--(iii) to construct temporal frame changes in the quantum theory. On account of superselection rules across frequency sectors, temporal frame changes can only map information contained in the overlap of two frequency sectors, one associated to each clock, from one clock `perspective' to another. We apply these temporal frame change maps to explore {an indirect} clock self-reference and the temporal frame and state dependence of comparing and  synchronizing readings of different quantum clocks.

While completing this manuscript, we became aware of~\cite{chataignier2020relational}, which independently extends some results of~\cite{Hoehn:2019owq} on the conditional probability interpretation of relational observables and their equivalence with the Page-Wootters formalism into a more general setting. However, a different formalism \cite{Chataignier:2019kof} is used in \cite{chataignier2020relational}, which does not employ covariant clock POVMs and therefore the two works complement one another.

Throughout this article we work in units where $\hbar=1$.


\section{Clock-neutral formulation of classical and quantum mechanics}\label{sec_cRDOs2}

Colloquially, general covariance posits that the laws of physics are the same in every reference frame. This is usually interpreted as implying that physical laws should take the form of tensor equations. Tensors can be viewed as reference-frame-neutral objects: they define a description of physics prior to choosing a reference frame. They thereby encode the physics as `seen' by \emph{all} reference frames at once. If one wants to know the numbers which a measurement of the tensor in a particular reference frame would yield, one must contract the tensor with the vectors corresponding to that choice of frame. In this way, the description of the same tensor looks different relative to different frames, but the tensor per se, as a multilinear map, is reference-frame-neutral. It is this reference-frame-neutrality of tensors which results in the frame-independence of physical laws.

The notion of reference frame as a vector frame is usually taken to define the orientation of a local laboratory of some observer. In practice, one often implicitly identifies the local lab (i.e.\ the reference system relative to which the remaining physics is described) with the reference frame.
This is an idealization which ignores the lab's back-reaction on spacetime, interaction with other physical systems and possible internal dynamics, while at the same time assuming it to be sufficiently classical so that superpositions of orientations can be ignored. Such an idealization is appropriate in general relativity where the aim is to describe the large-scale structure of spacetime. However, {in quantum gravity, where the goal is to describe the micro-structure of spacetime,} this may no longer be appropriate. More generally, we may ask about the fate of general covariance when we take seriously the fact that {physically meaningful} reference frames are in practice always associated with physical systems, and as such are comprised of dynamical degrees of freedom that {may} couple with other systems, undergo their own dynamics and will ultimately be subject to the laws of quantum theory. What are then the reference-frame-neutral structures?

In regard to this question, we note that the {classical} notion of general covariance for  reference frames {associated to idealized local labs} is deeply intertwined with invariance under general coordinate transformations, i.e.\ passive diffeomorphisms. In moving towards non-idealized reference frames  (or rather systems), we shift focus from coordinate descriptions to dynamical reference degrees of freedom, relative to which  the remaining physics will be described. In line with this, we shift the focus from passive to active diffeomorphisms, which directly act on the dynamical degrees of freedom. This is advantageous for quantum gravity, where classical spacetime coordinates are \emph{a priori} absent. A quantum version of general covariance should be formulated in terms of dynamical reference degrees of freedom \cite{giacominiQuantumMechanicsCovariance2019,Vanrietvelde:2018pgb,Vanrietvelde:2018dit,hoehnHowSwitchRelational2018,Hoehn:2018whn,Hoehn:2019owq,giacominiRelativisticQuantumReference2019,hamette2020quantum,chataignier2020relational}.

The active symmetries imply a redundancy in the description of the physics. \emph{A priori} all degrees of freedom stand on an equal footing, giving rise to a freedom in choosing which of them shall be treated as the redundant ones. The key idea is to identify this choice with the choice of reference degrees of freedom, i.e.\ those relative to which the remaining degrees of freedom will be described.\footnote{{Indeed, we do not want to describe the reference degrees of freedom directly relative to themselves in order to avoid the self-reference problem \cite{dalla1977logical,breuer1995impossibility}. Nevertheless, through the perspective-neutral structure it is possible to construct indirect self-reference effects of quantum clocks through temporal frame changes, see \cite{Hoehn:2019owq} and Sec.~\ref{sec_selfref}.}} Accordingly, choosing a dynamical reference system amounts to removing redundancy from the description.  As such, we may interpret the   redundancy-containing description (in both the classical and quantum theory) as a \emph{perspective-neutral} description of physics, i.e.\ as a global description of physics prior to having chosen a reference system, from whose perspective  the remaining degrees of freedom are to be described \cite{Vanrietvelde:2018pgb,Vanrietvelde:2018dit,hoehnHowSwitchRelational2018,Hoehn:2018whn,Hoehn:2019owq}. This perspective-neutral structure is  thus proposed as the reference-frame-neutral structure for dynamical (i.e.\ non-idealized) reference systems.

In this article we focus purely on temporal diffeomorphisms and thus on temporal reference frames/systems, or simply clocks. In this case, we refer to the perspective-neutral structure as a \emph{clock-(choice-)neutral} structure \cite{hoehnHowSwitchRelational2018,Hoehn:2018whn,Hoehn:2019owq}, which we briefly review here in both the classical and quantum theory. It is a description of the physics, prior to having chosen a temporal reference system relative to which the dynamics of the remaining degrees of freedom are to be described.

\subsection{Clock-neutral classical theory}\label{sec_clneutral}

Consider a classical theory described by an  action 
$\mathcal{S} = \int_\mathbb{R} du \, L(q^a, dq^a/du )$, where $q^a$ denotes a collection of  configuration variables indexed by $a$. Such a theory exhibits \emph{temporal diffeomorphism invariance} if the action $\mathcal{S}$ is reparametrization invariant; that is, $L(q^a, dq^a/du )\mapsto L(q^a, dq^a/du' ) du'/du$ transforms as a scalar density under $u \mapsto u'(u)$. The Hamiltonian of such a theory is of the form $H = N(u) \,C_H$, where $N(u)$ is an arbitrary lapse function and 
\ba
C_H = \sum_a\,q^a\,p_a-L\approx0 , \label{Hconstriant}
\ea 
the so-called Hamiltonian constraint, is a consequence of the temporal diffeomorphism symmetry. This equation defines the constraint surface $\mathcal{C}$ inside the kinematical phase space $\mathcal{P}_{\rm kin}$, which is parametrized by the canonical coordinates $q^a,p_b$.  {The} $\approx$ denotes a weak equality, i.e.\ one which only holds on $\cc$ \cite{diracLecturesQuantumMechanics1964,Henneaux:1992ig}.

The Hamiltonian generates a dynamical flow on $\cc$, which  transforms an arbitrary phase space function $f$ according to
\begin{align}
\frac{df}{du} \ce \{f, C_H \}
\label{gaugeFlow}
\end{align}
and integrates to a finite transformation $\alpha_{C_H}^u\cdot f$, where for simplicity the lapse function has been chosen to be unity, $N(u) =1$. {Owing to the reparametrization invariance}, this  flow  should be interpreted as a gauge transformation rather than true evolution~\cite{rovelliQuantumGravity2004,thiemannModernCanonicalQuantum2008}, and thus for an observable $F$ to be physical, it must be invariant under such a transformation, i.e. 
\begin{align}
\{F, C_H \} \approx 0.
\label{DiracObservable}
\end{align}
Observables satisfying Eq.~\eqref{DiracObservable} are known as \emph{Dirac observables}.

In order to obtain a gauge-invariant dynamics, we have to choose a dynamical temporal reference system, i.e.\ a clock function $T(q^a,p_a)$, to parametrize the dynamical flow, Eq.~\eqref{gaugeFlow}, generated by the constraint. We can then describe the evolution of the remaining degrees of freedom relative to $T(q^a,p_a)$. This gives rise to so-called relational Dirac observables (a.k.a.\ evolving constants of motion) which encode {the answer to} the question ``what is the value of  the function $f$ along the flow generated by $C_H$ on $\cc$ when the clock $T$ reads $\tau$?''~\cite{Rovelli:1990jm, Rovelli:1989jn,Rovelli:1990ph, Rovelli:1990pi,rovelliQuantumGravity2004,dittrichPartialCompleteObservables2007,Dittrich:2005kc, Dittrich:2006ee,  Dittrich:2007jx, Tambornino:2011vg,thiemannModernCanonicalQuantum2008,hoehnHowSwitchRelational2018,Hoehn:2018whn,Hoehn:2019owq}. We will denote such an observable by $F_{f,T}(\tau)$. As shown in \cite{dittrichPartialCompleteObservables2007,Dittrich:2005kc, Dittrich:2006ee,  Dittrich:2007jx}, these observables can be constructed by solving $\alpha^u_{C_H}\cdot T=\tau$ for $u=u_T(\tau)$ and setting 
\begin{align}
F_{f,T}(\tau) &\ce \alpha_{C_H}^{u}\cdot f\, \Big|_{u=u_T(\tau)} \nn \\
&\approx\sum_{n=0}^{\infty}\, \frac{\left(\tau-T\right)^n}{n!}  \left\{f,\frac{C_H}{\{T,C_H\}}\right\}_n,
\label{RelationalDiracObservable1}
\end{align}
where $\{f,g\}_n :=\{\{f,g\}_{n-1},g\}$ is the $n^\text{th}$-nested Poisson bracket subject to $\{f,g\}_0:=f$. The $F_{f,T}(\tau)$ satisfy Eq.~\eqref{DiracObservable} and thus constitute a family of Dirac observables parametrized by $\tau$. Such relational observables are so-called gauge-invariant extensions of gauge-fixed quantities \cite{dittrichPartialCompleteObservables2007,Dittrich:2005kc, Dittrich:2006ee,  Dittrich:2007jx,Henneaux:1992ig,Hoehn:2019owq,Chataignier:2019kof}. 

In generic models there is no preferred choice for the clock function $T$ among the degrees of freedom on $\cp_{\rm kin}$, which is sometimes referred to as the \emph{multiple choice problem} \cite{kucharTimeInterpretationsQuantum2011a,Isham1993}. Different choices of $T$ will lead to different relational Dirac observables, as can be seen in Eq.~\eqref{RelationalDiracObservable1}. All these different choices are encoded in the constraint surface $\cc$ and stand \emph{a priori} on an equal footing.

This gives rise to the interpretation of $\cc$ as a clock-neutral structure. The temporal diffeomorphism symmetry leads to a redundancy in the description of $\cc$: thanks to the Hamiltonian constraint the kinematical canonical degrees of freedom are not independent and due to its gauge flow there will only be $\dim\cp_{\rm kin}-2$ independent physical phase space degrees of freedom. In particular, relative to any choice of clock function $T$ one can construct $\dim\cp_{\rm kin}-2$ independent relational Dirac observables using Eq.~\eqref{RelationalDiracObservable1} \cite{diracLecturesQuantumMechanics1964,Henneaux:1992ig}. Hence, the relational Dirac observables relative to any other clock choice $T'$ can be constructed from them. Consequently, there is redundancy among the relational Dirac observables relative to different clock choices.  
Thus $\cc$ yields a description of the physics prior to choosing and fixing a clock relative to which the gauge-invariant dynamics of the remaining degrees of freedom can be described. Specifically, no choice has been made as to which of the kinematical and physical degrees of freedom are to be considered as redundant.
In analogy to the tensor case, $\cc$ still contains the information about all clock choices and their associated relational dynamics at once; it yields a clock-neutral description.

Being of odd dimension $\dim\cp_{\rm kin}-1$, $\cc$ is also not a phase space. A proper phase space description can be obtained, e.g.\ through phase space reduction by gauge-fixing~\cite{hoehnHowSwitchRelational2018,Hoehn:2018whn,Hoehn:2019owq,Thiemann:2004wk,chataignier2020relational}. Given a choice of clock function $T$, we may consider the gauge-fixing condition $T=const$, which may be valid only locally on $\cc$. Since $F_{f,T}(\tau)$ is constant along each orbit generated by $C_H$ for each value of $\tau$, we do not lose any information about the relational dynamics by restricting to $T=const$ and leaving $\tau$ free. By restricting to the relational observables $F_{f,T}(\tau)$ relative to clock $T$ and by solving the two conditions $T=const$, $C_H=0$, we remove {the} redundancy from among both the kinematical and physical degrees of freedom. The surviving reduced phase space description, which no longer contains the clock degrees of freedom as dynamical variables, can be interpreted as the description of the dynamics relative to the temporal reference system defined by the clock function $T$. {But now we keep track of time evolution not in terms of the dynamical $T$, but in terms of the parameter $\tau$ representing its `clock readings'.} In particular, the temporal reference system is not described relative to itself, e.g.\ one finds the tautology $F_{T,T}(\tau)\approx\tau$. Accordingly, choosing the `perspective' of a clock means choosing the corresponding clock degrees of freedom as the redundant ones and removing them. {The theory is then deparametrized: it no longer contains a gauge-parameter $u$, nor a constraint, nor dynamical clock variables\,---\,only true evolving degrees of freedom.}

\subsection{Clock-neutral quantum theory}\label{sec_cnqt}

Following the Dirac prescription for quantizing constrained systems~\cite{diracLecturesQuantumMechanics1964, Henneaux:1992ig,ashtekarLecturesNonPerturbativeCanonical1991,thiemannModernCanonicalQuantum2008}, one first promotes the canonical coordinates of $\cp_{\rm kin}$ to canonical position and momentum operators $\hat{q}^a$ and $\hat{p}_a$ acting on a \emph{kinematical Hilbert space} $\mathcal{H}_{\rm kin}$. The Hamiltonian constraint in Eq.~\eqref{Hconstriant} is {then} imposed by demanding that physical states of the quantum theory are annihilated by the quantization of the constraint function
\begin{align}
\hat{C}_H \ket{\psi_{\rm phys}} = 0.
\label{Wheeler-DeWitt}
\end{align}
Solutions to this Wheeler-DeWitt-like equation may be constructed from kinematical states $\ket{\psi_{\rm kin }} \in\mathcal{H}_{\rm kin}$ via a group averaging operation~\cite{marolfRefinedAlgebraicQuantization1995,Hartle:1997dc,Giulini:1998rk,Giulini:1998kf,Marolf:2000iq, thiemannModernCanonicalQuantum2008}\footnote{In contrast to \cite{marolfRefinedAlgebraicQuantization1995,Hartle:1997dc,Giulini:1998rk,Giulini:1998kf,Marolf:2000iq, thiemannModernCanonicalQuantum2008} and for notational simplicity, we refrain from using the more rigorous formulation in terms of Gel'fand triples and algebraic duals of (dense subsets of) Hilbert spaces. However, the remainder of this article could be put into such a more precise formulation.}
\begin{align}
\ket{\psi_{\rm phys}} = \delta(\hat{C}_H) \ket{\psi_{\rm kin }} = \frac{1}{2\pi}\int_G du\, e^{-i \hat{C}_H u} \ket{\psi_{\rm kin }},
\label{projector}
\end{align}
where $G$ parametrizes the group generated by $\hat{C}_H$. Physical states are not normalizable in $\ch_{\rm kin}$ if they are improper eigenstates of $\hat C_H$ (i.e.\ if zero lies in the continuous part of its spectrum). However, they are normalized with respect to the so-called \emph{physical inner product} 
\begin{align}
\braket{\psi_{\rm phys}|\phi_{\rm phys}}_{\rm phys}&\ce\langle\psi_{\rm kin}|\delta(\hat C_H)|\phi_{\rm kin}\rangle_{\rm kin}\label{PIP22}
\end{align}
where $\braket{\cdot|\cdot}_{\rm kin}$ is the {kinematical }inner product  and $|\psi_{\rm kin}\rangle, \ket{\phi_{\rm kin}} \in\ch_{\rm kin}$ reside in the equivalence class of states mapped to the same $\ket{\psi_{\rm phys}}, \ket{\phi_{\rm phys}}$ under the projection in Eq.~\eqref{projector}. Equipped with this inner product, the space of solutions to the Wheeler-DeWitt equation  in Eq.~\eqref{Wheeler-DeWitt} can usually be Cauchy completed to form the so-called \emph{physical Hilbert space} $\ch_{\rm phys}$ \cite{marolfRefinedAlgebraicQuantization1995,Hartle:1997dc,Giulini:1998rk,Giulini:1998kf,Marolf:2000iq, thiemannModernCanonicalQuantum2008}.

A gauge-invariant (i.e.\ physical) observable  $\hat{F}$ acting on $\mathcal{H}_{\rm phys}$ must satisfy  the quantization of Eq.~\eqref{gaugeFlow}
\begin{align}
\left[ \hat{C}_H, \hat{F} \right]\,\ket{\psi_{\rm phys}} = 0.
\end{align}
Such an observable $\hat{F}$ is a \emph{quantum Dirac observable}.

Clearly, $\exp(-i\,u\,\hat C_H)\,\ket{\psi_{\rm phys}}=\ket{\psi_{\rm phys}}$, i.e.\ physical states do not evolve under the dynamical flow generated by the Hamiltonian constraint. This is the basis of the so-called \emph{problem of time} in quantum gravity \cite{kucharTimeInterpretationsQuantum2011a,Isham1993,andersonProblemTime2017}, and of statements that a quantum theory defined by a Hamiltonian constraint is timeless. However, such a theory is only `background-timeless', i.e.\ physical states do not evolve with respect to the `external' gauge parameter $u$ parametrizing the group  generated by the Hamiltonian constraint. Instead, it is more appropriate to regard the quantum theory on $\ch_{\rm phys}$ as a \emph{clock-neutral} quantum theory: it is a global description of the physics prior to choosing an internal clock relative to which to describe the dynamics of the remaining degrees of freedom, as argued in~\cite{hoehnHowSwitchRelational2018,Hoehn:2018whn,Hoehn:2019owq}. Just as in the classical case, there will in general be many possible clock choices and the `quantum constraint surface' $\ch_{\rm phys}$ contains the information about all these choices at once; it is thus by no means `internally timeless'.

The goal is to suitably quantize the relational Dirac observables in Eq.~\eqref{RelationalDiracObservable1}, promoting them to families of operators $\hat F_{f,T}(\tau)$ on $\ch_{\rm phys}$. This involves a quantization of the temporal reference system $T$ and it is clear that in the quantum theory different choices of $T$ will also lead to different quantum relational Dirac observables. This will give rise to a multitude of gauge-invariant, relational quantum dynamics, {each} expressed with respect to the evolution parameter $\tau$, which corresponds to the readings of the chosen quantum clock (and is thus \emph{not} a gauge parameter). The quantization of relational observables is non-trivial, especially because Eq.~\eqref{RelationalDiracObservable1} may not be globally defined on $\cc$, and depends very much on the properties of the chosen clock. Steps towards systematically quantizing relational Dirac observables have been undertaken e.g.\ in  \cite{Hoehn:2019owq,Marolf:1994wh,Giddings:2005id,Chataignier:2019kof,chataignier2020relational} and part of this article is devoted to further developing them for a class of relativistic models.

In analogy to the classical case, the clock-neutral description on the `quantum constraint surface' $\ch_{\rm phys}$ is redundant: since the constraint is satisfied, not all the degrees of freedom are independent. In particular, the sets of quantum relational Dirac observables relative to different clock choices\,---\,and thus different relational quantum dynamics\,---\,will be interdependent. The proposal is once more to associate the choice of clock with the choice of redundant degrees of freedom; moving to the `perspective' of a given clock means considering the quantum relational observables relative to it as the independent ones, and removing the (now redundant) dynamical clock degrees of freedom altogether. This works through a quantum symmetry reduction procedure, i.e.\ the quantum analog of phase space reduction, which is tantamount to a \emph{quantum deparametrization} and has been developed in \cite{hoehnHowSwitchRelational2018,Hoehn:2018whn,Hoehn:2019owq} and will be further developed in Sec.~\ref{sec_trinity}. In particular, this  procedure is at the heart of changing from a description relative to one quantum clock to  one relative to another {clock}, which we elaborate on in Sec.~\ref{sec_cqc}. As such, quantum symmetry reduction is the key element of a proposal for exploring a quantum version of general covariance \cite{Vanrietvelde:2018pgb,Vanrietvelde:2018dit,hoehnHowSwitchRelational2018,Hoehn:2018whn,Hoehn:2019owq,giacominiQuantumMechanicsCovariance2019, castro-ruizTimeReferenceFrames2019} and thereby also addressing the multiple choice problem in quantum gravity and cosmology \cite{kucharTimeInterpretationsQuantum2011a,Isham1993} {(see also \cite{Bojowald:2010xp,Bojowald:2010qw,Hohn:2011us,Bojowald:2016fac,Gielen:2020abd})}.

\section{Quadratic clock Hamiltonians}
\label{Quadratic clock Hamiltonians}

Building upon the clock-neutral discussion, we now assume that the kinematical degrees of freedom described by $\cp_{\rm kin}$ and $\ch_{\rm kin}$ split into a clock $C$ and an ``evolving'' system $S$, which do not interact. This will permit us to choose a temporal reference system in the next section, and thence define a relational dynamics in both the classical and quantum theories.

\subsubsection{Classical theory}

Suppose the classical theory describes a clock $C$ associated with the phase space $\mathcal{P}_C \simeq T^* \mathbb{R} \simeq \mathbb{R}^2$, and some system of interest $S$ associated with a phase space $\mathcal{P}_S$,  so that the kinematical phase space decomposes as $\cp_{\rm kin} {\ce} \cp_C\oplus\cp_S$. We assume $\cp_C$ to be parametrized by the canonical pair $(t,p_t)$, but will not need to be specific about the structure of $\cp_S$ (other than assuming it to be a finite dimensional symplectic manifold). Further suppose that the clock and system are not coupled, leading to a Hamiltonian constraint function that is a sum of their respective Hamiltonians\footnote{The assumption that clock and system do not interact does not hold in generic general relativistic systems. However, it is satisfied  in some commonly used examples (see \cite{Hoehn:2019owq} for a discussion and Table~\ref{Table:Examples} for some examples).}
\begin{align}
C_H = H_C + H_S \approx 0,
\label{firstConstraint}
\end{align}
where $H_C$ is a function  on $\mathcal{P}_C$ and $H_S$ is a function on~$\mathcal{P}_S$.

This article concerns clock Hamiltonians that are quadratic in the clock momentum, $H_C = s\, p_t^2/2$, where $s\in\{-1,+1\}$, so that the Hamiltonian constraint becomes
\begin{align}
C_H = s\,\f{p_t^2}{2}+ H_S\approx0.\label{constraint2}
\end{align}
This class of clock Hamiltonians appears in a wide number of (special and general) relativistic and non-relativistic models\,---\,see Table~\ref{Table:Examples} for examples. They are doubly degenerate; every value of $H_C$ has two solutions in terms of $p_t$, except on the line defined by $p_t=0$. Note that $p_t$ is a Dirac observable.

The constraint in Eq.~\eqref{constraint2} can be factored into two constraints, each linear in $p_t$, and which in the $s=-1$ case define the positive and negative frequency modes in the quantum theory \cite{hoehnHowSwitchRelational2018,Hoehn:2018whn,Bojowald:2010qw}: 
\ba
\!\!\!\!\!C_H=s\,C_+\cdot C_-\,, \,\q\text{for} \q C_\sigma \ce \f{p_t}{\sqrt{2}}+\sigma\,\sqrt{-s\,H_S}\,,\label{factorize}
\ea
where we have introduced the degeneracy label $\sigma=\pm1$.  Note that Eq.~\eqref{constraint2} forces $s\,H_S$ to take non-positive values on $\cc$.
For simplicity, we shall henceforth refer to $\sigma=+1$ as positive and $\sigma=-1$ as negative frequency modes for \emph{both} $s=\pm1$.\footnote{We emphasize that $\sigma=-1$ denotes negative frequency modes and {\it not} that momenta take values $p_t\leq0$. Indeed, for the $\sigma$-modes, momenta satisfy $\sigma\,p_t\leq0$, which follows from setting $C_\sigma=0$.} It follows that we can decompose the constraint surface into a positive and a negative frequency sector  \cite{hoehnHowSwitchRelational2018,Hoehn:2018whn}
 \ba
\cc= \cc_+\cup\,\cc_-\,,\label{Cdecomp}
\ea
where $\cc_\sigma$ is the set of solutions to $C_\sigma=0$ in $\cp_{\rm kin}$. The intersection $\cc_+\cap\,\cc_-$ is defined by $p_t = H_S=0$ (see Fig.~\ref{fig_cc} for an illustration).

\begin{table}[t]

\begin{tabular}{c}
{Examples} of {constraints of the form in Eq.~\eqref{constraint2}}
\\ \ \vspace{-5pt} \\ 
\hline\hline \vspace{-5pt}
\\
{Non-relativistic particle {and arbitrary system}} \\
${C}_H =\frac{{\bm p}^2}{2m} + H_S$
\\   \ \vspace{-5pt} \\ 
\hline \vspace{-5pt}
\\
{Relativistic particle in inertial coordinates}\\
${C}_H = -{p}_{t}^2 +\bm p^2 +m^2$
\\ \ \vspace{-5pt} \\ 
\hline \vspace{-5pt} 
\\
Isotropic cosmology with massless scalar field \\
$C_H=p_\phi^2-p_\alpha^2-4k\,\exp(4\alpha)$
\\ \ \vspace{-5pt} \\ 
\hline \vspace{-5pt} 
\\
Homogeneous cosmology (vacuum Bianchi models)\\
$
{C}_H = -\frac{1}{2} \bar{p}_0^2+k_0\,\exp(2\sqrt{2}\bar{\beta}^0)+\f{1}{2}p_+^2+k_+\,\exp(-4\sqrt{3}\bar{\beta}^+)$\\ \ \vspace*{.5pt}
$+\f{k_-}{2}p_-^2
$
\\ \ \vspace{-5pt} \\ 
\hline \vspace{-5pt} 
\\
\end{tabular}
\label{Table:Examples}
\caption{Some examples of constraints of the form of Eq.~\eqref{constraint2}, i.e.\ with clock Hamiltonians quadratic in an appropriate canonical momentum. The last three (relativistic) examples each contain  both cases $s=\pm1$, depending on which degree of freedom is used to define the clock $C$. In the example of the Friedman-Lema\^itre-Robertson-Walker model with homogeneous massless scalar field we have used $\alpha:=\ln a$, where $a$ is the scale factor, and $k$ is the spatial curvature constant \cite{Blyth:1975is,Hawking:1983hn,Hajicek:1986ky,Kiefer:1988tr,Ashtekar:2011ni,Ashtekar:2007em,bojobuch} (here a choice of lapse function $N=e^{3\alpha}$ has been made and included in the definition of $C_H$). The shape of the Hamiltonian constraint for vacuum Bianchi models can be found, e.g., in~\cite{Ashtekar:1993wb} and holds for types I, II, III, VIII, IX and the Kantowski-Sachs models. Here $\bar\beta^0,\bar\beta^+,\bar\beta^-$ are linear combinations of the Misner anisotropy parameters and $k_0,k_+,k_-$ are constants, each of which may be zero, depending on the model. }
\end{table}

\begin{figure}[t]
\includegraphics[width= 245pt]{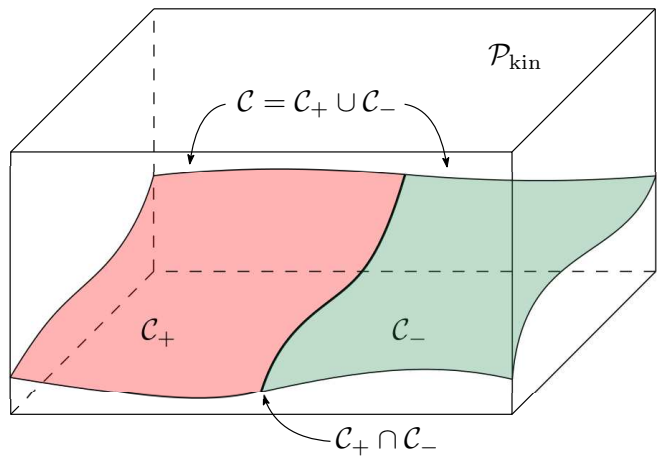}
\caption{
Depicted are the surfaces $\mathcal{C}_+$ (red) and $\mathcal{C}_-$ (green) defined by $C_+ = 0$ and $C_- = 0$, respectively. The union of these surfaces is the constraint surface $\mathcal{C} = \mathcal{C}_+ \cup \mathcal{C}_- \subset \mathcal{P}_{\rm kin}$, while their intersection $\mathcal{C}_+ \cap \mathcal{C}_-$ is characterized by $p_t = H_S = 0$, and {is} depicted by the thick black line. We have assumed that $H_S$ is not degenerate (see Fig.~1 of \cite{Hoehn:2018whn} for a similar depiction when $H_S$ is doubly degenerate). }
\label{fig_cc}
\end{figure}

\subsubsection{Quantum theory}

The Dirac quantization of the kinematical phase space $\cp_{\rm kin}=\cp_C\oplus\cp_S$ leads to the kinematical Hilbert space $\ch_{\rm kin} \simeq\ch_C\otimes\ch_S$ describing the clock and system, where $\ch_C\simeq L^2(\mathbb{R})$ and $\ch_S$ is the  Hilbert space associated with $S$. We assume the system Hamiltonian to be promoted to a self-adjoint operator $\hat H_S$ on $\ch_S$. An element of $\ch_{\rm kin}$ may be expanded in the eigenstates of the clock and system Hamiltonians as
\begin{align}
|\psi_{\rm kin} \rangle= {\ \,{\intsum}_{E}} \,\int_\mathbb{R} dp_t\,\psi_{\rm kin}(p_t,E)\,\ket{p_t}_C\ket{E}_S ,\nn
\end{align}
where the integral-sum highlights that $\hat H_S$ may either have a continuous or discrete spectrum.\footnote{The way we have written physical states implicitly assumes the non-positive part of the spectrum of $s\,\hat H_S$ to be non-degenerate. Were this not the case, additional degeneracy labels would be necessary. However, this would {not} otherwise affect the subsequent analysis. See \cite{Hoehn:2018whn} for an explicit construction of the  flat FLRW model with a massless scalar field, whose Hamiltonian constraint can also be interpreted as a free relativistic particle, and thus features a twofold system energy degeneracy.}

Physical states of the theory satisfy Eq.~\eqref{Wheeler-DeWitt}, which for the Hamiltonian constraint in Eq.~\eqref{constraint2} becomes 
\begin{align}
\hat C_H\,\ket{\psi_{\rm phys}} =\left(s\, \f{\hat{p}_t^2}{2}\otimes I_S+I_C\otimes\hat H_S\right)\,\ket{\psi_{\rm phys}}=0.\label{WdW2}
\end{align}
We assume here that this constraint has zero-eigenvalues, i.e. that solutions to Eq.~\eqref{WdW2} exist. Note that this requires the spectrum of $s\,\hat H_S$ to contain non-positive eigenvalues, in analogy with the classical case.

Quantizing $C_\sigma$ in Eq.~\eqref{factorize} 
yields ${[\hat C_+,\hat C_-]=0}$, so that the group averaging projector in Eq.~\eqref{projector} can be expressed as 
\begin{align}
\delta(\hat C_H) = \delta(s\,\hat C_+\,\hat C_-) ={\f{1}{2(-\,s\, \hat{H}_S)^{\frac{1}{2}}} }\sum_\sigma\,\delta(\hat C_\sigma).
\label{physicalProjector}
\end{align}
The form of $\delta(\hat C_H)$  implies the decomposition of the physical Hilbert space into a direct sum of positive and negative frequency sectors $\mathcal{H}_{\rm phys}\simeq \mathcal{H}_+ \oplus \mathcal{H}_-$  (see also \cite{Hartle:1997dc,Hoehn:2018whn}). Acting with the projector $\delta(\hat{C}_H)$ on an arbitrary kinematical state yields a physical state 
\begin{align}
\ket{\psi_{\rm phys}} &= \delta(\hat{C}_H)\,|\psi_{\rm kin}\rangle \nn \\
&=\sum_\sigma\,{\ \,{\intsum}_{E\in\sigma_{SC}}} \,\f{\psi_\sigma(E)}{{(2|E|)^{1/4}}}\,\ket{p_{t,\sigma}(E)\,}_C
\ket{E}_S, \label{pt22}
\end{align}
where  $\psi_\sigma(E)$  are Newton-Wigner-type wave functions associated to the positive and negative frequency modes~\cite{haag2012local}:\footnote{The fourth root comes about because the Newton-Wigner wave function is usually defined for Klein-Gordon systems where what we call $E$  is in fact the square of the energy $\omega_p:=\sqrt{\vec{p}^2+m^2}$. Note also that for the Klein-Gordon case one has a doubly degenerate system energy, which  we are not considering here. In that case, it is more convenient to use a momentum, rather than an energy representation of physical states, and a distinct measure. Eq.~\eqref{NWwave} can then indeed be interpreted as the usual Newton-Wigner wave function, written in terms of kinematical states. This will be discussed in more detail in  Sec.~\ref{sec_kuchar} (see also \cite{Hoehn:2018whn}).}
\begin{align}
\psi_\sigma(E) := \f{\psi_{\rm kin}\left(p_{t,\sigma}(E),E\right)}{{(2|E|)^{1/4}}}, \label{NWwave}
\end{align}
and we have defined the function $p_{t,\sigma}(E)\ce-\sigma\sqrt{2|E|}$ and spectrum 
\begin{align}
\sigma_{SC}&\ce\spec(\hat H_S)\cap \spec(-\hat H_C)\nn\\
&=\big\{E\in\spec(\hat H_S)\,\big|\, s\,E\leq0\big\}.\label{spectrum}
\end{align}

Physical states are normalized   with respect to the physical inner product introduced in Eq.~\eqref{PIP22}
\begin{align}
\braket{\psi_{\rm phys}|\phi_{\rm phys}}_{\rm phys}&\ce\langle\psi_{\rm kin}|\delta(\hat C_H)|\phi_{\rm kin}\rangle_{\rm kin} \label{PIP}\\
&=\sum_\sigma\, {\ \,{\intsum}_{E\in\sigma_{SC}}} \,\psi_\sigma^*(E)\,\phi_\sigma(E), \nn
\end{align}
 which takes the usual form of nonrelativistic quantum mechanics ($\sigma$-sector-wise), in line with the properties of Newton-Wigner-type wave functions. This observation will be crucial when discussing relativistic localization in Sec.~\ref{sec_kuchar}.

\section{Covariant clocks}\label{sec_covtime}

\subsection{Relational dynamics with a classical covariant clock}\label{sec_classrel}

Exploiting the splitting of the degrees of freedom into clock $C$ (our temporal reference system) and evolving system $S$, we now choose a \textit{clock function} $T$ on $\cp_C$  relative to which we describe the evolution of $S$ in terms of relational observables, as discussed in Sec.~\ref{sec_clneutral}. 

We could simply choose the phase space coordinate $T=t$ as the clock function. It follows from Eq.~\eqref{gaugeFlow} that in the $s=-1$ case $t$ runs `forward' on the positive frequency sector $\cc_+$ and `backward' on the negative frequency sector $\cc_-$ along the flow generated by $C_H$; for $s=+1$ the converse holds. Note that every point in $\cc_+\cap\,\cc_-$ corresponds to a static orbit of $t$ (since $p_t=0$ there), and $t$ is therefore a maximally bad clock function on $\cc_+\cap\,\cc_-$. This leads to challenges in describing relational dynamics relative to $t$: inverse powers of $p_t$ appear in the construction of relational observables encoding the evolution of system degrees of freedom relative to $t$ when canonical pairs on $\cp_S$ are used \cite{hoehnHowSwitchRelational2018,Hoehn:2018whn, Gambini:2000ht, Dittrich:2006ee}.\footnote{These challenges are related to those facing the definition of time-of-arrival operators in quantum mechanics \cite{Grot:1996xu,aharonov1998measurement, muga2000arrival, Gambini:2000ht,Dittrich:2006ee,hoehnHowSwitchRelational2018}.} One can solve this problem and obtain a well-defined relational dynamics by using affine (rather than canonical) pairs of evolving phase space coordinates on $\cp_S$ in the construction of relational observables \cite{Hoehn:2018whn}, or in the quantum theory by carefully regularizing inverse powers of $p_t$~\cite{hoehnHowSwitchRelational2018}.

However, in this article we shall sidestep these challenges and provide an arguably more elegant solution. We choose a different clock function according to the classical covariance condition: that it be canonically conjugate to $H_C$. This has the consequence of incorporating the pathology at $p_t=0$ into the clock function (which will nevertheless be meaningfully quantized in Sec.~\ref{sec_covtimedeg}), and leads to relational observables which work independently of the choice of phase space coordinates on $\cp_S$.  Solving $\{T,H_C\}=1$, we find that a covariant clock function $T$ must be of the form
$
T = s\,t/{p_t} + g(p_t)\,,
$
where $g(p_t)$ is an arbitrary function. Henceforth, we choose $g(p_t) = 0$  for simplicity, so that we have
\ba
T = s\,\f{t}{p_t}\,.\label{T}
\ea
This clock function is well-defined everywhere, except on the line $p_t=0$, where $H_C$ is non-degenerate. It is clear that $T$ runs `forward' everywhere on $\cc$ for both $s=\pm1$, except on $\cc_+\cap\,\cc_-$.

The covariance condition, combined with our assumption that the clock does not interact with the system, implies that $\{T,C_H\}=1$, which simplifies  the form of the relational Dirac observables in Eq.~\eqref{RelationalDiracObservable1}. 
 For example, the relational observable corresponding to the question `what is the value of the system observable $f_S$ when the clock $T$ reads $\tau$?' now takes the simple form \cite{dittrichPartialCompleteObservables2007,Hoehn:2019owq}
\begin{align}\label{F2}
F_{f_{S},T}(\tau)\approx\sum_{n=0}^{\infty}\,\f{(\tau-T)^n}{n!}\,\{f_S,H_S\}_n.
\end{align}

\subsection{Covariant quantum time observable for quadratic Hamiltonians}\label{sec_covtimedeg}

One might try to construct a time operator in  the quantum theory by directly quantizing the  covariant clock function in Eq.~\eqref{T} on the clock Hilbert space $\ch_C \simeq L^2(\mathbb{R})$ \cite{aharonov1961time,peres1980measurement,holevoProbabilisticStatisticalAspects1982}. Choosing a symmetric ordering, this yields
\begin{align}
 \label{quantTsym}
\hat T &=s\, \f{1}{2}\left(\hat t\,\hat{p}_t^{-1} + \hat{p}_t^{-1}\,\hat t\right).  
\end{align}
Here, $\hat{p}_t^{-1}$ is defined in terms of a spectral decomposition such that $\hat T \ket{p_t=0}$ is undefined, analogous to the classical case. While the operator $\hat T $ is canonically conjugate to the clock Hamiltonian, $[\hat T,\hat H_C] = i$, it is a symmetric operator that does not admit a self-adjoint extension~\mbox{\cite{holevoProbabilisticStatisticalAspects1982,busch1994time}}. Since $\hat T$ is not self-adjoint, its status as an observable is unclear.\footnote{Using the  commutation relation $[\hat t, \hat{p}_t^{-1}]= -i \hat{p}_t^{-2}$, which follows from multiplying $[\hat t,\hat p_t]=i$ from both sides with $\hat{p}_t^{-1}$, we can also write this operator as
\ba
\hat T&=&s\, \hat{p}_t^{-1}\left(\hat t - \f{i}{2}\,\hat{p}_t^{-1}\right)\label{quantT} 
\ea
We note in passing that the operator $\hat t - \f{i}{2}\,\hat{p}_t^{-1}$  is precisely the ``complex time operator" derived in \cite{Bojowald:2010xp} (see also \cite{Bojowald:2010qw,Hohn:2011us,Gielen:2020abd}) when constructing a relational Schr\"odinger picture for Wheeler-DeWitt type equations for constraints of the form Eq.~\eqref{WdW2}.
}
 This is a manifestation of Pauli's objection against the construction of time observables in quantum mechanics: For $\hat{H}_C$ bounded below, there does not exist a self-adjoint operator satisfying $[\hat T,\hat H_C] = i$. Pauli's conclusion was that we are forced to treat time as a classical parameter, different to the way other observables (e.g.\ position and momentum) are treated~\cite{pauli1958allgemeinen}. 

However, it was later realized that by appealing to the more general notion of an observable offered by a POVM, a \emph{covariant} time observable\footnote{We emphasize that the covariant time observable is a \emph{kinematical}, not a Dirac observable, as by construction its moments will not commute with the constraint.} $E_T$ can be constructed whose first moment corresponds to the operator $\hat{T}$ \cite{holevoProbabilisticStatisticalAspects1982,buschOperationalQuantumPhysics,braunsteinGeneralizedUncertaintyRelations1996}. Such a time observable  is defined by a set of effect operator densities ${E_T(dt)\geq 0}$ normalized as $ \int_{\mathbb{R}}  \, E_T(dt) = I_C$, and the covariance condition is implemented  by demanding that the effect operators $E_T(X) \ce \int_X  \, E_T(dt)$ for $X \subset \mathbb{R}$ are connected to one another by 
\begin{align}
E_T(X+t) = U_C(t) E_T(X) U_C^\dagger(t),\label{covariance}
\end{align}
where $U_C(t) \ce e^{-i\hat{H}_C t}$ is the unitary action of the one-dimensional group generated by the clock Hamiltonian. This will give rise to a generalization of canonical conjugacy of the time observable and the clock Hamiltonian, and permit us to extend the approach to relational quantum dynamics based on covariant clock POVMs~\cite{Hoehn:2019owq}  to relativistic models. In particular, we obtain a valid quantum time observable despite the classical clock pathologies.

Such an observable  can be constructed purely from the self-adjoint quantization of the clock Hamiltonian $\hat H_C$ and its eigenstates. The effect densities can be defined as a sum of `projections'
\ba
E_T(dt)=\f{1}{2\pi}\,\sum_{\sigma}\,dt\,\ket{t,\sigma}\!\bra{t,\sigma}
\ea
onto the \emph{clock states} corresponding to  the clock reading $t {\in \mathbb{R}}$ in the negative and positive frequency (i.e.\ positive and negative clock momentum) sector\footnote{Compared to \cite{braunsteinGeneralizedUncertaintyRelations1996}, we use a different definition of the degeneracy label $\sigma$ (here adapted to positive and negative frequency modes), change the normalization slightly, fix the relative phase, keep the momentum eigenstates as energy eigenstates and introduce $s$. For notational simplicity, we also set an arbitrary function in \cite{braunsteinGeneralizedUncertaintyRelations1996} (accounting for  a freedom in choosing the clock states) to zero. This is the quantum analog of the classical choice we made above, where we also set $g(p_t)$ in $T=t/p_t+g(p_t)$ to zero (see also {Appendix~B of} \cite{Hoehn:2019owq}). It would, however, be straightforward to reinsert this $g(p_t)$ in each of the following expressions.}
\begin{align}
\ket{t,\sigma}\ce \int_\mathbb{R}\,dp_t\,\sqrt{|p_t|}\,\theta(-\sigma\,p_t)\,e^{-i\,t\,s\,p_t^2/2}\,\ket{p_t}\,.
\end{align}\label{degclock} 
The covariance condition in Eq.~\eqref{covariance} is ensured by the fact that the clock states transform as
\begin{align}
\ket{t+t',\sigma} = U_C(t)\,\ket{t',\sigma}.\label{niceevol}
\end{align}
Note that the clock states  are orthogonal to the pathological state $\ket{p_t=0}$, and that they are not mutually orthogonal:
\begin{align}
\braket{t',\sigma'|t,\sigma}
= \delta_{\sigma\sigma'}\left[ \pi \delta \left( t - t' \right) - i \mbox{P} \frac{1}{t-t'} \right], \label{nonorth}
\end{align}
where P denotes the Cauchy principal value. Hence $E_T(dt)$ is not a true projector.  Nevertheless, the following lemma demonstrates that the clock states {$\ket{t,\sigma}$}  form an over-complete basis for {the $\sigma$-frequency sector of} $\mathcal{H}_C$, and in turn a properly normalized {covariant} time observable $E_T$ on $\mathcal{H}_{\rm kin}$.
\begin{Lemma}\label{lem_degresid}
The clock states $\ket{t,\sigma}$ defined in Eq.~(\ref{degclock}) integrate to projectors {$\theta(-\sigma\,\hat{p}_t)$} onto the positive/negative frequency sector on $\ch_C$
\ba
\f{1}{2\pi}\,\int_\mathbb{R}\,dt\,\ket{t,\sigma}\!\bra{t,\sigma} = \theta(-\sigma\,\hat{p}_t)\,
\ea
and hence form a resolution of the identity as follows:
\begin{align}
\int_{\mathbb{R}} E(dt)=\frac{1}{2\pi} \sum_{\sigma }\int_\mathbb{R} dt \,\ket{t,\sigma}\!\bra{t,\sigma}  = I_C\,.\label{degresid}
\end{align}
\end{Lemma}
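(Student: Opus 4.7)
The plan is a direct, brute-force computation in the $\ket{p_t}$ basis, which is clean because the clock states in Eq.~\eqref{degclock} are built as superpositions of momentum eigenstates. I would first insert the definition of $\ket{t,\sigma}$ into $\frac{1}{2\pi}\int dt\,\ket{t,\sigma}\!\bra{t,\sigma}$, getting a triple integral over $t$, $p_t$, and $p_t'$, and then perform the $t$-integral first to produce a delta function enforcing the quadratic phase-matching condition. Explicitly,
\begin{align}
\frac{1}{2\pi}\int_\mathbb{R}\!dt\,e^{-its(p_t^2-p_t'^2)/2}
= 2\,\delta\!\left(p_t^2-p_t'^2\right),\nn
\end{align}
using $|s|=1$. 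The factor of $2$ here will be important for the overall normalization.

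Next I would resolve the delta function of a quadratic using the standard identity
\begin{align}
\delta(p_t^2-p_t'^2) = \frac{1}{2|p_t|}\bigl[\delta(p_t-p_t')+\delta(p_t+p_t')\bigr]\nn
\end{align}
valid away from $p_t=0$ (which is a measure-zero set and harmless since $\ket{p_t=0}$ is excluded by the weight $\sqrt{|p_t|}$). The $\delta(p_t+p_t')$ contribution vanishes because it forces the Heaviside product to become $\theta(-\sigma p_t)\theta(\sigma p_t)$, which is zero almost everywhere. The surviving $\delta(p_t-p_t')$ piece then leaves
\begin{align}
\frac{1}{2\pi}\int_\mathbb{R}\!dt\,\ket{t,\sigma}\!\bra{t,\sigma}
= \int_\mathbb{R}\!dp_t\,\theta(-\sigma p_t)\,\ket{p_t}\!\bra{p_t},\nn
\end{align}
in which the factors $\sqrt{|p_t||p_t'|}$ and $\tfrac{1}{2|p_t|}$ combine with the $2$ above to give unity. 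By the spectral theorem this is precisely $\theta(-\sigma\hat p_t)$, proving the first displayed equation.

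Summing over $\sigma=\pm1$ then yields $\theta(\hat p_t)+\theta(-\hat p_t)=I_C$ (again modulo the measure-zero point $p_t=0$), establishing the resolution of the identity on $\ch_C$, and thus on $\ch_{\rm kin}$ by tensoring with $I_S$. The only subtlety, and the step I would be most careful about, is the treatment of the $p_t=0$ point: one must verify that the omission of this single point does not spoil the identity, which follows because $\{p_t=0\}$ has zero Lebesgue measure in the spectral resolution and the clock states were designed to annihilate $\ket{p_t=0}$. Everything else reduces to standard manipulations of Fourier integrals and Dirac delta functions.
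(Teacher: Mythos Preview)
Your proof is correct and follows essentially the same approach as the paper's: a direct momentum-space computation where the $t$-integral produces $\delta\bigl(s(p_t^2-p_t'^2)/2\bigr)$, which is then collapsed to the diagonal. The only cosmetic difference is that the paper resolves this delta via the variable change $\varepsilon=p_t^2/2$ (and treats $\sigma=-$ explicitly, invoking analogy for $\sigma=+$), whereas you use the composition identity $\delta(p_t^2-p_t'^2)=\tfrac{1}{2|p_t|}[\delta(p_t-p_t')+\delta(p_t+p_t')]$ and handle both sectors at once; the content is the same.
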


\begin{proof}
The proof is given in Appendix~\ref{app_deg}.
\end{proof}

The $n^\text{th}$-moment operator of the time observable $E_T$  is defined as
\begin{align}
\hat T^{(n)} \ce {\int_\mathbb{R} E_T(dt) \,t^n =} \f{1}{2\pi} \sum_{\sigma }\int_\mathbb{R} dt \,t^n \ket{t,\sigma}\!\bra{t,\sigma}. \label{degnthmom}
\end{align}
With this definition, we find that the first-moment operator  $\hat T^{(1)}$ of $E_T$ is in fact equal to the operator $\hat T$ in Eq.~\eqref{quantTsym}.
 {This was previously noticed in \cite{holevoProbabilisticStatisticalAspects1982,busch1994time} (for the $s=+1$ case). This provides a concrete interpretation of the time observable $E_T$ in terms of the classical theory --- the time operator $\hat{T}^{(1)}$, namely the first moment of the time observable $E_T$, is the quantization of the classical clock function $T$ in Eq.~\eqref{T}.} 
\begin{Lemma} \label{lem_sameT}
{The  operator $\hat{T}$ and the first moment operator $\hat T^{(1)}$ of the covariant time observable $E_T$ are equal, $\hat T \equiv \hat T^{(1)}$.}
\end{Lemma}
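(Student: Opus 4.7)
The plan is to verify $\hat T^{(1)}\equiv\hat T$ by computing the action of both operators on an arbitrary wave function $\psi(p_t):=\langle p_t|\psi\rangle$ in the clock's momentum representation. Note from the start that both operators are ill-defined at $p_t=0$, consistent with the classical pathology of $T=s\,t/p_t$ on that locus and with the fact, visible in Eq.~(13), that the clock states $|t,\sigma\rangle$ carry no support at $p_t=0$; this is also compatible with Lemma~\ref{lem_degresid}.

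For $\hat T$, using $\hat t = i\,\partial_{p_t}$ in this representation, a direct evaluation of the symmetric ordering in Eq.~\eqref{quantTsym} yields
\[
(\hat T\psi)(p_t) \;=\; \frac{i\,s}{p_t}\,\psi'(p_t) \;-\; \frac{i\,s}{2\,p_t^{2}}\,\psi(p_t),
\]
the second term arising from $[\hat t,\hat p_t^{-1}]=-i\,\hat p_t^{-2}$ when symmetrizing. This is exactly the form $s\,\hat p_t^{-1}(\hat t-\tfrac{i}{2}\hat p_t^{-1})$ recorded in the footnote beneath Eq.~\eqref{quantTsym}, and it is this two-term expression that $(\hat T^{(1)}\psi)(p_t)$ must reproduce.

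For $\hat T^{(1)}$, I would substitute the clock states into Eq.~\eqref{degnthmom} and perform the $t$-integral first, using the distributional identity $\int_{\mathbb{R}}dt\;t\;e^{itx}=-2\pi i\,\delta'(x)$, to obtain
\[
\hat T^{(1)} \;=\; -i\sum_\sigma \int dp_t\,dp'_t\;\sqrt{|p_t p'_t|}\;\theta(-\sigma p_t)\theta(-\sigma p'_t)\;\delta'\!\bigl(\tfrac{s}{2}((p'_t)^2-p_t^2)\bigr)\,|p_t\rangle\langle p'_t|.
\]
On the support of the $\theta$-functions, $p_t$ and $p'_t$ share the same sign, so within each $\sigma$-sector the argument $f(p'_t):=\tfrac{s}{2}((p'_t)^2-p_t^2)$ has a unique simple zero at $p'_t=p_t$. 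The standard chain rule for $\delta'$ of a composite function then reduces $\delta'(f(p'_t))$ to a linear combination of $\delta'(p'_t-p_t)$ and $\delta(p'_t-p_t)$ with coefficients built from $f'(p_t)=s\,p_t$ and $f''(p_t)=s$. The $\delta'$-contribution, after integration by parts against $\sqrt{|p'_t|}\,\psi(p'_t)$, produces the $(is/p_t)\,\psi'(p_t)$ term together with a piece of the anomalous term; the $\delta$-contribution supplies the remainder of the anomalous piece. Summing the two sign sectors $\sigma=\pm1$, which jointly cover $\mathbb{R}\setminus\{0\}$ and hence all of $\spec(\hat p_t)$ modulo a single point, the expression is expected to collapse precisely to $(\hat T\psi)(p_t)$, establishing the equality.

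The main obstacle is purely computational bookkeeping: keeping the signs straight when applying the chain rule for $\delta'$ within a fixed sign sector (where $|f'(p_t)|=|p_t|$ and the sign of $f'(p_t)$ depends on $\sigma$ and $s$), carefully distributing derivatives between the measure factor $\sqrt{|p'_t|}$ and the test wave function $\psi(p'_t)$ during the integration by parts, and confirming that the coefficient of the anomalous $-is/(2p_t^2)$ term, which receives contributions both from $f''(p_t)$ and from differentiating $\sqrt{|p'_t|}$, comes out with precisely the numerical factor demanded by the symmetric ordering of $\hat T$ rather than being off by a factor of two or a sign.
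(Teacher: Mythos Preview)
Your approach is essentially the same as the paper's: compute the action of both operators on an arbitrary $\psi(p_t)$ in the momentum representation and verify the expressions match. The paper differs in two minor respects. First, rather than handling $\delta'\bigl(\tfrac{s}{2}((p'_t)^2-p_t^2)\bigr)$ via the composite-function chain rule, the paper substitutes the energy variable $\varepsilon=p_t^2/2$ from the outset, so the $t$-integral produces simply $\delta'(\varepsilon-\varepsilon')$ and the subsequent integration by parts is in $\varepsilon'$ against the measure factor $(4\varepsilon\varepsilon')^{-1/4}\psi(\sigma\sqrt{2\varepsilon'})$; this avoids the sign and factor bookkeeping you flag as the main obstacle. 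Second, the paper explicitly establishes $\cd(\hat T^{(1)})=\cd(\hat T)$ by showing that finiteness of $\|\hat T^{(1)}\psi\|$ forces exactly the two conditions in Eq.~\eqref{domain}, including the boundary condition $\lim_{p_t\to0}\psi(p_t)/\sqrt{|p_t|}=0$ which kills a surface term from the integration by parts. Your proposal acknowledges the $p_t=0$ pathology but does not explicitly argue domain equality; once you do, the two proofs are interchangeable.
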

\begin{proof}
The proof is given in Appendix~\ref{app_deg}.
\end{proof}

{Eq.~\eqref{degnthmom} demonstrates that the time operator $\hat T$ automatically splits into a positive and negative frequency part, in contrast to $\hat t$, the quantization of the phase space coordinate $t$.}

{Next, we find that while the clock states are not orthogonal, they are `almost' eigenstates of the covariant time operator $\hat T$ on each $\sigma$-sector:} 
\begin{Lemma}\label{lem_almosteigen}
The clock states $\ket{t,\sigma}$ defined in Eq.~(\ref{degclock}) are \emph{not} eigenstates of $\hat T = \hat{T}^{(1)}$. However, for all ${\ket{\psi}\in\cd(\hat T)}$, where $\cd(\hat T)$ is the domain of $\hat T$, they satisfy:
\ba
\bra{\psi}\,\hat T\,\ket{t,\sigma} = t\,\braket{\psi|t,\sigma}\,,\q\forall\,t\in\mathbb{R}\,, \, \sigma =\pm1\,.\nn
\ea
\end{Lemma}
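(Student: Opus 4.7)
The proof splits into two independent claims, which I would tackle separately.

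For the non-eigenstate claim, I would simply act with the symmetric-ordered form of $\hat T$ from Eq.~\eqref{quantTsym} on the clock-state wave function $\psi_{t,\sigma}(p_t)=\sqrt{|p_t|}\,\theta(-\sigma p_t)\,e^{-itsp_t^2/2}$ in the momentum representation. Computing $(\hat t\,\hat p_t^{-1}+\hat p_t^{-1}\,\hat t)\psi_{t,\sigma}$ term by term, the $|p_t|^{-3/2}$ contributions produced by $\partial_{p_t}$ hitting $\sqrt{|p_t|}^{\pm 1}$ cancel between the two orderings, while the Gaussian-phase derivative combines with the $\hat p_t^{-1}$ factor to yield $2ts\,\psi_{t,\sigma}$. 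What does \emph{not} cancel is the distributional contribution generated in the $\hat t\hat p_t^{-1}$ ordering by $\partial_{p_t}\theta(-\sigma p_t)=-\sigma\delta(p_t)$, which gives a residual singular term proportional to $\delta(p_t)/\sqrt{|p_t|}$. Since this is supported precisely where $\psi_{t,\sigma}$ vanishes, $\hat T\ket{t,\sigma}$ cannot be distributionally equal to $t\ket{t,\sigma}$.

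For the matrix-element identity, I would use Lemma~\ref{lem_sameT} to replace $\hat T$ by $\hat T^{(1)}$ and write $\bra{\psi}\hat T\ket{t,\sigma}=\tfrac{1}{2\pi}\sum_{\sigma'}\int dt'\,t'\,\braket{\psi|t',\sigma'}\braket{t',\sigma'|t,\sigma}$. Inserting the non-orthogonality formula~\eqref{nonorth} collapses the $\sigma'$-sum to $\sigma'=\sigma$ and splits the kernel into a delta piece contributing $\tfrac{t}{2}\braket{\psi|t,\sigma}$ and a Cauchy principal-value piece. Decomposing the numerator as $t'=t-(t-t')$ in the latter, and then applying the consistency relation $\tfrac{1}{2}\braket{\psi|t,\sigma}=-\tfrac{i}{2\pi}\,\mbox{P}\!\int dt'\,\braket{\psi|t',\sigma}/(t-t')$\,---\,which follows from projecting the resolution of identity of Lemma~\ref{lem_degresid} onto $\ket{t,\sigma}$ and pairing with $\bra{\psi}$\,---\,one arrives at $\bra{\psi}\hat T\ket{t,\sigma}=t\braket{\psi|t,\sigma}+\tfrac{i}{2\pi}\int dt'\,\braket{\psi|t',\sigma}$.

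The main obstacle is showing that the residual integral $\int dt'\,\braket{\psi|t',\sigma}$ vanishes on $\cd(\hat T)$. Passing to the momentum representation $\braket{\psi|t',\sigma}=\int dp_t\,\psi^{*}(p_t)\sqrt{|p_t|}\,\theta(-\sigma p_t)\,e^{-it'sp_t^2/2}$ and carrying out the $t'$-integral yields a delta distribution concentrated at $sp_t^2=0$, i.e.\ at $p_t=0$. Changing variables to $u=sp_t^2/2$ turns the integrand into a boundary quantity of schematic form $(2u)^{-1/4}\,\psi^{*}(-\sigma\sqrt{2u/s})\,\delta(u)$, which vanishes provided $\psi(p_t)$ decays faster than $|p_t|^{1/2}$ as $p_t\to 0$. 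This decay is automatic on $\cd(\hat T)$: since $\hat T$ contains $\hat p_t^{-1}$, membership requires $\psi(p_t)/p_t\in L^2(\mathbb{R})$, which forces exactly the needed vanishing at the origin. Assembling the pieces gives $\bra{\psi}\hat T\ket{t,\sigma}=t\braket{\psi|t,\sigma}$, completing the second claim.
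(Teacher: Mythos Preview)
Your proposal is correct in its conclusions and ultimately hinges on the same mechanism as the paper's proof\,---\,the vanishing of $\psi(p_t)/\sqrt{|p_t|}$ at the origin for $\psi\in\cd(\hat T)$\,---\,but the two arguments are organized rather differently.

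The paper does both halves in a single stroke: it computes $\hat T\ket{t,\sigma}$ directly in the momentum representation (using the factored form $\hat T=s\,\hat p_t^{-1}(\hat t-\tfrac{i}{2}\hat p_t^{-1})$ rather than the symmetric one) and obtains $t\ket{t,\sigma}$ plus the singular residual $\int dp_t\,\tfrac{is}{\sqrt{|p_t|}}\,\delta(p_t)\,e^{-itsp_t^2/2}\ket{p_t}$. The matrix-element identity then follows immediately by pairing with any $\bra{\psi}$ satisfying the first domain condition in Eq.~\eqref{domain}, $\lim_{p_t\to 0}\psi(p_t)/\sqrt{|p_t|}=0$; and the non-eigenstate claim is established by exhibiting a concrete state $\psi'(p_t)=N\sqrt{|p_t|}\,e^{-p_t^2}\notin\cd(\hat T)$ against which the residual does \emph{not} vanish. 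Your Part~1 is essentially the same computation, though your argument that the residual ``cannot be distributionally equal'' to zero is less sharp than the paper's explicit counterexample---you should really produce such a test state to close that claim cleanly.

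Your Part~2 is the genuinely different piece: instead of recycling the Part~1 computation, you expand $\hat T^{(1)}$ in the clock-state basis, insert the non-orthogonality kernel Eq.~\eqref{nonorth}, and reduce to showing $\int dt'\,\braket{\psi|t',\sigma}=0$. This is a valid alternative and the consistency relation you invoke does follow from Lemma~\ref{lem_degresid} as you say. But it is more roundabout: the $t'$-integral produces the ill-defined $\delta(sp_t^2/2)$, which you must then regularize by the change of variable $u=sp_t^2/2$ before the domain condition can be applied\,---\,arriving at precisely the same boundary condition the paper uses directly. Note also that the domain $\cd(\hat T)$ is not characterized by $\psi/p_t\in L^2$ as you suggest; the actual conditions are those of Eq.~\eqref{domain}, and it is specifically the first of these that does the work. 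In short, your route is sound but the paper's is shorter: one momentum-space computation serves both claims at once.
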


\begin{proof}
The proof is given in Appendix~\ref{app_deg}.
\end{proof}

{This leads to a another result, which underscores why the covariance condition Eq.~\eqref{covariance} can be regarded as yielding a generalization of canonical conjugacy:}

\begin{Lemma}\label{thm_sameT2}
The $n^\text{th}$-moment operator defined in Eq.~\eqref{degnthmom} satisfies $[\hat T^{(n)},\hat H_C] = i\,n\,\hat T^{(n-1)}$. Furthermore, $\forall\ket{\psi} \in\mathcal{D}(\hat{T}^{n})$ we have $\hat T^{(n)}\ket{\psi}= \hat T^n\ket{\psi}$.
\end{Lemma}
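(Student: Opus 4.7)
The plan is to establish the two claims in turn, leveraging the translational covariance of the clock states together with the almost-eigenstate property of Lemma~\ref{lem_almosteigen}.

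For the commutator identity, I would exploit the fact that, by Eq.~\eqref{niceevol}, $\ket{t,\sigma} = U_C(t)\ket{0,\sigma}$, so that
\[
\ket{t,\sigma}\!\bra{t,\sigma} = U_C(t)\,\ket{0,\sigma}\!\bra{0,\sigma}\,U_C^\dagger(t).
\]
Differentiating with respect to $t$ using $\tfrac{d}{dt}U_C(t) = -i\,\hat H_C\,U_C(t)$ yields the Heisenberg-like identity
\[
[\ket{t,\sigma}\!\bra{t,\sigma},\hat H_C] = -i\,\tfrac{d}{dt}\ket{t,\sigma}\!\bra{t,\sigma}.
\]
Substituting this into the defining integral Eq.~\eqref{degnthmom} and integrating by parts in $t$ produces $[\hat T^{(n)},\hat H_C] = i\,n\,\hat T^{(n-1)}$, provided the boundary terms at $t\to\pm\infty$ vanish (which they do weakly on the relevant dense domain, since the overlaps $\braket{\phi|t,\sigma}$ of physically admissible states decay in $t$).

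For the operator equality $\hat T^{(n)}\ket{\psi} = \hat T^n\ket{\psi}$ on $\mathcal{D}(\hat T^n)$, the idea is to iterate Lemma~\ref{lem_almosteigen}. Taking complex conjugates of $\bra{\psi}\hat T\ket{t,\sigma} = t\braket{\psi|t,\sigma}$ and using the symmetry of $\hat T$ on $\mathcal{D}(\hat T)$, I obtain $\bra{t,\sigma}\hat T\ket{\psi} = t\braket{t,\sigma|\psi}$ for every $\ket{\psi}\in\mathcal{D}(\hat T)$. Iterating this $n$ times — which is legitimate precisely because $\ket{\psi}\in\mathcal{D}(\hat T^n)$ implies $\hat T^k\ket{\psi}\in\mathcal{D}(\hat T)$ for all $k<n$ — yields $\bra{t,\sigma}\hat T^n\ket{\psi} = t^n\braket{t,\sigma|\psi}$. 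Then for an arbitrary bra $\bra{\phi}$,
\[
\bra{\phi}\hat T^{(n)}\ket{\psi} = \frac{1}{2\pi}\sum_\sigma\int dt\,\braket{\phi|t,\sigma}\,t^n\braket{t,\sigma|\psi} = \bra{\phi}\hat T^n\ket{\psi},
\]
where the final equality follows by reabsorbing $t^n$ into $\bra{t,\sigma}\hat T^n\ket{\psi}$ and invoking the resolution of identity from Lemma~\ref{lem_degresid}. Since $\bra{\phi}$ is arbitrary, this gives the claimed equality as vectors in $\mathcal{H}_{\rm kin}$.

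The main obstacle I anticipate is rigorously justifying the boundary-term cancellation in the integration by parts, because the clock states $\ket{t,\sigma}$ are non-normalizable and behave distributionally; in practice this is handled by sandwiching the identity between vectors whose overlaps with $\ket{t,\sigma}$ vanish sufficiently fast as $|t|\to\infty$, so that the manipulation is valid weakly on a dense core. As a consistency check, once the operator equality $\hat T^{(n)} = \hat T^n$ has been established, the commutator relation also follows algebraically from the formal canonical conjugacy $[\hat T,\hat H_C] = i$ via $[\hat T^n,\hat H_C] = \sum_{k=0}^{n-1}\hat T^k\,[\hat T,\hat H_C]\,\hat T^{n-1-k} = i\,n\,\hat T^{n-1}$, providing an independent route to the same identity.
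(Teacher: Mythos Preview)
Your proposal is correct and uses essentially the same ingredients as the paper: covariance of the clock states for the commutator identity, and iteration of Lemma~\ref{lem_almosteigen} for the operator equality. The one noteworthy difference is in the first part. The paper conjugates $\hat T^{(n)}$ by $U_C(s)$, uses covariance to shift the integration variable so that $U_C(s)\hat T^{(n)}U_C^\dagger(s) = \tfrac{1}{2\pi}\sum_\sigma\int dt\,(t-s)^n\ket{t,\sigma}\!\bra{t,\sigma}$, and then differentiates in $s$ at $s=0$. This avoids integration by parts altogether, so no boundary terms ever appear --- which is precisely the obstacle you flagged. Your integration-by-parts route is equivalent in spirit but requires the additional weak-vanishing argument; the paper's variable-shift trick is the cleaner way to package the same covariance input. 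For the second part, your sandwiching with an arbitrary $\bra{\phi}$ and the paper's direct reduction $\hat T^{(n)}\ket{\psi}=\hat T^{(n-1)}\hat T\ket{\psi}$ are the same iteration of Lemma~\ref{lem_almosteigen}, just organized slightly differently.
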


\begin{proof}
The proof is given in Appendix~\ref{app_deg}.
\end{proof}

{We emphasize that the second statement of Lemma~\ref{thm_sameT2} does not hold on all of $\ch_C$.}

The effect {density does}  not commute with the clock Hamiltonian, $[E_T(dt), \hat H_C]\neq0$, which implies the time indicated by the clock (i.e.\  a measurement outcome of  $E_T$) and the clock energy cannot be  determined simultaneously.  However, importantly, the following lemma shows that the clock reading and the frequency sector, {i.e.\ the value of $\sigma$} \emph{can} be simultaneously determined. 
\begin{Lemma}\label{lem_projTcommute}
The effect density $E_T(dt)$ of the covariant clock POVM and the projectors onto the $\sigma$-sectors commute: $[E_T(dt),\theta(-\sigma\,\hat p_t)]=0$.
\end{Lemma}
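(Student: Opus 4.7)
The plan is to show that each clock state $\ket{t,\sigma'}$ lies entirely within the $\sigma'$-frequency sector of $\ch_C$, so that the projectors $\theta(-\sigma\hat p_t)$ act diagonally on the sum defining $E_T(dt)$. Once this is established, the commutator vanishes term-by-term.

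First I would inspect the momentum-space definition
\[
\ket{t,\sigma'} = \int_\mathbb{R} dp_t\,\sqrt{|p_t|}\,\theta(-\sigma'\,p_t)\,e^{-i\,t\,s\,p_t^2/2}\,\ket{p_t},
\]
and observe that the factor $\theta(-\sigma'\,p_t)$ restricts the support in $p_t$ to the half-line $\sigma'\,p_t\leq 0$. Consequently $\theta(-\sigma\,\hat p_t)\ket{t,\sigma'} = \delta_{\sigma\sigma'}\ket{t,\sigma'}$ (with the exceptional point $p_t=0$ being of measure zero), and by taking adjoints $\bra{t,\sigma'}\theta(-\sigma\,\hat p_t) = \delta_{\sigma\sigma'}\bra{t,\sigma}$.

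Then I would simply plug in the definition
\[
E_T(dt)=\frac{dt}{2\pi}\sum_{\sigma'}\ket{t,\sigma'}\!\bra{t,\sigma'}
\]
and compute both orderings of the product with $\theta(-\sigma\,\hat p_t)$. Using the two relations above, both $\theta(-\sigma\,\hat p_t)E_T(dt)$ and $E_T(dt)\theta(-\sigma\,\hat p_t)$ collapse to $\frac{dt}{2\pi}\ket{t,\sigma}\!\bra{t,\sigma}$, from which $[E_T(dt),\theta(-\sigma\,\hat p_t)]=0$ follows immediately.

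There is no real obstacle here; the statement is essentially a bookkeeping consequence of the $\sigma'$-sector support built into the clock states through the Heaviside factor in their momentum representation. The only mild subtlety is the zero-momentum point, which however is irrelevant since $\sqrt{|p_t|}$ vanishes there (and carries no spectral weight for $\hat p_t$). Thus the covariant clock POVM is block-diagonal with respect to the frequency-sector decomposition, making $t$ and $\sigma$ jointly measurable, as claimed.
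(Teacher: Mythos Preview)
Your proof is correct and follows essentially the same approach as the paper: both arguments hinge on the observation that the Heaviside factor in the momentum-space definition of $\ket{t,\sigma'}$ confines its support to a single frequency sector, so that $\theta(-\sigma\hat p_t)\ket{t,\sigma'}=\delta_{\sigma\sigma'}\ket{t,\sigma'}$. The paper carries this out via an explicit integral computation of $E_T(dt)\,\theta(-\sigma\hat p_t)$ and then invokes symmetry, whereas you state the sector-support property directly and apply it to both orderings; the content is the same.
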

\begin{proof}
The proof is given in Appendix~\ref{app_deg}.
\end{proof}
\begin{corol}\label{cor_5}
Since the effect density integrates to the effect and moment operators, this entails that  $[E_T(X),\theta(-\sigma\,\hat p_t)]=[\hat T^{(n)},\theta(-\sigma\,\hat p_t)]=0$, for all $X\subset\mathbb{R}$ and $n\in\mathbb{N}$.
\end{corol}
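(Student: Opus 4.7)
The plan is to observe that the corollary follows immediately from Lemma~\ref{lem_projTcommute} by linearity of the integral, since the projector $\theta(-\sigma\,\hat p_t)$ is independent of $t$ and hence can be commuted past the integration. Specifically, I would start from the definitions
\begin{align}
E_T(X) = \int_X E_T(dt), \qquad \hat T^{(n)} = \int_{\mathbb{R}} E_T(dt)\, t^n,\nonumber
\end{align}
and then insert the projector $\theta(-\sigma\,\hat p_t)$ on either side, moving it through the integral.

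The key step is then to write
\begin{align}
[E_T(X),\theta(-\sigma\,\hat p_t)] &= \int_X [E_T(dt),\theta(-\sigma\,\hat p_t)],\nonumber \\
[\hat T^{(n)},\theta(-\sigma\,\hat p_t)] &= \int_{\mathbb{R}} [E_T(dt),\theta(-\sigma\,\hat p_t)]\, t^n,\nonumber
\end{align}
both of which vanish by Lemma~\ref{lem_projTcommute}. Strictly, the second identity should be interpreted in the sense of matrix elements on the dense domain $\mathcal{D}(\hat T^{(n)})$ where the moment operator is defined; on this domain the manipulation is justified by Fubini-type arguments, since the clock states split cleanly into the two $\sigma$-sectors and the projector acts diagonally on this decomposition.

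There is no real obstacle here; the only mild subtlety is ensuring that the exchange of the projector and the $t$-integral is well-defined for unbounded $n$-th moments. This is handled by restricting to the natural domain $\mathcal{D}(\hat T^{(n)})$ and using that $\theta(-\sigma\,\hat p_t)$ is a bounded projector that maps this domain into itself (since, as Lemma~\ref{lem_degresid} shows, $\theta(-\sigma\,\hat p_t)$ is precisely the integrated clock-state projector for the $\sigma$-sector and thus preserves the $\sigma$-decomposition underlying the definition of $\hat T^{(n)}$). Hence both commutators vanish for all $X\subset\mathbb{R}$ and $n\in\mathbb{N}$, as claimed.
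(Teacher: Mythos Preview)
Your proposal is correct and matches the paper's intended argument: the corollary is stated in the paper without a separate proof precisely because it follows immediately from Lemma~\ref{lem_projTcommute} by integrating the effect density against $1$ over $X$ or against $t^n$ over $\mathbb{R}$. Your additional remarks on domains for the unbounded moment operators are a reasonable gloss, but the paper treats this step as self-evident.
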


The significance of this lemma and  corollary is that they permit us to condition on the time indicated by the clock and the frequency sector simultaneously. This will become crucial when defining the quantum reduction maps below that take us from the physical Hilbert space to the relational Schr\"odinger and Heisenberg pictures which exist for each $\sigma$-sector. This lemma is thus an important for extending the quantum reduction procedures of~\cite{Hoehn:2019owq} to the class of models considered here.

\section{The trinity of relational quantum dynamics: Quadratic clock Hamiltonians}\label{sec_trinity}

Having introduced  the clock-neutral structure of the classical and quantum theories in Sec.~\ref{sec_cRDOs2}, a natural partitioning of the kinematical {degrees of freedom} into a clock $C$ and system $S$ in Sec.~\ref{Quadratic clock Hamiltonians}, {and a covariant time observable $E_T$ in Sec.~\ref{sec_covtime},} we are now able to {construct a \emph{relational quantum dynamics}, describing} how $S$ evolves relative to $C$.

As noted in the introduction, we showed in~\cite{Hoehn:2019owq} that three formulations of relation quantum dynamics, namely (i)~quantum relational Dirac observables, (ii) the relational Schr\"odinger picture of the Page-Wootters formalism, and (iii) the relational Heisenberg picture obtained through quantum deparametrization, are equivalent for models described by the Hamiltonian constraint in Eq.~\eqref{firstConstraint} when the clock Hamiltonian has a continuous, non-degenerate spectrum; the three formulations form a \emph{trinity} of relational quantum dynamics. Here we demonstrate that this equivalence extends to constraints of the form in Eq.~\eqref{WdW2}, involving the doubly degenerate clock Hamiltonian.\footnote{We also refer the reader to the recent work \cite{chataignier2020relational}, which we became aware of while completing this manuscript. It extends some of the results of \cite{Hoehn:2019owq} as well, though using the different formalism developed in \cite{Chataignier:2019kof}. It also does not employ covariant clocks in the case of quadratic clock Hamiltonians.}

Thanks to the direct sum structure of the physical Hilbert space ${\ch_{\rm phys}=\ch_{+}\oplus \ch_{-}}$ and the separation of the clock moment operators, Eq.~\eqref{degnthmom}, into non-degenerate positive and negative frequency sectors, all the technical results needed for establishing the equivalence in \cite{Hoehn:2019owq} will hold per $\sigma$-sector for the present class of models. We will thus  state some of the following results without proofs, {referring the reader as approriate to} the proofs of the corresponding results in \cite{Hoehn:2019owq}, {which} apply here per $\sigma$-sector. In particular, Corollary~\ref{cor_5} implies that we are permitted to simultaneously condition on the clock reading and the frequency sector.

  \begin{table}
 \begin{tabular}{c}
 Summary of different Hilbert spaces
  \\ \ \vspace{-5pt} \\ 
  \hline\hline \vspace{-5pt}
  \\
  Clock $C$ and system $S$ Hilbert spaces \\
  $\mathcal{H}_C$ and $\mathcal{H}_S$
  \\   \ \vspace{-5pt} \\ 
  \hline \vspace{-5pt}
 \\
  Kinematical Hilbert space\\
  $\mathcal{H}_{\rm kin} \simeq \mathcal{H}_C \otimes \mathcal{H}_S$
  \\ \ \vspace{-5pt} \\ 
 \hline \vspace{-5pt} 
  \\
  Physical Hilbert space \\
  $\mathcal{H}_{\rm phys} \simeq \delta(\hat{C}_H) (\mathcal{H}_{\rm kin}) = \mathcal{H}_+ \oplus \mathcal{H}_-$
  \\ \ \vspace{-5pt} \\ 
  \hline \vspace{-5pt} 
  \\
  Physical system Hilbert space\\
  $
  \mathcal{H}^{\rm phys}_S =\Pi_{\sigma_{SC}}(\ch_S) \subseteq\mathcal{H}_S
  $
  \\ \ \vspace{-5pt} \\ 
  \hline \vspace{-5pt} 
\\
\end{tabular}
\caption{The various Hilbert spaces appearing in the construction of the trinity. The physical system Hilbert space is the subspace of $\ch_S$ spanned by the energy eigenstates permitted upon solving the constraint. The $\sigma$-sector $\ch_\sigma$ of $\ch_{\rm phys}$ is also defined through solutions to the constraint $\hat C_\sigma$. \label{tab_table}} 
\end{table}

Lastly, we also provide a discussion of the relational quantum dynamics obtained through reduced phase space quantization. In this case, one deparametrizes the model classically relative to the clock function $T$, which amounts to a classical symmetry reduction. While the relational quantum dynamics thus obtained yields a relational Heisenberg picture resembling dynamics (iii) of the trinity, it is not always equivalent and thus not necessarily part of the trinity. For this reason, we have moved the exposition of reduced phase space quantization to Appendix~\ref{app_redQT}. It is however useful for understanding why the quantum symmetry reduction explained below is the quantum analog of classical phase space reduction through deparametrization. We emphasize that symmetry reduction and quantization do not commute in general~\cite{Ashtekar:1982wv,Kuchar:1986jj,Schleich:1990gd,Romano:1989zb,Loll:1990rx,Kunstatter:1991ds,Hoehn:2019owq}.

To aid the reader, we summarize the various Hilbert spaces appearing in the construction of the trinity in Table~\ref{tab_table}.

\subsection{The three faces of the trinity}

\subsubsection{Dynamics (i): Quantum relational Dirac observables}

We now {quantize the relational Dirac observables in Eq.~\eqref{F2}}, substantiating {the discussion of relational quantum dynamics in the clock-neutral picture in Sec.~\ref{sec_cnqt} for Hamiltonian constraints of the form Eq.~\eqref{WdW2}. Quantization of relational Dirac observables has been studied when the quantization of the classical time function $T$ results in a self-adjoint time operator $\hat{T}$ (see \cite{rovelliQuantumGravity2004,Rovelli:1990jm, Rovelli:1989jn,Rovelli:1990ph,Rovelli:1990pi,thiemannModernCanonicalQuantum2008,Tambornino:2011vg,Ashtekar:1993wb,Dittrich:2015vfa,Dittrich:2016hvj,Gambini:2000ht, Marolf:1994nz,Marolf:1994wh,Giddings:2005id, hoehnHowSwitchRelational2018,Hoehn:2019owq,Hoehn:2018whn,Chataignier:2019kof,chataignier2020relational} and references therein); however, when $\hat{T}$ fails to be self-adjoint, such as in Eq.~\eqref{quantTsym}, a more general quantization procedure is needed.}

Such a procedure was introduced in \cite{Hoehn:2019owq} based upon the quantization of Eq.~\eqref{F2} using covariant time observable{s}. 
Applying this procedure  to the present class of models described by quadratic clock Hamiltonians, we quantize the relational Dirac observables in Eq.~\eqref{F2} using the $n^\text{th}$-moment operators defined in Eq.~\eqref{degnthmom}:
\begin{align}
\hat F_{f_S,T}(\tau)&\ce {\int_{\mathbb{R}} \, E_T(dt) \otimes\sum_{n=0}^{\infty}{ \f{i^n}{n!}\,(t-\tau)^n\,\big[\hat f_S,\hat H_S\big]_n}}
\nn \\
&=\sum_{\sigma}\int_\mathbb{R} \f{dt}{2\pi}  \,U_{CS}(t)\!\left(\ket{\tau,\sigma}\! \bra{\tau,\sigma}\otimes \hat f_S\right)\!U_{CS}^\dag(t)\nn\\
&=:\sum_{\sigma}\,\cg \left(\ket{\tau,\sigma}\!\bra{\tau,\sigma}\otimes \hat f_S\right),\label{RDOq}
\end{align}
where $[\hat f_S,\hat H_S]_n \ce [ [\hat f_S,\hat H_S]_{n-1},\hat H_S]$ is the $n^\text{th}$-order nested commutator with the convention ${[\hat f_S,\hat H_S]_0 \ce  \hat f_S}$, $U_{CS}(t) \ce \exp(-i\,t\,\hat C_H) $, and the second line follows upon a change of integration variable and invoking the covariance condition in Eq.~\eqref{niceevol}. The relational Dirac observable $\hat F_{f_S,T}(\tau)$ is thus revealed to be an incoherent average over the {one-parameter noncompact} gauge group {$G$} generated by the constraint operator $\hat{C}_H$ of the kinematical operator $\ket{\tau,\sigma}\!\bra{\tau,\sigma}\otimes \hat f_S$, which is the system observable of interest $\hat f_S$ paired with the projector onto the clock reading $\tau$ and the $\sigma$-frequency sector. Such a group averaging is known as the $G$-twirl operation {and we denote it $\cg$ as} in the last line of Eq.~\eqref{RDOq}.  {$G$-twirl operations have previously been mostly  studied in the context of spatial quantum reference frames, e.g.\ see \cite{Bartlett:2007zz,smithCommunicatingSharedReference2019,smithQuantumReferenceFrames2016}, but have  also appeared in some constructions of quantum Dirac observables, e.g.\ see~\cite{Giulini:1998kf,thiemannModernCanonicalQuantum2008,Hoehn:2019owq,Chataignier:2019kof,chataignier2020relational}.\footnote{{The recent \cite{Chataignier:2019kof,chataignier2020relational} also develop a systematic quantization procedure for relational Dirac observables, based on integral techniques rather than the sum techniques used here and in \cite{Hoehn:2019owq}, and which too yields an expression similar to the one in the second line of Eq.~\eqref{RDOq}. While the construction procedure in \cite{Chataignier:2019kof,chataignier2020relational} encompasses a more general class of models (but implicitly assumes globally monotonic clocks too), it uses a more restrictive choice of clock observables which, in contrast to the covariant clock POVMs here and in \cite{Hoehn:2019owq}, are required to be self-adjoint. However, the two quantization procedures of relational observables are compatible and it will be fruitful to combine them.}}}
 As discussed in  \cite{Hoehn:2019owq}, this $G$-twirl constitutes the quantum analog of a gauge-invariant extension of a gauge-fixed quantity.

The relational Dirac observables
$\hat F_{f_S,T}(\tau)$ in Eq.~\eqref{RDOq} constitute  a one-parameter family of strong Dirac observables on $\ch_{\rm phys}$ (Theorem 1 of \cite{Hoehn:2019owq} whose proof applies here in each $\sigma$-sector):
\ba
[\hat F_{f_S,T}(\tau),\hat C_H]=0,\quad \forall\,\tau\in\mathbb{R}\,.\label{RDOcommute}
\ea
{We thus obtain a gauge-invariant relational quantum dynamics by letting the evolution parameter $\tau$ in the physical expectation values $\braket{\psi_{\rm phys}|\hat F_{f_S,T}(\tau)\,|\psi_{\rm phys}}_{\rm phys}$ run.}

The decomposition of $\hat F_{f_S,T}(\tau)$ in Eq.~\eqref{RDOq} into positive and negative frequency sectors gives rise to a reducible representation of 
the Dirac observable algebra on the physical Hilbert space. More precisely, relational Dirac observables are superselected across the $\sigma$-frequency sectors, and the $\sigma$-sum in Eq.~\eqref{RDOq} should thus be understood as a direct sum. To see this, consider the operator $\hat Q\ce  \theta({-}\hat{p}_t) - \theta(\hat{p}_t)$, where we recall that $\theta(-\sigma\,\hat{p}_t)$ is a projector onto the corresponding $\sigma$-sector.  By construction $[\hat Q,\hat C_H]=0$, which means that $\hat Q$ is a {strong} Dirac observable. {Its} eigenspaces, {with eigenvalues $+1$ and $-1$}, correspond to {the positive and negative frequency sector subspaces} $ \mathcal{H}_+$ and $\mathcal{H}_-$. Furthermore, $\hat Q$ commutes with any relational Dirac observable $F_{f_S,T}(\tau)$ in Eq.~\eqref{RDOq} on account of Lemma~\ref{lem_projTcommute}, which implies that  $Q$ and {any self-adjoint} $F_{f_S,T}(\tau)$ can be diagonalized in the same eigenbasis. This in turn implies the following superselection rule 
\begin{align}
\hat F_{f_S,T}(\tau)&= \hat F_{f_S,T}^+(\tau) \oplus \hat F_{f_S,T}^-(\tau),
\label{directSum}
\end{align}
where $\hat F_{f_S,T}^\sigma(\tau) \ce \mathcal{G} \big(\ket{\tau,\sigma}\!\bra{\tau,\sigma}\otimes \hat f_S) \in \mathcal{L}(\mathcal{H}_{\sigma} \big)$.\footnote{{In particular, when the spectrum of $\hat H_S$ does not contain zero, the $G$-twirl $\mathcal{G}$ can on each $\sigma$-sector be weakly rewritten as a reduced $G$-twirl $\mathcal{G}_\sigma$, i.e.\ one generated by $\hat C_\sigma$, rather than $\hat C_H$. Indeed, it is easy to see that the observables in Eq.~\eqref{RDOq} satisfy
\ba
\hat F^\sigma_{f_S,T}(\tau) = \mathcal{G} \big(\ket{\tau,\sigma}\!\bra{\tau,\sigma}\otimes \hat f_S)\approx \delta(\hat C_H)(\ket{\tau,\sigma}\bra{\tau,\sigma}\otimes\hat f_S),\nn
\ea
where $\approx$ is the quantum weak equality introduced in Eq.~\eqref{qweak}. Now use Eq.~\eqref{physicalProjector} and notice that $\delta(\hat C_{-\sigma})\,\ket{\tau,\sigma}\otimes\ket{E}_S = 0$ when zero does not lie in the spectrum of $\hat H_S$. This observation yields
\ba
\hat F^\sigma_{f_S,T}(\tau) &\approx& \f{1}{2(-s\hat H_S)^{\f{1}{2}}}\delta(\hat C_\sigma)(\ket{\tau,\sigma}\bra{\tau,\sigma}\otimes\hat f_S),\nn\\
&\approx& \f{1}{2(-s\hat H_S)^{\f{1}{2}}}\,\mathcal{G}_\sigma(\ket{\tau,\sigma}\bra{\tau,\sigma}\otimes\hat f_S)\,,\nn
\ea
where the last weak equality is restricted to $\ch_\sigma$.
When zero does lie in the spectrum of $\hat H_S$, the decomposition Eq.~\eqref{physicalProjector} is not well-defined and one needs to regularize.}} 

 While there do exist states in the physical Hilbert space that exhibit coherence across the $\sigma$-frequency sectors, for example $\ket{\psi_{\rm phys}} \sim  \ket{\psi_+} + \ket{\psi_-}  $, where $\ket{\psi_\sigma}\in\ch_\sigma$, such coherence is not physically accessible  because it  does not affect the expectation value of any relational Dirac observable on account of the decomposition in Eq.~\eqref{directSum}. In other words, superpositions and classical mixtures across the $\sigma$-frequency sectors are indistinguishable. {Hence, superpositions of physical states across $\sigma$-sectors are mixed states and the pure physical states are those of either $\ch_+$ or $\ch_-$ (see also \cite{marolfRefinedAlgebraicQuantization1995,Giulini:1998rk} for a discussion on superselection in group averaging). }

For example, this superselection rule manifests as a superselection across positive and negative frequency modes {in the case of the relativistic particle and across expanding and contracting solutions in the case of the FLRW model with a massless scalar field in Table~\ref{Table:Examples} \cite{Hoehn:2018whn}. On account of the reducibility of the representation, one usually restricts to one frequency sector (e.g.\ see \cite{Ashtekar:2011ni,Ashtekar:2007em,Banerjee:2011qu,Ashtekar:1993wb}).}  One might conjecture that the analogous superselection rule in a quantum field theory would manifest as a superselection rule between matter and anti-matter sectors.

Superselection rules induced by the $G$-twirl are often interpreted as arising from the lack of knowledge about a reference frame, and that if an appropriate reference frame is used the superselection rule can be lifted~\cite{Bartlett:2007zz}. {This interpretation seems unsuitable here. Firstly, lifting the superselection rule would entail undoing the group averaging, in violation of gauge invariance. Secondly, such an interpretation is usually tied to an average over a given group action which parametrizes one's ignorance about relative reference frame orientations. By contrast, the origin of the superselection of Dirac observables here is not the group generated by the constraint, but is a consequence of a property of the constraint, i.e.\ the group generator. Indeed, the superselection rule above originates in the factorizability of the constraint and the ensuing decomposition of the projector onto the constraint, Eq.~\eqref{physicalProjector}. Both these properties rely on the absence of a $\hat t$-dependent term in the constraint Eq.~\eqref{WdW2}; if such a term is introduced, one generally finds $[\hat C_+,\hat C_-]\neq0$, { where the $\hat C_\sigma$ are the quantization of the classical factors, but we emphasize that $\hat{C}_H \neq s\,\hat{C}_+ \hat{C}_-$ in that case}. While such a modified constraint may generate the same group,\footnote{{E.g.\ when $C_H=p_t^2-H^2(q_i,p_i,t)$ and $H^2>0$ $\forall$ $t\in\mathbb{R}$.}} no superselection rule across the $\sigma$-sectors would arise. 
 The above superselection rule can thus not be associated with the lack of a shared physical reference frame. This resonates with the interpretation of the physical Hilbert space as a clock-neutral, i.e.\ temporal-reference-frame-neutral structure (see Sec.~\ref{sec_cnqt}).}

Consider now the projector  $\Pi_{\sigma_{SC}}=\theta(-s\,\hat H_S)$ from $\ch_S$ to its subspace spanned by all system energy eigenstates $\ket{E}_S$ with $E\in\sigma_{SC}$; that is, those permitted upon solving the constraint Eq.~\eqref{WdW2}. {We shall henceforth denote this system Hilbert subspace $\ch_S^{\rm phys}:=\Pi_{\sigma_{SC}}(\ch_S)$ and call it the \emph{physical system Hilbert space}. We will obtain two copies of the physical system Hilbert space, one for each frequency sector.}  In analogy to the classical case, we introduce the \emph{quantum weak equality} between operators, signifying that they are equal on the `quantum constraint surface' $\ch_{\rm phys}$:
\ba
\hat O_1&\approx&\hat O_2\label{qweak}\\
&&\Leftrightarrow\, \hat O_1\,\ket{\psi_{\rm phys}} = \hat O_2\,\ket{\psi_{\rm phys}}\,,\q\forall\,\ket{\psi_{\rm phys}}\in\ch_{\rm phys}\,.\nn
\ea
It follows from Lemma 1 of \cite{Hoehn:2019owq}, whose proof applies here per $\sigma$-sector, that
\ba
\hat F_{f_S,T}(\tau)\approx\hat F_{\Pi_{\sigma_{SC}} {f}_S\, \Pi_{\sigma_{SC}},T}(\tau)\,,\label{weakequiv}
\ea
are weakly equal relational Dirac observables. Hence, the relational Dirac observables in Eq.~\eqref{RDOq} form weak equivalence classes on $\ch_{\rm phys}$, where $\hat F_{f_S,T}(\tau)\sim\hat F_{g_S,T}(\tau)$  if   $\Pi_{\sigma_{SC}} \hat{f}_S\, \Pi_{\sigma_{SC}}=\Pi_{\sigma_{SC}} \hat{g}_S\, \Pi_{\sigma_{SC}}$. These weak equivalence classes are labeled by what we shall denote
\begin{align}
\hat f_S^{\rm phys} \ce \Pi_{\sigma_{SC}}\,\hat f_S\,\Pi_{\sigma_{SC}}\in\cl\left(\ch_S^{\rm phys}\right) , \label{PWobs}
\end{align}
for arbitrary $\hat f_S\in\cl\left(\ch_S\right)$, where $\cl$ denotes the set of linear operators. {For later use, we note that the algebras of the physical system observables $\hat f_S^{\rm phys}$ on $\ch_S^{\rm phys}$ and the $\hat F_{f_S^{\rm phys},T}(\tau)$ on $\ch_{\rm phys}$ are weakly homomorphic with respect to addition, multiplication and commutator relations. More precisely,
\begin{align}
\hat F_{f^{\rm phys}_S+g^{\rm phys}_S\cdot h^{\rm phys}_S,T}(\tau) &\approx   \hat F_{f^{\rm phys}_S,T}(\tau) \nn\\
&\quad +\hat F_{g^{\rm phys}_S,T}(\tau)\cdot\hat F_{h^{\rm phys}_S,T}(\tau) \nn
\end{align}
{$\forall f_S , g_S , h_S\in\cl\left(\ch_S\right)$}. This is a consequence of Theorem~2 of \cite{Hoehn:2019owq} (whose proof again applies here per $\sigma$-sector). Together with \cite{Hoehn:2019owq}, this translates the weak classical algebra homomorphism defined through relational observables  in \cite{dittrichPartialCompleteObservables2007} into the quantum theory.} 

\subsubsection{Dynamics (ii): The Page-Wootters formalism}
 
{Suppose we are given a quantum Hamiltonian} constraint Eq.~\eqref{Wheeler-DeWitt} which splits into a clock and system contribution as in Eq.~\eqref{firstConstraint}, but for the moment not necessarily  assuming it to be of the quadratic form in Eq.~\eqref{WdW2}. Suppose further we are given some (kinematical)  time observable on the clock Hilbert space, which need not necessarily be a clock POVM which is covariant with respect to the group generated by the clock Hamiltonian, but is taken to define the clock reading. 
Page and Wootters~\cite{pageEvolutionEvolutionDynamics1983,woottersTimeReplacedQuantum1984,pageTimeInaccessibleObservable1989,pageClockTimeEntropy1994} proposed to extract a relational quantum dynamics between the clock and system from physical states in terms of conditional probabilities: what is the probability of an observable $\hat f_S$  associated with the system $S$ giving a particular outcome $f$, if the measurement of the {clock's} time observable yields the time $\tau$? 
If $e_C(\tau)$ and $e_{f_S}(f)$ are the projectors onto the clock reading~$\tau$ and the system observable $\hat f_S$  taking the value $f$, this conditional probability is postulated in the form
\begin{align}
\prob\left(f \ \mbox{when} \ \tau \right) &= \frac{\prob \left(f \ \mbox{and} \ \tau \right) }{\prob \left(\tau \right) } \label{fwhenT} \\
&=
\frac{\bra{\psi_{\rm phys} }e_C(\tau) \otimes e_{f_S}(f) \ket{\psi_{\rm phys} }_{\rm kin}}{\bra{\psi_{\rm phys} }e_C(\tau)\otimes I_S  \ket{\psi_{\rm phys} }_{\rm kin}}. \nn
\end{align}
This expression appears at first glimpse to be in violation of the constraints, as it acts with operators on physical states that are not Dirac observables; this is the basis of Kucha\v{r}'s criticism (b) that the conditional probabilities of the Page-Wootters formalism are incompatible with the constraints \cite{kucharTimeInterpretationsQuantum2011a}. However, for a class of models we have shown in \cite{Hoehn:2019owq} that the expression Eq.~\eqref{fwhenT} is a quantum analog of a gauge-fixed expression of a manifestly gauge-invariant quantity and thus consistent with the constraint. In this section we extend this result to relativistic settings.

Here we shall expand the Page-Wootters formalism to the more general class of Hamiltonian constraints of the form Eq.~\eqref{WdW2}  exploiting the covariant clock POVM $E_T$ of Sec.~\ref{sec_covtimedeg}.\footnote{See also the recent \cite{chataignier2020relational} for a complementary approach. Note that it does not employ covariant POVMs for quadratic Hamiltonians, and is thus subject to Kucha\v{r}'s criticism (a) described in Sec.~\ref{sec_intro}.} On the one hand, this will permit us to prove full equivalence of the so-obtained relational quantum dynamics with the manifestly gauge-invariant formulation in terms of relational Dirac observables on $\ch_{\rm phys}$ of Dynamics (i). As an aside, this will also resolve the normalization issue of physical states appearing in \cite{Diaz:2019xie}, where the kinematical rather than physical inner product was used to normalize physical states, thus yielding a divergence (when used for equal mass states). On the other hand, the covariant clock POVM will allow us, in Sec.~\ref{sec_kuchar}, to address the observation by Kucha\v{r} \cite{kucharTimeInterpretationsQuantum2011a} that using the Minkowski time observable leads to incorrect localization probabilities for relativistic particles in the Page-Wootters formalism.

The Page-Wootters formalism produces the system state at clock time $\tau$ by conditioning physical states on the clock reading $\tau$  \cite{pageEvolutionEvolutionDynamics1983,woottersTimeReplacedQuantum1984,pageTimeInaccessibleObservable1989,pageClockTimeEntropy1994}. {Henceforth focusing on the class of models defined by the constraint in Eq.~\eqref{WdW2} and the covariant clock POVM of Sec.~\ref{sec_covtimedeg}}, and given the reducible representation of $\ch_{\rm phys}$, we may additionally condition on the frequency sector thanks to Lemma~\ref{lem_projTcommute}. In extension of  \cite{Hoehn:2019owq}, we may use this conditioning to define two reduction maps ${\calr_{\rm PW}^\sigma(\tau):\ch_{\rm phys}\rightarrow\ch_{S,\sigma}^{\rm phys}}$, one per $\sigma$-frequency sector,
\ba
\calr_{\rm PW}^\sigma(\tau):=\bra{\tau,\sigma}\otimes I_S\, ,\label{PWred}
\ea  
where $\ch_{S,\sigma}^{\rm phys}$ is a copy of $\ch_S^{\rm phys}=\Pi_{\sigma_{SC}}(\ch_S)$, i.e.\ the subspace of the system Hilbert space permitted upon solving the constraint, corresponding to the $\sigma$-frequency sector. Due to the decomposition $\ch_{\rm phys}=\ch_+\oplus\ch_-$, we equip the two copies $\ch_{S,\sigma}^{\rm phys}$ with the frequency label $\sigma$ in order to remind ourselves which reduced theory corresponds to which positive or negative frequency mode.

The reduced states (whose normalisation factor $1/\sqrt{2}$ will be explained later),
\ba
\f{1}{\sqrt{2}}\,\ket{\psi^\sigma_S(\tau)} &:=& \calr_{\rm PW}^\sigma(\tau)\,\ket{\psi_{\rm phys}}\label{degprojt}\\
&=& {\ \,{\intsum}_{E\in\sigma_{SC}}}\,\psi_\sigma(E)\,e^{-i\,\tau\,E}\,\ket{E}_S,\nn
\ea
where $\psi_\sigma(E)$ is the Newton-Wigner type wave function defined in Eq.~\eqref{NWwave}, satisfy the Schr\"odinger equation with respect to $\hat H_S$:
\ba
i\f{d}{d\tau}\,\ket{\psi_S^\sigma(\tau)} = \hat H_S\,\ket{\psi_S^\sigma(\tau)}.\label{degSchrod}
\ea
We interpret this as the dynamics of $S$ relative to the temporal reference frame $C$. In particular,  this Schr\"odinger equation looks the same for both the positive and negative frequency sectors because the time defined by the covariant clock POVM $E_T$ runs forward in both sectors. This is clear from Eq.~\eqref{niceevol} and is the quantum analog of the earlier classical observation that the clock function $T$ runs `forward' on both frequency sectors $\cc_+$ and $\cc_-$ (in contrast to $t$).\footnote{Note that we could also define linear combinations of clock states 
$
\ket{\tau}:= \sum_\sigma\,c_\sigma\,\ket{\tau,\sigma}\,.\nn
$
Clearly, then we would also find that 
\ba
\ket{\psi_S(\tau)}&:=& \braket{\tau|\psi_{\rm phys}}\nn\\
&=&\left( \sum_\sigma\,c^*_\sigma\,{\ \,{\intsum}_{E\in\sigma_{SC}}}\,\psi_\sigma(E)\,e^{-i\,\tau E}\right)\ket{E}_S\nn
\ea
satisfies the same Schr\"odinger equation (\ref{degSchrod}). However, it is straightforward to check, using Lemma \ref{lem_degresid}, that these new clock states do not give rise to a resolution of the identity, $\int\,d\tau\,\ket{\tau}\bra{\tau}\neq I_C$ and so $\ket{\psi_{\rm phys}}\neq \int\,d\tau\,\ket{\tau}\ket{\psi_S(\tau)}$. In fact, if $c_\sigma\neq0$ for $\sigma=+,-$, then $\ket{\psi_S(\tau)}$ will mix contributions from the positive and negative frequency sectors such that it will become impossible to reconstruct (either of the positive or negative frequency part of) the physical state from it. That is, a reduction map $\bra{\tau}\otimes I_S$, which only conditions on the clock time, would not be invertible. This is another consequence of the superselection rule discussed above which entails that superpositions and mixtures across $\sigma$-sectors are indistinguishable through Dirac observables. It is also another reason why we condition also on the frequency sector when defining the reduction map in Eq.~\eqref{PWred}.}

Thanks to Eq.~(\ref{degresid}), the decomposition of the physical states into positive and negative frequency modes, Eq.~(\ref{pt22}), can also be written as follows:
\ba
\ket{\psi_{\rm phys}} =  \f{1}{2\sqrt{2}\,\pi}\,\sum_\sigma\,\int_\mathbb{R}\,dt\,\ket{t,\sigma}\ket{\psi^\sigma_S(t)}\,.
\ea
Together with Lemma~\ref{lem_degresid}, this implies that the $\sigma$-sector left inverse $\ch_{S,\sigma}^{\rm phys}\rightarrow \ch_\sigma$ of the reduction map defined in Eq.~\eqref{PWred} is given by 
\ba
(\calr_{\rm PW}^\sigma(\tau))^{-1} &=&\f{1}{2\pi} \int_\mathbb{R}\,dt\,\ket{t,\sigma}\otimes U_S(t-\tau)\nn\\
&=&\delta(\hat C_H)(\ket{\tau,\sigma}\otimes I_S)\,,\label{PWredinv}
\ea
where $U_S(t)=\exp(-i\,\hat H_S\,t)$,
so that
\ba
(\calr_{\rm PW}^\sigma(\tau))^{-1}\,\calr_{\rm PW}^\sigma(\tau) &=&\delta(\hat C_H)(\ket{\tau,\sigma}\bra{\tau,\sigma}\otimes I_S)\nn\\
& \approx&\theta(-\sigma\,\hat p_t)\otimes I_S\,,\label{PWsigmainv}
\ea
where $\approx$ is the quantum weak equality, and thus
\ba
\sum_\sigma\,(\calr_{\rm PW}^\sigma(\tau))^{-1}\,\calr_{\rm PW}^\sigma(\tau) \approx I_{\rm phys}\,.\label{nochmal}
\ea
Conversely, we can write the identity on $\ch_{S,\sigma}^{\rm phys}$ for solutions of the Schr\"odinger equation at time $\tau$ in the form
\ba
\calr_{\rm PW}^\sigma(\tau)\,(\calr_{\rm PW}^\sigma(\tau))^{-1}\,\ket{\psi_S^\sigma(\tau)}&=&\braket{\tau,\sigma|\,\delta(\hat C_H)\,|\tau,\sigma}\,\ket{\psi_S^\sigma(\tau)}\nn\\
&=&\ket{\psi_S^\sigma(\tau)}\,.\nn
\ea
{A summary of these maps can be found in Fig.~\ref{fig_PWmaps}.}
\begin{figure}[t]
\includegraphics[width= .45\textwidth]{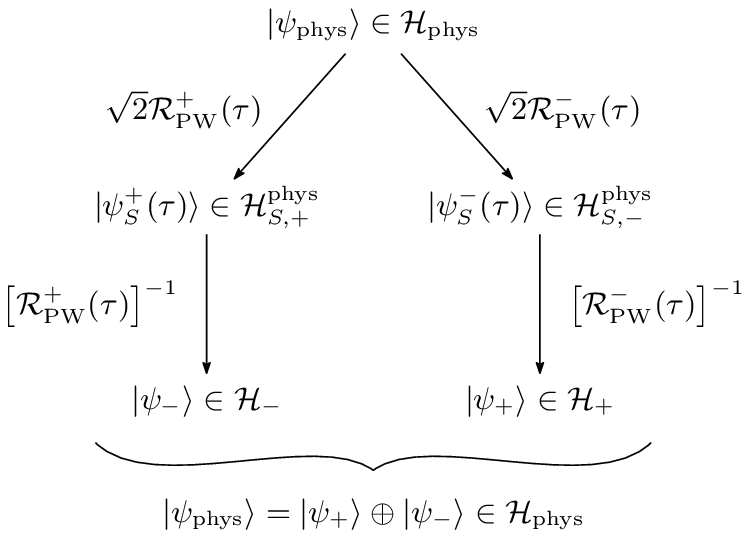}
\caption{
A summary of the Page-Wootters reduction maps and their inverses. {The analogous state of affairs holds for the quantum symmetry reduction maps and their inverses.}}
\label{fig_PWmaps}
\end{figure}

Using these reduction maps and their inverses, we can define an encoding operation ${\mathcal{E}_{\rm PW}^{\tau,\sigma}:\cl\left(\ch_{S,\sigma}^{\rm phys}\right)\rightarrow \cl\left(\ch_\sigma\right)}$, mapping the observables in Eq.~\eqref{PWobs}, acting on the physical system Hilbert space $\ch_{S,\sigma}^{\rm phys}$, into Dirac observables on the $\sigma$-sector of $\ch_{\rm phys}$: 
\ba
\mathcal{E}_{\rm PW}^{\tau,\sigma}\left(\hat f^{\rm phys}_S\right)&\ce& (\calr_{\rm \tiny PW}^\sigma(\tau))^{-1}\,\hat f_S^{\rm phys}\,\calr_{\rm PW}^\sigma(\tau)\nn\\
&=&\delta(\hat C_H)(\ket{\tau,\sigma}\!\bra{\tau,\sigma}\otimes \hat f^{\rm phys}_S)\,.\label{encode}
\ea

These encoded observables turn out to be the $\sigma$-sector part of the relational Dirac observables in Eq.~\eqref{RDOq}, as articulated in the following theorem.

\begin{theorem}\label{thm_PWobseq}
Let $\hat f_S\in\cl\left(\ch_S\right)$. The quantum relational Dirac observable $\hat F_{f_S,T}(\tau)$ acting on $\ch_{\rm phys}$, Eq.~\eqref{RDOq},  reduces under $\calr^\sigma_{\rm PW}(\tau)$ to the corresponding  projected observable  on $\ch_{S,\sigma}^{\rm phys}$, 
\begin{align}
\calr^\sigma_{\rm PW} \left(\tau\right)\,\hat F_{f_S,T}(\tau)\,(\calr^\sigma_{\rm PW}(\tau))^{-1} = \Pi_{\sigma_{SC}} \hat{f}_S\, \Pi_{\sigma_{SC}}\equiv\hat f_S^{\rm phys}. \nn 
\end{align}
Conversely, let $\hat f_S^{\rm phys}\in\cl\left(\ch_{S,\sigma}^{\rm phys}\right)$. The encoding operation in Eq.~\eqref{encode} of system observables coincides \emph{on the physical Hilbert space} $\ch_{\rm phys}$ with the quantum relational Dirac observables in Eq.~\eqref{RDOq} projected into the $\sigma$-sector:
\begin{align}
\mathcal{E}_{\rm PW}^{\tau,\sigma}\left(\hat{f}^{\rm phys}_S\right)\approx\hat F^\sigma_{f^{\rm phys}_S,T}(\tau)\,,\label{encode2}
\end{align}
where $\hat F^\sigma_{f^{\rm phys}_S,T}(\tau)=\hat F_{f^{\rm phys}_S,T}(\tau)\,(\theta(-\sigma\,\hat p_t)\otimes I_S)$ \,---\, c.f.\ Eq.~\eqref{directSum}.
\end{theorem}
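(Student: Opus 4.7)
The plan is to prove both statements by direct computation, using the $G$-twirl representation of the relational Dirac observable Eq.~\eqref{RDOq}, the clock-state covariance Eq.~\eqref{niceevol}, Lemma~\ref{lem_degresid}, and the fact (Corollary~\ref{cor_5}) that conditioning on the clock reading and the frequency sector can be done jointly. Since $\mathcal{H}_{\rm phys}=\mathcal{H}_+\oplus\mathcal{H}_-$ and the $\hat F_{f_S,T}(\tau)$ respect this decomposition via Eq.~\eqref{directSum}, the whole argument is essentially a $\sigma$-sector-wise repetition of the proof of the corresponding result in~\cite{Hoehn:2019owq}, which applies verbatim on each $\mathcal{H}_\sigma$ once the non-degeneracy within a frequency sector is exploited.

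For the first identity, I would substitute the $G$-twirl form of $\hat F_{f_S,T}(\tau)$ into $\mathcal{R}_{\rm PW}^\sigma(\tau)\,\hat F_{f_S,T}(\tau)\,(\mathcal{R}_{\rm PW}^\sigma(\tau))^{-1}$, then use $U_C(t)\ket{\tau,\sigma'}=\ket{\tau+t,\sigma'}$ to move the clock labels and the factor $\delta_{\sigma\sigma'}$ in Eq.~\eqref{nonorth} to collapse the $\sigma'$-sum to a single contribution. The remaining $t$-integration of the clock-state projectors produces, via Lemma~\ref{lem_degresid}, the frequency-sector projector $\theta(-\sigma\hat p_t)$ on the clock factor, while the $\delta(\hat C_H)$ hidden inside $(\mathcal{R}_{\rm PW}^\sigma(\tau))^{-1}$ enforces the constraint and yields the energy projector $\Pi_{\sigma_{SC}}$ on the system factor. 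Assembling these pieces gives $\Pi_{\sigma_{SC}}\hat f_S\Pi_{\sigma_{SC}}=\hat f_S^{\rm phys}$ acting on $\mathcal{H}_{S,\sigma}^{\rm phys}$.

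For the converse, I would start from Eq.~\eqref{encode}, decompose $\delta(\hat C_H)$ using Eq.~\eqref{physicalProjector}, and invoke Lemma~\ref{lem_projTcommute} to commute $\theta(-\sigma\hat p_t)$ past the clock-state projector and the integration defining the $G$-twirl. Since $\hat f_S^{\rm phys}$ is already supported on the physical system subspace, the resulting operator coincides, after unpacking the $G$-twirl, with the $\sigma$-sector restriction of the relational Dirac observable, $\hat F^\sigma_{f_S^{\rm phys},T}(\tau)=\hat F_{f_S^{\rm phys},T}(\tau)\,(\theta(-\sigma\hat p_t)\otimes I_S)$; the weak equality $\approx$ then follows at once because any remaining discrepancy lies entirely in the kinematical complement of $\mathcal{H}_{\rm phys}$. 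The main technical obstacle will be treating the formal distributions carefully: concretely, one must verify that $\bra{\tau,\sigma}\delta(\hat C_H)\ket{\tau,\sigma}$ acts on $\mathcal{H}_S$ as the energy projector $\Pi_{\sigma_{SC}}$ (equivalently, that $\mathcal{R}_{\rm PW}^\sigma(\tau)(\mathcal{R}_{\rm PW}^\sigma(\tau))^{-1}=I_{S,\sigma}^{\rm phys}$), which is exactly where the doubly-degenerate spectrum of $\hat H_C$ makes the simultaneous $\sigma$-labelling granted by Corollary~\ref{cor_5} indispensable.
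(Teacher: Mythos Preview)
Your proposal is correct and takes essentially the same approach as the paper, which simply states that the proof of the corresponding theorem in~\cite{Hoehn:2019owq} applies here per $\sigma$-sector; indeed, you explicitly recognize this and then sketch the key steps of that earlier proof adapted to the frequency-sector structure. The ingredients you single out (the $G$-twirl form Eq.~\eqref{RDOq}, covariance Eq.~\eqref{niceevol}, the $\delta_{\sigma\sigma'}$ orthogonality in Eq.~\eqref{nonorth}, Lemma~\ref{lem_degresid}, and Lemma~\ref{lem_projTcommute}/Corollary~\ref{cor_5}) are precisely those that make the non-degenerate argument go through on each $\mathcal{H}_\sigma$ separately.
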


\begin{proof}
The proof of Theorem 3 in~\cite{Hoehn:2019owq} applies here per $\sigma$-sector.
\end{proof}

Note that the relational Dirac observables $\hat F_{f_S,T}(\tau)$ commute with the projectors $\theta(-\sigma\,\hat p_t)$ due to the reducible representation in Eq.~\eqref{RDOq}.

Apart from providing the $\sigma$-sector-wise dictionary between the observables on the physical Hilbert space and the physical system Hilbert space, Theorem~\ref{thm_PWobseq}, in conjunction with the weak equivalence in Eq.~\eqref{weakequiv}, also implies an equivalence between the full sets of relational Dirac observables $\hat F^\sigma_{f^{\rm phys}_S,T}(\tau)$ on $\ch_\sigma$ and system observables on $\ch_{S,\sigma}^{\rm phys}$.

Crucially, the expectation values of the relational Dirac observables Eq.~\eqref{RDOq} in the physical inner product Eq.~\eqref{PIP} coincide, $\sigma$-sector-wise, with the expectation values of the physically permitted system observables $\hat f_S^{\rm phys}$ in the  states solving the Schr\"odinger equation Eq.~\eqref{degSchrod}  on $\ch_{S,\sigma}^{\rm phys}$. 

\begin{theorem}\label{thm_PWexpec}
Let $\hat f_S\in\cl\left(\ch_S\right)$, and denote its associated operator on $\ch_S^{\rm phys}$ by ${\hat f_S^{\rm phys} = \Pi_{\sigma_{SC}}\,\hat f_S\,\Pi_{\sigma_{SC}}}$. Then
\ba
\braket{\psi_{\rm phys}|\hat F^\sigma_{f_S,T}(\tau)\,|\phi_{\rm phys}}_{\rm phys} = \f{1}{2}\,\braket{\psi_S^\sigma(\tau)|\,\hat f^{\rm phys}_S\,|\phi_S^\sigma(\tau)}\,,\nn
\ea
where $\ket{\psi_S^\sigma(\tau)}=\sqrt{2}\,\calr^\sigma_{\rm PW}(\tau)\,\ket{\psi_{\rm phys}}$ as in Eq.~\eqref{degprojt}. Hence,
\ba
\braket{\psi_{\rm phys}|\hat F_{f_S,T}(\tau)\,|\phi_{\rm phys}}_{\rm phys} = \f{1}{2}\,\sum_\sigma\,\braket{\psi_S^\sigma(\tau)|\,\hat f^{\rm phys}_S\,|\phi_S^\sigma(\tau)}.\nn
\ea
\end{theorem}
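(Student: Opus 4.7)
My plan is to combine Theorem~\ref{thm_PWobseq} with a calculation of the physical inner product of a physical state against the image of $(\calr^\sigma_{\rm PW}(\tau))^{-1}$. First I would observe that, by Eq.~\eqref{weakequiv} and the encoding statement Eq.~\eqref{encode2} of Theorem~\ref{thm_PWobseq}, one has the weak equality $\hat F^\sigma_{f_S,T}(\tau) \approx (\calr^\sigma_{\rm PW}(\tau))^{-1}\,\hat f^{\rm phys}_S\,\calr^\sigma_{\rm PW}(\tau)$ on $\ch_{\rm phys}$. Since both $|\psi_{\rm phys}\rangle$ and $|\phi_{\rm phys}\rangle$ are physical states, this substitution is valid inside the physical inner product on the left-hand side of the claim.

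Applying $\calr^\sigma_{\rm PW}(\tau)$ on the right to $|\phi_{\rm phys}\rangle$ yields $\tfrac{1}{\sqrt 2}|\phi^\sigma_S(\tau)\rangle$ via Eq.~\eqref{degprojt}, and acting with $\hat f^{\rm phys}_S$ keeps one in $\ch^{\rm phys}_{S,\sigma}$. The theorem will then follow once I establish the adjoint-like identity $\langle\psi_{\rm phys}|(\calr^\sigma_{\rm PW}(\tau))^{-1}|\chi_S\rangle_{\rm phys} = \langle \calr^\sigma_{\rm PW}(\tau)\psi_{\rm phys}|\chi_S\rangle_S$ for any $|\chi_S\rangle\in\ch^{\rm phys}_{S,\sigma}$. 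Applied to $|\chi_S\rangle = \hat f^{\rm phys}_S|\phi^\sigma_S(\tau)\rangle$, together with $\langle\calr^\sigma_{\rm PW}(\tau)\psi_{\rm phys}| = \tfrac{1}{\sqrt 2}\langle\psi^\sigma_S(\tau)|$, this supplies the second factor of $\tfrac{1}{\sqrt 2}$ and thus the overall $\tfrac{1}{2}$ prefactor in the theorem.

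To establish the adjoint identity, I would use Eq.~\eqref{PWredinv} to write $(\calr^\sigma_{\rm PW}(\tau))^{-1}|\chi_S\rangle = \delta(\hat C_H)(|\tau,\sigma\rangle_C\otimes|\chi_S\rangle)$, which is manifestly a physical state, and then unfold the physical inner product via Eq.~\eqref{PIP}: one factor of $\delta(\hat C_H)$ is absorbed into the formal left bra $\langle\psi_{\rm phys}| = \langle\psi_{\rm kin}|\delta(\hat C_H)$, reducing the left-hand side to the kinematical pairing $\langle\psi_{\rm phys}|(|\tau,\sigma\rangle_C\otimes|\chi_S\rangle)\rangle_{\rm kin}$, which by the definition $\calr^\sigma_{\rm PW}(\tau) = \langle\tau,\sigma|\otimes I_S$ is exactly $\langle\calr^\sigma_{\rm PW}(\tau)\psi_{\rm phys}|\chi_S\rangle_S$. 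The expected main obstacle is to justify this single-absorption step rigorously, since $\delta(\hat C_H)$ is not a genuine projector; this is where Corollary~\ref{cor_5} and the explicit $\sigma$-decomposition of physical states in Eq.~\eqref{pt22} become essential, guaranteeing that conditioning on the clock reading and on the frequency sector commute through the group-averaging. Alternatively, one can verify the identity by direct computation, inserting the expansion Eq.~\eqref{pt22} for $|\psi_{\rm phys}\rangle$ and using $\langle p_{t,\sigma'}(E)|\tau,\sigma\rangle_C \propto \delta_{\sigma\sigma'}\,(2|E|)^{1/4}\,e^{+i\tau E}$ read off from Eq.~\eqref{degclock}, which explicitly recovers $\tfrac{1}{\sqrt 2}\langle\psi^\sigma_S(\tau)|\chi_S\rangle$.

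The second statement of the theorem is then an immediate corollary of the first: the direct-sum decomposition Eq.~\eqref{directSum} gives $\hat F_{f_S,T}(\tau) = \sum_\sigma \hat F^\sigma_{f_S,T}(\tau)$, and linearity of the physical inner product converts this into the corresponding sum over $\sigma$ of the already-proved single-sector identities.
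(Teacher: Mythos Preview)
Your proposal is correct and essentially reconstructs the argument the paper invokes by reference: the paper's own proof is simply ``The proof of Theorem~4 in~\cite{Hoehn:2019owq} applies here per $\sigma$-sector,'' and your route via the encoding statement of Theorem~\ref{thm_PWobseq} together with the adjoint identity $\langle\psi_{\rm phys}|(\calr^\sigma_{\rm PW}(\tau))^{-1}|\chi_S\rangle_{\rm phys} = \langle \calr^\sigma_{\rm PW}(\tau)\psi_{\rm phys}|\chi_S\rangle_S$ (which indeed follows from Eq.~\eqref{PIP22} by absorbing a single $\delta(\hat C_H)$) is precisely the mechanism that underlies that referenced proof, now specialized $\sigma$-sector-wise.
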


\begin{proof}
The proof of Theorem 4 in~\cite{Hoehn:2019owq} applies here per $\sigma$-sector.
\end{proof}
Hence, the expectation values in the relational Schr\"odinger picture (i.e.\ the Page-Wootters formalism) are equivalent to the the gauge-invariant ones of the corresponding relational Dirac observables on $\ch_{\rm phys}$. Accordingly, equations such as Eq.~\eqref{fwhenT} (adapted to $\sigma$-sectors) are not in violation of the constraint as claimed by Kucha\v{r}~\cite{kucharTimeInterpretationsQuantum2011a}.

This immediately implies that the reduction maps $\calr^\sigma_{\rm PW}(\tau)$ preserve inner products per $\sigma$-sector as follows.

\begin{corol}\label{corol_2}
Setting $\hat f_S=\Pi_{\sigma_{SC}}$ in Theorem~\ref{thm_PWexpec} yields
\ba
\braket{\psi_{\rm phys}|\,\theta(-\sigma\,\hat p_t)\otimes I_S\,|\phi_{\rm phys}}_{\rm phys}= \f{1}{2}\,\braket{\psi_S^\sigma(\tau)|\phi_S^\sigma(\tau)}\,,\nn
\ea
where $\ket{\psi_S^\sigma(\tau)}=\sqrt{2}\,\calr^\sigma_{\rm PW}(\tau)\,\ket{\psi_{\rm phys}}$. Hence,
\ba
\braket{\psi_{\rm phys}|\phi_{\rm phys}}_{\rm phys} &=& \f{1}{2}\,\sum_\sigma\,\braket{\psi_S^\sigma(\tau)|\,\phi_S^\sigma(\tau)}\label{PWip}\\
&=&\sum_\sigma\, \braket{\psi_{\rm phys}|(\ket{\tau,\sigma}\bra{\tau,\sigma}\otimes I_S)|\phi_{\rm phys}}_{\rm kin}\,.\nn
\ea
\end{corol}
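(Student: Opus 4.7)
The plan is to specialize Theorem~\ref{thm_PWexpec} to the case $\hat f_S = \Pi_{\sigma_{SC}}$ and then sum over the frequency sectors, paying careful attention to the identification of the relevant relational Dirac observable on $\ch_{\rm phys}$.

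First, for the single-sector statement, I would set $\hat f_S = \Pi_{\sigma_{SC}}$, which acts as the identity on $\ch_S^{\rm phys}$, so that $\hat f_S^{\rm phys}=\Pi_{\sigma_{SC}}\Pi_{\sigma_{SC}}\Pi_{\sigma_{SC}}=\Pi_{\sigma_{SC}}$. The right-hand side of Theorem~\ref{thm_PWexpec} immediately becomes $\tfrac{1}{2}\braket{\psi_S^\sigma(\tau)|\phi_S^\sigma(\tau)}$, since $\Pi_{\sigma_{SC}}$ acts trivially on reduced states (which by Eq.~\eqref{degprojt} are supported in $\ch_{S,\sigma}^{\rm phys}$). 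For the left-hand side, I need to identify the relational Dirac observable $\hat F^\sigma_{\Pi_{\sigma_{SC}},T}(\tau)$. Using the encoding map in Eq.~\eqref{encode} together with Theorem~\ref{thm_PWobseq},
\begin{align}
\hat F^\sigma_{\Pi_{\sigma_{SC}},T}(\tau) \approx \mathcal{E}_{\rm PW}^{\tau,\sigma}(\Pi_{\sigma_{SC}}) = (\calr_{\rm PW}^\sigma(\tau))^{-1}\calr_{\rm PW}^\sigma(\tau) \approx \theta(-\sigma\,\hat p_t)\otimes I_S, \nonumber
\end{align}
where the final weak equality is exactly Eq.~\eqref{PWsigmainv}. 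This establishes the first displayed equation.

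Next, for the sum, I would invoke Eq.~\eqref{nochmal}, which yields $\sum_\sigma \theta(-\sigma\,\hat p_t)\otimes I_S \approx I_{\rm phys}$, so summing the single-sector identity over $\sigma$ gives
\begin{align}
\braket{\psi_{\rm phys}|\phi_{\rm phys}}_{\rm phys} = \frac{1}{2}\sum_\sigma \braket{\psi_S^\sigma(\tau)|\phi_S^\sigma(\tau)}. \nonumber
\end{align}
For the alternative form in terms of the kinematical inner product, I would unpack the reduced states via their definition $\ket{\psi_S^\sigma(\tau)}=\sqrt{2}\,\calr_{\rm PW}^\sigma(\tau)\ket{\psi_{\rm phys}} = \sqrt{2}\,(\bra{\tau,\sigma}\otimes I_S)\ket{\psi_{\rm phys}}$, so that
\begin{align}
\tfrac{1}{2}\braket{\psi_S^\sigma(\tau)|\phi_S^\sigma(\tau)} = \braket{\psi_{\rm phys}|(\ket{\tau,\sigma}\bra{\tau,\sigma}\otimes I_S)|\phi_{\rm phys}}_{\rm kin}, \nonumber
\end{align}
where the kinematical inner product on the right appears because $\calr_{\rm PW}^\sigma(\tau)$ maps physical states into a subspace of $\ch_S$ whose natural pairing is $\braket{\cdot|\cdot}_S$, which coincides with the clock-traced kinematical pairing by construction of $\bra{\tau,\sigma}$. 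Summing over $\sigma$ then yields the second displayed equation.

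The main subtlety is bookkeeping of the factor of $1/2$, which originates in the normalization of the covariant clock states (Lemma~\ref{lem_degresid}) and is what makes $\calr_{\rm PW}^\sigma(\tau)$ an isometry \emph{only up to this factor} between each $\sigma$-sector of $\ch_{\rm phys}$ and $\ch_{S,\sigma}^{\rm phys}$. No new calculation is really required beyond this specialization; the result is essentially an isometry statement for the reduction map, obtained by choosing the observable in Theorem~\ref{thm_PWexpec} to be the identity on $\ch_S^{\rm phys}$.
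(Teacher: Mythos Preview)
Your proposal is correct and follows essentially the same approach as the paper, which simply states the corollary as an immediate specialization of Theorem~\ref{thm_PWexpec} to $\hat f_S=\Pi_{\sigma_{SC}}$. Your additional step of explicitly identifying $\hat F^\sigma_{\Pi_{\sigma_{SC}},T}(\tau)\approx\theta(-\sigma\,\hat p_t)\otimes I_S$ via the encoding map and Eq.~\eqref{PWsigmainv} is a welcome clarification of what is left implicit in the paper.
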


\noindent The reason for introducing the normalization factor $1/\sqrt{2}$ in Eq.~\eqref{degprojt} is now clear: it permits us to work with normalized states $\braket{\psi_S^\sigma(\tau)|\psi_S^\sigma(\tau)}=1=\braket{\psi_{\rm phys}|\psi_{\rm phys}}_{\rm phys}$ in each reduced $\sigma$-sector and in the physical Hilbert space simultaneously. 

The above results show:
\begin{itemize}
\item[(1)] Applying the Page-Wootters reduction map $\calr^\sigma_{\rm PW}(\tau)$ to the physical Hilbert space $\ch_{\rm phys}$ yields a \emph{relational Schr\"odinger picture} with respect to the clock $C$ on the  physical system Hilbert space $\ch_{S,\sigma}^{\rm phys}$ corresponding to the $\sigma$-frequency sector.

\item[(2)] $\sigma$-sector wise, the relational quantum dynamics encoded in the relational Dirac observables on the physical Hilbert space is equivalent to the dynamics in the relational Schr\"odinger picture on the physical system Hilbert space of the Page-Wootters formalism.

\item[(3)] Given the invertibility of the reduction map, Theorem~\ref{thm_PWexpec} formally shows that if $\hat f^{\rm phys}_S$ is self-adjoint on $\ch_{S,\sigma}^{\rm phys}$, then so is $\hat F^\sigma_{f_S,T}(\tau)$ on $\ch_\sigma$.
\end{itemize}

We note that the expression in the second line of Eq.~\eqref{PWip} also defines an inner product on the space of solutions to the Wheeler-DeWitt-type constraint Eq.~\eqref{WdW2}, which is equivalent to the physical inner product in Eq.~\eqref{PIP} obtained through group averaging. These two inner products thus define the same physical Hilbert space $\ch_{\rm phys}$. The expression in the second line of Eq.~\eqref{PWip} is the adaptation of the Page-Wootters inner product  introduced in~\cite{Smith:2017pwx} to the reducible representation of the physical Hilbert space associated to Hamiltonian constraints with quadratic clock Hamiltonians.

\subsubsection{Dynamics (iii): Quantum deparametrization}\label{sssec_qdepar}

Classically, one can perform a symmetry reduction of the clock-neutral constraint surface by gauge-fixing the flow of the constraint. {In this case,} this yields  two copies of a gauge-fixed reduced phase space, one for each frequency sector, each equipped with a standard Hamiltonian dynamics, hence yielding a deparametrized theory (see Appendix~\ref{app_redQT}). In contrast to the classical constraint surface, the `quantum constraint surface' $\ch_{\rm phys}$ is automatically gauge-invariant since the exponentiation of the symmetry generator $\hat C_H$ acts trivially on all physical states and Dirac observables. Hence, there is no more gauge-fixing in the quantum theory after solving the constraint. Nevertheless, following \cite{hoehnHowSwitchRelational2018,Hoehn:2018whn,Hoehn:2019owq}, we now demonstrate the quantum analog of the classical symmetry reduction procedure for the class of models considered in this article. As such it is the quantum analog of deparametrization, which we henceforth refer to as \emph{quantum deparametrization}.
This quantum symmetry reduction maps the clock-neutral Dirac quantization to {a relational Heisenberg picture} relative to clock {observable $E_T$}, and involves the following two steps.
\begin{enumerate}
\item \emph{Constraint trivialization}: A transformation of the constraint such that it only acts on the chosen reference system (here clock $C$), fixing its degrees of freedom. {The classical analog is a canonical transformation which turns the constraint into a momentum variable and separates gauge-variant from gauge-invariant degrees of freedom \cite{Vanrietvelde:2018pgb,Hoehn:2019owq}.}
\item \emph{Projection onto classical gauge fixing conditions}: A `projection' which removes the now redundant reference frame degrees of freedom.\footnote{{We put projection in quotation marks because it is not a true projection when applied to the physical Hilbert space as it only removes redundancy in the description, i.e.\ degrees of freedom which are fixed through the constraint. No physical information is lost. It would however be a projection on $\ch_{\rm kin}$.}}
\end{enumerate}

We begin by defining the \emph{constraint trivialization map} $\ct_{T,\epsilon}:\ch_{\rm phys}\rightarrow \ct_{T,\epsilon}(\ch_{\rm phys})$ relative to the covariant time observable $E_T$. This map will transform the physical Hilbert space into a new Hilbert space, while preserving inner products and algebraic properties of observables
\begin{align}
\ct_{T,\epsilon}&\ce \sum_{n=0}^\infty\,\f{i^n}{n!}\,\hat T^{(n)}\otimes\left(\hat H_S+s\,\f{\epsilon^2}{2}\right)^n \nn \\
&= \frac{1}{2\pi}  \sum_\sigma\,\int_\mathbb{R}\,dt\,\ket{t,\sigma}\bra{t,\sigma}\otimes e^{i\,t\,(\hat H_S+s\,\epsilon^2/2)}\,.\label{TT} 
\end{align}
In analogy to \cite{hoehnHowSwitchRelational2018,Hoehn:2018whn,Hoehn:2019owq}, we introduce an arbitrary positive parameter $\epsilon>0$ so that the map becomes invertible.
Note that $s\,\epsilon^2/{2}\in\spec(\hat H_C)$.

\begin{Lemma}\label{lem_triv1}
The left inverse of the trivialization $\ct_{T,\epsilon}$ is given by
\begin{align}
\ct_{T,\epsilon}^{-1}= \frac{1}{2\pi}  \sum_\sigma\,\int_\mathbb{R}\,dt\,\ket{t,\sigma}\bra{t,\sigma}\otimes e^{-i\,t\,(\hat H_S+s\,\epsilon^2/2)}\,,\nn
\end{align}
so that, for any $\epsilon>0$, $\ct_{T,\epsilon}^{-1}:\ct_{T,\epsilon}(\ch_{\rm phys})\rightarrow\ch_{\rm phys}$ and
\ba
\ct_{T,\epsilon}^{-1}\circ\ct_{T,\epsilon}\approx I_{\rm phys}.\nn
\ea
\end{Lemma}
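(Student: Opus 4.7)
The cleanest route is to verify $\ct_{T,\epsilon}^{-1}\circ\ct_{T,\epsilon}\approx I_{\rm phys}$ by direct computation on an arbitrary physical state, using the explicit expansion Eq.~\eqref{pt22} together with the momentum representation of the clock states Eq.~\eqref{degclock}, rather than trying to use the non-orthogonality relation Eq.~\eqref{nonorth} to contract the double $t$-integral directly.

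First, I compute $\ct_{T,\epsilon}\ket{\psi_{\rm phys}}$. Inserting the expansion Eq.~\eqref{pt22} and using $\braket{t,\sigma|p_{t,\sigma''}(E)}_C = (2|E|)^{1/4}\theta(-\sigma p_{t,\sigma''}(E))e^{it s p_{t,\sigma''}(E)^2/2}$, the constraint identity $s p_{t,\sigma''}(E)^2/2=-E$ (valid because $E\in\sigma_{SC}$) and $\theta(-\sigma p_{t,\sigma''}(E))=\delta_{\sigma\sigma''}$ collapse the sector sum and combine the exponentials into a single phase $e^{it s\epsilon^2/2}$. This gives
\begin{align}
\ct_{T,\epsilon}\ket{\psi_{\rm phys}} = \frac{1}{2\pi}\sum_\sigma\int_\mathbb{R} dt\,e^{its\epsilon^2/2}\ket{t,\sigma}\otimes\ket{\chi_S^\sigma}, \nn
\end{align}
with $\ket{\chi_S^\sigma}:=\intsum_{E\in\sigma_{SC}}\psi_\sigma(E)\ket{E}_S$. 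Re-expanding $\ket{t,\sigma}$ via Eq.~\eqref{degclock} and doing the $t$-integral yields $\delta(s(p_t^2-\epsilon^2)/2)$, which, combined with $\theta(-\sigma p_t)$, localizes the clock momentum at $p_t=-\sigma\epsilon$. Hence $\ct_{T,\epsilon}\ket{\psi_{\rm phys}}=\epsilon^{-1/2}\sum_\sigma\ket{-\sigma\epsilon}_C\otimes\ket{\chi_S^\sigma}$: the trivialization maps a physical state to a product state with the clock fixed at momentum $\mp\epsilon$ per $\sigma$-sector, as expected for a constraint-trivializing transformation.

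Second, I apply the candidate inverse to this product state. The overlap $\braket{t,\sigma'|{-\sigma\epsilon}}_C=\sqrt{\epsilon}\,\delta_{\sigma\sigma'}e^{its\epsilon^2/2}$ kills the cross $\sigma$-terms and cancels the $e^{-it s \epsilon^2/2}$ phase in $\ct_{T,\epsilon}^{-1}$, producing
\begin{align}
\ct_{T,\epsilon}^{-1}\ct_{T,\epsilon}\ket{\psi_{\rm phys}}
= \frac{1}{2\pi}\sum_\sigma\int_\mathbb{R} dt\,\ket{t,\sigma}\otimes e^{-it\hat H_S}\ket{\chi_S^\sigma}. \nn
\end{align}
Expanding $\ket{t,\sigma}$ once more in momentum eigenstates, the $t$-integral gives $\delta(s p_t^2/2+E)$, which together with $\theta(-\sigma p_t)$ selects exactly the on-shell clock momentum $p_{t,\sigma}(E)=-\sigma\sqrt{2|E|}$ for each $E\in\sigma_{SC}$; the Jacobian produces the factor $(2|E|)^{-1/4}$. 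Comparing with Eq.~\eqref{pt22} reproduces $\ket{\psi_{\rm phys}}$, giving the weak equality on $\ch_{\rm phys}$.

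The main obstacle to navigate is that one cannot treat $\ket{t,\sigma}\bra{t,\sigma}$ as a projector (they are non-orthogonal, Eq.~\eqref{nonorth}), so attempting to use $\braket{t,\sigma'|t,\sigma}$ directly introduces delta- and principal-value contributions that are cumbersome to handle. The plan above sidesteps this by performing both $t$-integrals in the momentum representation, where each gives a clean delta function in $p_t^2$; the constraint $sp_t^2/2+E=0$ is then enforced at each stage. This also makes transparent the role of the parameter $\epsilon>0$: it shifts the delta function away from the pathological point $p_t=0$ of the classical clock function $T$ in Eq.~\eqref{T}, keeping the intermediate state well-defined and ensuring invertibility. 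The weak, rather than strong, equality reflects that this pointwise cancellation is predicated on $\ket{\psi_{\rm phys}}$ lying in the image of the group-averaging projector Eq.~\eqref{physicalProjector}.
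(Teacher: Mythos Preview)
Your proof is correct. The paper itself does not spell out the argument but defers to the corresponding lemma in \cite{Hoehn:2019owq}, stating that its proof ``applies here $\sigma$-sector wise''; your explicit computation is precisely that adaptation, carried out in the momentum representation where the $t$-integrals localize cleanly. As a bonus, your intermediate result $\ct_{T,\epsilon}\ket{\psi_{\rm phys}}=\epsilon^{-1/2}\sum_\sigma\ket{-\sigma\epsilon}_C\otimes\ket{\chi_S^\sigma}$ is exactly the content of Lemma~\ref{lem_triv2}, so you have in effect established both lemmas in one pass.
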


\begin{proof}
The proof of Lemma 2  in~\cite{Hoehn:2019owq} applies here $\sigma$-sector wise.
\end{proof}

The main property of the trivialization map is summarized in the following lemma.

\begin{Lemma}\label{lem_triv2}
The map $\ct_{T,\epsilon}$ trivializes the constraint in Eq.~\eqref{WdW2} to the clock degrees of freedom
\begin{align}
\ct_{T,\epsilon}\,\hat C_H\,\ct_{T,\epsilon}^{-1} \,\,\,\overset{*}{\approx}\,\,\, \f{s}{2}(\hat p_t^2 - \epsilon^2)\otimes I_S,\label{trivial3}
\end{align}
where $\overset{*}{\approx}$ is the quantum weak equality on the trivialized physical Hilbert space $\ct_{T,\epsilon}(\ch_{\rm phys})$,
and transforms physical states into a sum of two product states with a  fixed and redundant clock factor
\begin{align}
\ct_{T,\epsilon}\,\ket{\psi_{\rm phys}} &= \f{1}{\sqrt{\epsilon}}\,\sum_\sigma\,\ket{p_t = -\sigma\epsilon}_C\label{trivializedstate2}\\
&\q\q\q\otimes\,\,\,{\ \,{\intsum}_{E\in\sigma_{SC}}}\,\psi_\sigma(E)\,\ket{E}_S\,.\nn
\end{align}
\end{Lemma}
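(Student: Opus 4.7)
The plan is to first establish Eq.~\eqref{trivializedstate2} by a direct computation of $\ct_{T,\epsilon}\ket{\psi_{\rm phys}}$ in the $p_t$-basis, and then to deduce the trivialization identity Eq.~\eqref{trivial3} from the observation that both sides annihilate every element of $\ct_{T,\epsilon}(\ch_{\rm phys})$. Since the $\theta$-function appearing in the definition of $\ket{t,\sigma}$ in Eq.~\eqref{degclock} forces the two frequency labels to match, the full calculation reduces sector-by-sector to a single Fourier-type integral that localizes $p_t$ at $\pm\epsilon$.

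For Eq.~\eqref{trivializedstate2}, I would insert the integral form of $\ct_{T,\epsilon}$ given by the second line of Eq.~\eqref{TT} together with the $p_t$-expansion of $\ket{\psi_{\rm phys}}$ from Eq.~\eqref{pt22}. The overlap $\langle t,\sigma'|p_{t,\sigma}(E)\rangle_C$ computed from Eq.~\eqref{degclock} produces a factor $(2|E|)^{1/4}\,\delta_{\sigma\sigma'}\,e^{its|E|}$; the Kronecker delta is forced by $\theta(-\sigma' p_{t,\sigma}(E))=\theta(\sigma'\sigma\sqrt{2|E|})$. For $E\in\sigma_{SC}$ one has $sE\leq 0$ and therefore $s|E|=-E$, so the resulting $e^{-itE}$ phase cancels exactly against $e^{it\hat H_S}$ when the latter acts on $\ket{E}_S$, leaving only the $E$-independent factor $e^{its\epsilon^2/2}$. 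Substituting the momentum representation of $\ket{t,\sigma}$ from Eq.~\eqref{degclock} and performing the remaining $t$-integral yields $2\pi\,\delta\!\left(s(\epsilon^2-p_t^2)/2\right)=\tfrac{1}{\epsilon}\bigl[\delta(p_t-\epsilon)+\delta(p_t+\epsilon)\bigr]$ in the $p_t$-integrand, and the $\theta(-\sigma p_t)$ already in $\ket{t,\sigma}$ then selects the single pole $p_t=-\sigma\epsilon$. Collecting the Jacobian $\sqrt{|p_t|}=\sqrt{\epsilon}$ with the $2\pi$ prefactors reproduces the overall $1/\sqrt{\epsilon}$ normalization of Eq.~\eqref{trivializedstate2}.

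For Eq.~\eqref{trivial3} it suffices to show that $\ct_{T,\epsilon}\,\hat C_H\,\ct_{T,\epsilon}^{-1}$ and $\tfrac{s}{2}(\hat p_t^2-\epsilon^2)\otimes I_S$ have the same action on every element of $\ct_{T,\epsilon}(\ch_{\rm phys})$. Since $\hat C_H\ket{\psi_{\rm phys}}=0$, Lemma~\ref{lem_triv1} gives $\ct_{T,\epsilon}\,\hat C_H\,\ct_{T,\epsilon}^{-1}\bigl(\ct_{T,\epsilon}\ket{\psi_{\rm phys}}\bigr)=\ct_{T,\epsilon}\,\hat C_H\,\ket{\psi_{\rm phys}}=0$. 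On the other hand, Eq.~\eqref{trivializedstate2} shows that $\ct_{T,\epsilon}\ket{\psi_{\rm phys}}$ is supported entirely on the momentum eigenstates $\ket{p_t=\pm\epsilon}_C$, on which $\tfrac{s}{2}(\hat p_t^2-\epsilon^2)\otimes I_S$ vanishes identically. Both operators therefore annihilate the trivialized physical subspace, which is precisely the content of $\overset{*}{\approx}$. The main obstacle is the careful bookkeeping of $\theta$-functions, the sign of $s|E|$ on $\sigma_{SC}$, and the $\sqrt{|p_t|}$ weight in Eq.~\eqref{degclock}: these conspire to pin the trivialized clock factor to $p_t=-\sigma\epsilon$ rather than to an arbitrary linear combination of $\pm\epsilon$, which is what renders the quantum deparametrization well-defined in each frequency sector.
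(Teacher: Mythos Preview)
Your proposal is correct and is essentially the computation the paper invokes: the paper's own proof of this lemma is simply the one-line statement that ``the proof of Lemma~2 in~\cite{Hoehn:2019owq} applies here $\sigma$-sector wise,'' and what you have written is precisely that sector-wise verification spelled out in detail. Your bookkeeping of the overlap $\langle t,\sigma'|p_{t,\sigma}(E)\rangle$, the identity $s|E|=-E$ on $\sigma_{SC}$, the delta-function localization at $p_t=-\sigma\epsilon$, and the argument that both sides of Eq.~\eqref{trivial3} annihilate $\ct_{T,\epsilon}(\ch_{\rm phys})$ are all sound.
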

\begin{proof}
The proof of Lemma 2  in~\cite{Hoehn:2019owq} applies here $\sigma$-sector wise.
\end{proof}

Hence, per $\sigma$-frequency sector, the trivialized physical states are product states with respect to the tensor product decomposition of the kinematical Hilbert space. {Recalling the discussion of the superselection rule across $\sigma$-sectors, the physical state in Eq.~\eqref{trivializedstate2} is indistinguishable from a separable mixed state when it contains both positive and negative frequency modes.}  One can therefore also view the trivialization as a $\sigma$-sector-wise disentangling operation, given that physical states in Eq.~\eqref{pt22} appear to be entangled. However, we emphasize that this notion of entanglement is kinematical and not gauge-invariant (see \cite{Hoehn:2019owq} for a detailed  discussion of this and how the trivialization can also be used to clarify the role of entanglement in the Page-Wootters approach).

The clock factors in Eq.~\eqref{trivializedstate2} have become redundant, apart from distinguishing between the positive and negative frequency sector. Indeed, if we had $\epsilon=0$, then (disregarding the diverging prefactor) both the negative and positive frequency terms in Eq.~\eqref{trivializedstate2} would have a common redundant factor $\ket{p_t=0}_C$, so that one could no longer distinguish between them at the level of the eigenbases of $\hat p_t$ and $\hat H_S$ in which the states have been expanded. That illustrates why $\ct_{T,\epsilon=0}$ is {\it not} invertible when acting on physical states. Indeed, $\ct^{-1}_{T,\epsilon=0}$ is undefined on states of the form $\ket{p_t=0}_C\ket{\psi}_S$, since $\hat T^{(n)}$ is not defined on $\ket{p_t=0}_C$. This is similar to the construction of the trivialization maps in \cite{hoehnHowSwitchRelational2018,Hoehn:2018whn}, except that here the decomposition  into positive and negative frequency sectors proceeds somewhat differently. This concludes step 1.\ above.

In order to complete the reduction to the states of the relational Heisenberg picture, and thus also complete step 2.\ above, we `project' out the redundant clock factor of the trivialized states by projecting onto the classical gauge-fixing condition $T=\tau$ (see Appendix~\ref{sec_redQT2} for a discussion of the classical gauge-fixing).  That is, we now proceed as in the Page-Wootters reduction and condition states in the trivialized physical Hilbert space $\ct_{T,\epsilon}(\ch_{\rm phys})$ on the clock reading $\tau$, separating positive and negative frequency modes.
Altogether, the quantum symmetry reduction map takes the form
\ba
\calr^\sigma_{\rm QR}(\tau)&:=&{e^{-i\,\tau\,s\,\epsilon^2/2}}\,(\bra{\tau,\sigma}\otimes I_S) \,\ct_{T,\epsilon}.\nn
\ea
Using that
\ba
\braket{\tau,\sigma|p_t=-\sigma'\epsilon} =\delta_{\sigma\sigma'} \sqrt{\epsilon}\,e^{i\,\tau\,s\,\epsilon^2/2}\,,\label{tptover2}
\ea
which is another reason why $\epsilon>0$ is chosen, in Eq.~\eqref{trivializedstate2} 
one finds $\tau$-independent system states
\ba
\calr^\sigma_{\rm QR}(\tau)\,\ket{\psi_{\rm phys}} &=& {\ \,{\intsum}_{E\in\sigma_{SC}}}\,\psi_\sigma(E)\,\ket{E}_S\nn\\
&=:&\f{1}{\sqrt{2}}\, \ket{\psi_S^\sigma}\,,\label{redfromdirac}
\ea
as appropriate for a Heisenberg picture (compare with Eq.~\eqref{degprojt}). The factor $1/\sqrt{2}$ has again been introduced for normalization purposes.
Since the wave function
\ba
\psi_S^\sigma(E) \equiv\sqrt{2}\,\psi_\sigma(E)\,,\label{degident}
\ea
is square-integrable/summable, it is clear that $\ket{\psi_S^\sigma}$ is an element of the physical system Hilbert space $\ch^{\rm phys}_{S,\sigma}$, corresponding to the $\sigma$-sector. We therefore also have ${\calr^\sigma_{\rm QR}(\tau):\ch_{\rm phys}\rightarrow\ch_{S,\sigma}^{\rm phys}}$, just as in Page-Wootters reduction. 
Using Lemmas~\ref{lem_triv1} and~\ref{lem_triv2}, it is now also clear how to invert the quantum symmetry reduction\,---\,at least per $\sigma$-sector:
\ba
(\calr^\sigma_{\rm QR})^{-1}&:=&\ct_{T,\epsilon}^{-1}\,\left(\f{1}{\sqrt{\epsilon}}\,\ket{p_t=-\sigma\epsilon}_C\otimes I_S\right)\nn
\ea
defines a map $\ch_{S,\sigma}^{\rm phys}\rightarrow\ch_\sigma$, so that\footnote{This is understood as appending the new clock tensor factor $\ket{p_t=-\sigma\epsilon}_C$ to the reduced system state $\ket{\psi_S^\sigma}$ and then applying the inverse of the trivialization {(recall that the conditioning of physical states on clock readings is not a true projection and thus invertible, cf.\ previous footnote)}. Note that embedding the reduced system states back into the physical Hilbert space is \emph{a priori} highly ambiguous since the system state alone no longer carries any information about the clock state which had been projected out. However, here it is the physical interpretation of the reduced state as being the description of the system $S$ relative to the temporal reference system $C$ that singles out the embedding into the clock-neutral (i.e.\ temporal-reference-system-neutral) $\ch_{\rm phys}$. This physical interpretation is, of course, added information, but it is crucial. For a more detailed discussion of this topic, see \cite{Vanrietvelde:2018pgb}.} 
\ba
(\calr_{\rm QR}^\sigma)^{-1}\,\ket{\psi_S^{\sigma}} =\sqrt{2}\,\theta(-\sigma\,\hat p_t)\ket{\psi_{\rm phys}}\,.\label{idphys2}
\ea
Hence, from the physical system Hilbert space of the positive/negative frequency modes one can only recover the positive/negative frequency sector of the physical Hilbert space.  Note that the inverse map is independent of $\tau$ in contrast to the Page-Wottters case.

More precisely, the following holds.
{\begin{Lemma}\label{lem_3}
The quantum symmetry reduction map is weakly equal to the Page-Wootters reduction map and an (inverse) system time evolution
\ba
\calr^\sigma_{\rm QR}(\tau)\,&\approx& \bra{\tau,\sigma}\otimes U_S^\dag(\tau)\,,\nn\\
&=&(I_C\otimes U_S^\dag(\tau))\,\calr^\sigma_{\rm PW}(\tau).\nn
\ea
Similarly, the inverse of the quantum symmetry reduction is equal to a system time evolution and the inverse of the Page-Wootters reduction:
\ba
(\calr^\sigma_{\rm QR})^{-1}&=&\delta (\hat{C}_H ) \left( \ket{\tau,\sigma}\otimes U_S(\tau) \right)\nn\\
&=&(\calr^\sigma_{\rm PW}(\tau))^{-1}\,(I_C\otimes U_S(\tau)).\nn
\ea
Hence
\ba
(\calr_{\rm QR}^\sigma(\tau))^{-1}\,\calr_{\rm QR}^\sigma(\tau)  \approx\theta(-\sigma\,\hat p_t)\otimes I_S\,\nn
\ea
and
\ba
\calr_{\rm QR}^\sigma(\tau)\,(\calr_{\rm QR}^\sigma)^{-1}\,\ket{\psi_S^\sigma}=\ket{\psi_S^\sigma}\,.\nn
\ea
\end{Lemma}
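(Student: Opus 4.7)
The plan is to establish the weak equality $\calr^\sigma_{\rm QR}(\tau)\approx\bra{\tau,\sigma}\otimes U_S^\dag(\tau)$ by direct computation on an arbitrary physical state; the remaining identities\,---\,the expression for the inverse and the two composition identities\,---\,then follow algebraically from this factorization and the already-proven Page-Wootters identities of Eqs.~\eqref{PWredinv} and~\eqref{PWsigmainv}.

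For the weak equality itself, I would evaluate both sides on $\ket{\psi_{\rm phys}}$. On the left, substituting the explicit form of $\ct_{T,\epsilon}\ket{\psi_{\rm phys}}$ from Eq.~\eqref{trivializedstate2} into the definition of $\calr^\sigma_{\rm QR}(\tau)$ and using Eq.~\eqref{tptover2} collapses the $\sigma'$-sum and cancels the $\sqrt{\epsilon}$ normalization together with the $e^{-i\tau s\epsilon^2/2}$ prefactor, reproducing the $\tau$-independent Heisenberg state $\f{1}{\sqrt{2}}\ket{\psi_S^\sigma}$ of Eq.~\eqref{redfromdirac}. On the right, I would expand $\ket{\psi_{\rm phys}}$ in the clock-momentum/energy basis via Eq.~\eqref{pt22}; a direct evaluation of $\braket{\tau,\sigma|p_{t,\sigma'}(E)}_C$ from the definition of clock states in Eq.~\eqref{degclock} yields $\delta_{\sigma\sigma'}(2|E|)^{1/4}e^{i\tau s|E|}$, the Heaviside function enforcing $\sigma'=\sigma$. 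The crucial observation is that on $\sigma_{SC}$ the condition $sE\leq 0$ forces $s|E|=-E$, so the clock phase exactly cancels $e^{i\tau E}$ from $U_S^\dag(\tau)$ while the $(2|E|)^{1/4}$ factor absorbs the Newton-Wigner weight in $\psi_\sigma(E)$, again giving $\f{1}{\sqrt{2}}\ket{\psi_S^\sigma}$. The second equality in the first displayed line, $\bra{\tau,\sigma}\otimes U_S^\dag(\tau) = (I_C\otimes U_S^\dag(\tau))\,\calr^\sigma_{\rm PW}(\tau)$, is then trivial since $\calr^\sigma_{\rm PW}(\tau)=\bra{\tau,\sigma}\otimes I_S$ and $U_S^\dag(\tau)$ acts on a disjoint tensor factor.

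For the inverse, the identity $\delta(\hat C_H)(\ket{\tau,\sigma}\otimes U_S(\tau)) = (\calr^\sigma_{\rm PW}(\tau))^{-1}(I_C\otimes U_S(\tau))$ is immediate from Eq.~\eqref{PWredinv}. Using the factorization $\calr^\sigma_{\rm QR}(\tau) = (I_C\otimes U_S^\dag(\tau))\,\calr^\sigma_{\rm PW}(\tau)$ just established, both compositions of $\calr^\sigma_{\rm QR}(\tau)$ with the candidate inverse reduce, after cancellation of $U_S^\dag(\tau)U_S(\tau)$, to the corresponding Page-Wootters identities: on $\ch_{\rm phys}$ one obtains $(\calr^\sigma_{\rm PW}(\tau))^{-1}\calr^\sigma_{\rm PW}(\tau)\approx\theta(-\sigma\hat p_t)\otimes I_S$ from Eq.~\eqref{PWsigmainv}, and applied to $\ket{\psi_S^\sigma}\in\ch_{S,\sigma}^{\rm phys}$, after first converting it to $\ket{\psi_S^\sigma(\tau)}=U_S(\tau)\ket{\psi_S^\sigma}$, one invokes the already-noted relation $\calr^\sigma_{\rm PW}(\tau)(\calr^\sigma_{\rm PW}(\tau))^{-1}\ket{\psi_S^\sigma(\tau)}=\ket{\psi_S^\sigma(\tau)}$ and then strips off the remaining $U_S^\dag(\tau)$.

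The only real obstacle is the phase and normalization bookkeeping; the essential ingredient making everything line up is the relation $s|E|=-E$ on $\sigma_{SC}$, which reconciles the Newton-Wigner phase supplied by the covariant clock POVM with the Schr\"odinger phase in $U_S^\dag(\tau)$ to produce a genuinely $\tau$-independent Heisenberg-picture state, as the quantum deparametrization is supposed to do.
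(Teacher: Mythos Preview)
Your proposal is correct. The paper itself does not give a detailed proof here; it simply states that ``the proof of Lemma~3 in~\cite{Hoehn:2019owq} applies here $\sigma$-sector wise,'' so your explicit computation is in fact more detailed than what the paper records, and it follows the natural line of argument one would expect from that reference: verify the weak equality on an arbitrary physical state using the explicit forms Eqs.~\eqref{pt22}, \eqref{degclock}, \eqref{trivializedstate2}, \eqref{tptover2}, then leverage the already-established Page-Wootters identities Eqs.~\eqref{PWredinv} and~\eqref{PWsigmainv} for the rest. Your identification of $s|E|=-E$ on $\sigma_{SC}$ as the mechanism that strips the $\tau$-dependence is exactly the point.

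One small remark: in the paper, $(\calr^\sigma_{\rm QR})^{-1}$ is \emph{defined} just above Lemma~\ref{lem_3} as $\ct_{T,\epsilon}^{-1}\big(\tfrac{1}{\sqrt{\epsilon}}\ket{p_t=-\sigma\epsilon}_C\otimes I_S\big)$, so the first equality in the second displayed block of the lemma is an assertion that this definition coincides with $\delta(\hat C_H)(\ket{\tau,\sigma}\otimes U_S(\tau))$. Your argument verifies the inverse properties for the latter ``candidate'' but does not literally check it equals the trivialization-based definition. The gap closes immediately, however: both maps send $\ket{\psi_S^\sigma}$ to $\sqrt{2}\,\theta(-\sigma\hat p_t)\ket{\psi_{\rm phys}}$ (the former by Eq.~\eqref{idphys2}, the latter by your computation), so they agree as linear maps on $\ch_{S,\sigma}^{\rm phys}$.
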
}
\begin{proof}
The proof of Lemma 3  in~\cite{Hoehn:2019owq} applies here $\sigma$-sector wise.
\end{proof}

Given the Heisenberg-type states in Eq.~\eqref{redfromdirac}, we may consider evolving Heisenberg observables on $\ch_{S,\sigma}^{\rm phys}$
\ba
\hat f_S^{\rm phys}(\tau)=U_S^\dag(\tau)\,\hat f_S^{\rm phys}\,U_S(\tau).\label{Heisenbergobs}
\ea
Indeed, the following theorem shows that these Heisenberg observables are equivalent to the relational Dirac observables on the $\sigma$-sector of the physical Hilbert space $\ch_\sigma$, thereby demonstrating that the quantum symmetry reduction map yields a relational Heisenberg picture.
To this end, we employ these reduction maps and their inverses to define another encoding operation ${\mathcal{E}_{\rm QR}^{\sigma,\tau'}:\cl\left(\ch_{S,\sigma}^{\rm phys}\right)\rightarrow \cl\left(\ch_\sigma\right)}$,
\ba
\mathcal{E}_{\rm QR}^{\sigma,\tau'}\left(\hat f^{\rm phys}_S(\tau)\right)&\ce& (\calr_{\rm QR}^\sigma)^{-1}\,\hat f_S^{\rm phys}(\tau)\,\calr_{\rm QR}^\sigma(\tau').\label{QRencode}
\ea
The choice of $\tau'$ turns out to be irrelevant.

\begin{theorem}\label{thm_obstrafo2}
Let $\hat f_S\in\cl\left(\ch_S\right)$. The quantum relational Dirac observable $\hat F_{f_S,T}(\tau)$ acting on $\ch_{\rm phys}$, Eq.~\eqref{RDOq},  reduces under $\calr^\sigma_{\rm QR}(\tau')$ to the corresponding  projected evolving observable in the Heisenberg picture on $\ch_{S,\sigma}^{\rm phys}$, Eq.~\eqref{Heisenbergobs}, for all $\tau'\in\mathbb{R}$, i.e.
\begin{align}
\calr^\sigma_{\rm QR} \left(\tau'\right)\,\hat F_{f_S,T}(\tau)\,(\calr^\sigma_{\rm QR})^{-1} &= \Pi_{\sigma_{SC}} \hat{f}_S(\tau)\, \Pi_{\sigma_{SC}}\nn\\
&\equiv\hat f_S^{\rm phys}(\tau). \nn 
\end{align}
Conversely, let $\hat f_S^{\rm phys}(\tau)\in\cl\left(\ch_{S,\sigma}^{\rm phys}\right)$ be any evolving Heisenberg observable. The encoding operation in Eq.~\eqref{QRencode} of system observables coincides \emph{on the physical Hilbert space} $\ch_{\rm phys}$ with the quantum relational Dirac observables in Eq.~\eqref{RDOq} projected into the $\sigma$-sector:
\begin{align}
\mathcal{E}_{\rm QR}^{\sigma,\tau'}\left(\hat{f}^{\rm phys}_S(\tau)\right)\approx\hat F^\sigma_{f^{\rm phys}_S,T}(\tau).\label{encode2}
\end{align}
\end{theorem}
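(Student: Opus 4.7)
My plan is to reduce the statement to the already-established Page-Wootters case (Theorem~\ref{thm_PWobseq}) by exploiting the factorization between the two reduction maps provided by Lemma~\ref{lem_3}. Concretely, from Lemma~\ref{lem_3} we have, per frequency sector,
\begin{align}
\calr^\sigma_{\rm QR}(\tau') &\approx (I_C\otimes U_S^\dag(\tau'))\,\calr^\sigma_{\rm PW}(\tau'),\nn\\
(\calr^\sigma_{\rm QR})^{-1} &= (\calr^\sigma_{\rm PW}(\tau'))^{-1}\,(I_C\otimes U_S(\tau')),\nn
\end{align}
where I am free to choose the same $\tau'$ in the inverse because, by Lemma~\ref{lem_3}, $(\calr^\sigma_{\rm QR})^{-1}$ is independent of this parameter. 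Substituting these expressions into $\calr^\sigma_{\rm QR}(\tau')\,\hat F_{f_S,T}(\tau)\,(\calr^\sigma_{\rm QR})^{-1}$, the clock factor acts trivially after the Page-Wootters conditioning, so only the system evolution remains and the expression collapses to $U_S^\dag(\tau')\bigl[\calr^\sigma_{\rm PW}(\tau')\,\hat F_{f_S,T}(\tau)\,(\calr^\sigma_{\rm PW}(\tau'))^{-1}\bigr]U_S(\tau')$.

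Next, I would handle the bracket by using that Theorem~\ref{thm_PWobseq} gives the desired reduction only for coinciding arguments, i.e.\ $\calr^\sigma_{\rm PW}(\tau)\,\hat F_{f_S,T}(\tau)\,(\calr^\sigma_{\rm PW}(\tau))^{-1}=\hat f_S^{\rm phys}$. To bridge the gap between $\tau$ and $\tau'$, I would invoke the relational Schr\"odinger equation~\eqref{degSchrod}, which implies $\calr^\sigma_{\rm PW}(\tau')\approx U_S(\tau'-\tau)\,\calr^\sigma_{\rm PW}(\tau)$ on $\ch_\sigma$. Conjugating by this intertwiner turns the bracket into $U_S(\tau'-\tau)\,\hat f_S^{\rm phys}\,U_S(\tau-\tau')=\hat f_S^{\rm phys}(\tau-\tau')$ in the Heisenberg picture of $\hat H_S$. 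The outer $U_S^\dag(\tau'), U_S(\tau')$ factors then telescope cleanly into the full Heisenberg evolution $U_S^\dag(\tau)\,\hat f_S^{\rm phys}\,U_S(\tau)=\hat f_S^{\rm phys}(\tau)$, establishing the first claim and simultaneously confirming that the result is $\tau'$-independent.

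For the converse, I would run the same chain in reverse: substitute the factorizations of $\calr^\sigma_{\rm QR}(\tau')$ and $(\calr^\sigma_{\rm QR})^{-1}$ into the encoding $\mathcal{E}_{\rm QR}^{\sigma,\tau'}(\hat f^{\rm phys}_S(\tau))$, absorb the outer $U_S$'s into the Heisenberg observable to obtain $\hat f_S^{\rm phys}(\tau-\tau')$, and recognize the remaining expression as $\mathcal{E}_{\rm PW}^{\tau',\sigma}(\hat f_S^{\rm phys}(\tau-\tau'))$. By the second statement of Theorem~\ref{thm_PWobseq}, this is weakly equal to $\hat F^\sigma_{f_S^{\rm phys}(\tau-\tau'),T}(\tau')$, and the final step is to show
\begin{equation}
\hat F^\sigma_{f_S^{\rm phys}(\tau-\tau'),T}(\tau')\;\approx\;\hat F^\sigma_{f_S^{\rm phys},T}(\tau).\nn
\end{equation}
I would establish this by rewriting $\ket{\tau,\sigma}=U_C(\tau-\tau')\ket{\tau',\sigma}$ inside the $G$-twirl definition \eqref{RDOq} and then using the on-shell identity $U_C(\tau-\tau')\otimes I_S\approx I_C\otimes U_S^\dag(\tau-\tau')$ on $\ch_{\rm phys}$, which follows directly from the Wheeler-DeWitt equation~\eqref{WdW2} and the gauge invariance of the $G$-twirled observable.

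I expect this last identity to be the only subtle point, since it is the only place where one must work with weak rather than strong equalities and rely specifically on the factorized structure of the constraint in each frequency sector (Lemma~\ref{lem_projTcommute} ensures the frequency projector passes through the covariant clock POVM, so the $\sigma$-sector remains well-defined throughout). Everything else is a bookkeeping exercise built on top of Lemma~\ref{lem_3} and Theorem~\ref{thm_PWobseq}, as anticipated by the remark in the text that the proof of the analogous Theorem~6 of \cite{Hoehn:2019owq} applies here per $\sigma$-sector.
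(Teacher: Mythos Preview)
Your proposal is correct and follows essentially the same approach as the paper: the paper simply notes that the proof of Theorem~5 in \cite{Hoehn:2019owq} applies here per $\sigma$-sector, and that proof proceeds precisely by composing the factorization of $\calr^\sigma_{\rm QR}$ in terms of $\calr^\sigma_{\rm PW}$ and $U_S$ (the analogue of Lemma~\ref{lem_3}) with the already-established Page--Wootters reduction (the analogue of Theorem~\ref{thm_PWobseq}). Your handling of the $\tau$ versus $\tau'$ mismatch via the Schr\"odinger intertwiner and of the converse via the weak identity $U_C\otimes I_S\approx I_C\otimes U_S^\dag$ on $\ch_{\rm phys}$ is exactly the mechanism at work there; note that for the converse you could shorten the argument slightly by simply inverting the first statement using $(\calr^\sigma_{\rm QR})^{-1}\calr^\sigma_{\rm QR}(\tau')\approx\theta(-\sigma\hat p_t)\otimes I_S$ from Lemma~\ref{lem_3}, which directly yields $\mathcal{E}_{\rm QR}^{\sigma,\tau'}(\hat f^{\rm phys}_S(\tau))\approx\hat F^\sigma_{f^{\rm phys}_S,T}(\tau)$ without the detour through $\hat F^\sigma_{f_S^{\rm phys}(\tau-\tau'),T}(\tau')$.
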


\begin{proof}
The proof of Theorem 5 in~\cite{Hoehn:2019owq} applies here per $\sigma$-sector.
\end{proof}

Once more, the theorem establishes an equivalence between the full sets of relational Dirac observables relative to clock $E_T$ on $\ch_\sigma$ and evolving Heisenberg observables on the physical system Hilbert space of the $\sigma$-modes $\ch^{\rm phys}_{S,\sigma}$. Hence, one can recover the action of the relational Dirac observables only $\sigma$-sector wise from the Heisenberg observables.

Lemma~\ref{lem_3} and Theorem~\ref{thm_PWexpec} directly imply that we again have preservation of expectation values per $\sigma$-sector, as the following theorem shows.

\begin{theorem}\label{thm_QRexpec}
Let $\hat f_S\in\cl\left(\ch_S\right)$ and ${\hat f_S^{\rm phys}(\tau) =U_S^\dag(\tau)\, \Pi_{\sigma_{SC}}\,\hat f_S\,\Pi_{\sigma_{SC}}}U_S(\tau)$ be its associated evolving Heisenberg operator on $\ch_S^{\rm phys}$. Then
\ba
\braket{\psi_{\rm phys}|\hat F^\sigma_{f_S,T}(\tau)\,|\phi_{\rm phys}}_{\rm phys} = \f{1}{2}\,\braket{\psi_S^\sigma|\,\hat f^{\rm phys}_S(\tau)\,|\phi_S^\sigma}\,,\nn
\ea
where $\ket{\psi_S^\sigma}=\sqrt{2}\,\calr^\sigma_{\rm QR}(\tau')\,\ket{\psi_{\rm phys}}$ for all $\tau'\in\mathbb{R}$. 
\end{theorem}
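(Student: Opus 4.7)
The proof plan is to reduce Theorem~\ref{thm_QRexpec} to a direct consequence of Lemma~\ref{lem_3} (the relation between the quantum symmetry reduction and Page-Wootters reduction maps) and Theorem~\ref{thm_PWexpec} (preservation of expectation values in the Page-Wootters picture). The key idea is that the quantum symmetry reduction map differs from the Page-Wootters reduction map only by an (inverse) system-side unitary $U_S^\dag(\tau')$, and this unitary is precisely what converts a Schr\"odinger-picture expectation value into its Heisenberg-picture counterpart.

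First, I would establish the $\tau'$-independence of the reduced state $\ket{\psi_S^\sigma}$. Using Lemma~\ref{lem_3},
\begin{align}
\ket{\psi_S^\sigma}
&= \sqrt{2}\,\calr^\sigma_{\rm QR}(\tau')\,\ket{\psi_{\rm phys}} \nn \\
&\approx \sqrt{2}\,(I_C\otimes U_S^\dag(\tau'))\,\calr^\sigma_{\rm PW}(\tau')\,\ket{\psi_{\rm phys}} \nn \\
&= U_S^\dag(\tau')\,\ket{\psi_S^\sigma(\tau')},\nn
\end{align}
where the last step uses the definition $\ket{\psi_S^\sigma(\tau')}=\sqrt{2}\,\calr^\sigma_{\rm PW}(\tau')\ket{\psi_{\rm phys}}$. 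Since $\ket{\psi_S^\sigma(\tau')}$ solves the Schr\"odinger equation \eqref{degSchrod}, we have $\ket{\psi_S^\sigma(\tau')} = U_S(\tau')\,\ket{\psi_S^\sigma(0)}$, and therefore $\ket{\psi_S^\sigma} = \ket{\psi_S^\sigma(0)}$ is manifestly $\tau'$-independent, as required.

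Second, with this in hand I would fix the convenient choice $\tau'=\tau$ and compute the right-hand side directly, using the Heisenberg-picture definition $\hat f_S^{\rm phys}(\tau)=U_S^\dag(\tau)\,\hat f_S^{\rm phys}\,U_S(\tau)$:
\begin{align}
\f{1}{2}\,\braket{\psi_S^\sigma|\,\hat f_S^{\rm phys}(\tau)\,|\phi_S^\sigma}
&= \f{1}{2}\,\bra{\psi_S^\sigma}\,U_S^\dag(\tau)\,\hat f_S^{\rm phys}\,U_S(\tau)\,\ket{\phi_S^\sigma} \nn \\
&= \f{1}{2}\,\braket{\psi_S^\sigma(\tau)|\,\hat f_S^{\rm phys}\,|\phi_S^\sigma(\tau)},\nn
\end{align}
where in the last step I used $U_S(\tau)\ket{\psi_S^\sigma}=\ket{\psi_S^\sigma(\tau)}$ established above. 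Applying Theorem~\ref{thm_PWexpec} immediately identifies this with $\braket{\psi_{\rm phys}|\hat F^\sigma_{f_S,T}(\tau)|\phi_{\rm phys}}_{\rm phys}$, completing the proof.

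The only genuine subtlety\,---\,and the step deserving most care\,---\,is justifying the $\tau'$-independence rigorously on the domain of validity of the weak equality in Lemma~\ref{lem_3}; in particular one must check that acting with $U_S(\tau')$ preserves $\ch_{S,\sigma}^{\rm phys}$ (which follows because $[\hat H_S,\Pi_{\sigma_{SC}}]=0$) so that the Schr\"odinger evolution is well-defined on the reduced Hilbert space. All other steps are essentially a bookkeeping exercise that moves the unitaries $U_S(\tau)$ between states and observables, exploiting the fact that the $\sigma$-sector Page-Wootters and quantum symmetry reductions differ by precisely this system-side time evolution.
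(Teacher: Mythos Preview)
Your proposal is correct and takes essentially the same approach as the paper, which explicitly states that Lemma~\ref{lem_3} and Theorem~\ref{thm_PWexpec} directly imply the result (deferring the detailed steps to the non-degenerate case in \cite{Hoehn:2019owq}). You have simply spelled out those steps: use Lemma~\ref{lem_3} to relate the QR-reduced state to the PW-reduced state via $U_S^\dag(\tau')$, establish $\tau'$-independence, and then shift the unitaries onto the states to reduce the Heisenberg expectation value to the Schr\"odinger one covered by Theorem~\ref{thm_PWexpec}.
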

\begin{proof}
The proof of Theorem 6 in~\cite{Hoehn:2019owq} applies here per $\sigma$-sector.
\end{proof}

Therefore, the quantum symmetry reduction map $\calr_{\rm QR}(\tau')$ is  an isometry, as we state in the following corollary.

\begin{corol}
Setting $\hat f_S=\Pi_{\sigma_{SC}}$ in Theorem~\ref{thm_QRexpec} yields
\ba
\braket{\psi_{\rm phys}|\,\theta(-\sigma\,\hat p_t)\otimes I_S\,|\phi_{\rm phys}}_{\rm phys}= \f{1}{2}\,\braket{\psi_S^\sigma|\phi_S^\sigma}\,,\nn
\ea
where $\ket{\psi_S^\sigma}=\sqrt{2}\,\calr^\sigma_{\rm QR}(\tau')\,\ket{\psi_{\rm phys}}$, $\forall\,\tau'\in\mathbb{R}$. Hence,
\ba
\braket{\psi_{\rm phys}|\phi_{\rm phys}}_{\rm phys} &=& \f{1}{2}\,\sum_\sigma\,\braket{\psi_S^\sigma|\,\phi_S^\sigma}\,.\nn
\ea
\end{corol}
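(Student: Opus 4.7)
The plan is to specialize Theorem~\ref{thm_QRexpec} to the case $\hat f_S=\Pi_{\sigma_{SC}}$ and simplify both sides of the resulting identity. First, the evolving Heisenberg operator appearing on the right-hand side becomes $\hat f_S^{\rm phys}(\tau)=U_S^\dag(\tau)\,\Pi_{\sigma_{SC}}\,\Pi_{\sigma_{SC}}\,\Pi_{\sigma_{SC}}\,U_S(\tau)=\Pi_{\sigma_{SC}}$, since $\Pi_{\sigma_{SC}}=\theta(-s\,\hat H_S)$ is a spectral projector of $\hat H_S$ and therefore commutes with $U_S(\tau)=e^{-i\hat H_S\tau}$. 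Because the reduced Heisenberg states $\ket{\psi_S^\sigma},\ket{\phi_S^\sigma}\in\ch_{S,\sigma}^{\rm phys}$ lie entirely in the image of $\Pi_{\sigma_{SC}}$, this projector acts as the identity on them, and the right-hand side of Theorem~\ref{thm_QRexpec} collapses to $\tfrac{1}{2}\braket{\psi_S^\sigma|\phi_S^\sigma}$.

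Second, I would rewrite the left-hand side $\braket{\psi_{\rm phys}|\hat F^\sigma_{\Pi_{\sigma_{SC}},T}(\tau)|\phi_{\rm phys}}_{\rm phys}$ as the matrix element of $\theta(-\sigma\hat p_t)\otimes I_S$ appearing in the corollary. To this end I would invoke Theorem~\ref{thm_PWobseq}: treating $\Pi_{\sigma_{SC}}$ as the identity on $\ch_{S,\sigma}^{\rm phys}$, the Page-Wootters encoding map gives $\hat F^\sigma_{\Pi_{\sigma_{SC}},T}(\tau)\approx\mathcal{E}_{\rm PW}^{\tau,\sigma}(I_{\ch_{S,\sigma}^{\rm phys}})=(\calr_{\rm PW}^\sigma(\tau))^{-1}\,\calr_{\rm PW}^\sigma(\tau)$, which by Eq.~\eqref{PWsigmainv} is weakly equal to $\theta(-\sigma\hat p_t)\otimes I_S$. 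Combining these two simplifications establishes the first asserted identity.

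The second equation then follows by summing the first over $\sigma\in\{+1,-1\}$: Lemma~\ref{lem_degresid} yields $\theta(-\hat p_t)+\theta(\hat p_t)=I_C$, so $\sum_\sigma\theta(-\sigma\hat p_t)\otimes I_S$ acts as the identity on $\ch_{\rm phys}$ (the excluded $p_t=0$ locus is annihilated by the covariant clock POVM), and the left-hand sides sum to $\braket{\psi_{\rm phys}|\phi_{\rm phys}}_{\rm phys}$, as desired.

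The only subtlety is ensuring that $\hat F^\sigma_{\Pi_{\sigma_{SC}},T}(\tau)\approx\theta(-\sigma\hat p_t)\otimes I_S$ holds \emph{weakly} on $\ch_{\rm phys}$ rather than strongly; this rests on the fact that physical states in Eq.~\eqref{pt22} are supported only on energies $E\in\sigma_{SC}$, rendering $\Pi_{\sigma_{SC}}$ indistinguishable from $I_S$ when sandwiched between them (cf.\ the weak equivalence Eq.~\eqref{weakequiv}), combined with the already-established relation Eq.~\eqref{PWsigmainv}. Given that the $\sigma$-wise analog of this corollary in the Page-Wootters formulation is Corollary~\ref{corol_2}, I do not expect any genuine obstacle beyond these bookkeeping points.
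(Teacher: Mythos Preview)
Your proposal is correct and matches the paper's intended argument. The paper gives no separate proof for this corollary, treating it as an immediate specialization of Theorem~\ref{thm_QRexpec}; you have simply spelled out the two simplifications the paper leaves implicit, namely that $\hat f_S^{\rm phys}(\tau)=\Pi_{\sigma_{SC}}$ acts as the identity on the reduced states and that $\hat F^\sigma_{\Pi_{\sigma_{SC}},T}(\tau)\approx\theta(-\sigma\hat p_t)\otimes I_S$ via Eq.~\eqref{PWsigmainv}, both of which are correct.
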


\noindent Accordingly, we can work with normalized states in each reduced $\sigma$-sector and in the physical Hilbert space simultaneously. 

In conclusion:
\begin{itemize}
\item[(1)] Applying the quantum symmetry reduction map $\calr^\sigma_{\rm QR}(\tau)$ to  the clock-neutral picture on the physical Hilbert space $\ch_{\rm phys}$ yields a \emph{relational Heisenberg picture} with respect to the clock $C$ on the  physical system Hilbert space of the $\sigma$-modes, $\ch_{S,\sigma}^{\rm phys}$.

\item[(2)] $\sigma$-sector wise, the relational quantum dynamics encoded in the relational Dirac observables on the physical Hilbert space is equivalent to the dynamics in the relational Heisenberg picture on the physical system Hilbert space.

\item[(3)] Given the invertibility of the reduction map, Theorem~\ref{thm_QRexpec} formally shows that if $\hat f^{\rm phys}_S(\tau)$ is self-adjoint on $\ch_{S,\sigma}^{\rm phys}$, then so is $\hat F^\sigma_{f_S,T}(\tau)$ on $\ch_\sigma$.
\end{itemize}

\subsubsection{Equivalence of Dynamics (ii) and (iii)}

The previous subsections establish a $\sigma$-sector wise equivalence between the relational dynamics, on the one hand, in the clock-neutral picture of Dirac quantization and, on the other, the relational Schr\"odinger and Heisenberg pictures, obtained through Page-Wootters reduction and quantum deparametrization, respectively. It is thus evident that also the relational Schr\"odinger and Heisenberg pictures are indeed equivalent up to the unitary $U_S(\tau)$ as they should. This is directly implied by Lemma~\ref{lem_3} which shows that the Page-Wotters and quantum symmetry reduction maps are (weakly) related by $U_S(\tau)$.

\subsection{Quantum analogs of gauge-fixing and gauge-invariant extensions}

In contrast to the classical constraint surface, the `quantum constraint surface' $\ch_{\rm phys}$ is automatically gauge-invariant since the exponentiation of the symmetry generator $\hat C_H$ acts trivially on all physical states and Dirac observables. Nevertheless, extending the interpretation established in \cite{Hoehn:2019owq}, we can understand the quantum symmetry reduction map $\calr_{\rm QR}(\tau)$ (and given their unitary relation, also $\calr_{\rm PW}^\sigma(\tau)$) as the quantum analog of a classical phase space reduction through gauge-fixing. For completeness, the latter procedure is explained in Appendix~\ref{app_redQT} for the class of models of this article. In particular, we may think of the physical system Hilbert space $\ch_{S,\sigma}^{\rm phys}$ for the $\sigma$-sector as the quantum analog of the gauge-fixed reduced phase space obtained by imposing for example the gauge $T=0$ on the classical $\sigma$-frequency sector $\cc_\sigma$.\footnote{However, note that the quantization of this classical reduced phase space will in some cases, but not in general, coincide with the quantum theory on $\ch_{S,\sigma}^{\rm phys}$ due to the generic inequivalence between Dirac and reduced quantization (see Appendix~\ref{app_redQT}).} Also classically, one obtains two identically looking gauge-fixed reduced phase spaces, one for each frequency sector.
Consequently, the relational Schr\"odinger and Heisenberg pictures can both be understood as the quantum analog of a gauge-fixed formulation of a manifestly gauge-invariant theory. 

In this light, Theorems~\ref{thm_PWobseq} and~\ref{thm_obstrafo2} imply that the encoding operations of system observables in Eqs.~\eqref{encode} and~\eqref{QRencode} can be understood as the quantum analog of gauge-invariantly extending a gauge-fixed quantity {(see also \cite{Hoehn:2019owq})}. Similarly, the alternative physical inner product in the second line of Eq.~\eqref{PWip} is the quantum analog of a gauge-fixed version of the manifestly gauge-invariant physical inner product obtained through group averaging in Eq.~\eqref{PIP}. Indeed, $\sum_\sigma\, \braket{\psi_{\rm phys}|(\ket{\tau,\sigma}\!\bra{\tau,\sigma}\otimes I_S)|\phi_{\rm phys}}_{\rm kin}$ is the (kinematical) expectation value of the `projector' onto clock time $\tau$ in physical states. However, it is clear from the unitarity of the Schr\"odinger dynamics on $\ch_{S,\sigma}^{\rm phys}$ and Eq.~\eqref{PWip} that this inner product does not depend on $\tau$ (the `gauge'), in line with the  interpretation of it being the quantum analog of a gauge-fixed version of a manifestly gauge-invariant quantity.

\subsection{Interlude: alternative route}
\label{alternative route}

As an aside, we mention that there exists an alternative route to establishing a trinity for clock Hamiltonians quadratic in momenta. This again exploits the reducible representation on the physical Hilbert space. The $\sigma$-sector of $\ch_{\rm phys}$ is defined by the constraint ${\hat C_\sigma=\hat H'_C+\hat H_S'}$, where $\hat H'_C:=\f{\hat p_t}{\sqrt{2}}$ and $\hat H'_S:=\sigma\sqrt{-s\,\hat H_S}$. Clearly, $\hat H'_C$ is now a non-degenerate clock Hamiltonian. In \cite{Hoehn:2019owq} we established the trinity for non-degenerate clock Hamiltonians and the $\sigma$-sector defined by ${\hat C_\sigma=\hat H'_C+\hat H_S'}$ yields a special case of that. This immediately implies a trinity per $\sigma$-sector, however, now relative to a clock POVM which is covariant with respect to $\hat H'_C$. It is evident that the covariant clock POVM is in this case defined through the eigenstates of $\hat t$ which (up to a factor of $\sqrt{2}$) is also the first moment of the POVM. Indeed, the equivalence between the clock-neutral Dirac quantization and the relational Heisenberg picture  has previously been established for models with quadratic clock Hamiltonians precisely in this manner in \cite{hoehnHowSwitchRelational2018,Hoehn:2018whn} {(see also the recent \cite{chataignier2020relational})}. However, as mentioned in Sec.~\ref{sec_classrel}, one either has to regularize the relational observables or write them as functions of affine, rather than canonical pairs of evolving degrees of freedom. {This is a consequence of the square root nature of $\hat H'_S$.} None of these extra steps were needed in the trinity construction of this article, which is based on a clock POVM which is covariant with respect to $s\,\f{\hat p_t^2}{2}$, rather than $\hat p_t/\sqrt{2}$.


\section{Relativistic localization: addressing Kucha\v{r}'s criticism}\label{sec_kuchar}

In his seminal review on the problem of time, Kucha\v{r} raised three criticisms against the Page-Wootters formalism \cite{kucharTimeInterpretationsQuantum2011a}: the Page-Wootters conditional probability in Eq.~\eqref{fwhenT} (a) yields the wrong localization probabilities for a relativistic particle, (b) violates the Hamiltonian constraint, and (c) produces incorrect transition probabilities. As mentioned in the introduction, criticisms (b) and (c) have been resolved in~\cite{Hoehn:2019owq}\,---\,see Theorem~\ref{thm_PWexpec} which extends the resolution of (b) to the present class of models. 

Here, we shall now also address Kucha\v{r}'s first criticism (a) on relativistic localization, which is more subtle to resolve. The main reason, as is well-known from the theorems of Perez-Wilde \cite{FernandoPerez:1976ib} and Malament \cite{Malament1996} (see also the discussion in \cite{Yngvason:2014oia,Papageorgiou:2019ezr}), is that there is no relativistically covariant position-operator-based notion of localization which is compatible with relativistic causality and positivity of energy. This is a key motivation for quantum field theory \cite{haag2012local,Yngvason:2014oia} -- and here a challenge for specifying what the `right' localization probability for a relativistic particle should be. Instead, one may resort to an approximate and relativistically non-covariant notion of localization proposed by Newton and Wigner \cite{haag2012local,Newton:1949cq}. We will address criticism (a) by demonstrating that our formulation of the Page-Wootters formalism, based on covariant clocks for relativistic models, yields a localization in such an approximate sense.

For the sake of an explicit argument, we shall, just like Kucha\v{r} \cite{kucharTimeInterpretationsQuantum2011a}, focus solely on the free relativistic particle, whose Hamiltonian constraint reads (cf.\ Table~\ref{Table:Examples})
\ba
\hat C_H = -\hat p_t^2+\Hat{\bm{p}}{}^2+m^2\,,\nn
\ea
{where $\hat{\bm p}$ denotes the spatial momentum vector}.
However, the argument could be extended to the entire class of models considered in this manuscript. It is straightforward to check that the physical inner product Eq.~\eqref{PIP} reads in this case \cite{Hartle:1997dc,Hoehn:2018whn}\footnote{Note that here we have a doubly degenerate system energy $\hat H_S=\Hat{\bm{p}}{}^2+m^2$ in contrast to the expression in Eq.~\eqref{PIP} where we ignored degeneracy.}
\begin{align}
\braket{\phi_{\rm phys}|\psi_{\rm phys}}_{\rm phys}&=\int_{\mathbb{R}^3}\f{d^3\bm{p}}{2\,\varepsilon_p}\Big[\phi^*_{\rm kin}(\varepsilon_p,\bm{p})\,\psi_{\rm kin}(\varepsilon_p,\bm{p})\nn\\
&\quad {+} \phi^*_{\rm kin}(-\varepsilon_p,\bm{p})\,\psi_{\rm kin}(-\varepsilon_p,\bm{p})\Big],
\end{align}
where $\varepsilon_p=\sqrt{\bm{p}{}^2+m^2}$ is the relativistic energy of the particle and the first and second term in the integrand correspond to negative and positive frequency modes, respectively. Fourier transforming to solutions to the Klein-Gordon equation in Minkowski space
\ba
\psi_{\rm phys}^\sigma(t,\bm{x}) =\f{1}{(2\pi)^{3/2}}\,\int_{\mathbb{R}^3}\f{d^3\bm{p}}{2\,\varepsilon_p}\,e^{i(\bm{x}\cdot\bm{p}-\sigma\,t\,\varepsilon_p)}\,\psi_{\rm kin}(-\sigma\,\varepsilon_p,\bm{p})\,,\nn
\ea
one may further check that \cite{Hartle:1997dc,Hoehn:2018whn}
\ba
\braket{\phi_{\rm phys}|\psi_{\rm phys}}_{\rm phys}=\left(\phi^+_{\rm phys},\psi^+_{\rm phys}\right)_{\rm KG}-\left(\phi^-_{\rm phys},\psi^-_{\rm phys}\right)_{\rm KG}\,,\nn\\\label{KGa}
\ea
where
\begin{align}
\left(\phi^\sigma_{\rm phys},\psi^\sigma_{\rm phys}\right)_{\rm KG} &= i\int_{\mathbb{R}^3}d^3\bm{x}\,\Big[\left(\phi_{\rm phys}^\sigma(t,\bm{x})\right)^*\partial_t\,\psi_{\rm phys}^\sigma(t,\bm{x}) \nn \\
&\quad -\left(\partial_t\,\phi_{\rm phys}^\sigma(t,\bm{x})\right)^*\psi_{\rm phys}^\sigma(t,\bm{x})\Big]\,,\label{KG}
\end{align}
is the Klein-Gordon inner product in which positive  frequency modes are positive semi-definite, negative frequency modes are negative semi-definite and positive and negative frequency modes are mutually orthogonal. The physical inner product is thus equivalent to the Klein-Gordon inner product (with correctly inverted sign for the negative frequency modes), which provides the correct and conserved normalization for the free relativistic particle.\footnote{This also resolves the normalization issue appearing in \cite{Diaz:2019xie} where physical states were normalized using the kinematical, rather than physical inner product, thus yielding a divergent normalization (for equal mass states) in contrast to here.} This raises hopes that the conditional probabilities of the Page-Wootters formalism may yield the correct localization probability for the relativistic particle. Note that so far we have not yet made a choice of time operator.

Suppose now that the Minkowski time operator $\hat t$, quantized as a self-adjoint operator on $\ch_{\rm kin}$, is used to define the projector onto clock time $t$ as $e_C(t)=\ket{t}\!\bra{t}$ and $e_{\bm{x}}=\ket{\bm{x}}\!\bra{\bm{x}}$ is the projector onto position $\bm{x}$. This time operator is not covariant with respect to the quadratic clock Hamiltonian. The conditional probability Eq.~\eqref{fwhenT} then becomes
\begin{align}
\prob\left(\bm{x} \ \mbox{when} \ t \right) =
\frac{|\psi_{\rm phys}(t,\bm{x})|^2}{\int_{\mathbb{R}^3}\,d^3\bm{x'}\,|\psi_{\rm phys} (t,\bm{x'})|^2}, \label{xwhent}
\end{align}
where $\psi_{\rm phys}(t,\bm{x})=(\bra{t}\otimes\bra{\bm{x}})\ket{\psi_{\rm phys}}$ is a general solution to the Klein-Gordon equation. As Kucha\v{r} pointed out~\cite{kucharTimeInterpretationsQuantum2011a}, while this would be the correct localization probability for a non-relativistic particle, it is the wrong result for a relativistic particle.  Indeed, apart from not separating positive and negative frequency modes, which is necessary for a probabilistic interpretation {(e.g., if $\psi_{\rm phys}$ contains both positive and negative frequency modes then the denominator in Eq.~\eqref{xwhent} is not conserved)}, Eq.~\eqref{xwhent} neither coincides with the charge density of the Klein-Gordon current in Eq.~\eqref{KG}, nor with the Newton-Wigner approximate localization probability \cite{haag2012local,Newton:1949cq}. 
In particular, one can \emph{not} interpret a solution $\psi_{\rm phys}(t,\bm{x})$ to the Klein-Gordon equation as a probability amplitude to find the relativistic particle at position $\bm{x}$ at time $t$. The reason, as explained in  \cite{haag2012local}, is that the conserved density is the one in Eq.~\eqref{KG} and $\psi_{\rm phys}$ and $\partial_t\psi_{\rm phys}$ inside it are not only dependent, but related by a non-local convolution
\ba
\partial_t\psi_{\rm phys}(t,\bm{x}) = \int_{\mathbb{R}^3}d^3\bm{x'}\,\varepsilon(\bm{x}-\bm{x'})\,\psi_{\rm phys}(t,\bm{x'})\,,\nn
\ea
where $\varepsilon(\bm{x}-\bm{x'})$ is the Fourier transform of $-i\,\varepsilon_p$.

By contrast, let us now exhibit what form of conditional probabilities the covariant clock POVM $E_T$ of Sec.~\ref{sec_covtimedeg} gives rise to. We now insert $e_C(\tau)=\sum_\sigma\ket{\tau,\sigma}\!\bra{\tau,\sigma}$ and, as before, $e_{\bm{x}}$ into the conditional probability Eq.~\eqref{fwhenT}. {The crucial difference between the covariant clock POVM $E_T$ and the clock operator $\hat t$ (which is covariant with respect to $\hat C_\sigma$, but not $\hat C_H$) is that the denominator of Eq.~\eqref{fwhenT} is equal to the physical inner product in the former case (see Corollary~\ref{corol_2}) but not in the latter.\footnote{{It is instructive to see how this is linked to the {(non-)}covariance of the clock observable with respect to $\hat C_H$. Let $e_C$ be either the covariant $e_C(\tau)$ or non-covariant $e_C(t)$.  The denominator of Eq.~\eqref{fwhenT} reads
\ba
\,\,\,\bra{\psi_{\rm phys}}e_C\otimes I_S  \ket{\psi_{\rm phys}}_{\rm kin}&=& \bra{\psi_{\rm kin}}\delta(\hat C_H)\,(e_C\otimes I_S)\ket{\psi_{\rm phys} }_{\rm kin}.\nn
\ea
Eq.~\eqref{nochmal} implies that $\delta(\hat C_H)(e_C(\tau)\otimes I_S) \approx I_{\rm phys}$. This exploits the covariance and immediately shows that the denominator coincides with the physical inner product Eq.~\eqref{PIP}. By contrast,  the non-covariance entails $\delta(\hat C_H)(e_C(t)\otimes I_S) \neq I_{\rm phys}$, so that in this case the denominator differs from the physical inner product.}}}
Supposing that we work with normalized physical system states $\braket{\psi_S^\sigma(\tau)|\psi_S^\sigma(\tau)}=1$, Theorem~\ref{thm_PWexpec} implies
\begin{align}
\prob\left(\bm{x} \ \mbox{when} \ \tau \right) &=
\f{1}{2}\sum_\sigma |\psi_S^\sigma(\tau,\bm{x})|^2 \label{xwhent2}\\
&=\braket{\psi_{\rm phys}|\hat F_{e_{\bm{x}},T}(\tau)|\psi_{\rm phys}}_{\rm phys}\,,\nn
\end{align}
where $\psi_S^\sigma(\tau,\bm{x}):=\sqrt{2}(\bra{\tau,\sigma}\otimes\bra{\bm{x}})\ket{\psi_{\rm phys}}$ and $\tau$ is now \emph{not} Minkowski time.  For concreteness, let us now focus on positive frequency modes. Using Eqs.~\eqref{pt22} and~\eqref{degclock}, one obtains
\ba
\psi_S^+(\tau,\bm{x}) = \f{1}{(2\pi)^{3/2}}\int_{\mathbb{R}^3}\f{d^3\bm{p}}{\sqrt{2\,\varepsilon_p}}e^{i(\bm{x}\cdot\bm{p}-\tau\,\varepsilon_p^2)}\,\psi_{\rm kin}(-\varepsilon_p,\bm{p})\,.\nn\\
\label{NWtau}
\ea
This is \emph{almost} the Newton-Wigner position space wave function for positive frequency modes, which relative to Minkowski time reads \cite{haag2012local}
\ba
\psi_{\rm NW}^+(t,\bm{x}) &=&\int_{\mathbb{R}^3}d^3\bm{x}\, K(\bm{x}-\bm{x'})\,\psi^+_{\rm phys}(t,\bm{x'})\label{NW}\\
&=& \f{1}{(2\pi)^{3/2}}\int_{\mathbb{R}^3}\f{d^3\bm{p}}{\sqrt{2\,\varepsilon_p}}e^{i(\bm{x}\cdot\bm{p}-t\,\varepsilon_p)}\,\psi_{\rm kin}(-\varepsilon_p,\bm{p})\,,\nn
\ea
where $K(\bm{x})$ is the Fourier transform of $\sqrt{2\varepsilon_p}$. The key property of $|\psi_{\rm NW}^+(t,\bm{x})|^2$ is that, while  not relativistically covariant, it \emph{does} admit the interpretation of an approximate localization probability, with accuracy of the order of the Compton wave length, for finding the particle at position $\bm{x}$ at Minkowski time $t$ \cite{haag2012local,Newton:1949cq}. In particular, 
\ba
\left(\phi^+_{\rm phys},\psi^+_{\rm phys}\right)_{\rm KG} = \int_{\mathbb{R}^3}d^3\bm{x}\left(\phi_{\rm NW}^+(t,\bm{x})\right)^*\psi_{\rm NW}^+(t,\bm{x})\,,\nn
\ea
i.e.\ the Klein-Gordon inner product assumes the usual Schr\"odinger form for the Newton-Wigner wave function.

Noting that {due to Eq.~\eqref{T}} we can \emph{heuristically} view $\tau$ as $ t/\varepsilon_p$, and comparing with Eq.~\eqref{NW} we can interpret Eq.~\eqref{NWtau} as a Newton-Wigner wave function as well, but expressed relative to a different time coordinate $\tau$. Indeed, in line with this interpretation, we find that in this case too the physical inner product, Eq.~\eqref{KGa}, for the positive frequency modes assumes the form of the standard Schr\"odinger theory inner product
\ba
\left(\phi^+_{\rm phys},\psi^+_{\rm phys}\right)_{\rm KG} = \int_{\mathbb{R}^3}d^3\bm{x}\left(\phi_S^+(\tau,\bm{x})\right)^*\psi_{S}^+(\tau,\bm{x})\,.\nn
\ea
The analogous statement is true for the negative frequency modes. In that sense, Eq.~\eqref{xwhent2},  in contrast to Eq.~\eqref{xwhent}, does admit the interpretation as a valid, yet approximate localization probability for the relativistic particle per frequency sector, just like in the standard Newton-Wigner case.\footnote{The physical inner product for the positive frequency solutions $\psi^+_{\rm phys}(t,\bm{x})$ to the Klein-Gordon equation takes, of course, the standard Klein-Gordon (and not the Schr\"odinger) form Eq.~\eqref{KG}.
Nevertheless, the kernel $K(\bm{x}-\bm{x'})$ in the nonlocal convolution in the first line in Eq.~\eqref{NW}  decreases quickly as a function of $m|\bm{x}-\bm{x'}|$ \cite{haag2012local}. Hence, for massive particles, we can interpret even Eq.~\eqref{xwhent} as providing an approximate localization.} 

Accordingly, computing the conditional probabilities of the Page-Wootters formalism relative to the covariant clock POVM, rather than the non-covariant Minkowski time operator $\hat t$, leads to an acceptable localization probability for a relativistic particle, thereby addressing also Kucha\v{r}'s first criticism (a). 
  Given the equivalence of the Page-Wootters formalism with the clock-neutral and the relational Heisenberg pictures, established through the trinity in Sec.~\ref{sec_trinity}, this result also equips the quantum relational Dirac observables $\hat F_{{\bm{x}},T}(\tau)$ and the evolving Heisenberg observables ${\bm{x}}(\tau)$ with the interpretation of providing an approximate, Newton-Wigner type localization in Minkowski space.

 \section{Changing quantum clocks}\label{sec_cqc}
 
So far we have worked with a single choice of clock. Let us now showcase how to change from the evolution relative to one choice of clock to that relative to another.   Our discussion will apply to both the relational Schr\"odinger picture of the Page-Wootters formalism and the relational Heisenberg picture obtained through quantum deparametrization. 

For concreteness, suppose we are given a Hamiltonian constraint of the form
\ba
\hat C_H = s_1\f{\hat p_1^2}{2}+s_2\f{\hat p_2^2}{2}+  \hat H_S\,,\label{CH2}
\ea
where $s_i=\pm1$ and $\hat p_i$ denotes the momentum of clock subsystem $C_i$, $i=1,2$ and we have suppressed tensor products with identity operators. In particular, suppose $\hat H_S$ does not depend on either of the clock degrees of freedom. We will work with the covariant clock POVM of Sec.~\ref{sec_covtimedeg} for both clock choices. For example, the constraints of the relativistic particle, the flat ($k=0$) FLRW model with a massless scalar field and the Bianchi I and II models from Table~\ref{Table:Examples} are of the above form.\footnote{Indeed, the Hamiltonian constraint of the vacuum Bianchi I and II models can be written in the form \cite{Ashtekar:1993wb}
\ba
\hat C_H = -\f{\hat{\bar p}_0^2}{2}+\f{\hat{\bar p}_-^2}{2}+\f{\hat{\bar p}_+^2}{2}+k_+\,e^{-4\sqrt{3}\hat{\bar\beta}^+}\,.\nn
\ea}
Our subsequent discussion will thus directly apply to these models.
 
Since we will exploit the Page-Wootters and symmetry reduction maps as `quantum coordinate maps' from the clock-neutral picture to the given `clock perspective', we will be able to change from the description of the dynamics relative to one clock to that relative to another in close analogy to coordinate changes on a manifold. Due to the shape of the constraint in Eq.~\eqref{CH2} we now have superselection of Dirac observables and the physical Hilbert space across both the $\sigma_1$-frequency sectors of clock $C_1$ and the $\sigma_2$-frequency sectors of clock $C_2$. The physical Hilbert space takes the form
\ba
\ch_{\rm phys} = \bigoplus_{\sigma_1,\sigma_2}\ch_{\sigma_1,\sigma_2}\,,\label{overlap}
\ea
where $\ch_{\sigma_1,\sigma_2}\ce\ch_{\sigma_1}\cap\ch_{\sigma_2}$ is the overlap of the $\sigma_1$-sector of clock $C_1$ and the $\sigma_2$-sector of clock $C_2$.
As we have seen the reduction maps are only invertible per frequency sector. Hence, we will only be able to change from a given $\sigma_1$-sector to the part of the $\sigma_2$-frequency sector which is contained in it. In other words, the ``quantum coordinate changes'' are restricted to each overlap $\ch_{\sigma_1,\sigma_2}$.

The method of changing temporal reference frames exhibited below is a direct extension of several previous works: \cite{hoehnHowSwitchRelational2018,Hoehn:2018whn} developed the method $\sigma$-sector-wise for states and observables in the relational Heisenberg picture for Hamiltonians of the type in Eq.~\eqref{CH2} for two example models, but used clock operators canonically conjugate to the clock momenta $\hat p_i$ (and thus not a clock POVM covariant with respect to the full clock Hamiltonian). The method of transforming relational observables from one clock description to another was demonstrated in \cite{hoehnHowSwitchRelational2018,Hoehn:2018whn} for a subset of relational Dirac observables, paying, however, detailed attention to regularization necessities arising from time-of-arrival observables \cite{Grot:1996xu,aharonov1998measurement, muga2000arrival, Gambini:2000ht,Dittrich:2006ee}. Our previous article \cite{Hoehn:2019owq} developed the method comprehensively for both states and observables for clock Hamiltonians with non-degenerate and continuous spectrum in both the relational Schr\"odinger and Heisenberg pictures; specifically, the transformation of arbitrary relational observables corresponding to relations between $S$ and the clocks was developed for the corresponding class of models. In \cite{castro-ruizTimeReferenceFrames2019} the clock change method was exhibited for state transformations in the relational Schr\"odinger picture for ideal clocks whose Hamiltonian coincides with  the clock momentum itself. Our discussion can also be viewed as a full quantum extension of the semiclassical method in \cite{Bojowald:2010xp,Bojowald:2010qw,Hohn:2011us} which is equivalent at semiclassical order, however, also applies to clock functions which are non-monotonic, i.e.\ have turning points, in contrast to the other works mentioned. (See \cite{giacominiQuantumMechanicsCovariance2019,Vanrietvelde:2018dit,Vanrietvelde:2018pgb,hamette2020quantum} for related spatial quantum frame changes.)

In particular, owing to our focus on covariant clock POVMs, all the results and proofs  \cite{Hoehn:2019owq} apply $\sigma$-sector-wise to the present case.  However, we will also study novel effects such as the temporal frame dependence of comparing clock readings.

\subsection{State transformations}

Denote by $\calr^{\sigma_i}_I(\tau_i)$, where $I\in\{\rm{PW},\rm{QR}\}$, the Page-Wootters or quantum symmetry reduction map to the $\sigma_i$-sector of clock $C_i$. The temporal frame change (TFC) map $\Lambda_{I\to J}^{\sigma_i\to \sigma_j}: \ch^{\rm phys}_{C_j,\sigma_i}\otimes\ch^{\rm phys}_{S,\sigma_i}\rightarrow   \ch^{\rm phys}_{C_i,\sigma_j}\otimes\ch^{\rm phys}_{S,\sigma_j}$ from clock $C_i$'s $\sigma_i$-sector to clock $C_j$'s $\sigma_j$-sector then reads\footnote{For notational simplicity we write all inverse maps as functions of $\tau$. Recall, however, that $\calr_{\rm QR}^\sigma$ does not depend on $\tau$.}
\ba
\Lambda_{I\to J}^{\sigma_i\to \sigma_j} \ce \calr^{\sigma_j}_J(\tau_j) \circ \left(\calr^{\sigma_i}_I(\tau_i)\right)^{-1}\,.\label{TFC}
\ea
Here $\ch^{\rm phys}_{C_j,\sigma_i}$ denotes the physical clock $C_j$ Hilbert space corresponding to the $\sigma_i$-sector of clock $C_i$, i.e.\ the subspace of $\ch_{C_j}$ compatible with solutions to the constraint Eq.~\eqref{CH2} and similarly for the other Hilbert spaces. When $I\neq J$ in Eq.~\eqref{TFC}, then the TFC map changes not only the temporal reference frame, but also between the corresponding relational Heisenberg and Schr\"odinger pictures. Let us write $\Lambda_{I}^{\sigma_i\to \sigma_j} \ce\Lambda_{I\to I}^{\sigma_i\to \sigma_j}$ when no relational picture change takes place.

More explicitly, the TFC map from the $\sigma_1$-frequency sector of clock $C_1$ in the relational Schr\"odinger picture to the $\sigma_2$-frequency sector of clock $C_2$ in the relational Schr\"odinger picture takes the form
\ba
\Lambda_{\rm PW}^{\sigma_1\to \sigma_2} = \left(\bra{\tau_2,\sigma_2}\otimes I_{C_1S}\right)\,\delta(\hat C_H)\,\left(\ket{\tau_1,\sigma_1}\otimes I_{C_2S}\right)\,.\nn
\ea
Here we have made use of Eqs.~\eqref{PWred} and~\eqref{PWredinv} and the covariant clock states Eq.~\eqref{degclock} for both clocks. The reduced states transform under this map as follows:
\ba
\left(\theta(-\sigma_1\,\hat p_1)\otimes I_S\right)\,\ket{\psi^{\sigma_2}_{C_1S|C_2}(\tau_2)} = \Lambda_{\rm PW}^{\sigma_1\to \sigma_2}\,\ket{\psi^{\sigma_1}_{C_2S|C_1}(\tau_1)} \,,\nn\\
\label{PWtrafo}
\ea
where we made use of Eq.~\eqref{PWsigmainv} and $\ket{\psi^{\sigma_i}_{C_jS|C_i}(\tau_i)}$ is the relational Schr\"odinger picture state of clock $C_j$ and system $S$ relative to clock $C_i$, which is chosen as the temporal reference frame, in its $\sigma_i$-sector. In other words, the Heaviside-function on the l.h.s.\ highlights that we can only map from the $\sigma_1$-sector of clock $C_1$ to that part of the $\sigma_2$-sector of clock $C_2$ which is contained in the $\sigma_1$-sector of clock $C_1$. This is clear, because any reduction map is only invertible on its associated $\sigma$-sector: from the $\sigma_1$ relational Schr\"odinger picture one can only recover the $\sigma_1$-sector of the physical Hilbert space. Hence, the subsequent Page-Wootters reduction map to the $\sigma_2$-sector of clock $C_2$ in Eq.~\eqref{TFC} can then only yield information in the overlap of the $\sigma_1$- and $\sigma_2$-sectors in the physical Hilbert space (see also \cite{hoehnHowSwitchRelational2018,Hoehn:2018whn} for explicit examples of this situation in the relational Heisenberg picture). This is a manifestation of the superselection across both the $\sigma_1$- and the $\sigma_2$-sectors.

\begin{figure}[t]
\includegraphics[width= .45\textwidth]{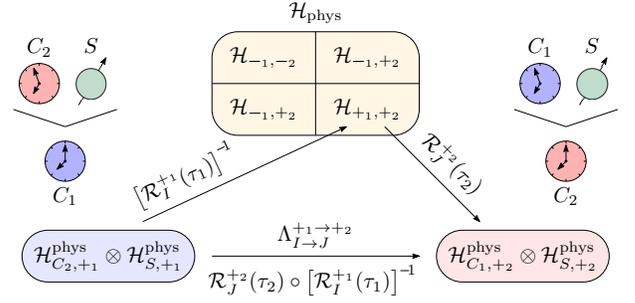}
\caption{Schematic representation of a temporal frame change, as defined through Eq.~\eqref{TFC}. The figure encompasses both the relational Schr\"odinger picture of the Page-Wotters formalism and the relational Heisenberg picture of the quantum deparametrization, as well as their mixtures, since $I,J\in\{\rm{PW},\rm{QR}\}$. Viewing the reduction maps $\calr_I^{+_i}(\tau_i)$ as quantum coordinate maps, any such temporal frame change takes the form of a quantum coordinate transformation from the description relative to clock $C_1$ to the one relative to clock $C_2$. Just as coordinate transformation  pass through the reference-frame-neutral manifold, the quantum coordinate transformations pass through the clock-neutral physical Hilbert space in line with the general discussion of the clock-neutral structure and quantum general covariance in Sec.~\ref{sec_cRDOs2}. Due to the double superselection rule, the quantum coordinate transformations have to preserve the overlaps of the frequency sectors of $C_1$ and $C_2$. Here we illustrate the example of the overlap of the positive frequency sectors of both clocks, so that the corresponding frame transformation passes through $\ch_{+_1,+_2}$ (cf.\ Eq.~\eqref{overlap}).  }
\label{fig_TFC}
\end{figure}

Similarly,  the TFC map from the $\sigma_1$-frequency sector of clock $C_1$ in the relational Heisenberg picture to the $\sigma_2$-frequency sector of clock $C_2$ in the relational Heisenberg picture reads
\ba
\Lambda_{\rm QR}^{\sigma_1\to \sigma_2} &=& \left(\bra{\tau_2,\sigma_2}\otimes U^\dag_{C_1S}(\tau_2)\right)\,\\
&&\q\q\q\times\delta(\hat C_H)\,\left(\ket{\tau_1,\sigma_1}\otimes U_{C_2S}(\tau_1)\right)\,,\nn
\ea
where we have made use of Lemma~\ref{lem_3}. Using the same lemma, the reduced states transform under this map in complete analogy to Eq.~\eqref{PWtrafo}
\ba
\Big[\theta(-\sigma_1\,\hat p_1)\otimes I_S\Big]\,\ket{\psi^{\sigma_2}_{C_1S|C_2}} = \Lambda_{\rm PW}^{\sigma_1\to \sigma_2}\,\ket{\psi^{\sigma_1}_{C_2S|C_1}} \,,\nn
\ea
with $U_{C_jS}(\tau_i):=\exp\left[-i\,\tau_i\,(s_j\f{\hat p_j^2}{2}+\hat H_S)\right]$ the evolution operator of the composite system of clock $C_j$ and system $S$ relative to clock $C_i$.

Note that, interpreting the  reduction maps as the ``quantum coordinate maps'' taking one from the clock-neutral physical Hilbert space to a specific `clock perspective', any such TFC map in Eq.~\eqref{TFC} takes the same compositional form as coordinate changes on a manifold. In particular, any such temporal frame change proceeds by mapping via the clock-neutral physical Hilbert space in analogy to how coordinate changes always proceed via the manifold, see Fig.~\ref{fig_TFC}. This observation lies at the heart of the perspective-neutral approach to quantum reference frame changes  \cite{Vanrietvelde:2018pgb,Vanrietvelde:2018dit,hoehnHowSwitchRelational2018,Hoehn:2018whn,Hoehn:2019owq}. It is also the reason why we may interpret the physical Hilbert space as a clock-neutral structure, providing a description of the dynamics prior to having chosen a temporal reference frame relative to which the other degrees of freedom evolve. In line with this, in terms of different one-parameter families of relational Dirac observables, the physical Hilbert space contains the complete information about the dynamics relative to all the different possible clock choices at once.

\subsection{Observable transformations}

Just as we transformed reduced states from the perspective of one clock to the perspective of another by passing through the gauge-invariant physical Hilbert space (see Fig.~\ref{fig_TFC}), we now  transform the description of observables relative to one clock to that relative to the other by passing through the gauge-invariant algebra of Dirac observables on the physical Hilbert space. The observable transformations will thus be dual to the state transformations. The idea is always that we describe the \emph{same} physics, encoded in the gauge-invariant states and observables of the clock-neutral physical Hilbert space, but relative to different temporal frames. Again, we have to pay attention to the two superselection rules on the clock-neutral physical Hilbert space and we will demonstrate the observable transformations separately for the relational Schr\"odinger and Heisenberg pictures.

\subsubsection{Observable transformations in the relational Schr\"odinger picture}

Suppose we are given an observable $\hat O^{\rm phys}_{C_2S|C_1}$ describing certain properties of the composite system $C_2S$ in the relational Schr\"odinger picture of clock $C_1$ in either frequency sector of the latter.\footnote{Recall that the label `$\rm{phys}$' highlights that the observable acts on the physical $C_2S$ Hilbert space, i.e.\ on $\Pi_{\sigma_{C_2SC_1}}\left(\ch_{C_2}\otimes\ch_S\right)$, where $\Pi_{\sigma_{C_2SC_1}} = \theta\left[-s_1\left(s_2\f{\hat p_2^2}{2}+\hat H_S\right)\right]$ is the projector onto the subspace corresponding to the spectrum $\sigma_{C_2SC_1} :=\spec\left(s_2\f{\hat p_2^2}{2}+\hat H_S\right)\cap\spec\left(-s_1\f{\hat p_1^2}{2}\right)$ permitted by the constraint Eq.~\eqref{CH2} (cf.\ Eq.~\eqref{spectrum}). We can thus also understand this observable as a projection $\hat O^{\rm phys}_{C_2S|C_1}\ce\Pi_{\sigma_{C_2SC_1}}\,\hat O_{C_2S|C_1}\,\Pi_{\sigma_{C_2SC_1}}$ of some observable $\hat O_{C_2S|C_1}\in\cl(\ch_{C_2}\otimes\ch_S)$; cf.\ Eq.~\eqref{PWobs}. } Owing to Theorem~\ref{thm_PWobseq}, we can write this as a reduction of a corresponding relational Dirac observable on $\ch_{\rm phys}$:
\begin{align}
\calr^{\sigma_1}_{\rm PW} \left(\tau_1\right)\,\hat F_{O_{C_2S|C_1},T_1}(\tau_1)\,(\calr^{\sigma_1}_{\rm PW}(\tau_1))^{-1} =\hat O^{\rm phys}_{C_2S|C_1}. \nn 
\end{align}
We can now also map the same relational Dirac observable into the $\sigma_2$-sector of the relational Schr\"odinger picture of clock $C_2$:
\ba
\hat O^{\rm phys}_{C_1S|C_2}(\tau_1,\tau_2)\ce\calr^{\sigma_2}_{\rm PW} \left(\tau_2\right)\,\hat F_{O_{C_2S|C_1},T_1}(\tau_1)\,(\calr^{\sigma_2}_{\rm PW}(\tau_2))^{-1}\,.\nn\\
\label{PWobsC2}
\ea
The result will be the image of the original observable $\hat O^{\rm phys}_{C_2S|C_1}$, describing properties of $C_2S$ relative to $C_1$, in the `perspective' of clock $C_2$. Hence, if $\hat O^{\rm phys}_{C_2S|C_1}$ depends non-trivially on $C_2$, an indirect self-reference effect occurs in the last equation \cite{Hoehn:2019owq}.
Notice that, while the original observable in the Schr\"odinger picture of $C_1$ is independent of the evolution parameter $\tau_1$, the description of that same observable in the Schr\"odinger picture relative to clock $C_1$ will generally depend on both evolution parameters $\tau_1,\tau_2$. The dependence on $\tau_1$ is a consequence of it being the reduction of a relational Dirac observable with evolution parameter $\tau_1$, but into the `perspective' of $C_2$. The possible $\tau_2$ dependence may arise as a consequence of said indirect self-reference. For example, suppose $\hat O^{\rm phys}_{C_2S|C_1} = \hat T_2\otimes I_S$ so that the relational Dirac observable is $\hat F_{O_{C_2S|C_1},T_1}(\tau_1) = \hat F_{T_2,T_1}(\tau_1)$. The observable on the l.h.s.\ in Eq.~\eqref{PWobsC2} then describes how the first moment operator $\hat T_2$ associated with  $C_2$ evolves relative to $C_1$ from the `perspective' of $C_2$; this certainly should yield a $\tau_1$ dependence. We will explain this in more detail shortly. 

Taking into account the two superselection rules across the $\sigma_1$- and $\sigma_2$-sectors, these observations imply that observable transformations from the relational Schr\"odinger picture of the $\sigma_1$-sector of clock $C_1$ into the relational Schr\"odinger picture of the $\sigma_2$-sector of clock $C_2$ read
\begin{widetext}
\ba
\Lambda^{\sigma_1\to \sigma_2}_{\rm PW} \, \hat O^{\rm phys}_{C_2S|C_1}\,\left(\Lambda^{\sigma_1\to\sigma_2}_{\rm PW}\right)^{-1}&=& \left(\Lambda^{\sigma_2\to \sigma_1}_{\rm PW}\right)^{-1} \, \hat O^{\rm phys}_{C_2S|C_1}\,\Lambda^{\sigma_2\to\sigma_1}_{\rm PW}\nn\\
&=&\calr^{\sigma_2}_{\rm PW}(\tau_2)\circ\left(\calr^{\sigma_1}_{\rm PW}(\tau_1)\right)^{-1}\, \hat O^{\rm phys}_{C_2S|C_1}\,\calr^{\sigma_1}_{\rm PW}(\tau_1)\circ\left(\calr^{\sigma_2}_{\rm PW}(\tau_2)\right)^{-1}\nn\\
&=&\calr^{\sigma_2}_{\rm PW}(\tau_2)\,\,\, \mathcal{E}_{\rm PW}^{\tau_1,\sigma_1}\left(\hat O^{\rm phys}_{C_2S|C_1}\right)\,\left(\calr^{\sigma_2}_{\rm PW}(\tau_2)\right)^{-1}\label{PWobstrafo}\\
&\approx& \calr^{\sigma_2}_{\rm PW}(\tau_2)\, \hat F_{O_{C_2S|C_1},T_1}(\tau_1)\,\left(\theta(-\sigma_1\hat p_1)\otimes I_{C_2S}\right)\,\left(\calr^{\sigma_2}_{\rm PW}(\tau_2)\right)^{-1}\nn\\
&=& \left(\theta(-\sigma_1\hat p_1)\otimes I_{S}\right)\,\hat O^{\rm phys}_{C_1S|C_2}(\tau_1,\tau_2)\,.\nn
\ea
In the second line we made use of Eq.~\eqref{encode}, in the third of Theorem~\ref{thm_PWobseq}, and in the fourth of Eq.~\eqref{PWobsC2} and the fact that $\theta(-\sigma_1\hat p_1)$ commutes with the reduction map of the $C_2$ clock and with $\hat F_{O_{C_2S|C_1},T_1}(\tau_1)$ (see Lemma~\ref{lem_projTcommute}). 

Observe that the structure of this transformation shows that reduced observables relative to one clock will transform always via the gauge-invariant Dirac observable algebra to reduced observables relative to another clock. 

Using Eqs.~\eqref{encode}, \eqref{PWred} and~\eqref{PWredinv}, we can write this transformation  as

\ba
 \left(\theta(-\sigma_1\hat p_1)\otimes I_{S}\right)\,\hat O^{\rm phys}_{C_1S|C_2}(\tau_1,\tau_2) = \bra{\tau_2,\sigma_2}\,\delta(\hat C_H)\,\left(\ket{\tau_1,\sigma_1}\!\bra{\tau_1,\sigma_1}\otimes \hat O^{\rm phys}_{C_2S|C_1}\right)\,\delta(\hat C_H)\,\ket{\tau_2,\sigma_2}\,.\label{PWobstrafo2}
 \ea

 This transformation reveals that expectation values are preserved in the following manner:
 \begin{align}
 &\bra{\psi^{\sigma_2}_{C_1S|C_2}(\tau_2)}\,\left(\theta(-\sigma_1\hat p_1)\otimes I_S\right)\,\hat O^{\rm phys}_{C_1S|C_2}(\tau_1,\tau_2)\,\ket{\phi_{C_1S|C_2}^{\sigma_2}(\tau_2)}   \nn \\ 
 & \hspace{2in }= \bra{\psi^{\sigma_1}_{C_2S|C_1}(\tau_1)}\,\left(\theta(-\sigma_2\hat p_2)\otimes I_S\right)\,\hat O^{\rm phys}_{C_2S|C_1}\,\left(\theta(-\sigma_2\hat p_2)\otimes I_S\right)\,\ket{\phi_{C_2S|C_1}^{\sigma_1}(\tau_1)}\,.\label{expecpres}
\end{align}
The projectors onto the $\sigma_2$-sector on the r.h.s.\ appears because the $C_2$ reduction map in Eq.~\eqref{PWobstrafo} induces such a projection (compare this with the state transformations Eq.~\eqref{PWtrafo} which are dual).  In other words, \emph{only the physical information in the overlap of the $\sigma_1$- and $\sigma_2$-sector is preserved when changing from the description relative to clock $C_1$ to one relative to clock $C_2$, or vice versa}. Once more, this is a direct consequence of the double superselection rule induced by the shape of the constraint Eq.~\eqref{CH2}.

 \subsubsection{Observable transformations in the relational Heisenberg picture}
 
 The argumentation for the relational Heisenberg picture proceeds in complete analogy. We thus just quote the results, which immediately follow from those of the previous subsection through use of Lemma~\ref{lem_3}. Of course, in this case, the reduced observables have an explicit dependence on the evolution parameter, Eq.~\eqref{Heisenbergobs}.
 
 The observable transformations from the relational Heisenberg picture of the $\sigma_1$-sector of clock $C_1$ into the relational Heisenberg picture of the $\sigma_2$-sector of clock $C_2$ are given by
\ba
\Lambda^{\sigma_1\to \sigma_2}_{\rm QR} \, \hat O^{\rm phys}_{C_2S|C_1}(\tau_1)\,\left(\Lambda^{\sigma_1\to\sigma_2}_{\rm QR}\right)^{-1}
&\approx& \calr^{\sigma_2}_{\rm QR}(\tau_2)\, \hat F_{O_{C_2S|C_1},T_1}(\tau_1)\,\left(\theta(-\sigma_1\hat p_1)\otimes I_{C_2S}\right)\,\left(\calr^{\sigma_2}_{\rm QR}\right)^{-1}\nn\\
&=& \left(\theta(-\sigma_1\hat p_1)\otimes I_{S}\right)\,U^\dag_{C_1S}(\tau_2)\,\hat O^{\rm phys}_{C_1S|C_2}(\tau_1,\tau_2)\,U_{C_1S}(\tau_2)\,\label{QRobstrafo}\\
&=:& \left(\theta(-\sigma_1\hat p_1)\otimes I_{S}\right)\,\hat O^{\rm Heis}_{C_1S|C_2}(\tau_1,\tau_2)\,,\nn
\ea
where $\hat O^{\rm phys}_{C_1S|C_2}(\tau_1,\tau_2)$ is given by Eqs.~\eqref{PWobsC2} and~\eqref{PWobstrafo2}.
 Thanks to the double superselection rule, this transformation preserves expectation values again per overlap of a $\sigma_1$- with a $\sigma_2$-sector, in obvious analogy to Eq.~\eqref{expecpres}.

\subsection{Occurrence of indirect clock self-reference}\label{sec_selfref}

Finally, let us now come back to the indirect self-reference effect of clock $C_2$ alluded to above. The following theorem, which is adapted from  \cite{Hoehn:2019owq} and whose proof applies here per pair of $\sigma_1$- and $\sigma_2$-sector, reveals the necessary and sufficient conditions for this indirect self-reference to occur:
\begin{theorem} \label{tauindependent} 
Consider an operator $\hat O_{C_2S|C_1}^{\rm phys} \in \mathcal{L}(\mathcal{H}_{C_2}^{\rm phys} \otimes \mathcal{H}_S^{\rm phys} )$ of the composite system  $C_2S$ described from the perspective of clock $C_1$. From the perspective of clock $C_2$,  this  operator is independent of $\tau_2$, so that $\hat{O}_{C_1S|C_2}^{\rm phys}(\tau_1, \tau_2) = \hat{O}_{C_1S|C_2}^{\rm phys}(\tau_1) \in \mathcal{L}(\mathcal{H}_{C_1}^{\rm phys} \otimes \mathcal{H}_S^{\rm phys})$ if and only if  
\begin{align}
\hat O_{C_2S|C_1}^{\rm phys}  = \sum_i \left( \hat O_{C_2|C_1}^{\rm phys} 
\right)_i \otimes \left( \hat f_{S|C_1}^{\rm phys} \right)_i, \nn
\end{align}
where $(\hat f_{S|C_1}^{\rm phys} )_i$ is an operator on $\ch^{\rm phys}_S$ and $( \hat O_{C_2|C_1}^{\rm phys} )_i$ is a constant of motion, $\left[( \hat O_{C_2|C_1}^{\rm phys} )_i,s_2\f{\hat p_2^2}{2}\right]=0$. Furthermore, in this case the transformed observable reads
\begin{align}
 \left[\theta(-\sigma_1\hat p_1)\otimes I_{S}\right]\hat O_{C_1S|C_2}^{\rm phys}(\tau_1) = \Pi_{\sigma_{C_1SC_2}} \Bigg[ \! \sum_i  \mathcal{G}_{C_1S} \!\left( \ket{\tau_1,\sigma_1}\! \bra{\tau_1,\sigma_1} \otimes 
\left(\hat f_{S|C_1}^{\rm phys} \right)_i  \right) \bra{t_2,\sigma_2} \left( \hat O_{C_2|C_1}^{\rm phys} \right)_i \!\delta(\hat C_H)\!\ket{t_2,\sigma_2}\! \Bigg] \Pi_{\sigma_{C_1SC_2}}, \nn
\end{align}
where $\Pi_{\sigma_{C_1SC_2}}= \theta\left[-s_2\left(s_1\f{\hat p_1^2}{2}+\hat H_S\right)\right]$ is the projector onto the physical subspace of $\ch_{C_1}\otimes\ch_S$,  $\ket{t_2,\sigma_2}$ is an arbitrary $\sigma_2$-sector clock state of $C_2$, and $\mathcal{G}_{C_1S}$ is the $G$-twirl over the group generated by the evolution generator $s_1\f{\hat p_1^2}{2} + \hat{H}_S$ of the composite system $C_1S$.
\end{theorem}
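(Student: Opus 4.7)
The strategy is to start from Eq.~\eqref{PWobstrafo2} and transfer all $\tau_2$-dependence onto $\hat O^{\rm phys}_{C_2S|C_1}$. Using the covariance condition Eq.~\eqref{niceevol} for clock $C_2$ one has $\ket{\tau_2,\sigma_2} = U_{C_2}(\tau_2)\ket{0,\sigma_2}$ with $U_{C_2}(\tau_2) := e^{-i\tau_2 s_2 \hat p_2^2/2}$. The unitary $U_{C_2}(\tau_2)$ commutes with $\delta(\hat C_H)$, because $s_2\hat p_2^2/2$ is a summand of $\hat C_H$ whose other summands act on distinct tensor factors; it also commutes with $\ket{\tau_1,\sigma_1}\!\bra{\tau_1,\sigma_1}\otimes I_{C_2S}$. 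Pulling the two copies of $U_{C_2}(\tau_2)$ through these operators collects the entire $\tau_2$-dependence of the right-hand side of Eq.~\eqref{PWobstrafo2} into the single conjugation $U_{C_2}^\dagger(\tau_2)\hat O^{\rm phys}_{C_2S|C_1}U_{C_2}(\tau_2)$.

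\textbf{The iff characterization.} For $\hat O^{\rm phys}_{C_1S|C_2}(\tau_1,\tau_2)$ to be $\tau_2$-independent for every $\tau_1$, the above conjugation must be trivial for all $\tau_2$, which is equivalent to $[\hat O^{\rm phys}_{C_2S|C_1},\, s_2\hat p_2^2/2 \otimes I_S]=0$. Expanding $\hat O^{\rm phys}_{C_2S|C_1} = \sum_i (\hat A)_i \otimes (\hat f^{\rm phys}_{S|C_1})_i$ in a Schmidt-type decomposition with linearly independent $(\hat f^{\rm phys}_{S|C_1})_i$ on $\ch^{\rm phys}_S$, the commutator condition forces $[(\hat A)_i, s_2\hat p_2^2/2]=0$ for each $i$; setting $(\hat O^{\rm phys}_{C_2|C_1})_i := (\hat A)_i$ yields the claimed form. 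The converse is immediate, since any such sum-of-products is manifestly invariant under $U_{C_2}(\tau_2)$-conjugation.

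\textbf{Explicit formula.} For operators of the stated form, substitute the decomposition into Eq.~\eqref{PWobstrafo2}, evaluated at an arbitrary $t_2\in\mathbb{R}$ (permissible by the $\tau_2$-independence just established). Writing $\hat C_H = s_2\hat p_2^2/2 + \hat H_{C_1S}$ with $\hat H_{C_1S}:=s_1\hat p_1^2/2+\hat H_S$ and using the integral representation $\delta(\hat C_H) = \frac{1}{2\pi}\int du\,e^{-iu s_2\hat p_2^2/2}\otimes e^{-iu\hat H_{C_1S}}$, commute $(\hat O^{\rm phys}_{C_2|C_1})_i$ through one of the two $\delta(\hat C_H)$ factors, which is permitted by $[(\hat O^{\rm phys}_{C_2|C_1})_i, s_2\hat p_2^2/2]=0$. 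The scalar $C_2$-matrix element $\bra{t_2,\sigma_2}(\hat O^{\rm phys}_{C_2|C_1})_i\delta(\hat C_H)\ket{t_2,\sigma_2}$ then factors out as a $c$-number on $\ch_{C_1}\otimes\ch_S$, while the remaining $C_2$-integration of the other $\delta(\hat C_H)$ yields, via the analog of Eq.~\eqref{physicalProjector} applied to $C_1S$, the projector $\Pi_{\sigma_{C_1SC_2}}$ onto the physical $C_1S$-subspace. The residual $C_1S$-exponentials $e^{-iu\hat H_{C_1S}}$ assemble into the $G$-twirl $\mathcal{G}_{C_1S}\!\left(\ket{\tau_1,\sigma_1}\!\bra{\tau_1,\sigma_1}\otimes(\hat f^{\rm phys}_{S|C_1})_i\right)$, matching the stated formula.

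\textbf{Main obstacle.} The delicate step is the clean disentanglement of the two $\delta(\hat C_H)$ factors: one must contribute the $c$-number $C_2$-matrix element containing $(\hat O^{\rm phys}_{C_2|C_1})_i$, while the other must generate the $C_1S$-physical projector flanking the $G$-twirl. This requires careful bookkeeping of the spectrum restrictions encoded by $\Pi_{\sigma_{C_1SC_2}}$, together with treatment of any degeneracies or zero eigenvalues of $\hat H_{C_1S}$, in close analogy to the decomposition of the physical projector in Eq.~\eqref{physicalProjector}.
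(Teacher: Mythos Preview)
Your proposal is correct and follows essentially the same route as the paper (which defers the proof to the corresponding theorem in \cite{Hoehn:2019owq}, applied here per $(\sigma_1,\sigma_2)$-sector): pull the $\tau_2$-dependence through $\delta(\hat C_H)$ and the $C_1$-projector via covariance, reduce the question to the vanishing of $[\hat O^{\rm phys}_{C_2S|C_1},\,s_2\hat p_2^2/2\otimes I_S]$, and for the explicit formula separate the two group-averaging integrals by the change of variables $(u,v)\to(u,w=u+v)$ so that the $u$-integral yields $\mathcal{G}_{C_1S}$ while the $w$-integral reassembles into $\bra{t_2,\sigma_2}(\hat O^{\rm phys}_{C_2|C_1})_i\,\delta(\hat C_H)\ket{t_2,\sigma_2}$.

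One point to tighten: in the ``only if'' direction you assert that $\tau_2$-independence of the transformed observable forces $U_{C_2}^\dagger(\tau_2)\,\hat O^{\rm phys}_{C_2S|C_1}\,U_{C_2}(\tau_2)$ itself to be $\tau_2$-independent. This is not automatic from Eq.~\eqref{PWobstrafo2} alone, since sandwiching by $\delta(\hat C_H)$ and the $C_1$-clock projector could in principle have a kernel. The clean argument is to invoke the sector-wise invertibility of the encoding $\mathcal{E}^{\tau_1,\sigma_1}_{\rm PW}$ (Theorem~\ref{thm_PWobseq}) together with the invertibility of $\calr^{\sigma_2}_{\rm PW}(\tau_2)$ on $\ch_{\sigma_1,\sigma_2}$: these imply that the map $\hat O^{\rm phys}_{C_2S|C_1}\mapsto \hat O^{\rm phys}_{C_1S|C_2}(\tau_1,\tau_2)$ is injective on $\cl(\ch^{\rm phys}_{C_2,\sigma_1}\otimes\ch^{\rm phys}_{S,\sigma_1})$, whence $\tau_2$-independence of the image is equivalent to $\tau_2$-independence of the conjugated preimage, i.e.\ to the commutator condition on the physical $C_2S$ space. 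With this supplied, your argument is complete.
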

\end{widetext}
That is to say, the indirect self-reference effect and thus $\tau_2$-dependence of Eq.~\eqref{PWobstrafo} is absent if and only if the relational Dirac observable encoding how $C_2S$ properties evolve relative to $C_1$ does not contain any degrees of freedom of clock $C_2$ that evolve.

When $\hat O_{C_2S|C_1}^{\rm phys} =I_{C_2}\otimes \hat f^{\rm phys}_{S|C_1}$, i.e.\ only the evolution of system degrees of freedom relative to $C_1$ is described, Theorem~\ref{tauindependent} entails that the transformation to the description relative to $C_2$ simplifies as follows:
\ba
&& \left(\theta(-\sigma_1\hat p_1)\otimes I_{S}\right)\hat O_{C_1S|C_2}^{\rm phys}(\tau_1) = \nn\\
 &&\q\q\q\q\Pi_{\sigma_{C_1SC_2}} \mathcal{G}_{C_1S} \left( \ket{\tau_1,\sigma_1}\! \bra{\tau_1,\sigma_1} \otimes 
\hat f_{S|C_1}^{\rm phys} \right)\,\Pi_{\sigma_{C_1SC_2}}\,.\nn
\ea
 In particular, the transformed system observable is perspective independent, i.e.\ its description relative to $C_1$ and $C_2$ coincide if and only if it is a constant of motion (see  \cite{Hoehn:2019owq} for the proof of this statement, which again applies here per pair of $\sigma_1$- and $\sigma_2$-sector):
  \begin{corol}\label{corol_4}
An operator of $C_2S$ relative to $C_1$
\begin{align}
\hat O_{C_2S|C_1}^{\rm phys} =  I_{C_2} \otimes \hat{f}_{S|C_1}^{\rm phys}. \nn
\end{align}
transforms under a temporal frame change map to the perspective of $C_2$ as follows
\begin{align}
\hat O_{C_1S|C_2}^{\rm phys} =  I_{C_1} \otimes \hat{f}_{S|C_2}^{\rm phys}, \nn
\end{align}
where $\hat{f}^{\rm phys}_{S|C_1} =  \hat{f}^{\rm phys}_{S|C_2}$ if and only if $\hat{f}^{\rm phys}_{S|C_1}$ is a constant of motion, $ [ \hat{f}_{S|C_1}^{\rm phys}, \hat{H}_{S} ] = 0$. 
\end{corol}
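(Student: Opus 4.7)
The plan is to specialize Theorem~\ref{tauindependent} (just established) to the factorized input $\hat O^{\rm phys}_{C_2S|C_1} = I_{C_2}\otimes \hat f^{\rm phys}_{S|C_1}$. Since $I_{C_2}$ trivially commutes with $s_2\,\hat p_2^2/2$, the hypothesis of that theorem holds with a single summand $(\hat O^{\rm phys}_{C_2|C_1})_1=I_{C_2}$, so the transformed operator is automatically $\tau_2$-independent. Plugging into the explicit formula of Theorem~\ref{tauindependent} yields
\begin{align*}
[\theta(-\sigma_1\hat p_1)\otimes I_S]\,\hat O^{\rm phys}_{C_1S|C_2}(\tau_1) = \Pi_{\sigma_{C_1SC_2}}\,\mathcal{G}_{C_1S}\!\left(\ket{\tau_1,\sigma_1}\!\bra{\tau_1,\sigma_1}\otimes \hat f^{\rm phys}_{S|C_1}\right)\,\bra{t_2,\sigma_2}\delta(\hat C_H)\ket{t_2,\sigma_2}\,\Pi_{\sigma_{C_1SC_2}}\,.
\end{align*}

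Next I would argue that the residual $C_2$-matrix element $\bra{t_2,\sigma_2}\delta(\hat C_H)\ket{t_2,\sigma_2}$ acts as the identity on the physical subspace $\Pi_{\sigma_{C_1SC_2}}(\ch_{C_1}\otimes\ch_S)$. This is the $C_2$-analog of the consistency relation $\calr^{\sigma_2}_{\rm PW}(\tau_2)\circ(\calr^{\sigma_2}_{\rm PW}(\tau_2))^{-1}\approx I$ read off from Lemma~\ref{lem_3} with $C_2$ in place of $C_1$. With this, the expression above collapses to a $G$-twirl over the auxiliary two-subsystem Hamiltonian $\hat H_{C_1S}:=s_1\,\hat p_1^2/2+\hat H_S$, namely to the encoding $\mathcal{E}^{\tau_1,\sigma_1}_{\rm PW}(\hat f^{\rm phys}_{S|C_1})$ in the universe where $C_1$ is the only clock and $S$ the system. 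By Theorem~\ref{thm_PWobseq} applied in that auxiliary setting, this encoded operator is the corresponding relational Dirac observable; reducing it in turn via the $C_1$ slot shows that on the $C_2$-physical subspace it acts trivially on $C_1$ and defines an operator $\hat f^{\rm phys}_{S|C_2}$ on the $S$ factor, giving the asserted factorization $\hat O^{\rm phys}_{C_1S|C_2} = I_{C_1}\otimes \hat f^{\rm phys}_{S|C_2}$.

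Passing to the Heisenberg picture (either via Lemma~\ref{lem_3} or directly using Eq.~\eqref{QRobstrafo}) identifies the $S$-factor explicitly as $\hat f^{\rm phys}_{S|C_2}=e^{i\tau_1\hat H_S}\,\hat f^{\rm phys}_{S|C_1}\,e^{-i\tau_1\hat H_S}$. The iff statement then follows at once: the equality $\hat f^{\rm phys}_{S|C_1}=\hat f^{\rm phys}_{S|C_2}$ holds for all $\tau_1\in\mathbb{R}$ iff conjugation by $e^{-i\tau_1\hat H_S}$ leaves $\hat f^{\rm phys}_{S|C_1}$ invariant, which is the case iff $[\hat f^{\rm phys}_{S|C_1},\hat H_S]=0$. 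One direction is trivial, the other is obtained by differentiating the equality at $\tau_1=0$.

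The main obstacle I anticipate is justifying that $\bra{t_2,\sigma_2}\delta(\hat C_H)\ket{t_2,\sigma_2}$ acts as the identity on $\Pi_{\sigma_{C_1SC_2}}(\ch_{C_1}\otimes\ch_S)$: the non-orthogonality of clock states, Eq.~\eqref{nonorth}, a priori produces a nontrivial principal-value contribution that could spoil the clean factorization if one does not carefully restrict to the $\sigma_2$-sector, where $\hat H_{C_1S}$ has a definite sign determined by $s_2$ via the constraint $s_2\,\hat p_2^2/2+\hat H_{C_1S}=0$.
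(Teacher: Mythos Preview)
The paper gives no in-text proof of this corollary; it simply refers to the companion paper~\cite{Hoehn:2019owq}, whose argument carries over per $(\sigma_1,\sigma_2)$-sector. Your proposal attempts an actual derivation, and the first two steps---specializing Theorem~\ref{tauindependent} to the single term $I_{C_2}$ and reducing the $C_2$-matrix element to the identity on $\Pi_{\sigma_{C_1SC_2}}(\ch_{C_1}\otimes\ch_S)$---are fine and in fact reproduce the displayed equation immediately preceding the corollary.

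The gap lies in your steps 3--4. You claim that the resulting operator $\Pi_{\sigma_{C_1SC_2}}\,\mathcal{G}_{C_1S}\bigl(\ket{\tau_1,\sigma_1}\!\bra{\tau_1,\sigma_1}\otimes\hat f^{\rm phys}_{S|C_1}\bigr)\,\Pi_{\sigma_{C_1SC_2}}$ always factorizes as $I_{C_1}\otimes\hat f^{\rm phys}_{S|C_2}$ for some system operator, and you identify that operator as $e^{i\tau_1\hat H_S}\hat f^{\rm phys}_{S|C_1}e^{-i\tau_1\hat H_S}$. This is not correct: the $\mathcal{G}_{C_1S}$-twirl is precisely the relational Dirac observable $\hat F^{\sigma_1}_{f_{S|C_1},T_1}(\tau_1)$ in the auxiliary $(C_1,S)$ theory with generator $s_1\hat p_1^2/2+\hat H_S$, and as such it genuinely couples $C_1$ and $S$ through $\int\frac{dt}{2\pi}\,\ket{\tau_1+t,\sigma_1}\!\bra{\tau_1+t,\sigma_1}\otimes U_S(t)\hat f_{S|C_1} U_S^\dag(t)$. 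It collapses to $\theta(-\sigma_1\hat p_1)\otimes\hat f_{S|C_1}$ (via Lemma~\ref{lem_degresid}) \emph{only} when $U_S(t)\hat f_{S|C_1}U_S^\dag(t)=\hat f_{S|C_1}$, i.e.\ exactly when $[\hat f_{S|C_1},\hat H_S]=0$. The corollary should therefore be read as a single biconditional---the transformed operator equals $I_{C_1}\otimes\hat f_{S|C_1}$ (the \emph{same} system operator) if and only if $\hat f_{S|C_1}$ is a constant of motion---not as ``always factorized, with $\hat f_{S|C_1}=\hat f_{S|C_2}$ an additional iff''. Your Heisenberg-picture route cannot yield the pure-$S$ conjugate $e^{i\tau_1\hat H_S}\hat f_{S|C_1}e^{-i\tau_1\hat H_S}$ because the relevant evolution in the $C_2$-perspective is generated by $\hat H_{C_1S}=s_1\hat p_1^2/2+\hat H_S$, not by $\hat H_S$ alone. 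The obstacle you anticipate (the principal-value piece of Eq.~\eqref{nonorth}) is actually benign---it conspires with the $\delta$-term to produce exactly $\Pi_{\sigma_{C_1SC_2}}$ once one uses that $s_2\hat H_{C_1S}\leq 0$ on the physical subspace---but it is not the real issue.
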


 Theorem~\ref{tauindependent} translates as follows into the relational Heisenberg picture (see \cite{Hoehn:2019owq} for the proof which applies here per pair of $\sigma_1$- and $\sigma_2$-sector):
 \begin{corol}
Let $\hat O_{C_2S|C_1}^{\rm phys}(\tau_1) \in \mathcal{L}(\mathcal{H}_{C_2}^{\rm phys} \otimes \mathcal{H}_S^{\rm phys})$ be an operator describing the dynamics of properties of the composite system $C_2S$ relative to $C_1$ in the Heisenberg picture. Under a temporal frame change Eq.~\eqref{QRobstrafo} to the perspective of $C_2$, this operator transforms to an operator $\hat O^{\rm Heis}_{C_1S|C_2}(\tau_1,\tau_2)$ that satisfies the Heisenberg equation of motion in clock $C_2$ time $\tau_2$ without an explicitly $\tau_2$ dependent  term, 
\begin{align}
\frac{d}{d \tau_2} \hat O_{C_1S|C_2}^{\rm Heis}(\tau_1, \tau_2) = i \left[s_2\f{\hat p_2^2}{2}+ \hat{H}_S ,\hat O_{C_1S|C_2}^{\rm Heis}(\tau_1, \tau_2)\right],\nn 
\end{align}
if and only if  
\begin{align}
\hat O_{C_2S|C_1}^{\rm phys}(\tau_1)  = \sum_i \left( \hat O_{C_2|C_1}^{\rm phys} 
\right)_i \otimes \left( \hat f_{S|C_1}^{\rm phys} (\tau_1)  \right)_i, \nn
\end{align}
and $\left(\hat O_{C_2|C_1}^{\rm phys}\right)_i$ is a constant of motion, $[ s_1\f{\hat p_1^2}{2} , \hat O_{C_2|C_1}^{\rm phys}] =0$. 
\end{corol}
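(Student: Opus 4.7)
The plan is to reduce the corollary to Theorem~\ref{tauindependent} via Lemma~\ref{lem_3}, which connects the relational Heisenberg and Schr\"odinger pictures through the unitary $U_{C_1S}(\tau_2)$. Substituting Lemma~\ref{lem_3} into Eq.~\eqref{QRobstrafo} one obtains (weakly)
\begin{align}
\hat O^{\rm Heis}_{C_1S|C_2}(\tau_1,\tau_2) \,\approx\, U^\dag_{C_1S}(\tau_2)\,\hat O^{\rm phys}_{C_1S|C_2}(\tau_1,\tau_2)\,U_{C_1S}(\tau_2),\nonumber
\end{align}
where $\hat O^{\rm phys}_{C_1S|C_2}(\tau_1,\tau_2)$ is the Schr\"odinger-picture transformed observable in Eq.~\eqref{PWobstrafo2}. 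Differentiating in $\tau_2$ yields a Heisenberg commutator with the reduced generator $s_1\hat p_1^2/2+\hat H_S$ (which on $\ch_{\rm phys}$ coincides, by the constraint, with the operator appearing in the corollary), plus an explicit $\partial_{\tau_2}\hat O^{\rm phys}_{C_1S|C_2}$ term conjugated by $U_{C_1S}(\tau_2)$. Hence the Heisenberg equation holds without an explicit $\tau_2$-dependent term if and only if $\partial_{\tau_2}\hat O^{\rm phys}_{C_1S|C_2}(\tau_1,\tau_2)=0$.

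Next, I would apply Theorem~\ref{tauindependent} pointwise in $\tau_1$: this vanishing is equivalent to the existence of a factorization $\hat O^{\rm phys}_{C_2S|C_1}(\tau_1)=\sum_i(\hat A_{C_2|C_1}(\tau_1))_i\otimes(\hat g_{S|C_1}(\tau_1))_i$ with each $(\hat A_{C_2|C_1}(\tau_1))_i$ a clock-$C_2$ constant of motion, $[(\hat A_{C_2|C_1}(\tau_1))_i,s_2\hat p_2^2/2]=0$. This already establishes the ``if'' direction and, via the transformation rule in Theorem~\ref{tauindependent} composed with conjugation by $U_{C_1S}(\tau_2)$, yields the explicit form of the transformed Heisenberg operator.

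The main obstacle, and the more delicate step, is the ``only if'' direction: promoting the above $\tau_1$-dependent factorization to one with $\tau_1$-independent $C_2$-factors, as stated in the corollary. The idea is to exploit the Heisenberg structure in $\tau_1$: since $\hat O^{\rm phys}_{C_2S|C_1}(\tau_1)=U^\dag_{C_2S}(\tau_1)\hat O^{\rm phys}_{C_2S|C_1}(0)U_{C_2S}(\tau_1)$ and clock $C_2$ does not interact with the system $S$, the evolution factorizes as $U_{C_2S}(\tau_1)=U_{C_2}(\tau_1)\otimes U_S(\tau_1)$. Applying Theorem~\ref{tauindependent} at $\tau_1=0$ produces a decomposition $\hat O^{\rm phys}_{C_2S|C_1}(0)=\sum_i(\hat O^{\rm phys}_{C_2|C_1})_i\otimes(\hat f^{\rm phys}_{S|C_1}(0))_i$ in which each $(\hat O^{\rm phys}_{C_2|C_1})_i$ commutes with $s_2\hat p_2^2/2$ and therefore with $U_{C_2}(\tau_1)$. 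Propagating in $\tau_1$ then leaves the $C_2$-factor invariant while only dressing the $S$-factor via $(\hat f^{\rm phys}_{S|C_1}(\tau_1))_i:=U^\dag_S(\tau_1)(\hat f^{\rm phys}_{S|C_1}(0))_i U_S(\tau_1)$, exactly reproducing the form asserted in the corollary.
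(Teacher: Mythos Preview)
Your proposal is correct and follows the same route the paper indicates (by reference to~\cite{Hoehn:2019owq}, applied per pair of $\sigma$-sectors): use Lemma~\ref{lem_3} and Eq.~\eqref{QRobstrafo} to relate the Heisenberg and Schr\"odinger reductions by conjugation with $U_{C_1S}(\tau_2)$, so that the absence of an explicit $\tau_2$-term is equivalent to $\tau_2$-independence of the Schr\"odinger-picture observable, whereupon Theorem~\ref{tauindependent} applies (and your use of the factorized Heisenberg evolution $U_{C_2S}(\tau_1)=U_{C_2}(\tau_1)\otimes U_S(\tau_1)$ to promote the $\tau_1=0$ decomposition to all $\tau_1$ is exactly the right step). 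One small correction: your appeal to the constraint to reconcile the generator you obtain, $s_1\hat p_1^2/2+\hat H_S$, with the one printed in the corollary does not work on the reduced space $\ch_{C_1}^{\rm phys}\otimes\ch_S^{\rm phys}$ where $\hat p_2$ no longer acts---the indices in the corollary's displayed generator and commutation condition are transposed typos, and your derivation already yields the correct expressions $s_1\hat p_1^2/2+\hat H_S$ and $[s_2\hat p_2^2/2,(\hat O^{\rm phys}_{C_2|C_1})_i]=0$.
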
 
 
The interpretation of the transformations is of course completely analogous to the relational Schr\"odinger picture.\\~

\subsection{Application: comparing clock readings}

One application of the temporal frame change method developed above is comparing readings of different clocks. This is also a prerequisite for developing a notion of clock synchronization. 


For example, we may wish to compare the evolution of some system property $\hat f_S$ relative to clock $C_1$ with $\hat f_S$ relative to clock $C_2$. These two relational evolutions will be encoded in two one-parameter families of Dirac observable of the form $\hat F_{I_{C_2}\otimes f_S,T_1}(\tau_1)$ and $\hat F_{I_{C_1}\otimes f_S,T_2}(\tau_2)$.  In order to relate these two dynamics, we need a consistent method for relating the different clock readings $\tau_1,\tau_2$. While classically, there is an unambiguous way to answer the question ``what is the value of the reading $\tau_2$ of clock $C_2$, when clock $C_1$ reads $\tau_1$?", namely by setting $\tau_2(\tau_1):=F_{T_2,T_1}(\tau_1)$, this is not so in the quantum theory because both clocks are now described in terms of quantum operators and their relation depends on the quantum state. In fact, we shall argue shortly that comparing clock readings is generally dependent on the choice of temporal frame (here either $C_1$ or $C_2$) in the quantum theory. 

\subsubsection{Three ways of comparing clock readings}

To address this conundrum in the quantum theory, let us recall the conditional probabilities in Eq.~\eqref{fwhenT} and ask for the probability that $C_2$ reads $\tau_2$ when $C_1$ reads $\tau_1$ (ignoring frequency sectors for simplicity for the moment):
\begin{align}
&P(T_2=\tau_2|T_1=\tau_1) \nn \\
& \quad ={\bra{\psi_{\rm phys} }e_{C_1}(\tau_1) \otimes e_{C_2}(\tau_2) \otimes I_S \ket{\psi_{\rm phys} }_{\rm kin}} \nn\\
& \quad = P(T_1=\tau_1|T_2=\tau_2) . \label{T2whenT1}
\end{align}
Here we have assumed that the physical state is normalized such that by Corollary~\ref{corol_2} also the reduced states in the Schr\"odinger picture of either clock are normalized. \\~

\noindent {\bf Comparing clock readings.} \emph{Given the conditional probabilities Eq.~\eqref{T2whenT1}, we may consider the following  three generally distinct options for comparing clock readings.  }
\begin{itemize}
\item[(A)] \emph{The clock reading of $C_2$ when $C_1$ reads $\tau_1$ is defined to be the value of $\tau_2$ that maximizes the conditional probability $P(T_2=\tau_2|T_1=\tau_1)$. This assumes the distribution to have a unique maximal peak.}

\item[(B)] \emph{The clock reading of $C_2$ when $C_1$ reads $\tau_1$ is defined to be the expectation value}
\ba
\tau_2(\tau_1)\ce \int_\mathbb{R}\,d\tau'\,\tau'\,P(T_2=\tau'|T_1=\tau_1)\,.\label{clockB}
\ea
\item[(C)] \emph{The clock reading of $C_2$ when $C_1$ reads $\tau_1$ is defined to be }
$
\left(\tau_2^{(n)}(\tau_1)\right)^{1/n}
$
\emph{for $n>1$, where}
\ba
\tau^{(n)}_2(\tau_1)\ce \int_\mathbb{R}\,d\tau'\,(\tau')^n\,P(T_2=\tau'|T_1=\tau_1)\,\label{clockBn}
\ea
\emph{is the $n^{\rm th}$-moment of the conditional probability distribution in Eq.~\eqref{T2whenT1}.}

\end{itemize}
Relating different clock readings in terms of expectation values, as in (B), is arguably the most natural choice and has originally been discussed in \cite{hoehnHowSwitchRelational2018,Hoehn:2018whn,Bojowald:2010xp,Bojowald:2010qw,Hohn:2011us,Smith:2019imm}; we expand on this here.

Clearly, the two definitions (A) and (B) only agree when the conditional probability distribution is peaked on the expectation value. Furthermore, all three definitions (A)--(C) agree in the special case that ${P(T_2=\tau'|T_1=\tau_1)=\delta(\tau'-\tau_1)}$, i.e.\ when there are no fluctuations in the conditional probability distribution.

\subsubsection{Comparing clock readings for quadratic clock Hamiltonians}

Let us now explore these definitions in our present class of models defined by Eq.~\eqref{CH2}, taking into account the different frequency sectors again. Minding the double superselection rule, we replace Eq.~\eqref{T2whenT1} by 
\begin{align}
    &P_{\sigma_1,\sigma_2}(T_2=\tau_2|T_1=\tau_1) \nn \\
    &\quad ={\bra{\psi_{\rm phys} }e^{\sigma_1}_{C_1}(\tau_1) \otimes e^{\sigma_2}_{C_2}(\tau_2) \otimes I_S \ket{\psi_{\rm phys} }_{\rm kin}} \nn \\
    &\quad ={\bra{\psi_{\sigma_1,\sigma_2} }e_{C_1}(\tau_1) \otimes e_{C_2}(\tau_2) \otimes I_S \ket{\psi_{\sigma_1,\sigma_2} }_{\rm kin}} \,,\label{t2t1sigma}
\end{align}
where $\ket{\psi_{\sigma_1,\sigma_2}}\in\ch_{\sigma_1,\sigma_2}$ lies in the overlap of the $\sigma_1$- and $\sigma_2$-sectors (see Eq.~\eqref{overlap}) and $e^{\sigma_i}_{C_i}(\tau_i)\ce\f{1}{2\pi}\,\ket{\tau_i,\sigma_i}\!\bra{\tau_i,\sigma_i}$, $i=1,2$. We can then write the $n^{\rm th}$-moment of the conditional probability distribution in Eqs.~\eqref{clockB} and~\eqref{clockBn} for $n\in\mathbb{N}$, thus considering both definitions (B) and (C), as follows:
\ba
\tau^{(n)}_2(\tau_1)&=& \int_\mathbb{R}\,d\tau'\,(\tau')^n\,P_{\sigma_1,\sigma_2}(T_2=\tau'|T_1=\tau_1)\nn\\
&=& \braket{\psi_{C_2S|C_1}^{\sigma_1}(\tau_1)\,|\,\hat T^{(n)}_{2,\sigma_2}\otimes I_S\,|\psi_{C_2S|C_1}^{\sigma_1}(\tau_1)}\,\,\q\q\label{t2whent1sigma}\\
&=& \braket{\psi_{\sigma_1,\sigma_2}\,|\,\hat F_{T^{(n)}_2\otimes I_S,T_1}(\tau_1)\,)|\,\psi_{\sigma_1,\sigma_2}}_{\rm phys}\,,\nn
\ea
where by Eq.~\eqref{degnthmom}
\ba
\hat T^{(n)}_{2,\sigma_2} &=& \f{1}{2\pi}\,\int_\mathbb{R}\,dt\,t^n\,\ket{t,\sigma_2}\!\bra{t,\sigma_2}\nn\\
&=&\theta(-\sigma_2\hat p_2)\,\hat T^{(n)}_2\,\theta(-\sigma_2\hat p_2)\,\nn
\ea
is the $\sigma_2$-sector $n^{\rm th}$-moment of the covariant clock POVM corresponding to $C_2$.
In the second line of Eq.~\eqref{t2whent1sigma} we have made use of Eqs.~\eqref{PWred} and~\eqref{t2t1sigma}, while in the third line we invoked Theorem~\ref{thm_PWexpec}. Note that by Eq.~\eqref{expecpres}, the expression in Eq.~\eqref{t2whent1sigma} defines an expectation value which is preserved during a temporal frame change between $C_1$ and $C_2$.

Thanks to Lemmas~\ref{lem_almosteigen} and~\ref{thm_sameT2} we can write the $n^{\rm th}$-moment in~Eq.~\eqref{t2whent1sigma} also in the form
\ba
\tau_2^{(n)}(\tau_1) &=&\braket{\psi_{C_2S|C_1}^{\sigma_1}(\tau_1)\,|\,\hat T^n_{2,\sigma_2}\otimes I_S\,|\psi_{C_2S|C_1}^{\sigma_1}(\tau_1)}\,\,\nn\\
&=& \braket{\psi_{\sigma_1,\sigma_2}\,|\,\hat F_{T^n_2\otimes I_S,T_1}(\tau_1)\,)|\,\psi_{\sigma_1,\sigma_2}}_{\rm phys}\,,\nn
\ea
as long as $\ket{\psi_{C_2S|C_1}^{\sigma_1}(\tau_1)}\in\left(\mathcal{D}(\hat T_{2}^{n})\cap\ch_{C_2}^{\rm phys}\right)\otimes\ch_S^{\rm phys}$. 
Since $
\left(\tau_2^{(n)}(\tau_1)\right)^{1/n}\neq \tau_2(\tau_1)$ for $n>1$ for general states, definitions (B) and (C) will generically not be equivalent. In the sequel, we shall mostly consider definition (B) in extension of  \cite{hoehnHowSwitchRelational2018,Hoehn:2018whn,Bojowald:2010xp,Bojowald:2010qw,Hohn:2011us,Smith:2019imm}. This seems to be the physically most appealing one, especially if an ensemble interpretation could be developed for the models under consideration. Definition (A) is only unambiguous when the conditional probability distribution has a single maximal peak and definition (C) is operationally unnatural and convoluted. That is, we set for the value of the reading of clock $C_2$ when $C_1$ reads $\tau_1$:
\ba
\tau_2(\tau_1):=\tau_2^{(1)}(\tau_1)\,.\label{synchro}
\ea
The following discussion, however, qualitatively also applies to definition (C).

\subsubsection{Comparing clock readings is temporal frame dependent}

Notice that definitions (A)--(C) treat $C_2$ as the fluctuating subsystem. We can thus interpret them as providing a definition of the clock reading of $C_2$ relative to the temporal reference frame $C_1$. Conversely, we can of course switch the roles of $C_1$ and $C_2$ above and ask for the clock reading of $C_1$ relative to $C_2$. Resorting to definition (B), this would yield
\ba
\tau_1(\tau_2)= \int_\mathbb{R}\,d\tau'\,\tau'\,P_{\sigma_1,\sigma_2}(T_2=\tau_2|T_1=\tau')\,.\label{t1whent2sigma}
\ea
Dropping the labels of the arguments in Eqs.~\eqref{t2whent1sigma} and~\eqref{t1whent2sigma}, both of which run over all of $\mathbb{R}$, it is important to note that $\tau_1(\tau)$ and $\tau_2(\tau)$ will generally \emph{not} be the same functions of $\tau$. This is because generally $P_{\sigma_1,\sigma_2}(T_2=\tau'|T_1=\tau)\neq P_{\sigma_1,\sigma_2}(T_2=\tau|T_1=\tau')$ in Eq.~\eqref{t2t1sigma}. Said another way, the evolution of $C_2$ from the perspective of $C_1$ according to definition (B) may differ from the evolution of $C_1$ relative to $C_2$ (for the same physical state).

One might wonder whether the function $\tau_1(\tau_2)$ in Eq.~\eqref{t1whent2sigma} is the inversion of $\tau_2(\tau_1)$ in Eq.~\eqref{t2whent1sigma}, i.e.\ obtained by solving $\tau_2(\tau_1)$ for $\tau_1$. Classically, this is certainly the case and it would entail that for a fixed clock reading $\tau_1^*$ of $C_1$ one finds $\tau_1(\tau_2(\tau_1^*)) = \tau_1^*$.
 Physically this would mean that both temporal reference frames $C_1$ and $C_2$ agree that when $C_1$ reads $\tau_1^*$, $C_2$ reads the value $\tau_2(\tau_1^*)$. This does occur in a special case when definitions (A)--(C) all coincide, namely when $P_{\sigma_1,\sigma_2}(T_2=\tau_2(\tau_1^*)|T_1=\tau') = \delta(\tau'-\tau_1^*)$ in Eq.~\eqref{t1whent2sigma} in which case expectation value, most probable value and the value defined through the $n^{\rm th}$-moment all agree. While this does happen in simple models with a high degree of symmetry between $C_1$ and $C_2$ \cite{Hoehn:2018whn}, this will in more interesting cases not be the case because the physical state will generically have a different spread along the $\tau_1$ and $\tau_2$ axes \cite{hoehnHowSwitchRelational2018,Bojowald:2010xp,Bojowald:2010qw,Hohn:2011us}. In our case this means that the wave function 
 \ba
 \psi^{\sigma_1,\sigma_2}_{C_2S|C_1}(\tau_1,\tau_2)\ce\left(\bra{\tau_1,\sigma_1}\otimes\bra{\tau_2,\sigma_2}\otimes\bra{\phi_S}\right)\,\ket{\psi_{\rm phys}}\,,\nn
 \ea
 for some physical system state $\ket{\phi_S}\in\ch_S^{\rm phys}$, which can be viewed as either a wave function in the $C_1$ or $C_2$ relational Schr\"odinger picture, may have a different spread in $\tau_1$ than in $\tau_2$. In such a case we will generally find $\tau_1(\tau_2(\tau_1^*)) \neq \tau_1^*$. This effect will occur in the class of models considered here because physical states need not have the same momentum distribution in $p_1$ and $p_2$ (and thus neither in $\tau_1$ or $\tau_2$) due to the presence of the system $S$. 
This effect has also been demonstrated in a semiclassical approach in various models in \cite{Bojowald:2010xp,Bojowald:2010qw,Hohn:2011us} where one finds discrepancies of the order of $\hbar$ between $\tau_1^*$ and $\tau_1(\tau_2^* = \tau_2(\tau_1^*))$.

In conclusion, this effect can be interpreted as a temporal frame dependence of comparing clock readings according to definition (B) (or (C)): if  from the perspective of the temporal reference frame defined by $C_1$ the clock $C_2$ reads $\tau_2(\tau_1^*)$ (computed according to Eq.~\eqref{t2whent1sigma}) when $C_1$ reads $\tau_1^*$, then conversely from the perspective of the temporal reference frame defined by $C_2$ the clock  $C_1$ will not in general read $\tau_1^*$ when $C_2$ reads the value $\tau_2(\tau_1^*)$. That is, $C_1$ and $C_2$ will generally disagree about the pairings of their clock readings.

Let us now also briefly comment on the notion of quantum clock synchronization. Using the state dependent relation Eq.~\eqref{synchro}, we could ask for which state would yield $\tau_2(\tau_1^*)=\tau_1^*$ so that $C_1$ and $C_2$ read the \emph{same} value when $C_1$ reads the value $\tau_1^*$. Even stronger, we could ask whether there are states for which $\tau_2(\tau_1)=\tau_1+{const}$, for all $\tau_1\in\mathbb{R}$, so that, up to a constant offset, $C_1$ and $C_2$ are always synchronized. Eq.~\eqref{t2whent1sigma} tells us that this is the case if $P_{\sigma_1,\sigma_2}(T_2=\tau'|T_1=\tau_1)=\delta(\tau'-\tau_1-const)$. Again, while this happens in simple models \cite{Hoehn:2018whn}, this will generically not happen for the models in the class which we are studying on account of the above observations concerning the frame dependence of comparing clock readings. Such a notion of synchronization is therefore too strong and can generally not be implemented. It will furthermore generally be frame dependent too. 

\subsubsection{Comparing a system's evolution relative to two clocks}

Returning to our original ambition, it is thus more useful to employ the more general (frame dependent) clock comparison, according to definition (B), in order to compare the evolutions of $S$ with respect to $C_1$ and $C_2$. Working in the relational Schr\"odinger picture, if $\ket{\psi_{C_2S|C_1}^{\sigma_1}(\tau_1^*)}$ is the initial state of $C_2S$ from the perspective of $C_1$, then according to Eq.~\eqref{PWtrafo} the corresponding initial state of $C_1S$ from the perspective of $C_2$ is 
\begin{align}
    &\left(\theta(-\sigma_1\hat p_1)\otimes I_S\right)\ket{\psi_{C_1S|C_2}^{\sigma_2}(\tau_2(\tau_1^*))} \nn \\
    &\quad= \calr_{\rm PW}^{\sigma_2}(\tau_2(\tau_1^*))\circ \left(\calr^{\sigma_1}_{\rm PW}(\tau_1^*)\right)^{-1} \ket{\psi_{C_2S|C_1}^{\sigma_1}(\tau_1^*)}\,.\label{synchrostate}
\end{align}
We can then evaluate the `same' reduced system observable $I_{C_i}\otimes\hat f_S^{\rm phys}$ in the two states, where $i=1$ when evaluated relative to $C_2$ and vice versa (cf.\ Corollary \ref{corol_4}), in order to compare the evolution of property $\hat f_S^{\rm phys}$ relative to the two clocks in different quantum states (which amount also to quantum states of the clocks). To avoid confusion, we emphasize, that $I_{C_i}\otimes\hat f_S^{\rm phys}$, $i=1,2$, correspond to two different relational Dirac observables $\hat F_{I_{C_2}\otimes f_S,T_1}(\tau_1)$ and $\hat F_{I_{C_1}\otimes f_S,T_2}(\tau_2)$ on the clock-neutral physical Hilbert space $\ch_{\rm phys}$; in particular, the two are \emph{not} related by the TFC map $\Lambda_{\rm PW}^{\sigma_1\to\sigma_2}$. Hence, by evaluating these two reduced observables in the relational Schr\"odinger states related via the TFC map $\Lambda_{\rm PW}^{\sigma_1\to\sigma_2}$ by Eq.~\eqref{synchrostate}, we can compare two genuinely distinct relational dynamics. The construction in the relational Heisenberg picture is of course completely analogous.

In \cite{Hoehn:2019owq,castro-ruizTimeReferenceFrames2019} a frame dependent temporal non-locality effect was exhibited for idealized clocks whose Hamiltonian is the unbounded momentum operator. For example, when clock $C_2$ is seen to be in a superposition of two peaked states and in a product relation with $S$ from the perspective of $C_1$, then $C_1S$ will generally be entangled as seen from the perspective of $C_2$ and undergo a superposition of time evolutions.   This effect  applies here per overlap of the different $\sigma_1$- and $\sigma_2$-sectors. It will be interesting to study how such a frame dependent temporal locality affects the (potentially frame dependent) comparison and synchronization of the clocks and the comparison of the evolutions of $S$ relative to $C_1$ and $C_2$ in different quantum states, corresponding to different choices of the clock-neutral physical states. Such an exploration will appear elsewhere. 

Finally, these temporal frame changes and clock synchronizations will be relevant in quantum cosmology. For example, recently it was pointed out that singularity resolution in quantum cosmology depends on the choice of clock which one uses to define a relational dynamics \cite{Gielen:2020abd}. The different relational dynamics employed in \cite{Gielen:2020abd} can be interpreted as different choices of reduced dynamics in the sense of our relational Schr\"odinger/Heisenberg picture. Temporal frame changes as developed here can in principle be used to study the temporal frame dependence of the fate of cosmological singularities more systematically.


\section{Conclusions}

In this work we demonstrated the equivalence of three distinct approaches to relational quantum dynamics\,---\, relational Dirac observables,  the Page-Wootters formalism, and quantum deparametrizations\,---\,for  models described by a Hamiltonian constraint in which the momentum of the system being employed as a clock appears quadratically. {Since} this class of {models} {encompasses} many relativistic settings, 
{we have thereby extended our previous results of \cite{Hoehn:2019owq} into a relativistic context. A crucial ingredient in this extension has been  a clock POVM which is covariant with respect to the group generated by the Hamiltonian constraint and is used to describe the temporal reference frame defined by the clock. This choice differs from the usual resort to self-adjoint clock operators in relativistic settings. 

Owing to a superselection rule induced by the shape of the Hamiltonian constraint across positive and negative frequency modes, this equivalence, which we refer to as the trinity of relational quantum dynamics, holds frequency sector wise. Moreover, we {further} develop the {method} of temporal quantum frame changes \cite{hoehnHowSwitchRelational2018,Hoehn:2018whn,Hoehn:2019owq,castro-ruizTimeReferenceFrames2019,Bojowald:2010xp,Bojowald:2010qw,Hohn:2011us,Bojowald:2016fac} in this setting to address the multiple choice problem. This method is then used to explore an indirect self-reference phenomen{on} that arises when transforming between clock perspectives and to reveal the temporal frame and state dependence of comparing or even synchronizing the readings of different quantum clocks. This result adds to the growing list of quantum reference frame dependent physical properties, such as entanglement \cite{giacominiQuantumMechanicsCovariance2019,Vanrietvelde:2018dit,hamette2020quantum}, spin \cite{giacominiRelativisticQuantumReference2019}, classicality \cite{Vanrietvelde:2018dit} or objectivity \cite{le2020blurred,tuziemski2020decoherence} of a subsystem, superpositions \cite{giacominiQuantumMechanicsCovariance2019,Vanrietvelde:2018dit,Zych:2018nao}, certain quantum resources \cite{savi2020quantum}, measurements \cite{giacominiQuantumMechanicsCovariance2019,yang2020switching}, causal relations \cite{castro-ruizTimeReferenceFrames2019,Guerin:2018fja}, temporal locality \cite{Hoehn:2019owq,castro-ruizTimeReferenceFrames2019}, and even spacetime singularity resolution \cite{Gielen:2020abd}.  The temporal frame changes may also be employed to extend recent proposals for studying time dilation effects of quantum clocks} \cite{Smith:2019imm,Smith:2020gdj} (see also \cite{zych2011quantum,Khandelwal:2019mem,Paige:2018lrd,Grochowski:2020mru}).

Importantly, the covariant clock POVM permitted us to resolve Kucha\v{r}'s criticism that the Page-Wootters formalism does not produce the correct localization probability for a relativistic particle in Minkowski space \cite{kucharTimeInterpretationsQuantum2011a}. Indeed, such incorrect localization probabilities arise when conditioning on times defined by the quantization of an inertial Minkwoski time coordinate.
We showed that conditioning instead on the covariant clock POVM surprisingly produces a Newton-Wigner type localization probability, which, while approximate and not fully covariant, is usually regarded as the best possible notion of localization in relativistic quantum mechanics \cite{haag2012local,Yngvason:2014oia}.
This result underscores the benefits of covariant clock POVMs in defining a consistent relational quantum dynamics \cite{Brunetti:2009eq,Smith:2017pwx,Smith:2019imm, Hoehn:2019owq,Loveridge:2019phw}.

In conjunction with our previous article \cite{Hoehn:2019owq}, we have thus resolved all three criticisms (a)--(c) (see Introduction) that Kucha\v{r} raised against the Page-Wootters formalism in \cite{kucharTimeInterpretationsQuantum2011a}. The Page-Wootters formalism is therefore a viable approach to relational quantum dynamics. Through the equivalence established by the trinity, it also equips the relational observable formulation and deparametrizations with a consistent conditional probability interpretation. In particular, relational observables describing the evolution of a position operator relative to a covariant clock POVM yield a Newton-Wigner type localization in relativistic settings.

\begin{acknowledgments}
PAH is grateful for support from the Simons Foundation through an `It-from-Qubit' Fellowship and the Foundational Questions Institute through Grant number FQXi-RFP-1801A. ARHS acknowledges support from the Natural Sciences and Engineering Research Council of Canada and the Dartmouth Society of Fellows. MPEL acknowledges support from the ESQ Discovery Grant (ESQ-Projekts0003X2) of the Austrian Academy of Sciences (\"{O}AW), as well as from the Austrian Science Fund (FWF) through the START project Y879-N27. This work was supported in part by funding from Okinawa Institute of Science and Technology Graduate University. The initial stages of this project were made possible through the support of a grant from the John Templeton Foundation. The opinions expressed in this publication are those of the authors and do not necessarily reflect the views of the John Templeton Foundation.
\end{acknowledgments}

\bibliography{relativisticTrinity}

\begin{thebibliography}{128}%
\makeatletter
\providecommand \@ifxundefined [1]{%
 \@ifx{#1\undefined}
}%
\providecommand \@ifnum [1]{%
 \ifnum #1\expandafter \@firstoftwo
 \else \expandafter \@secondoftwo
 \fi
}%
\providecommand \@ifx [1]{%
 \ifx #1\expandafter \@firstoftwo
 \else \expandafter \@secondoftwo
 \fi
}%
\providecommand \natexlab [1]{#1}%
\providecommand \enquote  [1]{``#1''}%
\providecommand \bibnamefont  [1]{#1}%
\providecommand \bibfnamefont [1]{#1}%
\providecommand \citenamefont [1]{#1}%
\providecommand \href@noop [0]{\@secondoftwo}%
\providecommand \href [0]{\begingroup \@sanitize@url \@href}%
\providecommand \@href[1]{\@@startlink{#1}\@@href}%
\providecommand \@@href[1]{\endgroup#1\@@endlink}%
\providecommand \@sanitize@url [0]{\catcode `\\12\catcode `\$12\catcode
  `\&12\catcode `\#12\catcode `\^12\catcode `\_12\catcode `\%12\relax}%
\providecommand \@@startlink[1]{}%
\providecommand \@@endlink[0]{}%
\providecommand \url  [0]{\begingroup\@sanitize@url \@url }%
\providecommand \@url [1]{\endgroup\@href {#1}{\urlprefix }}%
\providecommand \urlprefix  [0]{URL }%
\providecommand \Eprint [0]{\href }%
\providecommand \doibase [0]{http://dx.doi.org/}%
\providecommand \selectlanguage [0]{\@gobble}%
\providecommand \bibinfo  [0]{\@secondoftwo}%
\providecommand \bibfield  [0]{\@secondoftwo}%
\providecommand \translation [1]{[#1]}%
\providecommand \BibitemOpen [0]{}%
\providecommand \bibitemStop [0]{}%
\providecommand \bibitemNoStop [0]{.\EOS\space}%
\providecommand \EOS [0]{\spacefactor3000\relax}%
\providecommand \BibitemShut  [1]{\csname bibitem#1\endcsname}%
\let\auto@bib@innerbib\@empty
\bibitem [{\citenamefont {H\"ohn}\ \emph {et~al.}(2019)\citenamefont {H\"ohn},
  \citenamefont {Smith},\ and\ \citenamefont {Lock}}]{Hoehn:2019owq}%
  \BibitemOpen
  \bibfield  {author} {\bibinfo {author} {\bibfnamefont {P.~A.}\ \bibnamefont
  {H\"ohn}}, \bibinfo {author} {\bibfnamefont {A.~R.~H.}\ \bibnamefont
  {Smith}}, \ and\ \bibinfo {author} {\bibfnamefont {M.~P.~E.}\ \bibnamefont
  {Lock}},\ }\href@noop {} {\bibfield  {journal} {\bibinfo  {journal}
  {arXiv:1912.00033 [quant-ph]}\ } (\bibinfo {year} {2019})}\BibitemShut
  {NoStop}%
\bibitem [{\citenamefont {Kucha{\v
  r}}(2011)}]{kucharTimeInterpretationsQuantum2011a}%
  \BibitemOpen
  \bibfield  {author} {\bibinfo {author} {\bibfnamefont {K.~V.}\ \bibnamefont
  {Kucha{\v r}}},\ }\href {\doibase 10.1142/S0218271811019347} {\bibfield
  {journal} {\bibinfo  {journal} {Int. J. Mod. Phys. D}\ }\textbf {\bibinfo
  {volume} {20}},\ \bibinfo {pages} {3} (\bibinfo {year} {2011})}\BibitemShut
  {NoStop}%
\bibitem [{\citenamefont {Isham}(1993)}]{Isham1993}%
  \BibitemOpen
  \bibfield  {author} {\bibinfo {author} {\bibfnamefont {C.~J.}\ \bibnamefont
  {Isham}},\ }in\ \href {\doibase 10.1007/978-94-011-1980-1_6} \bibinfo{\emph{Integrable Systems}, \emph{Quantum Groups}, and \emph{Quantum Field
  Theories}},\ \bibinfo {editor} {edited by\ \bibinfo {editor}
  {\bibfnamefont {L.~A.}\ \bibnamefont {Ibort}}\ and\ \bibinfo {editor}
  {\bibfnamefont {M.~A.}\ \bibnamefont {Rodr{\'\i}guez}}}\ (\bibinfo
  {publisher} {{Springer Netherlands}},\ \bibinfo {address} {{Dordrecht}},\
  \bibinfo {year} {1993})\ pp.\ \bibinfo {pages} {157--287}\BibitemShut
  {NoStop}%
\bibitem [{\citenamefont {Anderson}(2017)}]{andersonProblemTime2017}%
  \BibitemOpen
  \bibfield  {author} {\bibinfo {author} {\bibfnamefont {E.}~\bibnamefont
  {Anderson}},\ }\href@noop {} {\emph {\bibinfo {title} {The {{Problem}} of
  {{Time}}}}},\ \bibinfo {series} {Fundamental {{Theories}} of {{Physics}}},
  Vol.\ \bibinfo {volume} {190}\ (\bibinfo  {publisher} {{Springer
  International Publishing}},\ \bibinfo {year} {2017})\BibitemShut {NoStop}%
\bibitem [{\citenamefont {Rovelli}(2004)}]{rovelliQuantumGravity2004}%
  \BibitemOpen
  \bibfield  {author} {\bibinfo {author} {\bibfnamefont {C.}~\bibnamefont
  {Rovelli}},\ }\href@noop {} {\emph {\bibinfo {title} {Quantum {{Gravity}}}}}\
  (\bibinfo  {publisher} {{Cambridge University Press}},\ \bibinfo {address}
  {{Cambridge}},\ \bibinfo {year} {2004})\BibitemShut {NoStop}%
\bibitem [{\citenamefont
  {Smolin}(2006)}]{smolinCaseBackgroundIndependence2006}%
  \BibitemOpen
  \bibfield  {author} {\bibinfo {author} {\bibfnamefont {L.}~\bibnamefont
  {Smolin}}\ }(\bibinfo  {publisher} {{Oxford University Press}},\ \bibinfo
  {address} {{Oxford}},\ \bibinfo {year} {2006})\ pp.\ \bibinfo {pages}
  {196--239}\BibitemShut {NoStop}%
\bibitem [{\citenamefont {Smolin}(2018)}]{smolinTemporalRelationalism2018}%
  \BibitemOpen
  \bibfield  {author} {\bibinfo {author} {\bibfnamefont {L.}~\bibnamefont
  {Smolin}},\ }\href@noop {} {\bibfield  {journal} {\bibinfo  {journal}
  {arXiv:1805.12468 [gr-qc]}\ } (\bibinfo {year} {2018})}\BibitemShut {NoStop}%
\bibitem [{\citenamefont {DeWitt}(1967)}]{dewittQuantumTheoryGravity1967}%
  \BibitemOpen
  \bibfield  {author} {\bibinfo {author} {\bibfnamefont {B.~S.}\ \bibnamefont
  {DeWitt}},\ }\href {\doibase 10.1103/PhysRev.160.1113} {\bibfield  {journal}
  {\bibinfo  {journal} {Phys. Rev.}\ }\textbf {\bibinfo {volume} {160}},\
  \bibinfo {pages} {1113} (\bibinfo {year} {1967})}\BibitemShut {NoStop}%
\bibitem [{\citenamefont
  {Rovelli}(1991{\natexlab{a}})}]{rovelliThereIncompatibilityWays1991}%
  \BibitemOpen
  \bibfield  {author} {\bibinfo {author} {\bibfnamefont {C.}~\bibnamefont
  {Rovelli}},\ }in\ \href@noop {} {\emph {\bibinfo {booktitle} {Conceptual
  {{Problems}} of {{Quantum Gravity}}}}},\ \bibinfo {editor} {edited by\
  \bibinfo {editor} {\bibfnamefont {A.}~\bibnamefont {Ashtekar}}\ and\ \bibinfo
  {editor} {\bibfnamefont {J.}~\bibnamefont {Stachel}}}\ (\bibinfo  {publisher}
  {{Birkhauser}},\ \bibinfo {year} {1991})\ pp.\ \bibinfo {pages}
  {126--140}\BibitemShut {NoStop}%
\bibitem [{\citenamefont
  {Ashtekar}(1991)}]{ashtekarLecturesNonPerturbativeCanonical1991}%
  \BibitemOpen
  \bibfield  {author} {\bibinfo {author} {\bibfnamefont {A.}~\bibnamefont
  {Ashtekar}},\ }\href@noop {} {\emph {\bibinfo {title} {Lectures on
  {{Non}}-{{Perturbative Canonical Gravity}}}}},\ \bibinfo {series} {Physics
  and {{Cosmology}}}, Vol.~\bibinfo {volume} {6}\ (\bibinfo  {publisher}
  {{World Scientific}},\ \bibinfo {address} {{Singapore}},\ \bibinfo {year}
  {1991})\BibitemShut {NoStop}%
\bibitem [{\citenamefont
  {Thiemann}(2008)}]{thiemannModernCanonicalQuantum2008}%
  \BibitemOpen
  \bibfield  {author} {\bibinfo {author} {\bibfnamefont {T.}~\bibnamefont
  {Thiemann}},\ }\href@noop {} {\emph {\bibinfo {title} {Modern Canonical
  Quantum General Relativity}}}\ (\bibinfo  {publisher} {{Cambridge University
  Press}},\ \bibinfo {year} {2008})\BibitemShut {NoStop}%
\bibitem [{\citenamefont {Rovelli}(1990)}]{Rovelli:1990jm}%
  \BibitemOpen
  \bibfield  {author} {\bibinfo {author} {\bibfnamefont {C.}~\bibnamefont
  {Rovelli}},\ }\href {\doibase 10.1103/PhysRevD.42.2638} {\bibfield  {journal}
  {\bibinfo  {journal} {Phys. Rev. D}\ }\textbf {\bibinfo {volume} {42}},\
  \bibinfo {pages} {2638} (\bibinfo {year} {1990})}\BibitemShut {NoStop}%
\bibitem [{\citenamefont {Rovelli}(1991{\natexlab{b}})}]{Rovelli:1989jn}%
  \BibitemOpen
  \bibfield  {author} {\bibinfo {author} {\bibfnamefont {C.}~\bibnamefont
  {Rovelli}},\ }\href {\doibase 10.1103/PhysRevD.43.442} {\bibfield  {journal}
  {\bibinfo  {journal} {Phys. Rev. D}\ }\textbf {\bibinfo {volume} {43}},\
  \bibinfo {pages} {442} (\bibinfo {year} {1991}{\natexlab{b}})}\BibitemShut
  {NoStop}%
\bibitem [{\citenamefont {Rovelli}(1991{\natexlab{c}})}]{Rovelli:1990ph}%
  \BibitemOpen
  \bibfield  {author} {\bibinfo {author} {\bibfnamefont {C.}~\bibnamefont
  {Rovelli}},\ }\href {\doibase 10.1088/0264-9381/8/2/011} {\bibfield
  {journal} {\bibinfo  {journal} {Class. Quantum Grav.}\ }\textbf {\bibinfo
  {volume} {8}},\ \bibinfo {pages} {297} (\bibinfo {year}
  {1991}{\natexlab{c}})}\BibitemShut {NoStop}%
\bibitem [{\citenamefont {Rovelli}(1991{\natexlab{d}})}]{Rovelli:1990pi}%
  \BibitemOpen
  \bibfield  {author} {\bibinfo {author} {\bibfnamefont {C.}~\bibnamefont
  {Rovelli}},\ }\href {\doibase 10.1088/0264-9381/8/2/012} {\bibfield
  {journal} {\bibinfo  {journal} {Class. Quantum Grav.}\ }\textbf {\bibinfo
  {volume} {8}},\ \bibinfo {pages} {317} (\bibinfo {year}
  {1991}{\natexlab{d}})}\BibitemShut {NoStop}%
\bibitem [{\citenamefont {Marolf}(1995{\natexlab{a}})}]{Marolf:1994nz}%
  \BibitemOpen
  \bibfield  {author} {\bibinfo {author} {\bibfnamefont {D.}~\bibnamefont
  {Marolf}},\ }\href {\doibase 10.1088/0264-9381/12/10/007} {\bibfield
  {journal} {\bibinfo  {journal} {Class. Quantum Grav.}\ }\textbf {\bibinfo
  {volume} {12}},\ \bibinfo {pages} {2469} (\bibinfo {year}
  {1995}{\natexlab{a}})}\BibitemShut {NoStop}%
\bibitem [{\citenamefont {Marolf}(1995{\natexlab{b}})}]{Marolf:1994wh}%
  \BibitemOpen
  \bibfield  {author} {\bibinfo {author} {\bibfnamefont {D.}~\bibnamefont
  {Marolf}},\ }\href {\doibase 10.1088/0264-9381/12/5/011} {\bibfield
  {journal} {\bibinfo  {journal} {Class. Quantum Grav.}\ }\textbf {\bibinfo
  {volume} {12}},\ \bibinfo {pages} {1199} (\bibinfo {year}
  {1995}{\natexlab{b}})}\BibitemShut {NoStop}%
\bibitem [{\citenamefont {Gambini}\ and\ \citenamefont
  {Porto}(2001)}]{Gambini:2000ht}%
  \BibitemOpen
  \bibfield  {author} {\bibinfo {author} {\bibfnamefont {R.}~\bibnamefont
  {Gambini}}\ and\ \bibinfo {author} {\bibfnamefont {R.~A.}\ \bibnamefont
  {Porto}},\ }\href {\doibase 10.1103/PhysRevD.63.105014} {\bibfield  {journal}
  {\bibinfo  {journal} {Phys. Rev. D}\ }\textbf {\bibinfo {volume} {63}},\
  \bibinfo {pages} {105014} (\bibinfo {year} {2001})}\BibitemShut {NoStop}%
\bibitem [{\citenamefont {Tambornino}(2012)}]{Tambornino:2011vg}%
  \BibitemOpen
  \bibfield  {author} {\bibinfo {author} {\bibfnamefont {J.}~\bibnamefont
  {Tambornino}},\ }\href {\doibase 10.3842/SIGMA.2012.017} {\bibfield
  {journal} {\bibinfo  {journal} {SIGMA}\ }\textbf {\bibinfo {volume} {8}},\
  \bibinfo {pages} {017} (\bibinfo {year} {2012})}\BibitemShut {NoStop}%
\bibitem [{\citenamefont {Giddings}\ \emph {et~al.}(2006)\citenamefont
  {Giddings}, \citenamefont {Marolf},\ and\ \citenamefont
  {Hartle}}]{Giddings:2005id}%
  \BibitemOpen
  \bibfield  {author} {\bibinfo {author} {\bibfnamefont {S.~B.}\ \bibnamefont
  {Giddings}}, \bibinfo {author} {\bibfnamefont {D.}~\bibnamefont {Marolf}}, \
  and\ \bibinfo {author} {\bibfnamefont {J.~B.}\ \bibnamefont {Hartle}},\
  }\href {\doibase 10.1103/PhysRevD.74.064018} {\bibfield  {journal} {\bibinfo
  {journal} {Phys. Rev. D}\ }\textbf {\bibinfo {volume} {74}},\ \bibinfo
  {pages} {064018} (\bibinfo {year} {2006})}\BibitemShut {NoStop}%
\bibitem [{\citenamefont
  {Dittrich}(2007)}]{dittrichPartialCompleteObservables2007}%
  \BibitemOpen
  \bibfield  {author} {\bibinfo {author} {\bibfnamefont {B.}~\bibnamefont
  {Dittrich}},\ }\href {\doibase 10.1007/s10714-007-0495-2} {\bibfield
  {journal} {\bibinfo  {journal} {Gen. Relativ. Gravit.}\ }\textbf {\bibinfo
  {volume} {39}},\ \bibinfo {pages} {1891} (\bibinfo {year}
  {2007})}\BibitemShut {NoStop}%
\bibitem [{\citenamefont {Dittrich}(2006)}]{Dittrich:2005kc}%
  \BibitemOpen
  \bibfield  {author} {\bibinfo {author} {\bibfnamefont {B.}~\bibnamefont
  {Dittrich}},\ }\href {\doibase 10.1088/0264-9381/23/22/006} {\bibfield
  {journal} {\bibinfo  {journal} {Class. Quantum Grav.}\ }\textbf {\bibinfo
  {volume} {23}},\ \bibinfo {pages} {6155} (\bibinfo {year}
  {2006})}\BibitemShut {NoStop}%
\bibitem [{\citenamefont {Dittrich}\ and\ \citenamefont
  {Tambornino}(2007{\natexlab{a}})}]{Dittrich:2006ee}%
  \BibitemOpen
  \bibfield  {author} {\bibinfo {author} {\bibfnamefont {B.}~\bibnamefont
  {Dittrich}}\ and\ \bibinfo {author} {\bibfnamefont {J.}~\bibnamefont
  {Tambornino}},\ }\href {\doibase 10.1088/0264-9381/24/4/001} {\bibfield
  {journal} {\bibinfo  {journal} {Class. Quantum Grav.}\ }\textbf {\bibinfo
  {volume} {24}},\ \bibinfo {pages} {757} (\bibinfo {year}
  {2007}{\natexlab{a}})}\BibitemShut {NoStop}%
\bibitem [{\citenamefont {Dittrich}\ and\ \citenamefont
  {Tambornino}(2007{\natexlab{b}})}]{Dittrich:2007jx}%
  \BibitemOpen
  \bibfield  {author} {\bibinfo {author} {\bibfnamefont {B.}~\bibnamefont
  {Dittrich}}\ and\ \bibinfo {author} {\bibfnamefont {J.}~\bibnamefont
  {Tambornino}},\ }\href {\doibase 10.1088/0264-9381/24/18/001} {\bibfield
  {journal} {\bibinfo  {journal} {Class. Quantum Grav.}\ }\textbf {\bibinfo
  {volume} {24}},\ \bibinfo {pages} {4543} (\bibinfo {year}
  {2007}{\natexlab{b}})}\BibitemShut {NoStop}%
\bibitem [{\citenamefont {Gambini}\ \emph {et~al.}(2009)\citenamefont
  {Gambini}, \citenamefont {Porto}, \citenamefont {Pullin},\ and\ \citenamefont
  {Torterolo}}]{Gambini:2008ke}%
  \BibitemOpen
  \bibfield  {author} {\bibinfo {author} {\bibfnamefont {R.}~\bibnamefont
  {Gambini}}, \bibinfo {author} {\bibfnamefont {R.~A.}\ \bibnamefont {Porto}},
  \bibinfo {author} {\bibfnamefont {J.}~\bibnamefont {Pullin}}, \ and\ \bibinfo
  {author} {\bibfnamefont {S.}~\bibnamefont {Torterolo}},\ }\href {\doibase
  10.1103/PhysRevD.79.041501} {\bibfield  {journal} {\bibinfo  {journal} {Phys.
  Rev. D}\ }\textbf {\bibinfo {volume} {79}},\ \bibinfo {pages} {041501}
  (\bibinfo {year} {2009})}\BibitemShut {NoStop}%
\bibitem [{\citenamefont {H\"ohn}\ and\ \citenamefont
  {Vanrietvelde}(2018)}]{hoehnHowSwitchRelational2018}%
  \BibitemOpen
  \bibfield  {author} {\bibinfo {author} {\bibfnamefont {P.~A.}\ \bibnamefont
  {H\"ohn}}\ and\ \bibinfo {author} {\bibfnamefont {A.}~\bibnamefont
  {Vanrietvelde}},\ }\href@noop {} {\bibfield  {journal} {\bibinfo  {journal}
  {arXiv:1810.04153 [gr-qc]}\ } (\bibinfo {year} {2018})}\BibitemShut {NoStop}%
\bibitem [{\citenamefont {H\"ohn}(2019)}]{Hoehn:2018whn}%
  \BibitemOpen
  \bibfield  {author} {\bibinfo {author} {\bibfnamefont {P.~A.}\ \bibnamefont
  {H\"ohn}},\ }\href {\doibase 10.3390/universe5050116} {\bibfield  {journal}
  {\bibinfo  {journal} {Universe}\ }\textbf {\bibinfo {volume} {5}},\ \bibinfo
  {pages} {116} (\bibinfo {year} {2019})}\BibitemShut {NoStop}%
\bibitem [{\citenamefont {Bojowald}\ \emph
  {et~al.}(2011{\natexlab{a}})\citenamefont {Bojowald}, \citenamefont
  {H\"ohn},\ and\ \citenamefont {Tsobanjan}}]{Bojowald:2010xp}%
  \BibitemOpen
  \bibfield  {author} {\bibinfo {author} {\bibfnamefont {M.}~\bibnamefont
  {Bojowald}}, \bibinfo {author} {\bibfnamefont {P.~A.}\ \bibnamefont
  {H\"ohn}}, \ and\ \bibinfo {author} {\bibfnamefont {A.}~\bibnamefont
  {Tsobanjan}},\ }\href {\doibase 10.1088/0264-9381/28/3/035006} {\bibfield
  {journal} {\bibinfo  {journal} {Class. Quantum Grav.}\ }\textbf {\bibinfo
  {volume} {28}},\ \bibinfo {pages} {035006} (\bibinfo {year}
  {2011}{\natexlab{a}})}\BibitemShut {NoStop}%
\bibitem [{\citenamefont {Bojowald}\ \emph
  {et~al.}(2011{\natexlab{b}})\citenamefont {Bojowald}, \citenamefont
  {H\"ohn},\ and\ \citenamefont {Tsobanjan}}]{Bojowald:2010qw}%
  \BibitemOpen
  \bibfield  {author} {\bibinfo {author} {\bibfnamefont {M.}~\bibnamefont
  {Bojowald}}, \bibinfo {author} {\bibfnamefont {P.~A.}\ \bibnamefont
  {H\"ohn}}, \ and\ \bibinfo {author} {\bibfnamefont {A.}~\bibnamefont
  {Tsobanjan}},\ }\href {\doibase 10.1103/PhysRevD.83.125023} {\bibfield
  {journal} {\bibinfo  {journal} {Phys. Rev. D}\ }\textbf {\bibinfo {volume}
  {83}},\ \bibinfo {pages} {125023} (\bibinfo {year}
  {2011}{\natexlab{b}})}\BibitemShut {NoStop}%
\bibitem [{\citenamefont {H\"ohn}\ \emph {et~al.}(2012)\citenamefont {H\"ohn},
  \citenamefont {Kubalova},\ and\ \citenamefont {Tsobanjan}}]{Hohn:2011us}%
  \BibitemOpen
  \bibfield  {author} {\bibinfo {author} {\bibfnamefont {P.~A.}\ \bibnamefont
  {H\"ohn}}, \bibinfo {author} {\bibfnamefont {E.}~\bibnamefont {Kubalova}}, \
  and\ \bibinfo {author} {\bibfnamefont {A.}~\bibnamefont {Tsobanjan}},\ }\href
  {\doibase 10.1103/PhysRevD.86.065014} {\bibfield  {journal} {\bibinfo
  {journal} {Phys. Rev. D}\ }\textbf {\bibinfo {volume} {86}},\ \bibinfo
  {pages} {065014} (\bibinfo {year} {2012})}\BibitemShut {NoStop}%
\bibitem [{\citenamefont {Dittrich}\ \emph {et~al.}(2017)\citenamefont
  {Dittrich}, \citenamefont {H\"ohn}, \citenamefont {Koslowski},\ and\
  \citenamefont {Nelson}}]{Dittrich:2016hvj}%
  \BibitemOpen
  \bibfield  {author} {\bibinfo {author} {\bibfnamefont {B.}~\bibnamefont
  {Dittrich}}, \bibinfo {author} {\bibfnamefont {P.~A.}\ \bibnamefont
  {H\"ohn}}, \bibinfo {author} {\bibfnamefont {T.~A.}\ \bibnamefont
  {Koslowski}}, \ and\ \bibinfo {author} {\bibfnamefont {M.~I.}\ \bibnamefont
  {Nelson}},\ }\href {\doibase 10.1016/j.physletb.2017.02.038} {\bibfield
  {journal} {\bibinfo  {journal} {Phys. Lett. B}\ }\textbf {\bibinfo {volume}
  {769}},\ \bibinfo {pages} {554} (\bibinfo {year} {2017})}\BibitemShut
  {NoStop}%
\bibitem [{\citenamefont {Dittrich}\ \emph {et~al.}(2015)\citenamefont
  {Dittrich}, \citenamefont {H\"ohn}, \citenamefont {Koslowski},\ and\
  \citenamefont {Nelson}}]{Dittrich:2015vfa}%
  \BibitemOpen
  \bibfield  {author} {\bibinfo {author} {\bibfnamefont {B.}~\bibnamefont
  {Dittrich}}, \bibinfo {author} {\bibfnamefont {P.~A.}\ \bibnamefont
  {H\"ohn}}, \bibinfo {author} {\bibfnamefont {T.~A.}\ \bibnamefont
  {Koslowski}}, \ and\ \bibinfo {author} {\bibfnamefont {M.~I.}\ \bibnamefont
  {Nelson}},\ }\href@noop {} {\bibfield  {journal} {\bibinfo  {journal}
  {arXiv:1508.01947 [gr-qc]}\ } (\bibinfo {year} {2015})}\BibitemShut {NoStop}%
\bibitem [{\citenamefont
  {Chataignier}(2020{\natexlab{a}})}]{Chataignier:2019kof}%
  \BibitemOpen
  \bibfield  {author} {\bibinfo {author} {\bibfnamefont {L.}~\bibnamefont
  {Chataignier}},\ }\href@noop {} {\bibfield  {journal} {\bibinfo  {journal}
  {Phys. Rev. D}\ }\textbf {\bibinfo {volume} {101}},\ \bibinfo {pages}
  {086001} (\bibinfo {year} {2020}{\natexlab{a}})}\BibitemShut {NoStop}%
\bibitem [{\citenamefont
  {Chataignier}(2020{\natexlab{b}})}]{chataignier2020relational}%
  \BibitemOpen
  \bibfield  {author} {\bibinfo {author} {\bibfnamefont {L.}~\bibnamefont
  {Chataignier}},\ }\href@noop {} {\bibfield  {journal} {\bibinfo  {journal}
  {arXiv:2006.05526 [gr-qc]}\ } (\bibinfo {year}
  {2020}{\natexlab{b}})}\BibitemShut {NoStop}%
\bibitem [{\citenamefont {Page}\ and\ \citenamefont
  {Wootters}(1983)}]{pageEvolutionEvolutionDynamics1983}%
  \BibitemOpen
  \bibfield  {author} {\bibinfo {author} {\bibfnamefont {D.~N.}\ \bibnamefont
  {Page}}\ and\ \bibinfo {author} {\bibfnamefont {W.~K.}\ \bibnamefont
  {Wootters}},\ }\href@noop {} {\bibfield  {journal} {\bibinfo  {journal}
  {Phys. Rev. D}\ }\textbf {\bibinfo {volume} {27}},\ \bibinfo {pages} {2885}
  (\bibinfo {year} {1983})}\BibitemShut {NoStop}%
\bibitem [{\citenamefont {Wootters}(1984)}]{woottersTimeReplacedQuantum1984}%
  \BibitemOpen
  \bibfield  {author} {\bibinfo {author} {\bibfnamefont {W.~K.}\ \bibnamefont
  {Wootters}},\ }\href@noop {} {\bibfield  {journal} {\bibinfo  {journal} {Int.
  J. Theor. Phys.}\ }\textbf {\bibinfo {volume} {23}},\ \bibinfo {pages} {701}
  (\bibinfo {year} {1984})}\BibitemShut {NoStop}%
\bibitem [{\citenamefont {Gambini}\ and\ \citenamefont
  {Pullin}(2007)}]{Gambini:2006ph}%
  \BibitemOpen
  \bibfield  {author} {\bibinfo {author} {\bibfnamefont {R.}~\bibnamefont
  {Gambini}}\ and\ \bibinfo {author} {\bibfnamefont {J.}~\bibnamefont
  {Pullin}},\ }\href {\doibase 10.1007/s10701-007-9144-6} {\bibfield  {journal}
  {\bibinfo  {journal} {Found. Phys.}\ }\textbf {\bibinfo {volume} {37}},\
  \bibinfo {pages} {1074} (\bibinfo {year} {2007})}\BibitemShut {NoStop}%
\bibitem [{\citenamefont {Gambini}\ \emph {et~al.}(2007)\citenamefont
  {Gambini}, \citenamefont {Porto},\ and\ \citenamefont
  {Pullin}}]{Gambini:2006yj}%
  \BibitemOpen
  \bibfield  {author} {\bibinfo {author} {\bibfnamefont {R.}~\bibnamefont
  {Gambini}}, \bibinfo {author} {\bibfnamefont {R.}~\bibnamefont {Porto}}, \
  and\ \bibinfo {author} {\bibfnamefont {J.}~\bibnamefont {Pullin}},\ }\href
  {\doibase 10.1007/s10714-007-0451-1} {\bibfield  {journal} {\bibinfo
  {journal} {Gen. Relativ. Gravit.}\ }\textbf {\bibinfo {volume} {39}},\
  \bibinfo {pages} {1143} (\bibinfo {year} {2007})}\BibitemShut {NoStop}%
\bibitem [{\citenamefont {Giovannetti}\ \emph {et~al.}(2015)\citenamefont
  {Giovannetti}, \citenamefont {Lloyd},\ and\ \citenamefont
  {Maccone}}]{giovannettiQuantumTime2015}%
  \BibitemOpen
  \bibfield  {author} {\bibinfo {author} {\bibfnamefont {V.}~\bibnamefont
  {Giovannetti}}, \bibinfo {author} {\bibfnamefont {S.}~\bibnamefont {Lloyd}},
  \ and\ \bibinfo {author} {\bibfnamefont {L.}~\bibnamefont {Maccone}},\
  }\href@noop {} {\bibfield  {journal} {\bibinfo  {journal} {Phys. Rev. D}\
  }\textbf {\bibinfo {volume} {79}},\ \bibinfo {pages} {945933} (\bibinfo
  {year} {2015})}\BibitemShut {NoStop}%
\bibitem [{\citenamefont {Smith}\ and\ \citenamefont
  {Ahmadi}(2019{\natexlab{a}})}]{Smith:2017pwx}%
  \BibitemOpen
  \bibfield  {author} {\bibinfo {author} {\bibfnamefont {A.~R.~H.}\
  \bibnamefont {Smith}}\ and\ \bibinfo {author} {\bibfnamefont
  {M.}~\bibnamefont {Ahmadi}},\ }\href {\doibase 10.22331/q-2019-07-08-160}
  {\bibfield  {journal} {\bibinfo  {journal} {Quantum}\ }\textbf {\bibinfo
  {volume} {3}},\ \bibinfo {pages} {160} (\bibinfo {year}
  {2019}{\natexlab{a}})}\BibitemShut {NoStop}%
\bibitem [{\citenamefont {Smith}\ and\ \citenamefont
  {Ahmadi}(2019{\natexlab{b}})}]{Smith:2019imm}%
  \BibitemOpen
  \bibfield  {author} {\bibinfo {author} {\bibfnamefont {A.~R.~H.}\
  \bibnamefont {Smith}}\ and\ \bibinfo {author} {\bibfnamefont
  {M.}~\bibnamefont {Ahmadi}},\ }\href@noop {} {\bibfield  {journal} {\bibinfo
  {journal} {arXiv:1904.12390 [quant-ph]}\ } (\bibinfo {year}
  {2019}{\natexlab{b}})}\BibitemShut {NoStop}%
\bibitem [{\citenamefont {Dolby}(2004)}]{Dolby:2004ak}%
  \BibitemOpen
  \bibfield  {author} {\bibinfo {author} {\bibfnamefont {C.~E.}\ \bibnamefont
  {Dolby}},\ }\href@noop {} {\bibfield  {journal} {\bibinfo  {journal}
  {arXiv:0406034 [gr-qc]}\ } (\bibinfo {year} {2004})}\BibitemShut {NoStop}%
\bibitem [{\citenamefont {Castro-Ruiz}\ \emph {et~al.}(2020)\citenamefont
  {Castro-Ruiz}, \citenamefont {Giacomini}, \citenamefont {Belenchia},\ and\
  \citenamefont {Brukner}}]{castro-ruizTimeReferenceFrames2019}%
  \BibitemOpen
  \bibfield  {author} {\bibinfo {author} {\bibfnamefont {E.}~\bibnamefont
  {Castro-Ruiz}}, \bibinfo {author} {\bibfnamefont {F.}~\bibnamefont
  {Giacomini}}, \bibinfo {author} {\bibfnamefont {A.}~\bibnamefont
  {Belenchia}}, \ and\ \bibinfo {author} {\bibfnamefont {v.~C.}\ \bibnamefont
  {Brukner}},\ }\href {\doibase 10.1038/s41467-020-16013-1} {\bibfield
  {journal} {\bibinfo  {journal} {Nat. Commun.}\ }\textbf {\bibinfo {volume}
  {11}},\ \bibinfo {pages} {2672} (\bibinfo {year} {2020})}\BibitemShut
  {NoStop}%
\bibitem [{\citenamefont {Boette}\ and\ \citenamefont
  {Rossignoli}(2018)}]{Boette:2018uix}%
  \BibitemOpen
  \bibfield  {author} {\bibinfo {author} {\bibfnamefont {A.}~\bibnamefont
  {Boette}}\ and\ \bibinfo {author} {\bibfnamefont {R.}~\bibnamefont
  {Rossignoli}},\ }\href {\doibase 10.1103/PhysRevA.98.032108} {\bibfield
  {journal} {\bibinfo  {journal} {Phys. Rev. A}\ }\textbf {\bibinfo {volume}
  {98}},\ \bibinfo {pages} {032108} (\bibinfo {year} {2018})}\BibitemShut
  {NoStop}%
\bibitem [{\citenamefont {Diaz}\ and\ \citenamefont
  {Rossignoli}(2019)}]{Diaz:2019bad}%
  \BibitemOpen
  \bibfield  {author} {\bibinfo {author} {\bibfnamefont {N.}~\bibnamefont
  {Diaz}}\ and\ \bibinfo {author} {\bibfnamefont {R.}~\bibnamefont
  {Rossignoli}},\ }\href {\doibase 10.1103/PhysRevD.99.045008} {\bibfield
  {journal} {\bibinfo  {journal} {Phys. Rev. D}\ }\textbf {\bibinfo {volume}
  {99}},\ \bibinfo {pages} {045008} (\bibinfo {year} {2019})}\BibitemShut
  {NoStop}%
\bibitem [{\citenamefont {Diaz}\ \emph {et~al.}(2019)\citenamefont {Diaz},
  \citenamefont {Matera},\ and\ \citenamefont {Rossignoli}}]{Diaz:2019xie}%
  \BibitemOpen
  \bibfield  {author} {\bibinfo {author} {\bibfnamefont {N.~L.}\ \bibnamefont
  {Diaz}}, \bibinfo {author} {\bibfnamefont {J.~M.}\ \bibnamefont {Matera}}, \
  and\ \bibinfo {author} {\bibfnamefont {R.}~\bibnamefont {Rossignoli}},\
  }\href {\doibase 10.1103/PhysRevD.100.125020} {\bibfield  {journal} {\bibinfo
   {journal} {Phys. Rev. D}\ }\textbf {\bibinfo {volume} {100}},\ \bibinfo
  {pages} {125020} (\bibinfo {year} {2019})}\BibitemShut {NoStop}%
\bibitem [{\citenamefont {Singh}(2020)}]{Singh:2020kdu}%
  \BibitemOpen
  \bibfield  {author} {\bibinfo {author} {\bibfnamefont {A.}~\bibnamefont
  {Singh}},\ }\href@noop {} {\bibfield  {journal} {\bibinfo  {journal}
  {arXiv:2004.09139 [quant-ph]}\ } (\bibinfo {year} {2020})}\BibitemShut
  {NoStop}%
\bibitem [{\citenamefont {Leon}\ and\ \citenamefont
  {Maccone}(2017)}]{leonPauliObjection2017}%
  \BibitemOpen
  \bibfield  {author} {\bibinfo {author} {\bibfnamefont {J.}~\bibnamefont
  {Leon}}\ and\ \bibinfo {author} {\bibfnamefont {L.}~\bibnamefont {Maccone}},\
  }\href {\doibase 10.1007/s10701-017-0115-2} {\bibfield  {journal} {\bibinfo
  {journal} {Found. Phys.}\ }\textbf {\bibinfo {volume} {47}},\ \bibinfo
  {pages} {1597} (\bibinfo {year} {2017})}\BibitemShut {NoStop}%
\bibitem [{\citenamefont {Marletto}\ and\ \citenamefont
  {Vedral}(2017)}]{marlettoEvolutionEvolutionAmbiguities2017}%
  \BibitemOpen
  \bibfield  {author} {\bibinfo {author} {\bibfnamefont {C.}~\bibnamefont
  {Marletto}}\ and\ \bibinfo {author} {\bibfnamefont {V.}~\bibnamefont
  {Vedral}},\ }\href {\doibase 10.1103/PhysRevD.95.043510} {\bibfield
  {journal} {\bibinfo  {journal} {Phys. Rev. D}\ }\textbf {\bibinfo {volume}
  {95}},\ \bibinfo {pages} {043510} (\bibinfo {year} {2017})}\BibitemShut
  {NoStop}%
\bibitem [{\citenamefont {Nikolova}\ \emph {et~al.}(2018)\citenamefont
  {Nikolova}, \citenamefont {Brennen}, \citenamefont {Osborne}, \citenamefont
  {Milburn},\ and\ \citenamefont {Stace}}]{Nikolova:2017huj}%
  \BibitemOpen
  \bibfield  {author} {\bibinfo {author} {\bibfnamefont {A.}~\bibnamefont
  {Nikolova}}, \bibinfo {author} {\bibfnamefont {G.}~\bibnamefont {Brennen}},
  \bibinfo {author} {\bibfnamefont {T.~J.}\ \bibnamefont {Osborne}}, \bibinfo
  {author} {\bibfnamefont {G.}~\bibnamefont {Milburn}}, \ and\ \bibinfo
  {author} {\bibfnamefont {T.~M.}\ \bibnamefont {Stace}},\ }\href {\doibase
  10.1103/PhysRevA.97.030101} {\bibfield  {journal} {\bibinfo  {journal} {Phys.
  Rev. A}\ }\textbf {\bibinfo {volume} {97}},\ \bibinfo {pages} {030101}
  (\bibinfo {year} {2018})}\BibitemShut {NoStop}%
\bibitem [{\citenamefont {Baumann}\ \emph {et~al.}(2019)\citenamefont
  {Baumann}, \citenamefont {Santo}, \citenamefont {Smith}, \citenamefont
  {Giacomini}, \citenamefont {Castro-Ruiz},\ and\ \citenamefont
  {Brukner}}]{baumann2019generalized}%
  \BibitemOpen
  \bibfield  {author} {\bibinfo {author} {\bibfnamefont {V.}~\bibnamefont
  {Baumann}}, \bibinfo {author} {\bibfnamefont {F.~D.}\ \bibnamefont {Santo}},
  \bibinfo {author} {\bibfnamefont {A.~R.~H.}\ \bibnamefont {Smith}}, \bibinfo
  {author} {\bibfnamefont {F.}~\bibnamefont {Giacomini}}, \bibinfo {author}
  {\bibfnamefont {E.}~\bibnamefont {Castro-Ruiz}}, \ and\ \bibinfo {author}
  {\bibfnamefont {C.}~\bibnamefont {Brukner}},\ }\href@noop {} {\bibfield
  {journal} {\bibinfo  {journal} {arXiv:1911.09696 [quant-ph]}\ } (\bibinfo
  {year} {2019})}\BibitemShut {NoStop}%
\bibitem [{\citenamefont {Favalli}\ and\ \citenamefont
  {Smerzi}(2020)}]{favalli2020hermitian}%
  \BibitemOpen
  \bibfield  {author} {\bibinfo {author} {\bibfnamefont {T.}~\bibnamefont
  {Favalli}}\ and\ \bibinfo {author} {\bibfnamefont {A.}~\bibnamefont
  {Smerzi}},\ }\href@noop {} {\bibfield  {journal} {\bibinfo  {journal}
  {arXiv:2003.09042 [quant-ph]}\ } (\bibinfo {year} {2020})}\BibitemShut
  {NoStop}%
\bibitem [{\citenamefont {Foti}\ \emph {et~al.}(2020)\citenamefont {Foti},
  \citenamefont {Coppo}, \citenamefont {Barni}, \citenamefont {Cuccoli},\ and\
  \citenamefont {Verrucchi}}]{foti2020time}%
  \BibitemOpen
  \bibfield  {author} {\bibinfo {author} {\bibfnamefont {C.}~\bibnamefont
  {Foti}}, \bibinfo {author} {\bibfnamefont {A.}~\bibnamefont {Coppo}},
  \bibinfo {author} {\bibfnamefont {G.}~\bibnamefont {Barni}}, \bibinfo
  {author} {\bibfnamefont {A.}~\bibnamefont {Cuccoli}}, \ and\ \bibinfo
  {author} {\bibfnamefont {P.}~\bibnamefont {Verrucchi}},\ }\href@noop {}
  {\bibfield  {journal} {\bibinfo  {journal} {arXiv:2006.12103 [quant-ph]}\ }
  (\bibinfo {year} {2020})}\BibitemShut {NoStop}%
\bibitem [{\citenamefont {Thiemann}(2006)}]{Thiemann:2004wk}%
  \BibitemOpen
  \bibfield  {author} {\bibinfo {author} {\bibfnamefont {T.}~\bibnamefont
  {Thiemann}},\ }\href {\doibase 10.1088/0264-9381/23/4/006} {\bibfield
  {journal} {\bibinfo  {journal} {Class. Quantum Grav.}\ }\textbf {\bibinfo
  {volume} {23}},\ \bibinfo {pages} {1163} (\bibinfo {year}
  {2006})}\BibitemShut {NoStop}%
\bibitem [{\citenamefont {Bojowald}\ and\ \citenamefont
  {Tsobanjan}(2019)}]{Bojowald:2019mas}%
  \BibitemOpen
  \bibfield  {author} {\bibinfo {author} {\bibfnamefont {M.}~\bibnamefont
  {Bojowald}}\ and\ \bibinfo {author} {\bibfnamefont {A.}~\bibnamefont
  {Tsobanjan}},\ }\href@noop {} {\bibfield  {journal} {\bibinfo  {journal}
  {arXiv:1906.04792 [math-ph]}\ } (\bibinfo {year} {2019})}\BibitemShut
  {NoStop}%
\bibitem [{\citenamefont {Albrecht}\ and\ \citenamefont
  {Iglesias}(2008)}]{Albrecht:2007mm}%
  \BibitemOpen
  \bibfield  {author} {\bibinfo {author} {\bibfnamefont {A.}~\bibnamefont
  {Albrecht}}\ and\ \bibinfo {author} {\bibfnamefont {A.}~\bibnamefont
  {Iglesias}},\ }\href {\doibase 10.1103/PhysRevD.77.063506} {\bibfield
  {journal} {\bibinfo  {journal} {Phys. Rev. D}\ }\textbf {\bibinfo {volume}
  {77}},\ \bibinfo {pages} {063506} (\bibinfo {year} {2008})}\BibitemShut
  {NoStop}%
\bibitem [{\citenamefont {Albrecht}\ and\ \citenamefont
  {Iglesias}(2012)}]{Albrecht:2012wsd}%
  \BibitemOpen
  \bibfield  {author} {\bibinfo {author} {\bibfnamefont {A.}~\bibnamefont
  {Albrecht}}\ and\ \bibinfo {author} {\bibfnamefont {A.}~\bibnamefont
  {Iglesias}},\ }\enquote {\bibinfo {title} {{The Clock Ambiguity: Implications
  and New Developments}},}\ in\ \href {\doibase 10.1007/978-3-642-23259-6\_4}
  {\emph {\bibinfo {booktitle} {Origin of Time's Arrow}}},\ Vol.\ \bibinfo
  {volume} {172}\ (\bibinfo {year} {2012})\ pp.\ \bibinfo {pages}
  {53--68}\BibitemShut {NoStop}%
\bibitem [{\citenamefont {Corbin}\ and\ \citenamefont
  {Cornish}(2009)}]{corbinSemiClassicalLimitMinimum2009}%
  \BibitemOpen
  \bibfield  {author} {\bibinfo {author} {\bibfnamefont {V.}~\bibnamefont
  {Corbin}}\ and\ \bibinfo {author} {\bibfnamefont {N.~J.}\ \bibnamefont
  {Cornish}},\ }\href@noop {} {\bibfield  {journal} {\bibinfo  {journal}
  {Found. Phys.}\ }\textbf {\bibinfo {volume} {39}},\ \bibinfo {pages} {474}
  (\bibinfo {year} {2009})}\BibitemShut {NoStop}%
\bibitem [{\citenamefont {Bojowald}\ and\ \citenamefont
  {Halnon}(2018)}]{Bojowald:2016fac}%
  \BibitemOpen
  \bibfield  {author} {\bibinfo {author} {\bibfnamefont {M.}~\bibnamefont
  {Bojowald}}\ and\ \bibinfo {author} {\bibfnamefont {T.}~\bibnamefont
  {Halnon}},\ }\href {\doibase 10.1103/PhysRevD.98.066001} {\bibfield
  {journal} {\bibinfo  {journal} {Phys. Rev. D}\ }\textbf {\bibinfo {volume}
  {98}},\ \bibinfo {pages} {066001} (\bibinfo {year} {2018})}\BibitemShut
  {NoStop}%
\bibitem [{\citenamefont {Giacomini}\ \emph
  {et~al.}(2019{\natexlab{a}})\citenamefont {Giacomini}, \citenamefont
  {{Castro-Ruiz}},\ and\ \citenamefont
  {Brukner}}]{giacominiQuantumMechanicsCovariance2019}%
  \BibitemOpen
  \bibfield  {author} {\bibinfo {author} {\bibfnamefont {F.}~\bibnamefont
  {Giacomini}}, \bibinfo {author} {\bibfnamefont {E.}~\bibnamefont
  {{Castro-Ruiz}}}, \ and\ \bibinfo {author} {\bibfnamefont {{\v
  C}.}~\bibnamefont {Brukner}},\ }\href {\doibase 10.1038/s41467-018-08155-0}
  {\bibfield  {journal} {\bibinfo  {journal} {Nat. Commun.}\ }\textbf {\bibinfo
  {volume} {10}},\ \bibinfo {pages} {494} (\bibinfo {year}
  {2019}{\natexlab{a}})}\BibitemShut {NoStop}%
\bibitem [{\citenamefont {Vanrietvelde}\ \emph {et~al.}(2020)\citenamefont
  {Vanrietvelde}, \citenamefont {H\"ohn}, \citenamefont {Giacomini},\ and\
  \citenamefont {Castro-Ruiz}}]{Vanrietvelde:2018pgb}%
  \BibitemOpen
  \bibfield  {author} {\bibinfo {author} {\bibfnamefont {A.}~\bibnamefont
  {Vanrietvelde}}, \bibinfo {author} {\bibfnamefont {P.~A.}\ \bibnamefont
  {H\"ohn}}, \bibinfo {author} {\bibfnamefont {F.}~\bibnamefont {Giacomini}}, \
  and\ \bibinfo {author} {\bibfnamefont {E.}~\bibnamefont {Castro-Ruiz}},\
  }\href {\doibase 10.22331/q-2020-01-27-225} {\bibfield  {journal} {\bibinfo
  {journal} {Quantum}\ }\textbf {\bibinfo {volume} {4}},\ \bibinfo {pages}
  {225} (\bibinfo {year} {2020})}\BibitemShut {NoStop}%
\bibitem [{\citenamefont {Vanrietvelde}\ \emph {et~al.}(2018)\citenamefont
  {Vanrietvelde}, \citenamefont {H\"ohn},\ and\ \citenamefont
  {Giacomini}}]{Vanrietvelde:2018dit}%
  \BibitemOpen
  \bibfield  {author} {\bibinfo {author} {\bibfnamefont {A.}~\bibnamefont
  {Vanrietvelde}}, \bibinfo {author} {\bibfnamefont {P.~A.}\ \bibnamefont
  {H\"ohn}}, \ and\ \bibinfo {author} {\bibfnamefont {F.}~\bibnamefont
  {Giacomini}},\ }\href@noop {} {\bibfield  {journal} {\bibinfo  {journal}
  {arXiv:1809.05093 [quant-ph]}\ } (\bibinfo {year} {2018})}\BibitemShut
  {NoStop}%
\bibitem [{\citenamefont {Giacomini}\ \emph
  {et~al.}(2019{\natexlab{b}})\citenamefont {Giacomini}, \citenamefont
  {{Castro-Ruiz}},\ and\ \citenamefont
  {Brukner}}]{giacominiRelativisticQuantumReference2019}%
  \BibitemOpen
  \bibfield  {author} {\bibinfo {author} {\bibfnamefont {F.}~\bibnamefont
  {Giacomini}}, \bibinfo {author} {\bibfnamefont {E.}~\bibnamefont
  {{Castro-Ruiz}}}, \ and\ \bibinfo {author} {\bibfnamefont {{\v
  C}.}~\bibnamefont {Brukner}},\ }\href {\doibase
  10.1103/PhysRevLett.123.090404} {\bibfield  {journal} {\bibinfo  {journal}
  {Phys. Rev. Lett.}\ }\textbf {\bibinfo {volume} {123}},\ \bibinfo {pages}
  {090404} (\bibinfo {year} {2019}{\natexlab{b}})}\BibitemShut {NoStop}%
\bibitem [{\citenamefont {de~la Hamette}\ and\ \citenamefont
  {Galley}(2020)}]{hamette2020quantum}%
  \BibitemOpen
  \bibfield  {author} {\bibinfo {author} {\bibfnamefont {A.-C.}\ \bibnamefont
  {de~la Hamette}}\ and\ \bibinfo {author} {\bibfnamefont {T.~D.}\ \bibnamefont
  {Galley}},\ }\href@noop {} {\bibfield  {journal} {\bibinfo  {journal}
  {arXiv:2004.14292 [quant-ph]}\ } (\bibinfo {year} {2020})}\BibitemShut
  {NoStop}%
\bibitem [{\citenamefont {Loveridge}\ \emph {et~al.}(2018)\citenamefont
  {Loveridge}, \citenamefont {Miyadera},\ and\ \citenamefont
  {Busch}}]{loveridgeSymmetryReferenceFrames2018a}%
  \BibitemOpen
  \bibfield  {author} {\bibinfo {author} {\bibfnamefont {L.}~\bibnamefont
  {Loveridge}}, \bibinfo {author} {\bibfnamefont {T.}~\bibnamefont {Miyadera}},
  \ and\ \bibinfo {author} {\bibfnamefont {P.}~\bibnamefont {Busch}},\ }\href
  {\doibase 10.1007/s10701-018-0138-3} {\bibfield  {journal} {\bibinfo
  {journal} {Found. Phys.}\ }\textbf {\bibinfo {volume} {48}},\ \bibinfo
  {pages} {135} (\bibinfo {year} {2018})}\BibitemShut {NoStop}%
\bibitem [{\citenamefont {Yang}(2020)}]{yang2020switching}%
  \BibitemOpen
  \bibfield  {author} {\bibinfo {author} {\bibfnamefont {J.~M.}\ \bibnamefont
  {Yang}},\ }\href@noop {} {\bibfield  {journal} {\bibinfo  {journal}
  {Quantum}\ }\textbf {\bibinfo {volume} {4}},\ \bibinfo {pages} {283}
  (\bibinfo {year} {2020})}\BibitemShut {NoStop}%
\bibitem [{\citenamefont {Gielen}\ and\ \citenamefont
  {Men\'endez-Pidal}(2020)}]{Gielen:2020abd}%
  \BibitemOpen
  \bibfield  {author} {\bibinfo {author} {\bibfnamefont {S.}~\bibnamefont
  {Gielen}}\ and\ \bibinfo {author} {\bibfnamefont {L.}~\bibnamefont
  {Men\'endez-Pidal}},\ }\href@noop {} {\bibfield  {journal} {\bibinfo
  {journal} {arXiv:2005.05357 [gr-qc]}\ } (\bibinfo {year} {2020})}\BibitemShut
  {NoStop}%
\bibitem [{\citenamefont {Savi}\ and\ \citenamefont
  {Angelo}(2020)}]{savi2020quantum}%
  \BibitemOpen
  \bibfield  {author} {\bibinfo {author} {\bibfnamefont {M.~F.}\ \bibnamefont
  {Savi}}\ and\ \bibinfo {author} {\bibfnamefont {R.~M.}\ \bibnamefont
  {Angelo}},\ }\href@noop {} {\bibfield  {journal} {\bibinfo  {journal}
  {arXiv:2005.09612 [quant-ph]}\ } (\bibinfo {year} {2020})}\BibitemShut
  {NoStop}%
\bibitem [{\citenamefont {Le}\ \emph {et~al.}(2020)\citenamefont {Le},
  \citenamefont {Mironowicz},\ and\ \citenamefont {Horodecki}}]{le2020blurred}%
  \BibitemOpen
  \bibfield  {author} {\bibinfo {author} {\bibfnamefont {T.~P.}\ \bibnamefont
  {Le}}, \bibinfo {author} {\bibfnamefont {P.}~\bibnamefont {Mironowicz}}, \
  and\ \bibinfo {author} {\bibfnamefont {P.}~\bibnamefont {Horodecki}},\
  }\href@noop {} {\bibfield  {journal} {\bibinfo  {journal} {arXiv:2006.06364
  [quant-ph]}\ } (\bibinfo {year} {2020})}\BibitemShut {NoStop}%
\bibitem [{\citenamefont {Tuziemski}(2020)}]{tuziemski2020decoherence}%
  \BibitemOpen
  \bibfield  {author} {\bibinfo {author} {\bibfnamefont {J.}~\bibnamefont
  {Tuziemski}},\ }\href@noop {} {\bibfield  {journal} {\bibinfo  {journal}
  {arXiv:2006.07298 [quant-ph]}\ } (\bibinfo {year} {2020})}\BibitemShut
  {NoStop}%
\bibitem [{\citenamefont {Hardy}(2018)}]{Hardy:2018kbp}%
  \BibitemOpen
  \bibfield  {author} {\bibinfo {author} {\bibfnamefont {L.}~\bibnamefont
  {Hardy}},\ }\href@noop {} {\bibfield  {journal} {\bibinfo  {journal}
  {arXiv:1807.10980 [quan-ph]}\ } (\bibinfo {year} {2018})}\BibitemShut
  {NoStop}%
\bibitem [{\citenamefont {Hardy}(2020)}]{Hardy:2019cef}%
  \BibitemOpen
  \bibfield  {author} {\bibinfo {author} {\bibfnamefont {L.}~\bibnamefont
  {Hardy}},\ }in\ \href@noop {} {\emph {\bibinfo {booktitle} {{Progress and
  Visions in Quantum Theory in View of Gravity}: {Bridging foundations of
  physics and mathematics}}}},\ \bibinfo {editor} {edited by\ \bibinfo {editor}
  {\bibfnamefont {J.~K. J.~T.}\ \bibnamefont {F.~Finster}, \bibfnamefont
  {D.~Giulini}}}\ (\bibinfo  {publisher} {{Birkh\"{a}user Basel}},\ \bibinfo
  {year} {2020})\ pp.\ \bibinfo {pages} {189--220}\BibitemShut {NoStop}%
\bibitem [{\citenamefont {{Gu\'erin, Philippe Allard and Brukner,
  \v{C}aslav}}(2018)}]{Guerin:2018fja}%
  \BibitemOpen
  \bibfield  {author} {\bibinfo {author} {\bibnamefont {{Gu\'erin, Philippe
  Allard and Brukner, \v{C}aslav}}},\ }\href {\doibase
  10.1088/1367-2630/aae742} {\bibfield  {journal} {\bibinfo  {journal} {New J.
  Phys.}\ }\textbf {\bibinfo {volume} {20}},\ \bibinfo {pages} {103031}
  (\bibinfo {year} {2018})}\BibitemShut {NoStop}%
\bibitem [{\citenamefont {Zych}\ \emph {et~al.}(2018)\citenamefont {Zych},
  \citenamefont {Costa},\ and\ \citenamefont {Ralph}}]{Zych:2018nao}%
  \BibitemOpen
  \bibfield  {author} {\bibinfo {author} {\bibfnamefont {M.}~\bibnamefont
  {Zych}}, \bibinfo {author} {\bibfnamefont {F.}~\bibnamefont {Costa}}, \ and\
  \bibinfo {author} {\bibfnamefont {T.~C.}\ \bibnamefont {Ralph}},\ }\href@noop
  {} {\bibfield  {journal} {\bibinfo  {journal} {{arXiv:1809.04999
  [quant-ph]}}\ } (\bibinfo {year} {2018})}\BibitemShut {NoStop}%
\bibitem [{\citenamefont {{Barbado, Luis C. and Castro-Ruiz, Esteban and
  Apadula, Luca and Brukner, \v{C}aslav}}(2020)}]{Barbado:2020snx}%
  \BibitemOpen
  \bibfield  {author} {\bibinfo {author} {\bibnamefont {{Barbado, Luis C. and
  Castro-Ruiz, Esteban and Apadula, Luca and Brukner, \v{C}aslav}}},\
  }\href@noop {} {\bibfield  {journal} {\bibinfo  {journal} {arXiv:2003.12603
  [quant-ph]}\ } (\bibinfo {year} {2020})}\BibitemShut {NoStop}%
\bibitem [{\citenamefont
  {Holevo}(1982)}]{holevoProbabilisticStatisticalAspects1982}%
  \BibitemOpen
  \bibfield  {author} {\bibinfo {author} {\bibfnamefont {A.~S.}\ \bibnamefont
  {Holevo}},\ }\href@noop {} {\emph {\bibinfo {title} {Probabilistic and
  {{Statistical Aspects}} of {{Quantum Theory}}}}},\ \bibinfo {series}
  {Statistics and {{Probability}}}, Vol.~\bibinfo {volume} {1}\ (\bibinfo
  {publisher} {{North-Holland}},\ \bibinfo {address} {{Amsterdam}},\ \bibinfo
  {year} {1982})\BibitemShut {NoStop}%
\bibitem [{\citenamefont {Busch}\ \emph {et~al.}(1995)\citenamefont {Busch},
  \citenamefont {Grabowski},\ and\ \citenamefont
  {Lahti}}]{buschOperationalQuantumPhysics}%
  \BibitemOpen
  \bibfield  {author} {\bibinfo {author} {\bibfnamefont {P.}~\bibnamefont
  {Busch}}, \bibinfo {author} {\bibfnamefont {M.}~\bibnamefont {Grabowski}}, \
  and\ \bibinfo {author} {\bibfnamefont {P.~J.}\ \bibnamefont {Lahti}},\ }\href
  {\doibase 10.1007/978-3-540-49239-9} {\emph {\bibinfo {title} {Operational
  {{Quantum Physics}}}}},\ \bibinfo {series} {Lecture {{Notes}} in {{Physics
  Monographs}}}, Vol.~\bibinfo {volume} {31}\ (\bibinfo  {publisher}
  {Springer-Verlag},\ \bibinfo {address} {Berlin, Heidelberg},\ \bibinfo {year}
  {1995})\BibitemShut {NoStop}%
\bibitem [{\citenamefont {Busch}\ \emph {et~al.}(1994)\citenamefont {Busch},
  \citenamefont {Grabowski},\ and\ \citenamefont {Lahti}}]{busch1994time}%
  \BibitemOpen
  \bibfield  {author} {\bibinfo {author} {\bibfnamefont {P.}~\bibnamefont
  {Busch}}, \bibinfo {author} {\bibfnamefont {M.}~\bibnamefont {Grabowski}}, \
  and\ \bibinfo {author} {\bibfnamefont {P.~J.}\ \bibnamefont {Lahti}},\
  }\href@noop {} {\bibfield  {journal} {\bibinfo  {journal} {Physics Letters
  A}\ }\textbf {\bibinfo {volume} {191}},\ \bibinfo {pages} {357} (\bibinfo
  {year} {1994})}\BibitemShut {NoStop}%
\bibitem [{\citenamefont {Braunstein}\ \emph {et~al.}(1996)\citenamefont
  {Braunstein}, \citenamefont {Caves},\ and\ \citenamefont
  {Milburn}}]{braunsteinGeneralizedUncertaintyRelations1996}%
  \BibitemOpen
  \bibfield  {author} {\bibinfo {author} {\bibfnamefont {S.~L.}\ \bibnamefont
  {Braunstein}}, \bibinfo {author} {\bibfnamefont {C.~M.}\ \bibnamefont
  {Caves}}, \ and\ \bibinfo {author} {\bibfnamefont {G.~J.}\ \bibnamefont
  {Milburn}},\ }\href {\doibase 10.1006/aphy.1996.0040} {\bibfield  {journal}
  {\bibinfo  {journal} {Ann. Phys.}\ }\textbf {\bibinfo {volume} {247}},\
  \bibinfo {pages} {135} (\bibinfo {year} {1996})}\BibitemShut {NoStop}%
\bibitem [{\citenamefont {Haag}(2012)}]{haag2012local}%
  \BibitemOpen
  \bibfield  {author} {\bibinfo {author} {\bibfnamefont {R.}~\bibnamefont
  {Haag}},\ }\href@noop {} {\emph {\bibinfo {title} {Local quantum physics:
  Fields, particles, algebras}}}\ (\bibinfo  {publisher} {Springer-Verlag
  Berlin Heidelberg},\ \bibinfo {year} {2012})\BibitemShut {NoStop}%
\bibitem [{\citenamefont {Newton}\ and\ \citenamefont
  {Wigner}(1949)}]{Newton:1949cq}%
  \BibitemOpen
  \bibfield  {author} {\bibinfo {author} {\bibfnamefont {T.~D.}\ \bibnamefont
  {Newton}}\ and\ \bibinfo {author} {\bibfnamefont {E.~P.}\ \bibnamefont
  {Wigner}},\ }\href {\doibase 10.1103/RevModPhys.21.400} {\bibfield  {journal}
  {\bibinfo  {journal} {Rev. Mod. Phys.}\ }\textbf {\bibinfo {volume} {21}},\
  \bibinfo {pages} {400} (\bibinfo {year} {1949})}\BibitemShut {NoStop}%
\bibitem [{\citenamefont {Fernando~Perez}\ and\ \citenamefont
  {Wilde}(1977)}]{FernandoPerez:1976ib}%
  \BibitemOpen
  \bibfield  {author} {\bibinfo {author} {\bibfnamefont {J.}~\bibnamefont
  {Fernando~Perez}}\ and\ \bibinfo {author} {\bibfnamefont {I.~F.}\
  \bibnamefont {Wilde}},\ }\href {\doibase 10.1103/PhysRevD.16.315} {\bibfield
  {journal} {\bibinfo  {journal} {Phys. Rev.}\ }\textbf {\bibinfo {volume}
  {D16}},\ \bibinfo {pages} {315} (\bibinfo {year} {1977})}\BibitemShut
  {NoStop}%
\bibitem [{\citenamefont {Malament}(1996)}]{Malament1996}%
  \BibitemOpen
  \bibfield  {author} {\bibinfo {author} {\bibfnamefont {D.~B.}\ \bibnamefont
  {Malament}},\ }\enquote {\bibinfo {title} {In defense of dogma: Why there
  cannot be a relativistic quantum mechanics of (localizable) particles},}\ in\
  \href {\doibase 10.1007/978-94-015-8656-6_1} \bibinfo{\emph{Perspectives on Quantum Reality:} \emph{Non-Relativistic, Relativistic, and
  Field-Theoretic}},\ \bibinfo {editor} {edited by\ \bibinfo {editor}
  {\bibfnamefont {R.}~\bibnamefont {Clifton}}}\ (\bibinfo  {publisher}
  {Springer Netherlands},\ \bibinfo {address} {Dordrecht},\ \bibinfo {year}
  {1996})\ pp.\ \bibinfo {pages} {1--10}\BibitemShut {NoStop}%
\bibitem [{\citenamefont {Yngvason}(2015)}]{Yngvason:2014oia}%
  \BibitemOpen
  \bibfield  {author} {\bibinfo {author} {\bibfnamefont {J.}~\bibnamefont
  {Yngvason}},\ }\href {\doibase 10.1007/978-3-662-46422-9_15} {\bibfield
  {journal} {\bibinfo  {journal} {Lect. Notes Phys.}\ }\textbf {\bibinfo
  {volume} {899}},\ \bibinfo {pages} {325} (\bibinfo {year}
  {2015})}\BibitemShut {NoStop}%
\bibitem [{\citenamefont {Papageorgiou}\ and\ \citenamefont
  {Pye}(2019)}]{Papageorgiou:2019ezr}%
  \BibitemOpen
  \bibfield  {author} {\bibinfo {author} {\bibfnamefont {M.}~\bibnamefont
  {Papageorgiou}}\ and\ \bibinfo {author} {\bibfnamefont {J.}~\bibnamefont
  {Pye}},\ }\href {\doibase 10.1088/1751-8121/ab3593} {\bibfield  {journal}
  {\bibinfo  {journal} {J. Phys. A}\ }\textbf {\bibinfo {volume} {52}},\
  \bibinfo {pages} {375304} (\bibinfo {year} {2019})}\BibitemShut {NoStop}%
\bibitem [{\citenamefont {Brunetti}\ \emph {et~al.}(2010)\citenamefont
  {Brunetti}, \citenamefont {Fredenhagen},\ and\ \citenamefont
  {Hoge}}]{Brunetti:2009eq}%
  \BibitemOpen
  \bibfield  {author} {\bibinfo {author} {\bibfnamefont {R.}~\bibnamefont
  {Brunetti}}, \bibinfo {author} {\bibfnamefont {K.}~\bibnamefont
  {Fredenhagen}}, \ and\ \bibinfo {author} {\bibfnamefont {M.}~\bibnamefont
  {Hoge}},\ }\href {\doibase 10.1007/s10701-009-9400-z} {\bibfield  {journal}
  {\bibinfo  {journal} {Found. Phys.}\ }\textbf {\bibinfo {volume} {40}},\
  \bibinfo {pages} {1368} (\bibinfo {year} {2010})}\BibitemShut {NoStop}%
\bibitem [{\citenamefont {Loveridge}\ and\ \citenamefont
  {Miyadera}(2019)}]{Loveridge:2019phw}%
  \BibitemOpen
  \bibfield  {author} {\bibinfo {author} {\bibfnamefont {L.}~\bibnamefont
  {Loveridge}}\ and\ \bibinfo {author} {\bibfnamefont {T.}~\bibnamefont
  {Miyadera}},\ }\href {\doibase 10.1007/s10701-019-00268-w} {\bibfield
  {journal} {\bibinfo  {journal} {Found. Phys.}\ }\textbf {\bibinfo {volume}
  {49}},\ \bibinfo {pages} {549} (\bibinfo {year} {2019})}\BibitemShut
  {NoStop}%
\bibitem [{\citenamefont {Dalla~Chiara}(1977)}]{dalla1977logical}%
  \BibitemOpen
  \bibfield  {author} {\bibinfo {author} {\bibfnamefont {M.~L.}\ \bibnamefont
  {Dalla~Chiara}},\ }\href@noop {} {\bibfield  {journal} {\bibinfo  {journal}
  {Journal of Philosophical Logic}\ }\textbf {\bibinfo {volume} {6}},\ \bibinfo
  {pages} {331} (\bibinfo {year} {1977})}\BibitemShut {NoStop}%
\bibitem [{\citenamefont {Breuer}(1995)}]{breuer1995impossibility}%
  \BibitemOpen
  \bibfield  {author} {\bibinfo {author} {\bibfnamefont {T.}~\bibnamefont
  {Breuer}},\ }\href@noop {} {\bibfield  {journal} {\bibinfo  {journal}
  {Philosophy of Science}\ ,\ \bibinfo {pages} {197}} (\bibinfo {year}
  {1995})}\BibitemShut {NoStop}%
\bibitem [{\citenamefont {Dirac}(1964)}]{diracLecturesQuantumMechanics1964}%
  \BibitemOpen
  \bibfield  {author} {\bibinfo {author} {\bibfnamefont {P.~A.~M.}\
  \bibnamefont {Dirac}},\ }\href@noop {} {\emph {\bibinfo {title} {Lectures on
  {{Quantum Mechanics}}}}}\ (\bibinfo  {publisher} {{Belfer Graduate School of
  Sciencem Yeshiva University, New York}},\ \bibinfo {year} {1964})\BibitemShut
  {NoStop}%
\bibitem [{\citenamefont {Henneaux}\ and\ \citenamefont
  {Teitelboim}(1992)}]{Henneaux:1992ig}%
  \BibitemOpen
  \bibfield  {author} {\bibinfo {author} {\bibfnamefont {M.}~\bibnamefont
  {Henneaux}}\ and\ \bibinfo {author} {\bibfnamefont {C.}~\bibnamefont
  {Teitelboim}},\ }\href@noop {} {\emph {\bibinfo {title} {{Quantization of
  Gauge Systems}}}}\ (\bibinfo  {publisher} {Princeton University Press},\
  \bibinfo {address} {Princeton},\ \bibinfo {year} {1992})\BibitemShut
  {NoStop}%
\bibitem [{\citenamefont
  {Marolf}(1995{\natexlab{c}})}]{marolfRefinedAlgebraicQuantization1995}%
  \BibitemOpen
  \bibfield  {author} {\bibinfo {author} {\bibfnamefont {D.}~\bibnamefont
  {Marolf}},\ }\href@noop {} {\bibfield  {journal} {\bibinfo  {journal}
  {arXiv:9508015 [gr-qc]}\ } (\bibinfo {year}
  {1995}{\natexlab{c}})}\BibitemShut {NoStop}%
\bibitem [{\citenamefont {Hartle}\ and\ \citenamefont
  {Marolf}(1997)}]{Hartle:1997dc}%
  \BibitemOpen
  \bibfield  {author} {\bibinfo {author} {\bibfnamefont {J.~B.}\ \bibnamefont
  {Hartle}}\ and\ \bibinfo {author} {\bibfnamefont {D.}~\bibnamefont
  {Marolf}},\ }\href {\doibase 10.1103/PhysRevD.56.6247} {\bibfield  {journal}
  {\bibinfo  {journal} {Phys. Rev. D}\ }\textbf {\bibinfo {volume} {56}},\
  \bibinfo {pages} {6247} (\bibinfo {year} {1997})}\BibitemShut {NoStop}%
\bibitem [{\citenamefont {Giulini}\ and\ \citenamefont
  {Marolf}(1999{\natexlab{a}})}]{Giulini:1998rk}%
  \BibitemOpen
  \bibfield  {author} {\bibinfo {author} {\bibfnamefont {D.}~\bibnamefont
  {Giulini}}\ and\ \bibinfo {author} {\bibfnamefont {D.}~\bibnamefont
  {Marolf}},\ }\href {\doibase 10.1088/0264-9381/16/7/321} {\bibfield
  {journal} {\bibinfo  {journal} {Class. Quantum Grav.}\ }\textbf {\bibinfo
  {volume} {16}},\ \bibinfo {pages} {2479} (\bibinfo {year}
  {1999}{\natexlab{a}})}\BibitemShut {NoStop}%
\bibitem [{\citenamefont {Giulini}\ and\ \citenamefont
  {Marolf}(1999{\natexlab{b}})}]{Giulini:1998kf}%
  \BibitemOpen
  \bibfield  {author} {\bibinfo {author} {\bibfnamefont {D.}~\bibnamefont
  {Giulini}}\ and\ \bibinfo {author} {\bibfnamefont {D.}~\bibnamefont
  {Marolf}},\ }\href {\doibase 10.1088/0264-9381/16/7/322} {\bibfield
  {journal} {\bibinfo  {journal} {Class. Quantum Grav.}\ }\textbf {\bibinfo
  {volume} {16}},\ \bibinfo {pages} {2489} (\bibinfo {year}
  {1999}{\natexlab{b}})}\BibitemShut {NoStop}%
\bibitem [{\citenamefont {Marolf}(2000)}]{Marolf:2000iq}%
  \BibitemOpen
  \bibfield  {author} {\bibinfo {author} {\bibfnamefont {D.}~\bibnamefont
  {Marolf}},\ }\href@noop {} {\bibfield  {journal} {\bibinfo  {journal}
  {arXiv:0011112 [gr-qc]}\ } (\bibinfo {year} {2000})}\BibitemShut {NoStop}%
\bibitem [{\citenamefont {Blyth}\ and\ \citenamefont
  {Isham}(1975)}]{Blyth:1975is}%
  \BibitemOpen
  \bibfield  {author} {\bibinfo {author} {\bibfnamefont {W.}~\bibnamefont
  {Blyth}}\ and\ \bibinfo {author} {\bibfnamefont {C.}~\bibnamefont {Isham}},\
  }\href {\doibase 10.1103/PhysRevD.11.768} {\bibfield  {journal} {\bibinfo
  {journal} {Phys. Rev. D}\ }\textbf {\bibinfo {volume} {11}},\ \bibinfo
  {pages} {768} (\bibinfo {year} {1975})}\BibitemShut {NoStop}%
\bibitem [{\citenamefont {Hawking}(1984)}]{Hawking:1983hn}%
  \BibitemOpen
  \bibfield  {author} {\bibinfo {author} {\bibfnamefont {S.}~\bibnamefont
  {Hawking}},\ }in\ \href@noop {} {\emph {\bibinfo {booktitle} {{Relativity,
  Groups and Topology II, Les Houches Summer School, 1983}}}},\ \bibinfo
  {editor} {edited by\ \bibinfo {editor} {\bibfnamefont {B.}~\bibnamefont
  {DeWitt}}\ and\ \bibinfo {editor} {\bibfnamefont {R.}~\bibnamefont {Stora}}}\
  (\bibinfo  {publisher} {North Holland, Amsterdam},\ \bibinfo {year} {1984})\
  p.\ \bibinfo {pages} {333}\BibitemShut {NoStop}%
\bibitem [{\citenamefont {H{\'a}j{\'\i}\v{c}ek}(1986)}]{Hajicek:1986ky}%
  \BibitemOpen
  \bibfield  {author} {\bibinfo {author} {\bibfnamefont {P.}~\bibnamefont
  {H{\'a}j{\'\i}\v{c}ek}},\ }\href {\doibase 10.1103/PhysRevD.34.1040}
  {\bibfield  {journal} {\bibinfo  {journal} {Phys. Rev. D}\ }\textbf {\bibinfo
  {volume} {34}},\ \bibinfo {pages} {1040} (\bibinfo {year}
  {1986})}\BibitemShut {NoStop}%
\bibitem [{\citenamefont {Kiefer}(1988)}]{Kiefer:1988tr}%
  \BibitemOpen
  \bibfield  {author} {\bibinfo {author} {\bibfnamefont {C.}~\bibnamefont
  {Kiefer}},\ }\href {\doibase 10.1103/PhysRevD.38.1761} {\bibfield  {journal}
  {\bibinfo  {journal} {Phys. Rev. D}\ }\textbf {\bibinfo {volume} {38}},\
  \bibinfo {pages} {1761} (\bibinfo {year} {1988})}\BibitemShut {NoStop}%
\bibitem [{\citenamefont {Ashtekar}\ and\ \citenamefont
  {Singh}(2011)}]{Ashtekar:2011ni}%
  \BibitemOpen
  \bibfield  {author} {\bibinfo {author} {\bibfnamefont {A.}~\bibnamefont
  {Ashtekar}}\ and\ \bibinfo {author} {\bibfnamefont {P.}~\bibnamefont
  {Singh}},\ }\href {\doibase 10.1088/0264-9381/28/21/213001} {\bibfield
  {journal} {\bibinfo  {journal} {Class. Quantum Grav.}\ }\textbf {\bibinfo
  {volume} {28}},\ \bibinfo {pages} {213001} (\bibinfo {year}
  {2011})}\BibitemShut {NoStop}%
\bibitem [{\citenamefont {Ashtekar}\ \emph {et~al.}(2008)\citenamefont
  {Ashtekar}, \citenamefont {Corichi},\ and\ \citenamefont
  {Singh}}]{Ashtekar:2007em}%
  \BibitemOpen
  \bibfield  {author} {\bibinfo {author} {\bibfnamefont {A.}~\bibnamefont
  {Ashtekar}}, \bibinfo {author} {\bibfnamefont {A.}~\bibnamefont {Corichi}}, \
  and\ \bibinfo {author} {\bibfnamefont {P.}~\bibnamefont {Singh}},\ }\href
  {\doibase 10.1103/PhysRevD.77.024046} {\bibfield  {journal} {\bibinfo
  {journal} {Phys. Rev. D}\ }\textbf {\bibinfo {volume} {77}},\ \bibinfo
  {pages} {024046} (\bibinfo {year} {2008})}\BibitemShut {NoStop}%
\bibitem [{\citenamefont {Bojowald}(2010)}]{bojobuch}%
  \BibitemOpen
  \bibfield  {author} {\bibinfo {author} {\bibfnamefont {M.}~\bibnamefont
  {Bojowald}},\ }\href@noop {} {\emph {\bibinfo {title} {{Canonical Gravity and
  Applications: Cosmology, Black Holes and Quantum Gravity}}}}\ (\bibinfo
  {publisher} {Cambridge University Press},\ \bibinfo {address} {Cambridge},\
  \bibinfo {year} {2010})\BibitemShut {NoStop}%
\bibitem [{\citenamefont {Ashtekar}\ \emph {et~al.}(1993)\citenamefont
  {Ashtekar}, \citenamefont {Tate},\ and\ \citenamefont
  {Uggla}}]{Ashtekar:1993wb}%
  \BibitemOpen
  \bibfield  {author} {\bibinfo {author} {\bibfnamefont {A.}~\bibnamefont
  {Ashtekar}}, \bibinfo {author} {\bibfnamefont {R.}~\bibnamefont {Tate}}, \
  and\ \bibinfo {author} {\bibfnamefont {C.}~\bibnamefont {Uggla}},\ }\href
  {\doibase 10.1142/S0218271893000039} {\bibfield  {journal} {\bibinfo
  {journal} {Int. J. Mod. Phys. D}\ }\textbf {\bibinfo {volume} {2}},\ \bibinfo
  {pages} {15} (\bibinfo {year} {1993})}\BibitemShut {NoStop}%
\bibitem [{\citenamefont {Grot}\ \emph {et~al.}(1996)\citenamefont {Grot},
  \citenamefont {Rovelli},\ and\ \citenamefont {Tate}}]{Grot:1996xu}%
  \BibitemOpen
  \bibfield  {author} {\bibinfo {author} {\bibfnamefont {N.}~\bibnamefont
  {Grot}}, \bibinfo {author} {\bibfnamefont {C.}~\bibnamefont {Rovelli}}, \
  and\ \bibinfo {author} {\bibfnamefont {R.~S.}\ \bibnamefont {Tate}},\ }\href
  {\doibase 10.1103/PhysRevA.54.4676, 10.1103/PhysRevA.54.4679} {\bibfield
  {journal} {\bibinfo  {journal} {Phys. Rev. A}\ }\textbf {\bibinfo {volume}
  {54}},\ \bibinfo {pages} {4676} (\bibinfo {year} {1996})}\BibitemShut
  {NoStop}%
\bibitem [{\citenamefont {Aharonov}\ \emph {et~al.}(1998)\citenamefont
  {Aharonov}, \citenamefont {Oppenheim}, \citenamefont {Popescu}, \citenamefont
  {Reznik},\ and\ \citenamefont {Unruh}}]{aharonov1998measurement}%
  \BibitemOpen
  \bibfield  {author} {\bibinfo {author} {\bibfnamefont {Y.}~\bibnamefont
  {Aharonov}}, \bibinfo {author} {\bibfnamefont {J.}~\bibnamefont {Oppenheim}},
  \bibinfo {author} {\bibfnamefont {S.}~\bibnamefont {Popescu}}, \bibinfo
  {author} {\bibfnamefont {B.}~\bibnamefont {Reznik}}, \ and\ \bibinfo {author}
  {\bibfnamefont {W.}~\bibnamefont {Unruh}},\ }\href@noop {} {\bibfield
  {journal} {\bibinfo  {journal} {Phys. Rev. A}\ }\textbf {\bibinfo {volume}
  {57}},\ \bibinfo {pages} {4130} (\bibinfo {year} {1998})}\BibitemShut
  {NoStop}%
\bibitem [{\citenamefont {Muga}\ and\ \citenamefont
  {Leavens}(2000)}]{muga2000arrival}%
  \BibitemOpen
  \bibfield  {author} {\bibinfo {author} {\bibfnamefont {J.~G.}\ \bibnamefont
  {Muga}}\ and\ \bibinfo {author} {\bibfnamefont {C.~R.}\ \bibnamefont
  {Leavens}},\ }\href@noop {} {\bibfield  {journal} {\bibinfo  {journal} {Phys.
  Rep.}\ }\textbf {\bibinfo {volume} {338}},\ \bibinfo {pages} {353} (\bibinfo
  {year} {2000})}\BibitemShut {NoStop}%
\bibitem [{\citenamefont {Aharonov}\ and\ \citenamefont
  {Bohm}(1961)}]{aharonov1961time}%
  \BibitemOpen
  \bibfield  {author} {\bibinfo {author} {\bibfnamefont {Y.}~\bibnamefont
  {Aharonov}}\ and\ \bibinfo {author} {\bibfnamefont {D.}~\bibnamefont
  {Bohm}},\ }\href@noop {} {\bibfield  {journal} {\bibinfo  {journal} {Phys.
  Rev.}\ }\textbf {\bibinfo {volume} {122}},\ \bibinfo {pages} {1649} (\bibinfo
  {year} {1961})}\BibitemShut {NoStop}%
\bibitem [{\citenamefont {Peres}(1980)}]{peres1980measurement}%
  \BibitemOpen
  \bibfield  {author} {\bibinfo {author} {\bibfnamefont {A.}~\bibnamefont
  {Peres}},\ }\href@noop {} {\bibfield  {journal} {\bibinfo  {journal} {Am. J.
  Phys.}\ }\textbf {\bibinfo {volume} {48}},\ \bibinfo {pages} {552} (\bibinfo
  {year} {1980})}\BibitemShut {NoStop}%
\bibitem [{\citenamefont {Pauli}(1958)}]{pauli1958allgemeinen}%
  \BibitemOpen
  \bibfield  {author} {\bibinfo {author} {\bibfnamefont {W.}~\bibnamefont
  {Pauli}},\ }\href@noop {} {\bibfield  {journal} {\bibinfo  {journal}
  {Handbuch der Physik}\ }\textbf {\bibinfo {volume} {5}},\ \bibinfo {pages}
  {1} (\bibinfo {year} {1958})}\BibitemShut {NoStop}%
\bibitem [{\citenamefont {Ashtekar}\ and\ \citenamefont
  {Horowitz}(1982)}]{Ashtekar:1982wv}%
  \BibitemOpen
  \bibfield  {author} {\bibinfo {author} {\bibfnamefont {A.}~\bibnamefont
  {Ashtekar}}\ and\ \bibinfo {author} {\bibfnamefont {G.~t.}\ \bibnamefont
  {Horowitz}},\ }\href {\doibase 10.1103/PhysRevD.26.3342} {\bibfield
  {journal} {\bibinfo  {journal} {Phys. Rev. D}\ }\textbf {\bibinfo {volume}
  {26}},\ \bibinfo {pages} {3342} (\bibinfo {year} {1982})}\BibitemShut
  {NoStop}%
\bibitem [{\citenamefont {Kucha\v{r}}(1986)}]{Kuchar:1986jj}%
  \BibitemOpen
  \bibfield  {author} {\bibinfo {author} {\bibfnamefont {K.~V.}\ \bibnamefont
  {Kucha\v{r}}},\ }\href {\doibase 10.1103/PhysRevD.34.3044} {\bibfield
  {journal} {\bibinfo  {journal} {Phys. Rev. D}\ }\textbf {\bibinfo {volume}
  {34}},\ \bibinfo {pages} {3044} (\bibinfo {year} {1986})}\BibitemShut
  {NoStop}%
\bibitem [{\citenamefont {Schleich}(1990)}]{Schleich:1990gd}%
  \BibitemOpen
  \bibfield  {author} {\bibinfo {author} {\bibfnamefont {K.}~\bibnamefont
  {Schleich}},\ }\href {\doibase 10.1088/0264-9381/7/8/028} {\bibfield
  {journal} {\bibinfo  {journal} {Class. Quantum Grav.}\ }\textbf {\bibinfo
  {volume} {7}},\ \bibinfo {pages} {1529} (\bibinfo {year} {1990})}\BibitemShut
  {NoStop}%
\bibitem [{\citenamefont {Romano}\ and\ \citenamefont
  {Tate}(1989)}]{Romano:1989zb}%
  \BibitemOpen
  \bibfield  {author} {\bibinfo {author} {\bibfnamefont {J.~D.}\ \bibnamefont
  {Romano}}\ and\ \bibinfo {author} {\bibfnamefont {R.~S.}\ \bibnamefont
  {Tate}},\ }\href {\doibase 10.1088/0264-9381/6/10/017} {\bibfield  {journal}
  {\bibinfo  {journal} {Class. Quantum Grav.}\ }\textbf {\bibinfo {volume}
  {6}},\ \bibinfo {pages} {1487} (\bibinfo {year} {1989})}\BibitemShut
  {NoStop}%
\bibitem [{\citenamefont {Loll}(1990)}]{Loll:1990rx}%
  \BibitemOpen
  \bibfield  {author} {\bibinfo {author} {\bibfnamefont {R.}~\bibnamefont
  {Loll}},\ }\href {\doibase 10.1103/PhysRevD.41.3785} {\bibfield  {journal}
  {\bibinfo  {journal} {Phys. Rev. D}\ }\textbf {\bibinfo {volume} {41}},\
  \bibinfo {pages} {3785} (\bibinfo {year} {1990})}\BibitemShut {NoStop}%
\bibitem [{\citenamefont {Kunstatter}(1992)}]{Kunstatter:1991ds}%
  \BibitemOpen
  \bibfield  {author} {\bibinfo {author} {\bibfnamefont {G.}~\bibnamefont
  {Kunstatter}},\ }\href {\doibase 10.1088/0264-9381/9/6/005} {\bibfield
  {journal} {\bibinfo  {journal} {Class. Quantum Grav.}\ }\textbf {\bibinfo
  {volume} {9}},\ \bibinfo {pages} {1469} (\bibinfo {year} {1992})}\BibitemShut
  {NoStop}%
\bibitem [{\citenamefont {Bartlett}\ \emph {et~al.}(2007)\citenamefont
  {Bartlett}, \citenamefont {Rudolph},\ and\ \citenamefont
  {Spekkens}}]{Bartlett:2007zz}%
  \BibitemOpen
  \bibfield  {author} {\bibinfo {author} {\bibfnamefont {S.~D.}\ \bibnamefont
  {Bartlett}}, \bibinfo {author} {\bibfnamefont {T.}~\bibnamefont {Rudolph}}, \
  and\ \bibinfo {author} {\bibfnamefont {R.~W.}\ \bibnamefont {Spekkens}},\
  }\href {\doibase 10.1103/RevModPhys.79.555} {\bibfield  {journal} {\bibinfo
  {journal} {Rev. Mod. Phys.}\ }\textbf {\bibinfo {volume} {79}},\ \bibinfo
  {pages} {555} (\bibinfo {year} {2007})}\BibitemShut {NoStop}%
\bibitem [{\citenamefont
  {Smith}(2019)}]{smithCommunicatingSharedReference2019}%
  \BibitemOpen
  \bibfield  {author} {\bibinfo {author} {\bibfnamefont {A.~R.~H.}\
  \bibnamefont {Smith}},\ }\href {\doibase 10.1103/PhysRevA.99.052315}
  {\bibfield  {journal} {\bibinfo  {journal} {Phys. Rev. A}\ }\textbf {\bibinfo
  {volume} {99}},\ \bibinfo {pages} {052315} (\bibinfo {year}
  {2019})}\BibitemShut {NoStop}%
\bibitem [{\citenamefont {Smith}\ \emph {et~al.}(2016)\citenamefont {Smith},
  \citenamefont {Piani},\ and\ \citenamefont
  {Mann}}]{smithQuantumReferenceFrames2016}%
  \BibitemOpen
  \bibfield  {author} {\bibinfo {author} {\bibfnamefont {A.~R.~H.}\
  \bibnamefont {Smith}}, \bibinfo {author} {\bibfnamefont {M.}~\bibnamefont
  {Piani}}, \ and\ \bibinfo {author} {\bibfnamefont {R.~B.}\ \bibnamefont
  {Mann}},\ }\href {\doibase 10.1103/PhysRevA.94.012333} {\bibfield  {journal}
  {\bibinfo  {journal} {Phys. Rev. A}\ }\textbf {\bibinfo {volume} {94}},\
  \bibinfo {pages} {012333} (\bibinfo {year} {2016})}\BibitemShut {NoStop}%
\bibitem [{\citenamefont {Banerjee}\ \emph {et~al.}(2012)\citenamefont
  {Banerjee}, \citenamefont {Calcagni},\ and\ \citenamefont
  {Mart{\'\i}n-Benito}}]{Banerjee:2011qu}%
  \BibitemOpen
  \bibfield  {author} {\bibinfo {author} {\bibfnamefont {K.}~\bibnamefont
  {Banerjee}}, \bibinfo {author} {\bibfnamefont {G.}~\bibnamefont {Calcagni}},
  \ and\ \bibinfo {author} {\bibfnamefont {M.}~\bibnamefont
  {Mart{\'\i}n-Benito}},\ }\href@noop {} {\bibfield  {journal} {\bibinfo
  {journal} {SIGMA}\ }\textbf {\bibinfo {volume} {8}},\ \bibinfo {pages} {016}
  (\bibinfo {year} {2012})}\BibitemShut {NoStop}%
\bibitem [{\citenamefont {Page}(1989)}]{pageTimeInaccessibleObservable1989}%
  \BibitemOpen
  \bibfield  {author} {\bibinfo {author} {\bibfnamefont {D.~N.}\ \bibnamefont
  {Page}},\ }\href@noop {} {\bibfield  {journal} {\bibinfo  {journal}
  {NSF-ITP-89-18}\ } (\bibinfo {year} {1989})}\BibitemShut {NoStop}%
\bibitem [{\citenamefont {Page}(1994)}]{pageClockTimeEntropy1994}%
  \BibitemOpen
  \bibfield  {author} {\bibinfo {author} {\bibfnamefont {D.~N.}\ \bibnamefont
  {Page}},\ }in\ \href@noop {} {\emph {\bibinfo {booktitle} {Physical
  {{Origins}} of {{Time Asymmetry}}}}},\ \bibinfo {editor} {edited by\ \bibinfo
  {editor} {\bibfnamefont {J.~J.}\ \bibnamefont {Halliwell}}, \bibinfo {editor}
  {\bibfnamefont {J.}~\bibnamefont {{P{\'e}rez-Mercader}}}, \ and\ \bibinfo
  {editor} {\bibfnamefont {W.~H.}\ \bibnamefont {Zurek}}}\ (\bibinfo
  {publisher} {{Cambridge University Press}},\ \bibinfo {address}
  {{Cambridge}},\ \bibinfo {year} {1994})\BibitemShut {NoStop}%
\bibitem [{\citenamefont {Smith}(2020)}]{Smith:2020gdj}%
  \BibitemOpen
  \bibfield  {author} {\bibinfo {author} {\bibfnamefont {A.~R.}\ \bibnamefont
  {Smith}},\ }\href@noop {} {\bibfield  {journal} {\bibinfo  {journal}
  {arXiv:2004.10810 [quant-ph]}\ } (\bibinfo {year} {2020})}\BibitemShut
  {NoStop}%
\bibitem [{\citenamefont {Zych}\ \emph {et~al.}(2011)\citenamefont {Zych},
  \citenamefont {Costa}, \citenamefont {Pikovski},\ and\ \citenamefont
  {Brukner}}]{zych2011quantum}%
  \BibitemOpen
  \bibfield  {author} {\bibinfo {author} {\bibfnamefont {M.}~\bibnamefont
  {Zych}}, \bibinfo {author} {\bibfnamefont {F.}~\bibnamefont {Costa}},
  \bibinfo {author} {\bibfnamefont {I.}~\bibnamefont {Pikovski}}, \ and\
  \bibinfo {author} {\bibfnamefont {{\v{C}}.}~\bibnamefont {Brukner}},\
  }\href@noop {} {\bibfield  {journal} {\bibinfo  {journal} {Nat. Commun.}\
  }\textbf {\bibinfo {volume} {2}},\ \bibinfo {pages} {505} (\bibinfo {year}
  {2011})}\BibitemShut {NoStop}%
\bibitem [{\citenamefont {Khandelwal}\ \emph {et~al.}(2019)\citenamefont
  {Khandelwal}, \citenamefont {Lock},\ and\ \citenamefont
  {Woods}}]{Khandelwal:2019mem}%
  \BibitemOpen
  \bibfield  {author} {\bibinfo {author} {\bibfnamefont {S.}~\bibnamefont
  {Khandelwal}}, \bibinfo {author} {\bibfnamefont {M.~P.}\ \bibnamefont
  {Lock}}, \ and\ \bibinfo {author} {\bibfnamefont {M.~P.}\ \bibnamefont
  {Woods}},\ }\href@noop {} {\  (\bibinfo {year} {2019})},\ \Eprint
  {http://arxiv.org/abs/1904.02178} {arXiv:1904.02178 [quant-ph]} \BibitemShut
  {NoStop}%
\bibitem [{\citenamefont {Paige}\ \emph {et~al.}(2020)\citenamefont {Paige},
  \citenamefont {Plato},\ and\ \citenamefont {Kim}}]{Paige:2018lrd}%
  \BibitemOpen
  \bibfield  {author} {\bibinfo {author} {\bibfnamefont {A.}~\bibnamefont
  {Paige}}, \bibinfo {author} {\bibfnamefont {A.}~\bibnamefont {Plato}}, \ and\
  \bibinfo {author} {\bibfnamefont {M.}~\bibnamefont {Kim}},\ }\href {\doibase
  10.1103/PhysRevLett.124.160602} {\bibfield  {journal} {\bibinfo  {journal}
  {Phys. Rev. Lett.}\ }\textbf {\bibinfo {volume} {124}},\ \bibinfo {pages}
  {160602} (\bibinfo {year} {2020})}\BibitemShut {NoStop}%
\bibitem [{\citenamefont {{Grochowski, Piotr T. and Smith, Alexander R.H. and
  Dragan, Andrzej and Debski, Kacper}}(2020)}]{Grochowski:2020mru}%
  \BibitemOpen
  \bibfield  {author} {\bibinfo {author} {\bibnamefont {{Grochowski, Piotr T.
  and Smith, Alexander R.H. and Dragan, Andrzej and Debski, Kacper}}},\
  }\href@noop {} {\bibfield  {journal} {\bibinfo  {journal} {arXiv:2006.10084
  [quant-ph]}\ } (\bibinfo {year} {2020})}\BibitemShut {NoStop}%
\bibitem [{\citenamefont {Isham}(1984)}]{isham2}%
  \BibitemOpen
  \bibfield  {author} {\bibinfo {author} {\bibfnamefont {C.}~\bibnamefont
  {Isham}},\ }in\ \href@noop {} {\emph {\bibinfo {booktitle} {{Relativity,
  Groups and Topology II, Les Houches Summer School, 1983}}}},\ \bibinfo
  {editor} {edited by\ \bibinfo {editor} {\bibfnamefont {B.}~\bibnamefont
  {DeWitt}}\ and\ \bibinfo {editor} {\bibfnamefont {R.}~\bibnamefont {Stora}}}\
  (\bibinfo  {publisher} {North Holland},\ \bibinfo {address} {Amsterdam},\
  \bibinfo {year} {1984})\ pp.\ \bibinfo {pages} {1059--1290}\BibitemShut
  {NoStop}%
\end{thebibliography}%

\onecolumngrid

\pagebreak

\appendix

\section{Proofs of lemmas of Sec.~\ref{sec_covtime}}
\label{app_deg}

We first state the conditions defining the domain of $\hat T$ \cite{holevoProbabilisticStatisticalAspects1982,buschOperationalQuantumPhysics}:
\ba \label{domain}
\cd(\hat T)&=&\Bigg\{\psi(p_t)\,\bigg\vert\,\lim_{p_t\to0}\left[\f{\psi(p_t)}{\sqrt{|p_t|}}\right]=0\,,\,\int_\mathbb{R}\,\f{dp_t}{|p_t|}\,\Bigg|\f{d}{d p_t}\,\f{\psi(p_t)}{\sqrt{|p_t|}}\Bigg|^2<\infty\Bigg\}\, .
\ea
These conditions will feature in some of the following proofs.

\medskip
\noindent{\bf Lemma \ref{lem_degresid}.} 
\emph{The clock states $\ket{t,\sigma}$ defined in Eq.~(\ref{degclock}) integrate to projectors onto the positive/negative frequency sectors on $\ch_C$
\ba
\f{1}{2\pi}\,\int_{-\infty}^{\infty}\,dt\,\ket{t,\sigma}\bra{t,\sigma} = \theta(-\sigma\,\hat{p}_t)\,\nn
\ea
and hence form a resolution of the identity as follows:
\ba
\f{1}{2\pi}\,\sum_{\sigma=+,-}\,\int_{-\infty}^{\infty}\,dt\,\ket{t,\sigma}\bra{t,\sigma} = I_C\,.\nn
\ea
}

\begin{proof}
For the negative frequency sector, direct computation yields 
\ba
\int_{-\infty}^{\infty}\,dt\,\ket{t,-}\bra{t,-}
&=&2\pi\,\int_0^\infty\,dp_t\,dp_t'\sqrt{p_tp_t'}\,\delta\left(\f{p_t^2-p_t'^2}{2}\right)\,\ket{p_t}\bra{p_t'}\nn\\
&=&2\pi \,\int_0^\infty\,d\varepsilon\,dp_t'\f{\sqrt{p_t'}}{(2\varepsilon)^{1/4}}\,\delta\left(\varepsilon - \f{p_t'^2}{2}\right)\,\ket{\sqrt{2\varepsilon}}\bra{p_t'}\nn\\
&=&2\pi\,\int_{0}^\infty\,dp_t'\,\ket{p_t'}\bra{p_t'} = 2\pi\,\theta(\hat{p}_t)\,.\nn
\ea
In the third line we have made use of the variable transformation $\varepsilon = p_t^2/2$. Note that since the clock states do not  have support on  $\ket{p_t=0}$, the case $p_t'=0$ does not occur in the delta function.
The computation for the positive frequency sector is analogous.
\end{proof}

\noindent{\bf Lemma~\ref{lem_sameT}.}
\emph{$\cd(\hat T^{(1)}) = \cd(\hat T)$ and $\hat T = \hat T^{(1)}$.}

\begin{proof}
We first prove that $\cd(\hat T^{(1)}) = \cd(\hat T)$, and then that $\hat T\ket{\psi}=\hat T^{(1)}\ket{\psi}$, $\forall\,\ket{\psi}\in\cd(\hat T)$, which together imply the second statement in Lemma~\ref{lem_sameT}. We begin by finding $\cd(\hat T^{(1)}) \subset \ch_C$, i.e.\ the elements of $\ch_C$ whose norm remains finite after the action of $\hat T^{(1)}$. Using the variable transformation $\varep = p_t^2/2$ again, we can write $\hat T^{(1)}$ as
\ba
\hat T^{(1)} = \f{1}{2\pi}\int_\mathbb{R}dt\,t \int_0^\infty d \varep \,d \varep' ( 4 \varep \varep')^{-1/4} \,e^{-i\,t\,s\,(\varep - \varep')}\, \left( \ket{p_t=\sqrt{2\varep}}\bra{p_t=\sqrt{2\varep'}} + \ket{p_t=-\sqrt{2\varep}}\bra{p_t=-\sqrt{2\varep'}} \right) \label{sameTT1}
\ea 
and therefore $\hat T^{(1)} $ acts on an arbitrary state $\ket{\psi}=\int_{-\infty}^{\infty}dp_t\,\psi(p_t)\ket{p_t} $ as
\ba
\hat T^{(1)}\ket{\psi} = \int_0^\infty d \varep \sum_{\sigma} \varphi_\sigma (\varep) \ket{p_t=\sigma \sqrt{2\varep}} \label{T1psienergy}
\ea
with the definition
\ba
\varphi_\sigma (\varep) &:=& \f{1}{2\pi}\int_\mathbb{R}dt\,t \int_0^\infty \,d \varep' ( 4 \varep \varep')^{-1/4} \,e^{-i\,t\,s\,(\varep - \varep')} \, \psi\left(\sigma \sqrt{2 \varep'}\right) \nn \\
&=& -i\,s \int_0^\infty \,d \varep' ( 4 \varep \varep')^{-1/4} \, \delta ' (s(\varep' - \varep)) \, \psi\left(\sigma \sqrt{2 \varep'}\right) \nn \\
&=& -i\,s \left\lbrace \left[ ( 2 \varep)^{-\frac{1}{4}} \, \delta\left(\frac{p_t'^2}{2} - \varep \right) \, \frac{\psi(\sigma p_t')}{\sqrt{p_t'}} \right]_{p_t'=0}^{p_t'=\infty} + \left[ \frac{\psi\left(\sigma \sqrt{2 \varep}\right)}{2(2\varep)^{3/2}} - \sigma \frac{\psi '\left(\sigma \sqrt{2 \varep}\right)}{2\varep} \right] \right\rbrace, \label{sameTactionT1}
\ea
where a $'$ before a function's argument denotes the derivative of that function with respect to the argument, the third line is obtained by integrating by parts, and we have used the fact that $s=1/s$ as well as the scaling identity of the delta function $\delta(\alpha \varep)=\delta(\varep)/|\alpha|$ for real $\alpha\neq 0$. Recalling that $\psi(p_t)$ is square-integrable, we have
\ba
\lim_{p_t' \to \infty} \left[ ( 2 \varep)^{-\frac{1}{4}} \, \delta\left(\frac{p_t'^2}{2} - \varep \right) \, \frac{\psi(\sigma p_t')}{\sqrt{p_t'}} \right] = 0 .
\ea
Considering the $p_t'=0$ term in Eq.~(\ref{sameTactionT1}), we see that Eq.~(\ref{T1psienergy}) diverges unless
\ba
\lim_{p_t \to 0} \left[ \frac{\psi( p_t)}{\sqrt{ |p_t|}} \right] = 0 ,
\ea
which is the first of the two conditions defining $ \cd(\hat T)$ (see Eq.~(\ref{domain})). Assuming this to be satisfied, Eq~(\ref{T1psienergy}) gives
\ba
\hat T^{(1)}\ket{\psi} &=& i\,s \int_0^\infty d \varep \sum_{\sigma}  \left[ \sigma \frac{\psi '\left(\sigma \sqrt{2 \varep}\right)}{2\varep} -\frac{\psi\left(\sigma \sqrt{2 \varep}\right)}{2(2\varep)^{3/2}} \right] \ket{p_t=\sigma \sqrt{2\varep}} \nn \\
&=&   i\,s \int_\mathbb{R} d p_t \left[ \sgn(p_t) \frac{\psi '(p_t)}{|p_t|}-\frac{\psi(p)}{2|p_t|^2} \right] \ket{p_t} \label{T1momrep}  \\
&=&   i\,s \int_\mathbb{R} d p_t \left[ \frac{\psi '(p_t)}{p_t}-\frac{\psi(p)}{2p_t^2} \right] \ket{p_t} . \nn
\ea
Now, noting that $\frac{1}{|p_t|} \left\vert\frac{d}{dp_t}\frac{\psi(p_t)}{\sqrt{|p_t|}}\right\vert^2=\left\vert \frac{\psi '(p_t)}{|p_t|}-\sgn(p_t)\frac{\psi(p)}{2|p_t|^2} \right\vert^2=\left\vert \sgn(p_t)\frac{\psi '(p_t)}{|p_t|}-\frac{\psi(p)}{2|p_t|^2} \right\vert^2$, we compare this with the second line of Eq.~\eqref{T1momrep} to obtain
\ba
\left\vert\left\vert \hat T^{(1)}\ket{\psi} \right\vert\right\vert^2 = \int_\mathbb{R} \frac{d p_t}{|p_t|} \left\vert\frac{d}{dp_t}\frac{\psi(p_t)}{\sqrt{|p_t|}}\right\vert^2 .
\ea
The requirement that this quantity be finite is the second of the two conditions defining $ \cd(\hat T)$ (see Eq.~(\ref{domain})). Thus $\cd(\hat T^{(1)}) = \cd(\hat T)$. It remains to show that $\hat T = \hat T^{(1)}$. To this end, consider the direct canonical quantization $\hat T$, given in Eq.~\eqref{quantT}, of the classical time variable $T$. In particular consider its action on a state $\ket{\psi} \in \cd(\hat T)$:
\ba
\hat T\,\ket{\psi}&=&s\,\left(\hat t + \f{i}{2}\,\widehat{p_t^{-1}}\right)\widehat{p_t^{-1}} \,\ket{\psi}\nn\\
&=&i\,s\,\int_\mathbb{R}dp_t\,\left(\f{d}{d p_t}+\f{1}{2p_t}\right)\f{\psi(p_t)}{p_t}\,\ket{p_t}\label{useful}\\
&=&i\,s\,\int_\mathbb{R}dp_t\,\left[\f{\psi'(p_t)}{p_t}-\f{\psi(p_t)}{2p_t^2}\right]\,\ket{p_t}\, . \nn\\
&=&\hat T^{(1)}\ket{\psi} 
\ea
where we have used the fact that $\hat t$ acts as $i \f{d}{d p_t}$ in the momentum representation, and the last line follows by comparison with the last line of Eq.~\eqref{T1momrep}. Thus the action of $\hat T$ and $\hat T^{(1)}$ in momentum space is the same, as are their domains, and therefore $\hat T = \hat T^{(1)}$. 
\end{proof}

\noindent{\bf Lemma~\ref{lem_almosteigen}.}
\emph{
The clock states $\ket{t,\sigma}$ defined in Eq.~(\ref{degclock}) are not eigenstates of $\hat T$. However, for all ${\ket{\psi}\in\cd(\hat T)}$, they satisfy:
\ba
\bra{\psi}\,\hat T\,\ket{t,\sigma} = t\,\braket{\psi|t,\sigma}\,,\q\forall\,t\in\mathbb{R}\,, \, \sigma =\pm1\,.\nn
\ea
}

\begin{proof}
 We begin with the negative frequency clock states in Eq.~\eqref{degclock}, which can be equivalently written as
\ba
\ket{t,-}=\int_\mathbb{R}dp_t\,\sqrt{|p_t|}\,e^{-i\,t\,s\,p_t^2/2}\,\theta(p_t)\,\ket{p_t}\,.\nn
\ea
Next, {we again use Eq.~\eqref{quantT} and write} $\hat t = i\,d/dp_t$ under the integral in momentum representation, so that
\ba
\hat T\,\ket{t,-} &=& s\, \left(\hat t + \f{i}{2}\,\widehat{p_t^{-1}}\right)\widehat{p_t^{-1}} \,\ket{t,-}\nn\\
&=&i\,s\, \int_\mathbb{R}\,dp_t\,\left(\f{d}{d p_t}+\f{1}{2p_t}\right)\f{\theta(p_t)}{\sqrt{|p_t|}}\,e^{-i\,t\,s\,p_t^2/2}\, \ket{p_t}\,.\label{bla}
\ea
Noting that  
\ba
i\,\f{d}{dp_t}\,\f{1}{\sqrt{|p_t|}} = -\f{i\,\sgn(p_t)}{2|p_t|^{3/2}}\,,\label{useful2}
\ea
 we find
\ba
\hat T\,\ket{t,-} &=& \int_\mathbb{R}\,dp_t\,\f{1}{\sqrt{|p_t|}}\,\left(t\,p_t\,\theta(p_t)+i\,s\,\delta(p_t)\right)\,e^{-i\,t\,s\,p_t^2/2}\,\ket{p_t}\nn\\
&\underset{(\ref{degclock})}{=}& t\,\ket{t,-}+\int_\mathbb{R}\,dp_t\,\f{i\,s}{\sqrt{|p_t|}}\,\delta(p_t)\,e^{-i\,t\,s\,p_t^2/2}\,\ket{p_t}\,.\label{useful3}
\ea
Consider now any $\ket{\psi}\in\cd(\hat T)$, which by Eq.~\eqref{domain} satisfies $\lim_{p_t\to 0}\,\psi(p_t)/\sqrt{|p_t|}=0$. This immediately implies
\ba
\bra{\psi}\,\hat T\,\ket{t,-} = t\,\braket{\psi|t,-}\,,\q\forall\,\ket{\psi}\in\cd(\hat T)\,,\q t\in\mathbb{R}\,.\nn
\ea
By contrast, now choose $\ket{\psi'}\in\ch_C\setminus\cd(\hat T)$ defined by the wave function $\psi'(p_t)=N\,\sqrt{|p_t|}\,\exp(-p_t^2)$, where $N$ is a normalization constant. In this case, Eq.~\eqref{useful3} yields $\bra{\psi'}\,\hat T\,\ket{t,-} = t\,\braket{\psi'|t,-}+i\,s/N\neq t\,\braket{\psi'|t,-}$. Hence, $\ket{t,-}$ is not algebraically an eigenstate of $\hat T$.
\end{proof}

\noindent{\bf Lemma~\ref{thm_sameT2}.}
\emph{The $n^\text{th}$-moment operator defined in Eq.~\eqref{degnthmom} satisfies $[\hat T^{(n)},\hat H_C] = i\,n\,\hat T^{(n-1)}$. Furthermore, $\forall\ket{\psi} \in\mathcal{D}(\hat{T}^{n})$ we have $\hat T^{(n)}\ket{\psi}= \hat T^n\ket{\psi}$.}

\begin{proof}
To prove the first statement, consider
\ba
    U_C(s) \hat{T}^{(n)} U^\dag_C(s) = \f{1}{2\pi} \sum_{\sigma }\int_\mathbb{R} dt \,(t-s)^n \ket{t,\sigma}\!\bra{t,\sigma},
\ea
which follows from the covariance property of the clock states and a shift of integration variables. Differentiating both sides with respect to $s$, and then setting $s=0$, one finds $[\hat T^{(n)},\hat H_C] = i\,n\,\hat T^{(n-1)}$. Now, to prove the second statement, assume $\ket{\psi}\in\cd(\hat{T}^n)$. Then by direct calculation using Eq.~\eqref{degnthmom}, we have
\ba
    \hat T^{(n)}\ket{\psi} = \f{1}{2\pi} \sum_{\sigma }\int_\mathbb{R} dt \,t^n \ket{t,\sigma}\!\braket{t,\sigma |\psi} =  \f{1}{2\pi} \sum_{\sigma }\int_\mathbb{R} dt \,t^{n-1} \ket{t,\sigma}\!\bra{t,\sigma}\hat{T}\ket{\psi} = \hat T^{(n-1)}\hat{T}\ket{\psi}
\ea
where the second equality follows from Lemma~\ref{lem_almosteigen}. Repeating the same procedure $n-1$ more times, one obtains $\hat T^{(n)}\ket{\psi}=\hat T^{n}\ket{\psi}$.
\end{proof}

\noindent{\bf Lemma~\ref{lem_projTcommute}.}
\emph{The effect density $E_T(dt)$ of the covariant clock POVM and the projectors onto the $\sigma$-sectors commute: $[E_T(dt),\theta(-\sigma\,\hat p_t)]=0$.}

\begin{proof}
We begin by noting that
\ba
\Big\langle t,\sigma\Big|-\sigma'|p_t|\Big\rangle &=& \delta_{\sigma\sigma'} \sqrt{|p_t|}\,e^{i\,t\,s\,{p_t^2/2}}\,,\nn
\ea
which implies for the negative frequency sector the following
\ba
E_T(dt) \,\theta(\,\hat p_t)&=&\f{1}{2\pi}\,\sum_\sigma\, dt \,\ket{t,\sigma}\bra{t,\sigma}\,\int_{0}^\infty\,dp_t\,\ket{p_t}\bra{p_t}\nn\\
&=&\f{1}{2\pi}\, dt\, \int_0^\infty\,dp_t\,\sqrt{p_t}\,e^{i\,t\,s\,p_t^2/2}\,\ket{t,-}\bra{p_t}\nn\\
&\underset{(\ref{degclock})}{=}&\f{1}{2\pi}\, dt\,\ket{t,-}\bra{t,-}\,.\nn
\ea
By symmetry, this result is identical to $\theta(\,\hat p_t)\, E_T(dt)$. The corresponding result for the positive frequency sector is proven in exactly the same manner.
\end{proof}

\section{Reduced phase space quantization in the degenerate case}\label{app_redQT}

The quantum symmetry reduction procedure of Sec.~\ref{sec_trinity}, which constitutes a quantum deparametrization is the quantum analog of a classical phase space reduction by gauge-fixing. In some cases, the quantum symmetry reduction of the Dirac quantized theory is equal to the quantization of the classically reduced theory. To clarify this, we explain the classical phase space reduction (i.e.\  classical deparametrization) and subsequent reduced quantization of the class of models defined by the Hamiltonian constraint in Eq.~\eqref{constraint2}. We will also discuss the relational dynamics in these reduced classical and quantum theories. This extends the discussion in \cite{hoehnHowSwitchRelational2018,Hoehn:2018whn, Hoehn:2019owq} to the case of clock variables which are conjugate to the degenerate clock Hamiltonian (see also \cite{Thiemann:2004wk,chataignier2020relational}).

\subsection{Deparametrization through classical phase space reduction}\label{sec_redQT2}

Owing to the degeneracy reflected in Eq.~\eqref{Cdecomp}, we have to construct two reduced phase spaces, one each for the positive and negative frequency sectors. Indeed, reduction involves solving the constraint in Eq.~\eqref{constraint2} for the redundant (here temporal reference system, i.e.\ clock) variables and for each value of $s\,H_S<0$, we have two solutions for $p_t$. We gauge fix the clock variable to $T=0$. We are free to do so without discarding the information about the relational dynamics, as the relational observables in Eq.~\eqref{F2} are constant along the flow of $C_H$. We are thus free to evaluate them anywhere on the dynamical orbit and it is in any case the evolution parameter $\tau$ which keeps track of time evolution.
 $T=0$ is a good gauge fixing everywhere, except on $\cc_+\cap \cc_-$ where this condition is not defined.\footnote{There are in any case no good gauge fixing conditions (in the sense of intersecting every orbit once and only once) on $\cc_+\cap\cc_-$ involving only the (to be chosen as redundant) clock variables. Only functions with non-trivial dependence on $t$ can be used to fix the flow of the $s\, p_t^2/2$ term, but on $p_t=0$ $t$ does not evolve. Hence, any condition $g(p_t,t)=0$ could for $p_t=0$ only be solved (at most) for a single value of $t$ and would thereby miss all orbits on $p_t=0$ with differing $t$-values. (See also \cite{hoehnHowSwitchRelational2018,Hoehn:2018whn}.)} The gauge fixed reduced phase spaces, which we construct for our purposes are $\cp_\sigma\simeq (\cc_\sigma\setminus(\cc_+\cap\cc_-))\cap\cs_{T=0}$, where $\cs_{T=0}$ is the set in $\cp_{\rm kin}$ where $T=0$ is defined.  These two reduced phase spaces will look ``exactly the same", however.
 
The Dirac bracket, defining the symplectic structure on each $\cp_\sigma$, reads in this case
\ba
\{F,G\}_D:=\{F,G\}-\{F,C_H\}\{T,G\}+\{F,T\}\{C_H,G\}\nn
\ea
$\forall\,F,G\text{ on }\cc$. Restricting to functions $f_S,g_S$, depending on only the system variables, we have $\{f_S,g_S\}_D\equiv \{f_S,g_S\}$. 
Moreover, we can drop the redundant (and fixed) clock variables $(T,p_t)$, which satisfy $\{T,p_t\}_D=0$, using only the system variables to parametrize $\cp_\sigma$. 

To remember that the functions corresponding to system degrees of freedom now live on the phase spaces $\cp_\sigma$, we equip them with the label $\sigma$, although as functions of the basic system variables they will be the same as on $\cp_S$. Since the constraint Eq.~\eqref{constraint2} requires $s\,H^\sigma_S\leq0$ for the reduced system Hamiltonian, this may impose restrictions on the range of system variables compared to the original $\cp_S$ \cite{Ashtekar:1982wv,hoehnHowSwitchRelational2018}. In particular, $\cp_\sigma$ and $\cp_S$ need not be isomorphic.

The relational Dirac observables Eq.~(\ref{F2}) reduce under this procedure to
\ba
f_S^\sigma(\tau) = \sum_{n=0}\,\f{\left(-\tau\right)^n}{n!}\,\{H^\sigma_S,f^\sigma_S\}_n\,,\label{redobs2}
\ea
 and satisfy the standard evolution equations on $\cp_\sigma$
\ba
\f{df_S^\sigma}{d\tau}=\{f_S^\sigma,H^\sigma_S\}_D\,.\label{redevol2}
\ea
There is no more redundancy and the theory is deparametrized: there is no more gauge parameter and the clock degrees of freedom have disappeared from among the set of dynamical degrees of freedom. This is consistent as we do not wish to describe the temporal reference system relative to itself. As such, we can interpret this reduced relational dynamics as the dynamics described relative to the clock $T$ \cite{hoehnHowSwitchRelational2018,Hoehn:2018whn}.

The reason the dynamics for both the negative and positive frequency sectors look identical is, of course, that, as noted in Sec.~\ref{sec_classrel}, the clock $T$ runs `forward' with unit speed along the flow generated by $C_H$ on both $\cc_+$ and $\cc_-$, in contrast to $t$.

We note that, since we have ignored $\cc_+\cap\cc_-$ in our reduction, the reduced phase spaces $\cp_\sigma$ do not contain a boundary $H^\sigma_S=0$. (It is possible to regularize the classical theory to also include this boundary, e.g., see \cite{hoehnHowSwitchRelational2018,Hoehn:2018whn}, however, here we shall ignore such subtleties.) Conceptually, these issues are not surprising: Eq.~(\ref{redevol2}) shows that $H^\sigma_S$ is the generator of the system's evolution in the parameter $\tau$, which corresponds to the values that the dynamical clock $T$ takes. Since the latter is ill-defined for $p_t=H_S=0$, it is consistent that the reduced theories do not contain a boundary where $H^\sigma_S=0$.

\subsection{Relational dynamics in reduced quantization}

We proceed with the quantization of the gauge fixed reduced phase spaces $\cp_\sigma$. This amounts to finding a quantum representation of a system observable (sub-)algebra on suitable Hilbert spaces  $\tilde\ch_S^\sigma$. In this appendix, we denote the objects of reduced quantization with a tilde in order to distinguish them from the corresponding objects in the quantum symmetry reduced theory in Sec.~\ref{sec_trinity} which look structurally similar. In particular, we denote the quantization of the classical reduced Hamiltonian $H_S^\sigma$ by $\hat{\tilde{H}}_S^\sigma$.
Using the eigenbasis of the quantum Hamiltonian $\hat{\tilde{H}}_S^\sigma$, reduced states take the form\footnote{Should the spectrum be degenerate, we would have to add additional degeneracy labels.}
\ba
\ket{\tilde\psi_S^\sigma}= {\ \,{\intsum}_{E\in\spec(\hat{\tilde{H}}_S^\sigma)}} \,\tilde\psi_S^\sigma(E)\,\ket{E}_S\,.\label{redstatespm}
\ea

Assuming 
${\ \,{\intsum}_{E\in\spec(\hat{\tilde{H}}_S^\sigma)}}\,\braket{E'|E}_S\,f(E)=f(E')$ for an arbitrary complex function $f$, the inner product on $\tilde\ch_S^\sigma$  reads
\ba
\braket{\tilde\psi_S^\sigma|\tilde\phi_S^\sigma}={\ \,{\intsum}_{E\in\spec(\hat{\tilde{H}}_S^\sigma)}} \,\tilde\psi_S^\sigma(E)^*\,\tilde\phi_S^\sigma(E)\,.\label{redIP2}
\ea
Note that the precise representation (in particular, the measure and normalization of $\ket{E}_S$) will depend on the details of the system. For instance, if the system Hamiltonian was given by $\hat H_S =\hat p$, then the reduced Hilbert spaces would correspond to affinely quantized theories \cite{hoehnHowSwitchRelational2018}.

Finally, we quantize the evolving reduced observables Eq.~(\ref{redobs2}) as
\ba
\hat{\tilde f}^\sigma_S(\tau) &=& \sum_{n=0}^{\infty}\, \frac{(i\tau)^n}{n!}  \left[ \hat{\tilde H}^\sigma_S,\hat{\tilde f}^\sigma_S \right]_n \nn \\
&=& e^{i \tau \hat{\tilde H}^\sigma_S} \hat{\tilde f}^\sigma_S e^{-i \tau \hat{\tilde H}^\sigma_S}\,,\label{RO3}
\ea
which satisfy the relational Heisenberg equations
\ba
\f{d\hat{\tilde f}_S^\sigma}{d\tau}=i\,[\hat{\tilde H}^\sigma_S,\hat{\tilde f}^\sigma_S]\,.\nn
\ea

Structurally, this quantization of the classically reduced theory, incl.\ the dynamics, looks very similar to the quantum theory obtained through quantum symmetry reduction of $\ch_{\rm phys}$ in Sec.~\ref{sssec_qdepar}. The former is obtained through reduced, the latter through Dirac quantization.
However, given the possibly different value sets of $s\,H_S$ on $\cp_{\rm kin}$ and $s\,H_S^\sigma<0$ on $\cp_\sigma$, we emphasize that one need not in general expect that the spectrum $\sigma_{SC}$ in Eq.~\eqref{spectrum} of the system Hamiltonian on $\ch_{\rm phys}$ coincides with $\spec(\hat{\tilde H}_S^\sigma)$. Classical value restrictions may severely impact the domain where $\hat{\tilde H}_S^\sigma$ is self-adjoint and thereby its spectrum \cite{isham2}. Thus, Dirac and reduced quantization will not always be equivalent \cite{Ashtekar:1982wv,Kuchar:1986jj,Schleich:1990gd,Romano:1989zb,Loll:1990rx,Kunstatter:1991ds,Hoehn:2019owq}. There are, however, models, where $\sigma_{SC}=\spec(\hat{\tilde H}_S^\sigma)$, e.g., if $H_S$ is (minus) the Hamiltonian of a harmonic oscillator or a free particle, or if $H_S=p$ for some canonical momentum $p$. In this case, Dirac and reduced quantization are equivalent and yield two faces of the same relational quantum dynamics, as shown in \cite{hoehnHowSwitchRelational2018,Hoehn:2018whn}.

\end{document}